\DeclareFontFamily{OT1}{pzc}{}
\DeclareFontShape{OT1}{pzc}{m}{it}{<-> s * [1.10] pzcmi7t}{}
\DeclareMathAlphabet{\mathpzc}{OT1}{pzc}{m}{it}
\newtheoremstyle{exampstyle}
{4pt} 
{4pt} 
{\itshape} 
{} 
{\bfseries} 
{.} 
{.75em} 
{} 
\theoremstyle{exampstyle}
\newtheorem{remark}{Remark}[section]
\numberwithin{table}{section}
\numberwithin{figure}{section}
\newtheorem{theorem}{Theorem}[section]
\newtheorem{lemma}{Lemma}[section]
\newtheorem{corollary}{Corollary}[section]
\newtheorem{example}{Example}[section]
\newtheorem{assumption}{Assumption}[section]
\newtheorem{definition}{Definition}[section]
\def\beq{\begin{equation}}
\def\eeq{\end{equation}}
\def\bals{\begin{align*}}
\def\eals{\end{align*}}
\def\bal{\begin{align}}
\def\eal{\end{align}}
\numberwithin{equation}{section}
\numberwithin{theorem}{section}
\numberwithin{corollary}{section}
\let\pdfoutput=\undefined\fi
\chardef\@x10\chardef\@xv60
\def\tcitime{
\def\@time{%
  \@minute\time\@hour\@minute\divide\@hour\@xv
  \ifnum\@hour<\@x 0\fi\the\@hour:%
  \multiply\@hour\@xv\advance\@minute-\@hour
  \ifnum\@minute<\@x 0\fi\the\@minute
  }}%
\def\x@hyperref#1#2#3{%
   \catcode`\~ = 12
   \catcode`\$ = 12
   \catcode`\_ = 12
   \catcode`\# = 12
   \catcode`\& = 12
   \catcode`\% = 12
   \y@hyperref{#1}{#2}{#3}%
}
\def\y@hyperref#1#2#3#4{%
   #2\ref{#4}#3
   \catcode`\~ = 13
   \catcode`\$ = 3
   \catcode`\_ = 8
   \catcode`\# = 6
   \catcode`\& = 4
   \catcode`\% = 14
}
\def\QCTOpt[#1]#2{%
  \def\QCTOptB{#1}
  \def\QCTOptA{#2}
}
\def\QCTNOpt#1{%
  \def\QCTOptA{#1}
  \let\QCTOptB\empty
}
\def\Qct{%
  \@ifnextchar[{%
    \QCTOpt}{\QCTNOpt}
}
\def\QCBOpt[#1]#2{%
  \def\QCBOptB{#1}%
  \def\QCBOptA{#2}%
}
\def\QCBNOpt#1{%
  \def\QCBOptA{#1}%
  \let\QCBOptB\empty
}
\def\Qcb{%
  \@ifnextchar[{%
    \QCBOpt}{\QCBNOpt}%
}
\def\PrepCapArgs{%
  \ifx\QCBOptA\empty
    \ifx\QCTOptA\empty
      {}%
    \else
      \ifx\QCTOptB\empty
        {\QCTOptA}%
      \else
        [\QCTOptB]{\QCTOptA}%
      \fi
    \fi
  \else
    \ifx\QCBOptA\empty
      {}%
    \else
      \ifx\QCBOptB\empty
        {\QCBOptA}%
      \else
        [\QCBOptB]{\QCBOptA}%
      \fi
    \fi
  \fi
}
\def\GRAPHICSPS#1{%
 \ifcase\GRAPHICSTYPE
   \special{ps: #1}%
 \or
   \special{language "PS", include "#1"}%
 \fi
}%
\def\graffile#1#2#3#4{%
    \bgroup
	   \@inlabelfalse
       \leavevmode
       \@ifundefined{bbl@deactivate}{\def~{\string~}}{\activesoff}%
        \raise -#4 \BOXTHEFRAME{%
           \hbox to #2{\raise #3\hbox to #2{\null #1\hfil}}}%
    \egroup
}%
\def\draftbox#1#2#3#4{%
 \leavevmode\raise -#4 \hbox{%
  \frame{\rlap{\protect\tiny #1}\hbox to #2%
   {\vrule height#3 width\z@ depth\z@\hfil}%
  }%
 }%
}%
\let\nographics=\@msidraft
\newif\ifwasdraft
\def\GRAPHIC#1#2#3#4#5{%
   \ifnum\@msidraft=\@ne\draftbox{#2}{#3}{#4}{#5}%
   \else\graffile{#1}{#3}{#4}{#5}%
   \fi
}
\def\addtoLaTeXparams#1{%
    \edef\LaTeXparams{\LaTeXparams #1}}%
\newif\ifBoxFrame \BoxFramefalse
\newif\ifOverFrame \OverFramefalse
\newif\ifUnderFrame \UnderFramefalse
\def\BOXTHEFRAME#1{%
   \hbox{%
      \ifBoxFrame
         \frame{#1}%
      \else
         {#1}%
      \fi
   }%
}
\def\doFRAMEparams#1{\BoxFramefalse\OverFramefalse\UnderFramefalse\readFRAMEparams#1\end}%
\def\readFRAMEparams#1{%
 \ifx#1\end%
  \let\next=\relax
  \else
  \ifx#1i\dispkind=\z@\fi
  \ifx#1d\dispkind=\@ne\fi
  \ifx#1f\dispkind=\tw@\fi
  \ifx#1t\addtoLaTeXparams{t}\fi
  \ifx#1b\addtoLaTeXparams{b}\fi
  \ifx#1p\addtoLaTeXparams{p}\fi
  \ifx#1h\addtoLaTeXparams{h}\fi
  \ifx#1X\BoxFrametrue\fi
  \ifx#1O\OverFrametrue\fi
  \ifx#1U\UnderFrametrue\fi
  \ifx#1w
    \ifnum\@msidraft=1\wasdrafttrue\else\wasdraftfalse\fi
    \@msidraft=\@ne
  \fi
  \let\next=\readFRAMEparams
  \fi
 \next
 }%
\def\IFRAME#1#2#3#4#5#6{%
      \bgroup
      \let\QCTOptA\empty
      \let\QCTOptB\empty
      \let\QCBOptA\empty
      \let\QCBOptB\empty
      #6%
      \parindent=0pt
      \leftskip=0pt
      \rightskip=0pt
      \setbox0=\hbox{\QCBOptA}%
      \@tempdima=#1\relax
      \ifOverFrame
          \typeout{This is not implemented yet}%
          \show\HELP
      \else
         \ifdim\wd0>\@tempdima
            \advance\@tempdima by \@tempdima
            \ifdim\wd0 >\@tempdima
               \setbox1 =\vbox{%
                  \unskip\hbox to \@tempdima{\hfill\GRAPHIC{#5}{#4}{#1}{#2}{#3}\hfill}%
                  \unskip\hbox to \@tempdima{\parbox[b]{\@tempdima}{\QCBOptA}}%
               }%
               \wd1=\@tempdima
            \else
               \textwidth=\wd0
               \setbox1 =\vbox{%
                 \noindent\hbox to \wd0{\hfill\GRAPHIC{#5}{#4}{#1}{#2}{#3}\hfill}\\%
                 \noindent\hbox{\QCBOptA}%
               }%
               \wd1=\wd0
            \fi
         \else
            \ifdim\wd0>0pt
              \hsize=\@tempdima
              \setbox1=\vbox{%
                \unskip\GRAPHIC{#5}{#4}{#1}{#2}{0pt}%
                \break
                \unskip\hbox to \@tempdima{\hfill \QCBOptA\hfill}%
              }%
              \wd1=\@tempdima
           \else
              \hsize=\@tempdima
              \setbox1=\vbox{%
                \unskip\GRAPHIC{#5}{#4}{#1}{#2}{0pt}%
              }%
              \wd1=\@tempdima
           \fi
         \fi
         \@tempdimb=\ht1
         \advance\@tempdimb by -#2
         \advance\@tempdimb by #3
         \leavevmode
         \raise -\@tempdimb \hbox{\box1}%
      \fi
      \egroup%
}%
\def\DFRAME#1#2#3#4#5{%
  \vspace\topsep
  \hfil\break
  \bgroup
     \leftskip\@flushglue
	 \rightskip\@flushglue
	 \parindent\z@
	 \parfillskip\z@skip
     \let\QCTOptA\empty
     \let\QCTOptB\empty
     \let\QCBOptA\empty
     \let\QCBOptB\empty
	 \vbox\bgroup
        \ifOverFrame 
           #5\QCTOptA\par
        \fi
        \GRAPHIC{#4}{#3}{#1}{#2}{\z@}%
        \ifUnderFrame 
           \break#5\QCBOptA
        \fi
	 \egroup
  \egroup
  \vspace\topsep
  \break
}%
\def\FFRAME#1#2#3#4#5#6#7{%
  \@ifundefined{floatstyle}
    {
     \begin{figure}[#1]%
    }
    {
	 \ifx#1h
      \begin{figure}[H]%
	 \else
      \begin{figure}[#1]%
	 \fi
	}
  \let\QCTOptA\empty
  \let\QCTOptB\empty
  \let\QCBOptA\empty
  \let\QCBOptB\empty
  \ifOverFrame
    #4
    \ifx\QCTOptA\empty
    \else
      \ifx\QCTOptB\empty
        \caption{\QCTOptA}%
      \else
        \caption[\QCTOptB]{\QCTOptA}%
      \fi
    \fi
    \ifUnderFrame\else
      \label{#5}%
    \fi
  \else
    \UnderFrametrue%
  \fi
  \begin{center}\GRAPHIC{#7}{#6}{#2}{#3}{\z@}\end{center}%
  \ifUnderFrame
    #4
    \ifx\QCBOptA\empty
      \caption{}%
    \else
      \ifx\QCBOptB\empty
        \caption{\QCBOptA}%
      \else
        \caption[\QCBOptB]{\QCBOptA}%
      \fi
    \fi
    \label{#5}%
  \fi
  \end{figure}%
 }%
\def\makeactives{
  \catcode`\"=\active
  \catcode`\;=\active
  \catcode`\:=\active
  \catcode`\'=\active
  \catcode`\~=\active
}
   \gdef\activesoff{%
      \def"{\string"}%
      \def;{\string;}%
      \def:{\string:}%
      \def'{\string'}%
      \def~{\string~}%
    }
\def\FRAME#1#2#3#4#5#6#7#8{%
 \bgroup
 \ifnum\@msidraft=\@ne
   \wasdrafttrue
 \else
   \wasdraftfalse%
 \fi
 \def\LaTeXparams{}%
 \dispkind=\z@
 \def\LaTeXparams{}%
 \doFRAMEparams{#1}%
 \ifnum\dispkind=\z@\IFRAME{#2}{#3}{#4}{#7}{#8}{#5}\else
  \ifnum\dispkind=\@ne\DFRAME{#2}{#3}{#7}{#8}{#5}\else
   \ifnum\dispkind=\tw@
    \edef\@tempa{\noexpand\FFRAME{\LaTeXparams}}%
    \@tempa{#2}{#3}{#5}{#6}{#7}{#8}%
    \fi
   \fi
  \fi
  \ifwasdraft\@msidraft=1\else\@msidraft=0\fi{}%
  \egroup
 }%
\def\TEXUX#1{"texux"}
\long\def\QQQ#1#2{%
     \long\expandafter\def\csname#1\endcsname{#2}}%
\long\def\QQA#1#2{}%
\def\QTR#1#2{{\csname#1\endcsname {#2}}}%
\def\EXPAND#1[#2]#3{}%
\def\NOEXPAND#1[#2]#3{}%
\def\LaTeXparent#1{}%
\def\ChildStyles#1{}%
\def\ChildDefaults#1{}%
\def\QTagDef#1#2#3{}%
  \providecommand{\UNICODE}[2][]{\protect\rule{.1in}{.1in}}
  \providecommand{\U}[1]{\protect\rule{.1in}{.1in}}
\def\QQfnmark#1{\footnotemark}
 \def\abstract{%
  \if@twocolumn
   \section*{Abstract (Not appropriate in this style!)}%
   \else \small 
   \begin{center}{\bf Abstract\vspace{-.5em}\vspace{\z@}}\end{center}%
   \quotation 
   \fi
  }%
   \def\registered{\relax\ifmmode{}\r@gistered
                    \else$\m@th\r@gistered$\fi}%
 \def\r@gistered{^{\ooalign
  {\hfil\raise.07ex\hbox{$\scriptstyle\rm\text{R}$}\hfil\crcr
  \mathhexbox20D}}}}{}%
\newdimen\theight
\def\newfmtname{LaTeX2e}
  \DeclareOldFontCommand{\rm}{\normalfont\rmfamily}{\mathrm}
  \DeclareOldFontCommand{\sf}{\normalfont\sffamily}{\mathsf}
  \DeclareOldFontCommand{\tt}{\normalfont\ttfamily}{\mathtt}
  \DeclareOldFontCommand{\bf}{\normalfont\bfseries}{\mathbf}
  \DeclareOldFontCommand{\it}{\normalfont\itshape}{\mathit}
  \DeclareOldFontCommand{\sl}{\normalfont\slshape}{\@nomath\sl}
  \DeclareOldFontCommand{\sc}{\normalfont\scshape}{\@nomath\sc}
\def\alpha{{\Greekmath 010B}}%
\def\beta{{\Greekmath 010C}}%
\def\gamma{{\Greekmath 010D}}%
\def\delta{{\Greekmath 010E}}%
\def\epsilon{{\Greekmath 010F}}%
\def\zeta{{\Greekmath 0110}}%
\def\eta{{\Greekmath 0111}}%
\def\theta{{\Greekmath 0112}}%
\def\iota{{\Greekmath 0113}}%
\def\kappa{{\Greekmath 0114}}%
\def\lambda{{\Greekmath 0115}}%
\def\mu{{\Greekmath 0116}}%
\def\nu{{\Greekmath 0117}}%
\def\xi{{\Greekmath 0118}}%
\def\pi{{\Greekmath 0119}}%
\def\rho{{\Greekmath 011A}}%
\def\sigma{{\Greekmath 011B}}%
\def\tau{{\Greekmath 011C}}%
\def\upsilon{{\Greekmath 011D}}%
\def\phi{{\Greekmath 011E}}%
\def\chi{{\Greekmath 011F}}%
\def\psi{{\Greekmath 0120}}%
\def\omega{{\Greekmath 0121}}%
\def\varepsilon{{\Greekmath 0122}}%
\def\vartheta{{\Greekmath 0123}}%
\def\varpi{{\Greekmath 0124}}%
\def\varrho{{\Greekmath 0125}}%
\def\varsigma{{\Greekmath 0126}}%
\def\varphi{{\Greekmath 0127}}%
\def\nabla{{\Greekmath 0272}}
\def\FindBoldGroup{%
   {\setbox0=\hbox{$\mathbf{x\global\edef\theboldgroup{\the\mathgroup}}$}}%
}
\def\Greekmath#1#2#3#4{%
    \if@compatibility
        \ifnum\mathgroup=\symbold
           \mathchoice{\mbox{\boldmath$\displaystyle\mathchar"#1#2#3#4$}}%
                      {\mbox{\boldmath$\textstyle\mathchar"#1#2#3#4$}}%
                      {\mbox{\boldmath$\scriptstyle\mathchar"#1#2#3#4$}}%
                      {\mbox{\boldmath$\scriptscriptstyle\mathchar"#1#2#3#4$}}%
        \else
           \mathchar"#1#2#3#4%
        \fi 
    \else 
        \FindBoldGroup
        \ifnum\mathgroup=\theboldgroup 
           \mathchoice{\mbox{\boldmath$\displaystyle\mathchar"#1#2#3#4$}}%
                      {\mbox{\boldmath$\textstyle\mathchar"#1#2#3#4$}}%
                      {\mbox{\boldmath$\scriptstyle\mathchar"#1#2#3#4$}}%
                      {\mbox{\boldmath$\scriptscriptstyle\mathchar"#1#2#3#4$}}%
        \else
           \mathchar"#1#2#3#4%
        \fi     	    
	  \fi}
\newif\ifGreekBold  \GreekBoldfalse
\let\SAVEPBF=\pbf
\def\pbf{\GreekBoldtrue\SAVEPBF}%
  \newcounter{equationnumber}  
  \def\mathletters{%
     \addtocounter{equation}{1}
     \edef\@currentlabel{\theequation}%
     \setcounter{equationnumber}{\c@equation}
     \setcounter{equation}{0}%
     \edef\theequation{\@currentlabel\noexpand\alph{equation}}%
  }
    \def\BibTeX{{\rm B\kern-.05em{\sc i\kern-.025em b}\kern-.08em
                 T\kern-.1667em\lower.7ex\hbox{E}\kern-.125emX}}}{}%
\def\AmS{{\protect\usefont{OMS}{cmsy}{m}{n}%
                A\kern-.1667em\lower.5ex\hbox{M}\kern-.125emS}}}{}%
\def\@@eqncr{\let\@tempa\relax
    \ifcase\@eqcnt \def\@tempa{& & &}\or \def\@tempa{& &}%
      \else \def\@tempa{&}\fi
     \@tempa
     \if@eqnsw
        \iftag@
           \@taggnum
        \else
           \@eqnnum\stepcounter{equation}%
        \fi
     \fi
     \global\tag@false
     \global\@eqnswtrue
     \global\@eqcnt\z@\cr}
\def\TCItag{\@ifnextchar*{\@TCItagstar}{\@TCItag}}
\def\@TCItag#1{%
    \global\tag@true
    \global\def\@taggnum{(#1)}%
    \global\def\@currentlabel{#1}}
\def\@TCItagstar*#1{%
    \global\tag@true
    \global\def\@taggnum{#1}%
    \global\def\@currentlabel{#1}}
\def\tint{\msi@int\textstyle\int}%
\def\tiint{\msi@int\textstyle\iint}%
\def\tiiint{\msi@int\textstyle\iiint}%
\def\tiiiint{\msi@int\textstyle\iiiint}%
\def\tidotsint{\msi@int\textstyle\idotsint}%
\def\toint{\msi@int\textstyle\oint}%
\newtoks\temptoksa
\newtoks\temptoksb
\newtoks\temptoksc
\def\msi@int#1#2{%
 \def\@temp{{#1#2\the\temptoksc_{\the\temptoksa}^{\the\temptoksb}}}%
 \futurelet\@nextcs
 \@int
}
\def\@int{%
   \ifx\@nextcs\limits
      \typeout{Found limits}%
      \temptoksc={\limits}%
	  \let\@next\@intgobble%
   \else\ifx\@nextcs\nolimits
      \typeout{Found nolimits}%
      \temptoksc={\nolimits}%
	  \let\@next\@intgobble%
   \else
      \typeout{Did not find limits or no limits}%
      \temptoksc={}%
      \let\@next\msi@limits%
   \fi\fi
   \@next   
}%
\def\@intgobble#1{%
   \typeout{arg is #1}%
   \msi@limits
}
\def\msi@limits{%
   \temptoksa={}%
   \temptoksb={}%
   \@ifnextchar_{\@limitsa}{\@limitsb}%
}
\def\@limitsa_#1{%
   \temptoksa={#1}%
   \@ifnextchar^{\@limitsc}{\@temp}%
}
\def\@limitsb{%
   \@ifnextchar^{\@limitsc}{\@temp}%
}
\def\@limitsc^#1{%
   \temptoksb={#1}%
   \@ifnextchar_{\@limitsd}{\@temp}%
}
\def\@limitsd_#1{%
   \temptoksa={#1}%
   \@temp
}
\def\dint{\msi@int\displaystyle\int}%
\def\diint{\msi@int\displaystyle\iint}%
\def\diiint{\msi@int\displaystyle\iiint}%
\def\diiiint{\msi@int\displaystyle\iiiint}%
\def\didotsint{\msi@int\displaystyle\idotsint}%
\def\doint{\msi@int\displaystyle\oint}%
\def\dprod{\mathop{\displaystyle \prod }}%
\def\ExitTCILatex{\makeatother }
\if@compatibility\message{amsmath already loaded}\fi\aftergroup\ExitTCILatex}
\if@compatibility\message{amstex already loaded}\fi\aftergroup\ExitTCILatex}
\if@compatibility\message{amsgen already loaded}\fi\aftergroup\ExitTCILatex}
\let\DOTSI\relax
\def\RIfM@{\relax\ifmmode}%
\def\FN@{\futurelet\next}%
\def\iint{\DOTSI\intno@\tw@\FN@\ints@}%
\def\iiint{\DOTSI\intno@\thr@@\FN@\ints@}%
\def\iiiint{\DOTSI\intno@4 \FN@\ints@}%
\def\idotsint{\DOTSI\intno@\z@\FN@\ints@}%
\def\ints@{\findlimits@\ints@@}%
\newif\iflimtoken@
\newif\iflimits@
\def\findlimits@{\limtoken@true\ifx\next\limits\limits@true
 \else\ifx\next\nolimits\limits@false\else
 \limtoken@false\ifx\ilimits@\nolimits\limits@false\else
 \ifinner\limits@false\else\limits@true\fi\fi\fi\fi}%
\def\multint@{\int\ifnum\intno@=\z@\intdots@                          
 \else\intkern@\fi                                                    
 \ifnum\intno@>\tw@\int\intkern@\fi                                   
 \ifnum\intno@>\thr@@\int\intkern@\fi                                 
 \int}
\def\multintlimits@{\intop\ifnum\intno@=\z@\intdots@\else\intkern@\fi
 \ifnum\intno@>\tw@\intop\intkern@\fi
 \ifnum\intno@>\thr@@\intop\intkern@\fi\intop}%
\def\intic@{%
    \mathchoice{\hskip.5em}{\hskip.4em}{\hskip.4em}{\hskip.4em}}%
\def\negintic@{\mathchoice
 {\hskip-.5em}{\hskip-.4em}{\hskip-.4em}{\hskip-.4em}}%
\def\ints@@{\iflimtoken@                                              
 \def\ints@@@{\iflimits@\negintic@
   \mathop{\intic@\multintlimits@}\limits                             
  \else\multint@\nolimits\fi                                          
  \eat@}
 \else                                                                
 \def\ints@@@{\iflimits@\negintic@
  \mathop{\intic@\multintlimits@}\limits\else
  \multint@\nolimits\fi}\fi\ints@@@}%
\def\intkern@{\mathchoice{\!\!\!}{\!\!}{\!\!}{\!\!}}%
\def\plaincdots@{\mathinner{\cdotp\cdotp\cdotp}}%
\def\intdots@{\mathchoice{\plaincdots@}%
 {{\cdotp}\mkern1.5mu{\cdotp}\mkern1.5mu{\cdotp}}%
 {{\cdotp}\mkern1mu{\cdotp}\mkern1mu{\cdotp}}%
 {{\cdotp}\mkern1mu{\cdotp}\mkern1mu{\cdotp}}}%
\def\RIfM@{\relax\protect\ifmmode}
\def\text{\RIfM@\expandafter\text@\else\expandafter\mbox\fi}
\let\nfss@text\text
\def\text@#1{\mathchoice
   {\textdef@\displaystyle\f@size{#1}}%
   {\textdef@\textstyle\tf@size{\firstchoice@false #1}}%
   {\textdef@\textstyle\sf@size{\firstchoice@false #1}}%
   {\textdef@\textstyle \ssf@size{\firstchoice@false #1}}%
   \glb@settings}
\def\textdef@#1#2#3{\hbox{{%
                    \everymath{#1}%
                    \let\f@size#2\selectfont
                    #3}}}
\newif\iffirstchoice@
\def\Let@{\relax\iffalse{\fi\let\\=\cr\iffalse}\fi}%
\def\vspace@{\def\vspace##1{\crcr\noalign{\vskip##1\relax}}}%
\def\multilimits@{\bgroup\vspace@\Let@
 \baselineskip\fontdimen10 \scriptfont\tw@
 \advance\baselineskip\fontdimen12 \scriptfont\tw@
 \lineskip\thr@@\fontdimen8 \scriptfont\thr@@
 \lineskiplimit\lineskip
 \vbox\bgroup\ialign\bgroup\hfil$\m@th\scriptstyle{##}$\hfil\crcr}%
\def\Sb{_\multilimits@}%
\def\endSb{\crcr\egroup\egroup\egroup}%
\def\Sp{^\multilimits@}%
\newdimen\ex@
\def\rightarrowfill@#1{$#1\m@th\mathord-\mkern-6mu\cleaders
 \hbox{$#1\mkern-2mu\mathord-\mkern-2mu$}\hfill
 \mkern-6mu\mathord\rightarrow$}%
\def\leftarrowfill@#1{$#1\m@th\mathord\leftarrow\mkern-6mu\cleaders
 \hbox{$#1\mkern-2mu\mathord-\mkern-2mu$}\hfill\mkern-6mu\mathord-$}%
\def\leftrightarrowfill@#1{$#1\m@th\mathord\leftarrow
\mkern-6mu\cleaders
 \hbox{$#1\mkern-2mu\mathord-\mkern-2mu$}\hfill
 \mkern-6mu\mathord\rightarrow$}%
\def\overrightarrow{\mathpalette\overrightarrow@}%
\def\overrightarrow@#1#2{\vbox{\ialign{##\crcr\rightarrowfill@#1\crcr
 \noalign{\kern-\ex@\nointerlineskip}$\m@th\hfil#1#2\hfil$\crcr}}}%
\def\overleftarrow{\mathpalette\overleftarrow@}%
\def\overleftarrow@#1#2{\vbox{\ialign{##\crcr\leftarrowfill@#1\crcr
 \noalign{\kern-\ex@\nointerlineskip}$\m@th\hfil#1#2\hfil$\crcr}}}%
\def\overleftrightarrow{\mathpalette\overleftrightarrow@}%
\def\overleftrightarrow@#1#2{\vbox{\ialign{##\crcr
   \leftrightarrowfill@#1\crcr
 \noalign{\kern-\ex@\nointerlineskip}$\m@th\hfil#1#2\hfil$\crcr}}}%
\def\underrightarrow{\mathpalette\underrightarrow@}%
\def\underrightarrow@#1#2{\vtop{\ialign{##\crcr$\m@th\hfil#1#2\hfil
  $\crcr\noalign{\nointerlineskip}\rightarrowfill@#1\crcr}}}%
\def\underleftarrow{\mathpalette\underleftarrow@}%
\def\underleftarrow@#1#2{\vtop{\ialign{##\crcr$\m@th\hfil#1#2\hfil
  $\crcr\noalign{\nointerlineskip}\leftarrowfill@#1\crcr}}}%
\def\underleftrightarrow{\mathpalette\underleftrightarrow@}%
\def\underleftrightarrow@#1#2{\vtop{\ialign{##\crcr$\m@th
  \hfil#1#2\hfil$\crcr
 \noalign{\nointerlineskip}\leftrightarrowfill@#1\crcr}}}%
\def\qopnamewl@#1{\mathop{\operator@font#1}\nlimits@}
\let\nlimits@\displaylimits
\def\setboxz@h{\setbox\z@\hbox}
\def\varlim@#1#2{\mathop{\vtop{\ialign{##\crcr
 \hfil$#1\m@th\operator@font lim$\hfil\crcr
 \noalign{\nointerlineskip}#2#1\crcr
 \noalign{\nointerlineskip\kern-\ex@}\crcr}}}}
 \def\rightarrowfill@#1{\m@th\setboxz@h{$#1-$}\ht\z@\z@
  $#1\copy\z@\mkern-6mu\cleaders
  \hbox{$#1\mkern-2mu\box\z@\mkern-2mu$}\hfill
  \mkern-6mu\mathord\rightarrow$}
\def\leftarrowfill@#1{\m@th\setboxz@h{$#1-$}\ht\z@\z@
  $#1\mathord\leftarrow\mkern-6mu\cleaders
  \hbox{$#1\mkern-2mu\copy\z@\mkern-2mu$}\hfill
  \mkern-6mu\box\z@$}
\def\projlim{\qopnamewl@{proj\,lim}}
\def\injlim{\qopnamewl@{inj\,lim}}
\def\varinjlim{\mathpalette\varlim@\rightarrowfill@}
\def\varprojlim{\mathpalette\varlim@\leftarrowfill@}
\def\varliminf{\mathpalette\varliminf@{}}
\def\varliminf@#1{\mathop{\underline{\vrule\@depth.2\ex@\@width\z@
   \hbox{$#1\m@th\operator@font lim$}}}}
\def\varlimsup{\mathpalette\varlimsup@{}}
\def\varlimsup@#1{\mathop{\overline
  {\hbox{$#1\m@th\operator@font lim$}}}}
\def\align{\@verbatim \frenchspacing\@vobeyspaces \@alignverbatim
You are using the "align" environment in a style in which it is not defined.}
\let\csname endalign*\endcsname =\endtrivlist
\def\alignat{\@verbatim \frenchspacing\@vobeyspaces \@alignatverbatim
You are using the "alignat" environment in a style in which it is not defined.}
\let\csname endalignat*\endcsname =\endtrivlist
\def\xalignat{\@verbatim \frenchspacing\@vobeyspaces \@xalignatverbatim
You are using the "xalignat" environment in a style in which it is not defined.}
\let\csname endxalignat*\endcsname =\endtrivlist
\def\gather{\@verbatim \frenchspacing\@vobeyspaces \@gatherverbatim
You are using the "gather" environment in a style in which it is not defined.}
\let\csname endgather*\endcsname =\endtrivlist
\def\multiline{\@verbatim \frenchspacing\@vobeyspaces \@multilineverbatim
You are using the "multiline" environment in a style in which it is not defined.}
\let\csname endmultiline*\endcsname =\endtrivlist
\def\arrax{\@verbatim \frenchspacing\@vobeyspaces \@arraxverbatim
You are using a type of "array" construct that is only allowed in AmS-LaTeX.}
\def\tabulax{\@verbatim \frenchspacing\@vobeyspaces \@tabulaxverbatim
You are using a type of "tabular" construct that is only allowed in AmS-LaTeX.}
\let\csname endarrax*\endcsname =\endtrivlist
\let\csname endtabulax*\endcsname =\endtrivlist
 \def\endequation{%
     \ifmmode\ifinner 
      \iftag@
        \addtocounter{equation}{-1} 
        $\hfil
           \displaywidth\linewidth\@taggnum\egroup \endtrivlist
        \global\tag@false
        \global\@ignoretrue   
      \else
        $\hfil
           \displaywidth\linewidth\@eqnnum\egroup \endtrivlist
        \global\tag@false
        \global\@ignoretrue 
      \fi
     \else   
      \iftag@
        \addtocounter{equation}{-1} 
        \eqno \hbox{\@taggnum}
        \global\tag@false%
        $$\global\@ignoretrue
      \else
        \eqno \hbox{\@eqnnum}
        $$\global\@ignoretrue
      \fi
     \fi\fi
 } 
 \newif\iftag@ \tag@false
 \def\TCItag{\@ifnextchar*{\@TCItagstar}{\@TCItag}}
 \def\@TCItag#1{%
     \global\tag@true
     \global\def\@taggnum{(#1)}%
     \global\def\@currentlabel{#1}}
 \def\@TCItagstar*#1{%
     \global\tag@true
     \global\def\@taggnum{#1}%
     \global\def\@currentlabel{#1}}
     \def\tag{\@ifnextchar*{\@tagstar}{\@tag}}
     \def\@tag#1{%
         \global\tag@true
         \global\def\@taggnum{(#1)}}
     \def\@tagstar*#1{%
         \global\tag@true
         \global\def\@taggnum{#1}}
\begin{document}

\title[Testing for alpha]{A general randomized test for Alpha}
\author{ Daniele Massacci\textsuperscript{1}, Lucio Sarno\textsuperscript{2}%
, Lorenzo Trapani\textsuperscript{3}$^{,}$\textsuperscript{4}, Pierluigi Vallarino\textsuperscript{5} 
}
\address{\textsuperscript{1}King's Business School, King's College London, UK%
}
\address{\textsuperscript{2}Cambridge Judge Business School and Girton
College, University of Cambridge, UK; and Centre for Economic Policy
Research (CEPR)}
\address{\textsuperscript{3}Universita' di Pavia, Italy}
\address{\textsuperscript{4}University of
Leicester, UK}
\address{\textsuperscript{5}Institute of Finance, Universita' della
Svizzera Italiana (USI Lugano), Switzerland}
\keywords{Testing for alpha, randomized tests, linear pricing factor models}
\thanks{\textbf{Acknowledgments.} We thank Marine Carrasco, Marcelo Fernandes, Christian Francq,
Yingying Li, Oliver Linton, Roberto Ren\`{o}, Fabio Trojani, Dick van Dijk, Kevin Schneider,
Dacheng Xiu, Chen Zhou and the participants to: the 2nd
FinML conference; the 2025 Annual SoFiE conference; Erasmus University
Rotterdam; ESSEC Business School; LUISS Guido Carli; the 2025 EC$^2$
conference;  CREST-ENSAE; Toulouse Business School; University of Rome Tor Vergata. The usual disclaimer applies.}
\maketitle


\begin{adjustwidth}{-10pt}{-10pt}
    
\begin{abstract}
We propose a methodology to test for the null hypothesis that the alphas of a panel of asset returns are jointly equal to zero in a linear factor pricing model with observable and tradable factors --- that is, the
null of \textquotedblleft zero alpha{\textquotedblright}. The test is based on equation-by-equation estimation, using a \textit{randomized} version of the estimated alphas, which only requires rates of convergence. The distinct features of the proposed methodology are that it does not require the estimation of any covariance matrix, and that it allows for both $N$ and $T$ to pass to infinity, with the former possibly faster than the latter. Further, unlike extant approaches, the procedure can accommodate conditional heteroskedasticity, non-Gaussianity, and strong cross-sectional dependence in the error terms. We also propose a derandomized decision rule to choose in favor or against the correct specification of a linear factor pricing model. Monte Carlo simulations show that the test has satisfactory properties and it compares favorably
to several existing tests. The usefulness of the testing procedure is illustrated through an application of linear factor pricing models to the constituents of the S{\&}P 500.
\end{abstract}

\end{adjustwidth}

\doublespacing

\section{Introduction\label{intro}}

Testing for \textquotedblleft alpha\textquotedblright\ -
i.e. the component of expected returns that cannot be explained by a linear factor model - is central in asset pricing, and yet
poses a number of issues. Available testing procedures are often marred by
low power or poor size control, hinge on strong assumptions on the data
generating process (DGP), and generally involve the estimation or inversion
of (often large) covariance matrices \citep{giglio2022factor}. \newline
In this paper, we propose a general methodology to test for
\textquotedblleft zero alpha\textquotedblright\ - that is the null that for each test asset, on average, realized excess returns  equal those implied by a linear factor pricing model. In order to make the presentation easier to follow, we mainly
focus on the following linear factor pricing model with \textit{tradable}
and \textit{observable} factors 
\begin{equation}
y_{i,t}=\alpha _{i}+\beta _{i}^{\prime }f_{t}+u_{i,t},\text{ \ \ }1\leq
i\leq N,\text{ \ }1\leq t\leq T,  \label{eq:fac_model}
\end{equation}%
where: $y_{i,t}$ is the excess return on the $i$-th security at time $t$; $%
f_{t}$ is a $K$-dimensional vector of pricing factors; $%
\beta _{i}$ is a $K$-dimensional vector of factor loadings; and $u_{i,t}$ is
a zero mean error term. Equation (\ref{eq:fac_model}) is the workhorse model
employed in asset pricing, and it encompasses several popular specifications
such as the Capital Asset Pricing Model (CAPM; see \citealp{sharpe1964capital}), and the three- and five-factors models (\citealp{fama1993common}; and \citealp{fama2015five}).\footnote{In Sections \ref{nontradable} and \ref{latent} in the Supplement, we study the extension to the cases of non-tradable and of latent factors respectively.} In the context of (\ref%
{eq:fac_model}), we propose a novel approach to test for the null hypothesis
that all the $\alpha _{i}$s are jointly equal to zero versus the alternative
that at least one $\alpha _{i}$ is nonzero, viz. 
\begin{eqnarray}
&\mathbb{H}_{0}\, :\,\alpha _{i}=0 & \text{   for all \ \ }1\leq i\leq N,
\label{h0} \\
&\mathbb{H}_{A}\, :\,\alpha _{i}\neq 0 & \text{   for at least one }1\leq
i\leq N.  \label{hA}
\end{eqnarray}%
We base our analysis on the equivalent version of (\ref{h0})%
\begin{equation}
\mathbb{H}_{0}\,:\,\max_{1\leq i\leq N}\text{ }\left\vert \alpha
_{i}\right\vert =0.  \label{null}
\end{equation}%
Testing for (\ref{null}) is interesting \textit{per se}. In the
context of asset pricing, (\ref{eq:fac_model}) implies $\mathbb{E}%
\left( y_{i,t}\right) =\alpha _{i}+\beta _{i}^{\prime }\mathbb{E}\left(
f_{t}\right) $; hence, $\alpha _{i}$ represents the excess return on the $i$-th test asset not explained by the $K$ factors. When no relevant pricing factor is omitted from \eqref{eq:fac_model}, a non-zero $\alpha_i$ can still appear due, for example, to market rigidities that prevent security prices from reaching their theoretical fair values and that are not allowed for in the factor pricing  model. As these securities are misspriced, this $\alpha_{i}$ is often called \emph{pricing error}. However, a non-zero $\alpha _{i}$ can originate from multiple other sources of model misspecification, such as the omission of one or more pricing factors and/or non-linear pricing relations. Hence, depending on the nature of model misspecification, $\alpha _{i}\neq0$ is not necessarily a pure pricing error but rather a signal that the model in (1.1) is misspecified from an asset pricing perspective.\footnote{In the case of testing for alpha in fund returns, rather than security returns, alpha can also capture skill of the fund manager  \citep{giglio2021thousands}.}

Arguably, the first contribution to propose a test for \textquotedblleft
zero alpha\textquotedblright\ in linear pricing models is the
paper by \citet[][GRS henceforth]{gibbons1989test}, where an $F$-test is
proposed for the joint null hypothesis that $\alpha _{1}=\alpha
_{2}=...=\alpha _{N}=0$. Such an approach is entirely natural, but - as well as needing several restrictions on the error terms $u_{i,t}$ - it requires the restriction that $N$ is fixed with $%
N<T$, because the test uses a (consistent) estimate of the $N\times N$
covariance matrix of the regression residuals, which subsequently needs to
be inverted. Hence, the case $N>T$, often encountered in
applied work, cannot be addressed with the GRS test. \citet{fan2015power}, %
\citet{gagliardini2016time} and \citet{pesaran2023testing} propose several
solutions towards this technical difficulty by developing average-type alpha
tests - i.e.\ tests where individual statistics on each of the $\alpha _{i}$%
s are averaged - for the joint
asymptotics case $\min \left\{ N,T\right\}
\rightarrow \infty $. All these tests can deal with cross-sectional
dependence among the errors (in essence, allowing for
cross-sectional dependence as long as a Central Limit Theory holds for
cross-sectional averages). However, their implementation still hinges on
estimating a large dimensional, $N\times N$ covariance matrix, which
requires several restrictions on the covariance structure of the errors, and
which can become computationally intensive when certain estimators are
considered (this is e.g.\ the case for the threshold estimator of %
\citealp{bickel2008regularized}, used by \citealp{gagliardini2016time}). Further, all the tests mentioned above are based on the
assumption of serially independent errors, which is bound to cause problems
in the presence of conditional heteroskedasticity, and the approaches by %
\citet{fan2015power} and \citet{gagliardini2016time} further require \textit{%
Gaussian} errors. In a recent contribution, \citet{feng2022high} propose a
max-type test for the null that $\max_{1\leq i\leq N}\left\vert \alpha
_{i}\right\vert =0$; their test is based on using the maximal \textit{t}-statistic, and therefore its asymptotics hinges on an
Extreme-Value-type argument, rather than a \textquotedblleft
central\textquotedblright\ one like average-type tests. As we also discuss in greater detail in Section \ref{model}, their
asymptotics requires \textit{weak} cross-sectional dependence (and serial independence). \citet{ardia2024robust} test the null hypothesis in (\ref{h0}) whether all alphas are zero by combining $p$-values from asset-specific
tests using the Cauchy combination approach proposed 
by \citet{liu2020cauchy}. While their test can be applied under quite
general assumptions, its implementation hinges on several tuning parameters
which significantly impact the outcome of the procedure. Finally, %
\citet{chernov2025test} directly generalize the GRS test to the
high-dimensional case by considering a ridge-regularized estimator of the $%
N\times N$ covariance matrix of the residuals, still resting on similar assumptions to those of the average-type tests discussed above.\footnote{%
Other relevant references include \citet{gungor2016multivariate}, %
\citet{ma2020testing}, and \citet{raponi2020testing}. Further, the correct
specification of linear factor pricing models can also be assessed by
testing whether they imply a pricing kernel with zero Hansen-Jagannathan
distance (see e.g., \citealp{hansen1997assessing}, %
\citealp{hodrick2001evaluating}, and \citealp{carrasco2022hansen}).}

\smallskip

\textit{Testing methodology and the contribution of this paper}

\smallskip

In this paper, we fill the gaps mentioned above, by proposing a test for (%
\ref{null}) which allows for the errors to have: 
\textit{(i)} (weak) serial dependence, including conditional
heteroskedasticity such as volatility clustering, or the
\textquotedblleft leverage effect\textquotedblright\ (%
\citealp{black1976studies}); \textit{(ii)} non-Gaussianity, such as
skewness and excess kurtosis, indeed relaxing some of the moment assumptions in the aforementioned papers; \textit{(iii)} strong cross-sectional dependence;
and also \textit{(iv)} allowing for $N>T$. Importantly, as we further explain in Section \ref{comparison}, our test statistic
only requires a consistent estimator (and its rate of convergence) for the individual $\alpha _{i}$s, whereas no second order property (such as asymptotic Gaussianity, or asymptotic efficiency) is needed. This allows to ignore the cross-sectional structure of the data, and therefore the test
does not require the estimation of any $N\times N$ covariance matrices, and
needs virtually no tuning. \\
The technical details are described in the next sections; here we
offer a preview of the main arguments, which are based on the construction
of a \textit{randomized} test statistic. In order to construct the test
statistic, we estimate the $\alpha _{i}$s from $N$ separate time series
regressions, unit by unit: at no stage do we require joint estimation.
Although any consistent estimator can be employed, here we use OLS,
obtaining, say, $\widehat{\alpha }_{i}$. Under our assumptions, it can be
expected that $\widehat{\alpha }_{i}$ will converge to $\alpha _{i}$ at a
rate $T^{-1/2}$. We then pre-multiply each $\left\vert \widehat{\alpha }%
_{i}\right\vert $ by a function of $T$ which diverges as $T\rightarrow
\infty $, but at a rate slower than $O\left( T^{1/2}\right)$. Hence, we
obtain $N$ statistics which drift to zero when $\alpha
_{i}=0$, and diverge to positive infinity whenever $\alpha _{i}\neq 0$. We
then perturb the resulting $N$ statistics by adding to each of them a $%
\mathcal{N}\left( 0,1\right) $ shock, with the $N$ shocks forming an \textit{%
i.i.d.} sequence. As a consequence, we obtain an $N$-dimensional sequence
which, under $\mathbb{H}_{0}$, is (roughly) \textit{i.i.d.}$\mathcal{N}%
\left( 0,1\right) $ conditionally on the sample. Finally, we take the
largest of these perturbed statistics as a test statistic for $\mathbb{H}_{0}
$ in (\ref{null}): conditionally on the sample, this
is distributed as a Gumbel under $\mathbb{H}_{0}$, whereas it diverges under the alternative that
at least one asset is mis-priced. This methodology has, at least
conceptually, some similarities with the one proposed in \citet{fan2015power}%
, where a sequence is added to a test statistic, constructed so as to drift
to zero under the null (thus introducing no distortion in the asymptotics
under the null), and diverging under the alternative (so as to boost the
power). However, unlike \citet{fan2015power}, we do not need to estimate at
any stage the asymptotic covariance matrix of the estimated $\mathbf{\alpha }%
=\left( \alpha _{1},...,\alpha _{N}\right) ^{\prime }$, or define a
high-dimensional weight matrix: only the individual, unit-by-unit, estimates
of the $\alpha _{i}$s are required. Moreover, we only require a rate of
convergence for the estimate $\widehat{\alpha }_{i}$. Hence, the assumptions
on serial dependence and moment existence can be relatively mild.

\smallskip

To the best of our knowledge, our contribution  is the first application to asset pricing of
tests based on randomizing a function of the data, i.e.\ a statistic. While novel to this setting, this type of randomized tests have been
used in econometrics and statistics, particularly where a limiting
distribution is unavailable or non-pivotal, or where its derivation requires
excessively restrictive assumptions; although a comprehensive literature
review goes beyond the scope of this paper, we refer to \citet{corradi2006}
for a first  application of these randomized tests in econometrics, and
to the paper by \citet{he2023one} for references.\footnote{We emphasize that this paper is not, in general, the first alpha testing procedure based on a randomization. The literature has considered bootstrap-based tests for alpha; see, for example, \citet{sullivan1999data}, \citet{white2000reality}, \citet{kosowski2006can} and \citet{fama2010luck}. While  bootstrap and randomization are both based on an added source of randomness, their core statistical mechanisms are very different, and we extensively discuss similarities and differences between bootstrap and randomized tests in Section \ref{comparison}.
}

We make at least four contributions to the current literature. First, whilst
we focus on the specific case of \textquotedblleft testing for
alpha\textquotedblright , we propose a novel methodology to construct tests
involving a growing number of parameters with no need for joint estimation
and, in essence, no need to take the dimensionality of the problem into
account. In a similar spirit, although in (\ref{eq:fac_model}) we assume
that the common factors $f_{t}$ are the same across all units, our approach
can be readily extended e.g.\ to the case where pricing factors are
heterogeneous across assets. Indeed, the factor structure could be so
heterogeneous that no factor influences all the assets, which makes our
testing procedure robust to the presence of weak and semi-strong pricing
factors, and 
we do not need any factor to be \textquotedblleft strong\textquotedblright\
in our model.\footnote{%
Following \cite{chudik2011weak}, the $k$-th pricing factor is \textit{strong }if $\beta
_{i,k}\neq 0$ for all $i=1,\dots ,N$; and it is \textit{weak} (resp. \textit{semi-strong}) when $\beta
_{i,k}\neq 0$ for $\lfloor N^{\gamma }\rfloor $ of the assets with $0<\gamma
<1/2$ (resp. $1/2\leq \gamma <1)$. %
} Second, we relax several
technical conditions required in the extant literature, allowing for
serial and cross-sectional dependence, conditional
heteroskedasticity, thicker tails and a larger $N/T$ ratio than allowed for
in other contributions. Third, we enhance the randomized test by
proposing a decision rule which shows excellent control for Type I and Type
II errors in simulations. Fourth and last, our methodology is flexible and can be readily extended to other frameworks. 

\smallskip

The remainder of the paper is organized as follows. We discuss our set-up
and assumptions in Section \ref{model}. The hypotheses of interest and the
randomized testing approach are discussed in Section \ref{tests}; we offer a heuristic description of the core statistical mechanism of our methodology, and its relationship with other randomization methods, in Section \ref{comparison}; we report
the derandomized decision rule in Section \ref{derandom}; and, in Section %
\ref{practice}, we offer guidelines for the practical implementation of our
procedure. Simulations are in Section \ref{simulations}, while Section %
\ref{empirical} contains an empirical illustration. Conclusions and further
lines of research are in Section \ref{conclusion}. Further Monte Carlo and empirical
evidence, extensions, technical lemmas and proofs are in the Supplement.

NOTATION. We define the probability space $\left( \Omega ,\boldsymbol{B},%
\mathbb{P}\right) $, where $\Omega $ is the sample space with elements $%
\omega \in \Omega $, $\boldsymbol{B}$ the space of events, and $\mathbb{P}
$ the probability function. Given a random
variable $X$, $\mathbb{E}\left( X\right) $ is the mean and $\mathcal{V}%
\left( X\right) $ is the covariance viz. $\mathcal{V}\left( X\right) =%
\mathbb{E}\left[ \left( X-\mathbb{E}\left( X\right) \right) \left( X-\mathbb{%
E}\left( X\right) \right) ^{\prime }\right] $; further, given $r>0$, we
denote the $\mathcal{L}_{r}$-norm of $X$ as $\left\vert X\right\vert
_{r}=\left( \mathbb{E}\left\vert X\right\vert ^{r}\right) ^{1/r}$. The
indicator function of a set $\mathcal{A}$ is denoted as $\mathbb{I}\left( 
\mathcal{A}\right) $. We use: \textquotedblleft a.s\textquotedblright\ for
\textquotedblleft almost sure(ly)\textquotedblright ; \textquotedblleft $%
\rightarrow $\textquotedblright\ to denote the ordinary limit; and
\textquotedblleft $\overset{a.s.}{\rightarrow }$\textquotedblright\ to
denote almost sure convergence. Orders of magnitude for an almost surely
convergent sequence with multiparameter index $\Pi $, say $s_{\Pi }$ are
denoted as $O_{a.s.}\left( b_{\Pi }\right) $\ and $o_{a.s.}\left( b_{\Pi
}\right) $\ when, respectively, $\mathbb{P}\left( \limsup_{\Pi \rightarrow
\infty }\left\vert b_{\Pi }^{-1}s_{\Pi }\right\vert <\infty \right) $ $=$ $1$%
, and $b_{\Pi }^{-1}s_{\Pi }\overset{a.s.}{\rightarrow }0$, as $\Pi
\rightarrow \infty $. Positive, finite constants are denoted as $c_{0}$, $%
c_{1}$, ... and their value may change from line to line. Other, relevant
notation is introduced later on in the paper.

\section{Model and assumptions\label{model}}

Recall our workhorse model (\ref{eq:fac_model}) 
\begin{equation*}
y_{i,t}=\alpha _{i}+\beta _{i}^{\prime }f_{i,t}+u_{i,t}.
\end{equation*}%
We begin with a definition of weak dependence which we use throughout the
paper.

\begin{definition}
\label{bernoulli}The sequence $\left\{ m_{t},-\infty <t<\infty \right\} $
forms an $\mathcal{L}_{\nu }$-decomposable Bernoulli shift if and only if it
holds that $m_{t}=g\left( \eta _{t},\eta _{t-1},...\right) $, where: (i) $%
g:S^{\infty }\rightarrow \mathbb{R}^{k}$ is a non random measurable
function; (ii) $\left\{ \eta _{t},-\infty <t<\infty \right\} $ is an \textit{%
i.i.d.} sequence with values in a measurable space $S$; (iii) $\left\vert
m_{t}\right\vert _{\nu }<\infty $; and (iv) $\left\vert m_{t}-m_{t,\ell
}^{\ast }\right\vert _{\nu }\leq c_{0}\ell ^{-a}$, for some $c_{0}>0$ and $%
a>0$, where $m_{t,\ell }^{\ast }=g\left( \eta _{t},...,\eta _{t-\ell
+1},\eta _{t-\ell ,t,\ell }^{\ast },\eta _{t-\ell -1,t,\ell }^{\ast
}...\right) $, with $\left\{ \eta _{s,t,\ell }^{\ast },-\infty <s,\ell
,t<\infty \right\} $ \textit{i.i.d.} copies of $\eta _{0}$ independent of $%
\left\{ \eta _{t},-\infty <t<\infty \right\} $.
\end{definition}

Decomposable Bernoulli shifts (\citealp{ibragimov1962some}) are a convenient way to model stationary,
dependent time series. Virtually
all the most common DGPs in econometrics and statistics satisfy Definition \ref{bernoulli}. \citet{liu2009strong} provide various theoretical results, and numerous examples including ARMA-GARCH sequences, and other nonlinear time series models (e.g. Random Coefficient AutoRegressive and threshold models). 

\smallskip

We are now ready to present our assumptions.

\begin{assumption}
\label{error} For all $1\leq i\leq N$ and some $\nu \geq 4$, $\left\{
u_{i,t},-\infty <t<\infty \right\} $ is an $\mathcal{L}_{\nu }$-decomposable
Bernoulli shift, with $a>\left( \nu -1\right) /\left( \nu -2\right) $, $%
\mathbb{E}u_{i,t}=0$, and $\min_{1\leq i\leq N}\mathbb{E}u_{i,t}^{2}>0$.
\end{assumption}

\begin{assumption}
\label{regressor} For some $\nu \geq 4$, $\left\{ f_{t},-\infty <t<\infty
\right\} $ is an $\mathcal{L}_{\nu }$-decomposable, $K$-dimensional
Bernoulli shift, with $a>\left( \nu -1\right) /\left( \nu -2\right) $ and
positive definite covariance matrix $\mathcal{V}\left( f_{t}\right) $.
\end{assumption}

\begin{assumption}
\label{exogeneity} It holds that $\mathbb{E}\left( f_{t}u_{i,t}\right) =0$,
for all $1\leq i\leq N$.
\end{assumption}

Assumptions \ref{error} and \ref{regressor} allow for (weak) serial dependence in $u_{i,t}$ and $f_{t}$, e.g. due to nonlinear phenomena such as conditional heteroskedasticity. In
contrast, the tests by \citet{fan2015power}%
, \citet{gagliardini2016time}, \citet{feng2022high} and \citet{pesaran2023testing} all assume
independence over time of $u_{i,t}$, thus being unable to accommodate idiosyncratic
conditional heteroskedasticity in asset returns. The assumptions imply that the unconditional variances of errors and
pricing factors are constant over time, similarly to Assumptions A1 and A2
in \cite{feng2022high}. However, as we also discuss in Section \ref{comparison}, we can extend our set-up to the case of \textit{unconditional} heteroskedasticity, allowing for different regimes along
similar lines as Assumption 2.2 in \citet{horvath2025detecting}.\footnote{That is, assuming $%
\left\{ f_{t},1\leq t\leq T\right\} =\bigcup_{\ell =1}^{q}\left\{
f_{\ell ,t},t_{\ell -1}<t\leq t_{\ell }\right\} $ with $t_{0}=1$\ and $%
t_{q}=T$\, and $\left\{ f_{\ell ,t},-\infty <t<\infty \right\} $\ satisfying Assumption \ref{regressor} for each segment $1\leq \ell \leq q$\
(and similarly for $u_{i,t}$).} Our assumptions also require error terms to have four finite moments, as opposed to
the Gaussianity assumption in \citet{fan2015power}. Similarly,
Assumption \ref{regressor} is substantially milder than the sub-exponential tails
constraint on $f_t$ in \citet{fan2015power} and \citet{feng2022high}.\\
Assumption \ref{error} also allows for more general forms  of
cross-sectional dependence in the errors, compared with the extant literature. For instance, existing max-type tests such as the one by \citet{feng2022high}  require some restrictions on the covariance matrix of the error term.\footnote{In particular, Assumption 3 in \citet{feng2022high} implies that the covariance matrix of the error term be invertible and that cross-covariances are absolutely summable. In turn, this is required in order for the individual \textit{t}-statistics associated with the $\alpha_i$s to be weakly correlated, thus being able to apply standard Extreme Value Theory when deriving the asymptotic distribution of their maximum.} We do not impose any such requirement and, in principle, our methodology can also deal  with  strong cross-sectional dependence in $u_{i,t}$ as long as the maintained zero-mean assumption  is satisfied. However, in the context of asset pricing such a case requires attention. Indeed, a possible cause of cross-sectional dependence could be the omission of strong or semi-strong factors, which then show up in the error term $u_{i,t}$. If these omitted tradable factors are priced - thus having a non-zero risk-premium (mean) - then $\alpha_i$ captures both the pure pricing error \textit{and} the risk premium of the omitted factors, regardless of their correlation with the factors included in the model \citep{giglio2021thousands}. However, $\alpha_i$ remains a pure pricing error when there are  {\textquotedblleft time series factors\textquotedblright} in $u_{i,t}$, that is \textit{tradable} factors $g_t$ which have zero mean $\mathbb{E}\left(
g_{t}\right) =0$ -- and hence do not contribute to the pricing of the assets -- and are orthogonal to $f_t$, $\mathbb{E}\left(
f_{t}g_{t}^{\prime }\right) =0$. Omitting $g_t$ results in $u_{i,t}=\gamma _{i}^{\prime
}g_{t}+\xi _{i,t}$, where $\xi _{i,t}$ is a purely idiosyncratic component. This, in turn, renders all the existing procedures invalid, but it is allowed under our Assumptions   \ref{error} - \ref{exogeneity}.

Assumption \ref{exogeneity} is a weak exogeneity requirement, less restrictive than the independence assumption in %
\citet{fan2015power}, \citet{feng2022high}, and \citet{pesaran2023testing}.

\section{The test\label{tests}}

Recall the null hypothesis of (\ref{null}), i.e. $ 
\mathbb{H}_{0}:\max_{1\leq i\leq N}\left\vert \alpha _{i}\right\vert =0$. As mentioned in the introduction, \textit{any} consistent estimator of $\alpha_i$ could be employed to construct our test statistics. Here, we focus on the unit-by-unit OLS estimator,\footnote{We note that the $1 \leq i \leq N$ equations in (\ref{eq:fac_model}) share the same set of regressors; hence, the OLS estimator coincides with a system-based (Feasible) GLS estimation.} viz.
\begin{equation}
\,\widehat{\alpha }_{i,T}=\overline{y}_{i}-\widehat{\beta }_{i,T}^{\prime }%
\overline{f},  \label{alpha-hat}
\end{equation}%
for  $\overline{y}_{i}=T^{-1}\sum_{t=1}^{T}y_{i,t}$, $\overline{f}%
=T^{-1}\sum_{t=1}^{T}f_{t}       $, and 
$
\widehat{\beta }_{i,T}=\left[ \sum_{t=1}^{T}\left( f_{t}-\overline{f}\right)
\left( f_{t}-\overline{f}\right) ^{\prime }\right] ^{-1}\left[
\sum_{t=1}^{T}\left( f_{t}-\overline{f}\right) y_{i,t}\right]
$.
Recall that the data admit at least $\nu \geq 4$ moments. We use the
transformation%
\begin{equation}
\psi _{i,NT}=\left\vert \frac{T^{1/\nu }\widehat{\alpha }_{i,T}}{\widehat{s}%
_{NT}}\right\vert ^{\nu /2},  \label{psi_iT}
\end{equation}
where the rescaling sequence is based on the squared OLS residuals $\widehat{%
u}_{i,t}^{2}$: 
\begin{equation}
\widehat{s}_{NT}=\sqrt{\frac{1}{NT}\sum_{i=1}^{N}\sum_{t=1}^{T}\widehat{u}%
_{i,t}^{2}}.  \label{s_NT}
\end{equation}%
As we only work with rates of convergence -- requiring $\psi _{i,NT}$
to drift to zero under the null and diverge under the alternative --   in principle any rescaling sequence which removes the measurement unit
from $\widehat{\alpha }_{i,T}$ can be used in (\ref{psi_iT}).
For example, one
may use an estimate of the variance of the individual $\widehat{\alpha }%
_{i,T}$, and indeed such estimate does not even need to be consistent. We
propose $\widehat{s}_{NT}$ because the cross-sectional averaging smooths away any potentially problematic behavior such as the presence of spikes in the unit specific variances; this choice also turns out to deliver the best performance in simulations, and to help in the empirics (see the discussion in Section \ref{empirical}). 

\smallskip

We now explain the construction of the test statistic. Heuristically, under our assumptions it should hold that $\widehat{%
\alpha }_{i,T}-\alpha _{i}=O_{a.s.}\left( T^{-1/2}\right) $. Hence, under
the null of (\ref{null}), $\psi _{i,NT}%
\overset{a.s.}{\rightarrow }0$ by construction; conversely, under the alternative, $\psi _{i,NT}\overset{a.s.}{\rightarrow }\infty $. In order to
have a statistic to test for $\mathbb{H}_{0}$, we now perturb the $\psi
_{i,NT}$s by adding a sequence of \textit{i.i.d.} Gaussian variables%
\begin{equation}
z_{i,NT}=\psi _{i,NT}+\omega _{i}, \label{z}
\end{equation}%
where $\omega _{i}\overset{i.i.d.}{\sim }\mathcal{N}\left( 0,1\right) $,
generated independently of the sample $\left\{ \left( u_{i,t},f_{t}^{\prime
}\right) ^{\prime },1\leq i\leq N,1\leq t\leq T\right\} $. Thus,
under the null $z_{i,NT}$ should be an \textit{i.i.d.} sequence of standard
normals; under the alternative, there should be (at least) one spike due to
the fact that $\alpha _{i}\neq 0$ for some $i$. Hence, we base our test on
the maximally selected $z_{i,NT}$:%
\begin{equation*}
Z_{N,T}=\max_{1\leq i\leq N}z_{i,NT}.
\end{equation*}%
In order to study the asymptotics of $Z_{N,T}$, let%
\begin{equation*}
b_{N}=\sqrt{2\log N}-\frac{\log \log N+\ln \left( 4\pi \right) }{2\sqrt{%
2\log N}}\text{, \ \ and \ \ }a_{N}=\frac{b_{N}}{1+b_{N}^{2}},
\end{equation*}%
and consider the following restriction:

\begin{assumption}
\label{asymptotics}It holds that $N=O\left( T^{\frac{1}{2}\left(\frac{\nu}{2}%
-1\right)-\varepsilon }\right) $ for some $\varepsilon >0$.
\end{assumption}

Assumption \ref{asymptotics} poses a constraint on the relative rate of
divergence of $N$ and $T$ as they pass to infinity: the more
moments the data admit, the larger $N$ can be relative to $T$. A comparison
with the similar Assumption A1(iii) in \citet{feng2022high} may
shed further light: if, similarly to \citet{feng2022high}, we assumed independence
between $f_{t}$ and $u_{i,t}$, then Assumption \ref{asymptotics} would
become $N=O\left( T^{\nu /2-1-\varepsilon }\right) $ for some (arbitrarily
small) $\varepsilon >0$, which coincides with Assumption A1(iii) in %
\citet{feng2022high}. Similarly, the asymptotics in %
\citet{pesaran2023testing} requires $N=o\left( T^{2}\right) $, under the
assumptions of deterministic regressors and at least eight finite moments
for the errors. In our case, as long as $\nu \geq 6$, the condition that $%
N=o\left( T^{2}\right) $ is satisfied, and therefore we have either the same
asymptotic regime with a milder moment condition, or, with the same moment
condition, a larger $N$ relative to $T$. In Section \ref{complements} in
the Supplement, we discuss the possibility of relaxing -- given $\nu $ --
Assumption \ref{asymptotics} to allow for a broader set of combinations of $%
N $ and $T$.

\smallskip

Let $\mathbb{P}^{\ast }$ denote the probability conditional on the sample $\left\{ \left( u_{i,t},f_{t}^{\prime }\right) ^{\prime },1\leq i\leq
N,1\leq t\leq T\right\} $.

\begin{theorem}
\label{asy-max} Let Assumptions \ref{error}-\ref{asymptotics} hold. Then, under $\mathbb{H}_{0}$ of (\ref{h0}), it holds that%
\begin{equation}
\lim_{\min \left\{ N,T\right\} \rightarrow \infty }\mathbb{P}^{\ast }\left[ a_{N}^{-1} \left( Z_{N,T}-b_{N} \right) \leq x \right] =\exp \left( -\exp \left( -x\right)
\right) ,  \label{th-null}
\end{equation}%
for almost all realizations of $\left\{ \left( u_{i,t},f_{t}^{\prime
}\right) ^{\prime },1\leq i\leq N,1\leq t\leq T\right\} $, and all $-\infty
<x<\infty $. Under $\mathbb{H}_{A}$ of (\ref{hA}), it holds that%
\begin{equation}
\lim_{\min \left\{ N,T\right\} \rightarrow \infty }\mathbb{P}^{\ast }\left[ a_{N}^{-1} \left( Z_{N,T}-b_{N} \right) \leq x \right] =0,  \label{th-alt}
\end{equation}%
for almost all realizations of $\left\{ \left( u_{i,t},f_{t}^{\prime
}\right) ^{\prime },1\leq i\leq N,1\leq t\leq T\right\} $, and all $-\infty
<x<\infty $.
\end{theorem}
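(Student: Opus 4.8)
The plan is to exploit the fact that, conditionally on the sample $\left\{\left(u_{i,t},f_t^{\prime}\right)^{\prime}\right\}$, the only randomness in $Z_{N,T}$ comes from the i.i.d.\ shocks $\omega_i$, so that $z_{i,NT}=\psi_{i,NT}+\omega_i$ is a collection of independent $\mathcal N(\psi_{i,NT},1)$ variables whose means $\psi_{i,NT}\ge 0$ are deterministic given the data. The whole argument then reduces to two ingredients: controlling the size of the deterministic shifts $\delta_{NT}:=\max_{1\le i\le N}\psi_{i,NT}$, and invoking the classical extreme-value limit for the maximum of $N$ i.i.d.\ standard normals, whose normalizing constants are exactly the $a_N,b_N$ defined in the paper.

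For the null claim (\ref{th-null}), the first and main step is to show that $\delta_{NT}/a_N\overset{a.s.}{\to}0$. Under $\mathbb H_0$ we have $\alpha_i=0$, so $\widehat\alpha_{i,T}=\overline y_i-\widehat\beta_{i,T}^{\prime}\overline f$ with $\overline y_i=\overline u_i$; using the $\mathcal L_\nu$-decomposable Bernoulli shift structure of Assumptions \ref{error}--\ref{regressor}, a Rosenthal/maximal-type inequality for weakly dependent sums, and the exogeneity condition of Assumption \ref{exogeneity}, I would establish a moment bound of the form $\max_{1\le i\le N}\mathbb E|\widehat\alpha_{i,T}|^{\nu/2}\le c_0\,T^{-\nu/4}$. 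The exponent $\nu/2$ rather than $\nu$, and the rate $T^{-\nu/4}$ rather than $T^{-\nu/2}$, is forced by handling the cross term $\widehat\beta_{i,T}^{\prime}\overline f$ through H\"older's inequality --- only $\mathbb E\left(f_tu_{i,t}\right)=0$ is available, not independence --- and this is precisely the source of the factor $\tfrac{1}{2}$ in Assumption \ref{asymptotics}. Combining this with the fact that $\widehat s_{NT}$ in (\ref{s_NT}) is bounded away from $0$ and $\infty$ a.s.\ (using $\min_i\mathbb E u_{i,t}^2>0$), a Markov inequality and a union bound give, for any threshold $\epsilon_{NT}$,
\[
\mathbb P\left(\delta_{NT}>\epsilon_{NT}\right)\le c_1\,N\,T^{1/2-\nu/4}\,\epsilon_{NT}^{-1}.
\]
Under Assumption \ref{asymptotics} one has $N\,T^{1/2-\nu/4}\le c_2\,T^{-\varepsilon}$, so the choice $\epsilon_{NT}=T^{-\varepsilon/2}$ makes the right-hand side summable --- whence $\delta_{NT}\le\epsilon_{NT}$ eventually by Borel--Cantelli --- while simultaneously $\epsilon_{NT}/a_N\asymp T^{-\varepsilon/2}\sqrt{\log N}\to0$; together these yield $\delta_{NT}/a_N\overset{a.s.}{\to}0$.

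The second step is a pathwise sandwich in $\omega$. Since $0\le\psi_{i,NT}\le\delta_{NT}$ for every $i$, writing $M_N=\max_{1\le i\le N}\omega_i$ we have, for every realization of the shocks,
\[
M_N\le Z_{N,T}\le M_N+\delta_{NT},
\]
and hence $\left(M_N-b_N\right)/a_N\le\left(Z_{N,T}-b_N\right)/a_N\le\left(M_N-b_N\right)/a_N+\delta_{NT}/a_N$. By the classical Gaussian extreme-value theorem, $\left(M_N-b_N\right)/a_N$ converges in distribution, under $\mathbb P^\ast$, to the Gumbel law $\exp\left(-\exp\left(-x\right)\right)$; since $\delta_{NT}/a_N\to0$ for almost all samples, a squeeze delivers (\ref{th-null}). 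For the alternative (\ref{th-alt}), fix an index $i_0$ with $\alpha_{i_0}\neq0$; by the strong law for Bernoulli shifts $\widehat\alpha_{i_0,T}\overset{a.s.}{\to}\alpha_{i_0}\neq0$, so $\psi_{i_0,NT}=|T^{1/\nu}\widehat\alpha_{i_0,T}/\widehat s_{NT}|^{\nu/2}$ diverges a.s.\ at the polynomial rate $T^{1/2}$. Because $Z_{N,T}\ge z_{i_0,NT}=\psi_{i_0,NT}+\omega_{i_0}$ and $\psi_{i_0,NT}=O_{a.s.}\left(T^{1/2}\right)$ dominates $b_N=O\left(\sqrt{\log N}\right)$ while $a_N\to0$, we obtain $\left(Z_{N,T}-b_N\right)/a_N\to+\infty$ for almost all samples, whence $\mathbb P^\ast\left(\left(Z_{N,T}-b_N\right)/a_N\le x\right)\to0$ for every fixed $x$.

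The hard part will be the moment bound $\max_i\mathbb E|\widehat\alpha_{i,T}|^{\nu/2}\le c_0\,T^{-\nu/4}$ under serial dependence and mere weak exogeneity, and in particular the calibration of the union-bound threshold $\epsilon_{NT}$ so that it is \emph{simultaneously} summable (for Borel--Cantelli) and of smaller order than $a_N$. This double requirement is exactly what pins the proof to the divergence rate permitted by Assumption \ref{asymptotics}, and it is also where the joint-limit bookkeeping (a multiparameter Borel--Cantelli over $\min\{N,T\}\to\infty$) must be carried out with care.
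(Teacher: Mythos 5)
Your architecture is genuinely different from the paper's and, at the structural level, sound. The paper computes $\mathbb{P}^{\ast }\left( Z_{N,T}\leq a_{N}x+b_{N}\right) =\prod_{i=1}^{N}\Phi \left( a_{N}x+b_{N}-\psi _{i,NT}\right) $ directly and controls the correction factor through a Lagrange expansion of $\log \Phi $, which only requires $\sum_{i=1}^{N}\psi _{i,NT}=o_{a.s.}\left( 1\right) $ (its Lemma \ref{psi}); you instead sandwich $M_{N}\leq Z_{N,T}\leq M_{N}+\delta _{NT}$ with $M_{N}=\max_{i}\omega _{i}$ and invoke the Gaussian Fisher--Tippett--Gnedenko theorem, which is cleaner but needs the slightly different statement $\delta _{NT}/a_{N}\asymp \sqrt{\log N}\,\max_{i}\psi _{i,NT}\rightarrow 0$ a.s. Your treatment of the alternative (a single mis-priced index $i_{0}$ forces $\Phi \left( a_{N}x+b_{N}-\psi _{i_{0},NT}\right) \rightarrow 0$) is essentially the paper's argument restricted to one index, and is fine. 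Moreover, your target statement under the null does follow from the paper's assumptions: the proof of Lemma \ref{psi} actually delivers $\sum_{i}\psi _{i,NT}=o_{a.s.}\left( T^{-\varepsilon }\left( \log N\log T\right) ^{2+\epsilon }\right) $ under Assumption \ref{asymptotics}, and since $\log N=O\left( \log T\right) $ this dominates the extra $\sqrt{\log N}$ factor your squeeze requires.

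The genuine gap is in how you propose to prove that almost-sure statement. Your calibration gives $\mathbb{P}\left( \delta _{NT}>\epsilon _{NT}\right) \leq c_{2}T^{-\varepsilon /2}$ with $\epsilon _{NT}=T^{-\varepsilon /2}$, and you claim this is summable so that Borel--Cantelli applies. It is not: Assumption \ref{asymptotics} only guarantees some $\varepsilon >0$, and $\sum_{T}T^{-\varepsilon /2}<\infty $ requires $\varepsilon >2$; worse, the events are indexed by the pair $\left( N,T\right) $, and the double series $\sum_{N}\sum_{T}T^{-\varepsilon /2}$ diverges for every $\varepsilon $, so a naive multi-index Borel--Cantelli cannot close the argument at all. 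This is exactly why the paper does not argue via Markov plus union bound: it first establishes a Baum--Katz/M\'{o}ricz-type maximal-inequality strong law for multi-index arrays (Lemma \ref{stout}) and feeds the moment bounds (Lemmas \ref{summability}, \ref{average-error}, \ref{cross-prod}) into it, obtaining $o_{a.s.}$ rates, valid in the joint limit $\min \left\{ N,T\right\} \rightarrow \infty $, with logarithmic corrections. Your proof becomes complete if the Borel--Cantelli step is replaced by this device: dispose of the random denominators via separate a.s. bounds (Lemmas \ref{denominator-LS} and \ref{normalising}), then apply Lemma \ref{stout} to $\sum_{i}\left\vert \overline{u}_{i}\right\vert ^{\nu /2}$ and $\sum_{i}\left\vert T^{-1}\sum_{t}f_{t}u_{i,t}\right\vert ^{\nu /2}$, which yields $\sqrt{\log N}\,\delta _{NT}\leq \sqrt{\log N}\sum_{i}\psi _{i,NT}\rightarrow 0$ a.s., and your squeeze then goes through.
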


Theorem \ref{asy-max} describes the limiting behavior of the test statistic $%
Z_{N,T}$ both under the null and under the alternative hypotheses. By (\ref%
{th-null}) the suitably normed version of $Z_{N,T}$ converges (in
distribution, a.s.\ conditionally on the sample) to a Gumbel distribution.\footnote{In Section \ref{app:MC_newCV} in the Supplement, we also propose an alternative, fixed $N$ version of the critical values. As discussed in Section \ref{simulations}, these alternative critical values return better finite sample results  when $T$ grows larger for a fixed $N$.}
Equation (\ref{th-null}) implies that asymptotic critical values at nominal
level $\tau $ are given by 
\begin{equation}
c_{\tau }=b_{N}-a_{N}\log \left( -\log \left( 1-\tau \right) \right) .
\label{evt-asy}
\end{equation}%
Similarly, (\ref{th-alt}) roughly states that under the alternative (the
suitably normed version of) $Z_{N,T}$ diverges to positive infinity in
probability, a.s.\ conditional on the sample.\footnote{Following the proof of the theorem, it can be readily shown that a sufficient condition to have asymptotic unit power is that, as $\min \{N,T\} \rightarrow \infty$, it holds that $\sqrt{T/\log N} \max_{1 \leq i \leq N} |\alpha_i| \rightarrow \infty$. This entails that our test has power as long as one alpha is (mildly) larger than zero. The same result can be shown to hold, \textit{a fortiori}, in the case of a \textquotedblleft small\textquotedblright , pervasive alternative whereby $|\alpha_i|=|\alpha|$ for all $i$, with $\sqrt{T/\log N}|\alpha| \rightarrow \infty$ - a case known in the literature as the \textquotedblleft diffuse alpha\textquotedblright\ case (\citealp{chernov2025test}).}

\subsection{Discussion\label{comparison}}

We now discuss how and why
randomization works in our case, and what relationship it has with other
approaches based on \textquotedblleft adding randomness\textquotedblright\
such as the bootstrap. As \citet{zhang2023randomization} put it, \textquotedblleft [t]he meaning of randomization tests has become obscure in statistics education and practice over the last century \textquotedblright\ (p.\ 2928). Hence, some clarifications on the core statistical mechanism underpinning our randomized test are in order. As we expound hereafter, the main feature of our approach is that our randomization is based on adding randomness \textit{to the test statistic}, rather than \textit{to the data}.

Our approach works as follows. To start, we construct the statistic $\psi _{i,NT}$ defined in (\ref{psi_iT}),
based on the data, as in any \textquotedblleft
traditional\textquotedblright\ testing approach. However, we do not require its second order properties (that is, its limiting distribution and/or asymptotic efficiency),
and use only  its first order properties (that is, its rate of
convergence). Being able to focus only on rates requires simpler arguments and milder assumptions; furthermore, at no stage do we require the estimation of asymptotic variances.\footnote{Indeed, this entails that our approach, by its very nature, places more emphasis on the \textit{robustness} of the estimator employed.} In the construction of $\psi_{i,NT}$, we pre-multiply $\widehat{\alpha }_{i}$\ by the scaling factor $T^{1/\nu }$, which - heuristically - is designed to \textquotedblleft wash out\textquotedblright\ the estimation error $\widehat{\alpha}_i-\alpha_i$, and therefore the randomness
coming from the data. Our theory uses \textit{almost sure} rates for the statistic $\psi _{i,NT}$; the results in Lemma \ref{psi} in the Supplement yield that, for each $i$%
\begin{equation}
\mathbb{P}\left( \omega :\lim_{\min \left\{ N,T\right\} \rightarrow \infty }\psi
_{i,NT}=0\right) =1\text{, under }\mathbb{H}_{0},  \label{as-0}
\end{equation}%
so that, in our proofs, we can work with the premise that (under the
null) $\lim_{\min \left\{ N,T\right\} \rightarrow \infty}\psi _{i,NT}=0$.\footnote{Similarly, the
theory also entails that, for each $i$, $\mathbb{P}\left( \omega :\lim_{\min \left\{ N,T\right\} \rightarrow \infty }\psi _{i,NT}=\infty
\right) =1\text{, under }\mathbb{H}_{A}$; hence, in our proofs, we can work with the premise that (under the
alternative) $\lim_{\min \left\{ N,T\right\} \rightarrow \infty }\psi _{i,NT}=\infty$. }
Seeing as the randomness of the data has been washed out by pre-multiplying $\widehat{\alpha }_{i}$\ by $T^{1/\nu }$, in order to construct a test, we add randomness in the form of (\textit{i.i.d.} and
Gaussian) noise to each $\psi _{i,NT}$, thus constructing the sequence $%
\left\{ z_{i,NT},1\leq i\leq N\right\} $\ defined in (\ref{z}). By (%
\ref{as-0}), we can derive the limiting law of $Z_{N,T} = \max_{1\leq i\leq N}z_{i,NT}$%
, and - crucially - show that this depends solely on the probability measure of the added randomness. Given that this is the only driver of the asymptotic behavior of $Z_{N,T}$, and that it is fully under the researcher's control, it can be studied
straightforwardly, with no need for further assumptions on the data apart
from the ones required to derive the rates of $\psi _{i,NT}$. This is the key
argument whereby our test can be employed, among other things, under arbitrary levels of
cross-sectional dependence: the only cross-sectional dependence that matters for the asymptotics of $Z_{N,T}$ is the one in the
added random noise - which can be forced to equal zero by the researcher. Considering as a leading term of comparison the max-type test by \citet{feng2022high}, in that case the asymptotic behavior of the test statistic is determined by the probability measure of the data, and therefore it is affected by the presence and extent of cross-sectional dependence across $i$. By the same token, in (\ref{psi_iT}) we do not require a consistent estimator of the variance of the estimated $\alpha_i$, but merely a rescaling factor such as $\widehat{s}_{NT}$, designed to make $\psi_{i,NT}$ adimensional. An immediate consequence is that - as mentioned in Section \ref{model} - in our set-up we can readily allow for unconditional heteroskedasticity in both factors and errors. Seeing as we rely only on rates of convergence for $\widehat{\alpha}_i$, and no estimation of the asymptotic variance of the estimators is required, our results would hold even in this case, without requiring any modifications or even prior knowledge as to the presence of heteroskedasticity. Conversely, \citet{feng2022high} use a sequence of \textit{t}-statistics across $i$, and thus a consistent estimator of the variance of $\widehat{\alpha}_i$ is required - which, especially in the presence of heteroskedasticity, is well known to be fraught with difficulties (see e.g. the review, and the proposed solution, in \citealp{casini2024fixed}).

\smallskip 

Further light on our approach can be shed by comparing it with approaches where the randomness is added to the data, a prime example being the bootstrap. The
mode of convergence in Theorem \ref{asy-max} - where, under the null, $%
Z_{N,T}$ converges \textquotedblleft in distribution,
almost surely conditional on the sample\textquotedblright\ - is the same as
one would find in the case of the bootstrap (\citealp{bickelfreedman}).
Notwithstanding this analogy, the way in which randomness is added, the
way in which the theory works, and the assumptions required on the data are
profoundly different to the randomization method proposed herein. In the bootstrap, randomness is added by resampling the data
multiple times, and constructing a (pseudo) version of the test statistic at
each resampling: the randomness of the data is not washed away. Hence, the asymptotic behavior of the
resampled statistic is still affected by the features of the data -
e.g., by serial and/or cross-sectional dependence.\footnote{E.g. in a \textquotedblleft traditional\textquotedblright\ resampling scheme, given data $\left\{ y_{i},1\leq i\leq n\right\} $, at each iteration $%
1\leq b\leq B$ the pseudosample $\left\{ y_{i,b}^{\ast },1\leq i\leq
n\right\} $ is constructed such that $\mathbb{P}\left( y_{i,b}^{\ast
}=y_{j}|\left\{ y_{i},1\leq i\leq n\right\} \right) =1/n$. Hence, the
(conditional) law of $y_{i,b}^{\ast }$ is 
$\mathbb{P}_{n}\left( y\right) =n^{-1}\sum_{i=1}^{n}\mathbb{I}\left(
y_{i}\leq y\right)$,
where $\mathbb{I}\left( \cdot \right) $ denotes the indicator function, which clearly depends on the features of the data $\left\{
y_{i},1\leq i\leq n\right\} $.} Furthermore, a typical way of proving the
validity of the bootstrap is to show that the distribution of the resampled
test statistic, conditional on the sample, converges in some sense (e.g., in
distribution \textit{a.s.} conditional on the sample) to the asymptotic distribution of
the original test statistic.
This, however, requires deriving such asymptotic distribution, which is likely to be more complicated and to require stronger assumptions than simply deriving its convergence rate.\footnote{Similar considerations also hold for other approaches based on adding randomness to the data. For example, the randomized tests studied in \citet{canay2017randomization} require that the distribution of the data be \textquotedblleft approximately symmetric\textquotedblright\ - that is, invariant under certain transformations. Such shape restrictions are not required by our approach. } Moreover, applying the bootstrap in our context is fraught with
difficulties: \citet{Huang2023} show that the approaches proposed
by \citet{kosowski2006can} and \citet{fama2010luck} may suffer from (even
severe) undersizing and low power, especially when the data exhibit features that are typical of financial returns (e.g., skewed unconditional distributions and large cross-sectional sizes). The bootstrap corrections suggested in \citet{Huang2023} ameliorate
these issues, but still require \textit{weak} cross-sectional dependence.
\\
Finally, a crucial difference between our approach and approaches based on adding randomness to the data is that, in the latter case, the added randomness vanishes in the
limit, and thus it does not affect the limiting behavior of the resulting test statistic. Conversely, the randomness added in our method does not
vanish asymptotically, which is a
well-known feature of this type of randomized tests (see e.g. %
\citealp{corradi2006}). We propose a solution in
the next section.

\subsection{Derandomized inference\label{derandom}}

The discussion above indicates that the results in Theorem \ref{asy-max} are different to \textquotedblleft
standard\textquotedblright\ inferential theory. In particular, (\ref{th-alt}%
) entails $\lim_{\min \left\{ N,T\right\} \rightarrow \infty }$ $\mathbb{P}^{\ast }\left(
Z_{N,T}\geq c_{\tau }|\mathbb{H}_{A}\right) =1$, 
which corresponds to the notion of power. The
result under the null is more delicate: whilst it holds that $\lim_{\min \left\{ N,T\right\} \rightarrow \infty }\mathbb{P}^{\ast }\left(
Z_{N,T}\geq c_{\tau }|\mathbb{H}_{0}\right) =\tau$, 
this result is not the standard notion of size. The fact that the added randomness in the construction of $Z_{N,T}$ does not vanish asymptotically entails that, under
the null, different researchers using the same data will obtain different
values of $Z_{N,T}$, and thus different p-values. \newline
We propose a decision rule to discern between $\mathbb{H}_{0}$ and $%
\mathbb{H}_{A}$ which is not driven by the added randomness, and is
therefore the same across all researchers using the same dataset. Following %
\citet{HT2019}, each researcher will compute $Z_{N,T}$ over $B$
replications, at each replication $1\leq b\leq B$ constructing a statistic $%
Z_{N,T}^{\left( b\right) }$ using a random sequence $\omega _{i}^{\left(
b\right) }\overset{i.i.d.}{\sim }\mathcal{N}\left( 0,1\right) $ for $1\leq
i\leq N$, independent across $1\leq b\leq B$ and of the sample. Let
\begin{equation}
Q_{N,T,B}\left( \tau \right) =B^{-1}\sum_{b=1}^{B}\mathbb{I}\left(
Z_{N,T}^{\left( b\right) }\leq c_{\tau }\right) ,  \label{q}
\end{equation}%
be the percentage of times that the researcher does not reject the null
at nominal significance level $\tau $. An immediate consequence of Theorem \ref%
{asy-max} is that, as $\min \left\{ N,T,B\right\} \rightarrow \infty$
\begin{equation}
\mathbb{P}^{\ast
}\left( Q_{N,T,B}\left( \tau \right) =1-\tau |\mathbb{H}_{0}  \right) =1\text{\qquad and\qquad }%
\mathbb{P}^{\ast
}\left( Q_{N,T,B}\left( \tau \right) =0 |\mathbb{H}_{A} \right) =1,  \label{q-null}
\end{equation}%
for almost all realizations of $\left\{ \left( u_{i,t},f_{t}^{\prime
}\right) ^{\prime },1\leq i\leq N,1\leq t\leq T\right\} $. This result holds
for all different researchers, and therefore averaging across the
replications $1\leq b\leq B$ removes the added randomness in $%
Q_{N,T,B}\left( \tau \right) $: hence, all researchers will
obtain the same value of $Q_{N,T,B}\left( \tau \right) $. As noted in %
\citet{he2023one}, $Q_{N,T,B}\left( \tau \right) $ corresponds to (the
complement to one of) the \textquotedblleft fuzzy decision\textquotedblright\ in equation (1.1a) in \citet{geyer}. This notion can be illustrated
by considering a random variable, say $\mathcal{D}$, which takes two values:
\textquotedblleft do not reject $\mathbb{H}_{0}$\textquotedblright
\thinspace\ with probability $Q_{N,T,B}(\tau )$, and \textquotedblleft
reject $\mathbb{H}_{0}$\textquotedblright . According to (\ref{q-null}),
asymptotically it holds that, a.s. conditionally on the sample $\mathbb{P}%
^{\ast }\left( \omega :\mathcal{D}=\text{\textquotedblleft reject }\mathbb{H}%
_{0}\text{\textquotedblright }|\mathbb{H}_{0}\right) =\tau $, across all
researchers, which reconciles the procedure with the notion of \textit{size}
of a test. Similarly, (\ref{q-null}) states that, asymptotically, $\mathbb{P}%
^{\ast }\left( \omega :\mathcal{D}=\text{\textquotedblleft reject }\mathbb{H}%
_{0}\text{\textquotedblright }|\mathbb{H}_{A}\right) =1$ a.s. conditionally
on the sample, which corresponds to the notion of \textit{power} of a test.

\begin{theorem}
\label{strong-rule} Let Assumptions {\ref{error}-\ref{asymptotics}%
} hold, and $B=O\left( \left(\log N\right)^2 \right) $. Then it holds that%
\begin{equation}
\limsup_{\min \left\{ N,T,B\right\} \rightarrow \infty }\sqrt{\frac{B}{2\log
\log B}} \left\vert \frac{Q_{\tau }-\left( 1-\tau \right) }{\sqrt{\tau
\left( 1-\tau \right) }}\right\vert =1\text{ a.s.},  \label{st1}
\end{equation}%
under $\mathbb{H}_{0}$, for almost all realizations of $\left\{ \left(
u_{i,t},f_{t}^{\prime }\right) ^{\prime },1\leq i\leq N,1\leq t\leq
T\right\} $. Under $\mathbb{H}_{A}$, it holds that $Q_{N,T,B}\left( \tau \right)
=o_{a.s.}\left( 1\right) $, for almost all realizations of $\left\{ \left(
u_{i,t},f_{t}^{\prime }\right) ^{\prime },1\leq i\leq N,1\leq t\leq
T\right\} $.
\end{theorem}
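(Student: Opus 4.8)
The plan is to condition on the sample and exploit the fact that, across the $B$ replications, the injected randomness is drawn afresh and independently. Fix a realization of $\left\{ \left( u_{i,t},f_{t}^{\prime }\right) ^{\prime }\right\} $ in the probability-one set on which Theorem \ref{asy-max} holds, and write $p_{N,T}=\mathbb{P}^{\ast }\left( Z_{N,T}\leq c_{\tau }\right) $. Since the vectors $\left\{ \omega _{i}^{\left( b\right) }\right\} _{i=1}^{N}$ are i.i.d.\ across $1\leq b\leq B$ and independent of the sample, the events $\left\{ Z_{N,T}^{\left( b\right) }\leq c_{\tau }\right\} $, $b=1,\dots ,B$, are, conditionally on the sample and on $\left( N,T\right) $, i.i.d.\ Bernoulli with common success probability $p_{N,T}$. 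Hence $Q_{N,T,B}\left( \tau \right) $ is an average of $B$ i.i.d.\ bounded variables with mean $p_{N,T}$ and variance $p_{N,T}\left( 1-p_{N,T}\right) $, and $BQ_{N,T,B}\left( \tau \right) $ is $\mathrm{Binomial}\left( B,p_{N,T}\right) $.

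First I would handle the fluctuation of $Q_{N,T,B}\left( \tau \right) $ around its conditional mean $p_{N,T}$. For fixed $\left( N,T\right) $ the Hartman--Wintner law of the iterated logarithm, applied to this i.i.d.\ Bernoulli sequence, yields
\begin{equation*}
\limsup_{B\rightarrow \infty }\frac{\left\vert Q_{N,T,B}\left( \tau \right) -p_{N,T}\right\vert }{\sqrt{2\,p_{N,T}\left( 1-p_{N,T}\right) \log \log B/B}}=1\qquad \text{a.s.}
\end{equation*}
By (\ref{th-null}) of Theorem \ref{asy-max} we have $p_{N,T}\rightarrow 1-\tau $, so $p_{N,T}\left( 1-p_{N,T}\right) \rightarrow \tau \left( 1-\tau \right) $ and the normalizing standard deviation may be replaced by its limit; this produces $Q_{N,T,B}\left( \tau \right) -p_{N,T}=O_{a.s.}\left( \sqrt{2\tau \left( 1-\tau \right) \log \log B/B}\right) $, which is precisely the rate in (\ref{st1}) once the conditional mean is shown to equal $1-\tau $ up to negligible error.

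Second, I would control the bias $p_{N,T}-\left( 1-\tau \right) $. This is exactly the error in the Gumbel approximation established in Theorem \ref{asy-max}; a quantitative version of that proof (the classical rate of convergence of the maximum of asymptotically independent standard normals to its Gumbel limit under the linear norming $\left( b_{N},a_{N}\right) $) gives $\left\vert p_{N,T}-\left( 1-\tau \right) \right\vert =O\left( 1/\log N\right) $. The role of the hypothesis $B=O\left( \log ^{2}N\right) $ is precisely to render this bias negligible relative to the LIL fluctuation: from $B\leq c\log ^{2}N$ one gets $\sqrt{\log \log B/B}\geq \sqrt{\log \log B}/\left( \sqrt{c}\,\log N\right) $, and since $\log \log B\rightarrow \infty $ this dominates the $1/\log N$ bias, so $p_{N,T}-\left( 1-\tau \right) =o\left( \sqrt{\log \log B/B}\right) $. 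Combining the two estimates establishes (\ref{st1}). The statement under $\mathbb{H}_{A}$ is then immediate: by (\ref{th-alt}) one has $p_{N,T}\rightarrow 0$, whence $Q_{N,T,B}\left( \tau \right) =p_{N,T}+O_{a.s.}\left( \sqrt{\log \log B/B}\right) =o_{a.s.}\left( 1\right) $.

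The hard part will be making the law of the iterated logarithm operate in the joint regime where $N$, $T$ and $B$ diverge together while the Bernoulli success probability $p_{N,T}$ itself drifts with $\left( N,T\right) $: the bare Hartman--Wintner theorem governs a single fixed i.i.d.\ sequence as $B\rightarrow \infty $, not a triangular array. I would address this by replacing the bare LIL with a uniform exponential (Bernstein) tail bound for $\left\vert Q_{N,T,B}\left( \tau \right) -p_{N,T}\right\vert $ holding simultaneously over $\left( N,T\right) $, followed by a Borel--Cantelli argument along the three-index grid; this is feasible exactly because $B=O\left( \log ^{2}N\right) $ couples the replication count to the slowly growing $\log N$ and because $p_{N,T}$ stabilizes at $1-\tau $. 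A secondary technical point is verifying the $O\left( 1/\log N\right) $ approximation rate for $p_{N,T}$, which must account not only for the standard extreme-value convergence rate but also for the residual contribution of the terms $\psi _{i,NT}$, which under $\mathbb{H}_{0}$ vanish but are not identically zero.
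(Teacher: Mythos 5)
Your proposal follows essentially the same route as the paper's proof: the same decomposition of $Q_{N,T,B}\left( \tau \right) -\left( 1-\tau \right)$ into an LIL-controlled fluctuation about the conditional mean plus a bias term, the same $O\left( 1/\log N\right)$ Gumbel-approximation rate (the paper invokes equation (10) of Hall, 1979) rendered negligible by $B=O\left( \log ^{2}N\right)$, and the same treatment of the residual contribution of $\sum_{i=1}^{N}\psi _{i,NT}$ under $\mathbb{H}_{0}$ (the paper's term $II_{b}$, handled via the Mean Value Theorem, Lemma \ref{psi} and Assumption \ref{asymptotics}), with the conclusion under $\mathbb{H}_{A}$ likewise obtained from $\mathbb{P}^{\ast }\left( Z_{N,T}\leq c_{\tau }\right) \rightarrow 0$ plus a fluctuation bound. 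The one point where you go beyond the paper is in explicitly flagging the triangular-array nature of the LIL step when $\left( N,T,B\right)$ diverge jointly --- the paper simply replaces $\mathcal{V}^{\ast }\left( Q_{\tau }\right)$ by its limit $\tau \left( 1-\tau \right)$ and applies the LIL directly --- and your proposed Bernstein-plus-Borel--Cantelli remedy is a sensible way to make that step fully rigorous.
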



Building on Theorem \ref{strong-rule}, a \textquotedblleft
derandomized\textquotedblright\ decision rule can be proposed. By (\ref{st1}%
), under $\mathbb{H}_{0}$ there exists a triplet of random variables $\left(
N_{0},T_{0},B_{0}\right) $ such that 
\begin{equation}
Q_{N,T,B}\left( \tau \right) \geq \ell_{\tau}\equiv 1-\tau -\sqrt{\tau \left( 1-\tau \right) }%
\sqrt{\frac{2\log \log B}{B}},  \label{eq:derand_LIL}
\end{equation}%
for all $\left( N,T,B\right) $ with $N\geq N_{0}$, $T\geq T_{0}$, and $B\geq
B_{0}$. Similarly, under $\mathbb{H}_{A}$ there exists a triplet of random
variables $\left( N_{0},T_{0},B_{0}\right) $ such that $Q_{N,T,B}\left( \tau
\right) \leq \epsilon $, for all $\epsilon >0$ and $\left( N,T,B\right) $
with $N\geq N_{0}$, $T\geq T_{0}$, and $B\geq B_{0}$. This dichotomous
behavior can be exploited to construct a decision rule based on $%
Q_{N,T,B}\left( \tau \right) $, in a way that is more akin to information criteria than tests: $\mathbb{H}_{0}$ is not rejected when $%
Q_{N,T,B}\left( \tau \right) $ exceeds a threshold, whereas it is rejected
otherwise. In theory, one could use the threshold based on the Law of the Iterated Logarithm (LIL) in (\ref{eq:derand_LIL}); under $\mathbb{H}_0$, it follows that $\mathbb{P}^{*}\left[ Q_{N,T,B} < \ell_{\tau} \right]=0$ as $\min\{N,T,B\} \rightarrow \infty$. Albeit valid asymptotically, this criterion turns out to be biased against $\mathbb{H}_0$ in simulations, especially in small samples; this is not entirely surprising, since the bound induced by the LIL is not likely to \textquotedblleft bite\textquotedblright\ unless $B$ is large (which, in light of the restriction $B=O((\log N)^2)$, requires $N$ to be \textquotedblleft very large\textquotedblright). A decision rule that is more favorable towards the null
could be%
\begin{equation}
Q_{N,T,B}\left( \tau \right) \geq \left( 1-\tau \right) -f\left( B\right) ,
\label{eq:derand_fB}
\end{equation}%
with $f\left( B\right) $ a user-specified, non-increasing function of $B$
such that $\lim_{B\rightarrow \infty }f\left( B\right) =0$, and $\limsup_{B\rightarrow \infty }\left( f\left( B\right) \right) ^{-1}\sqrt{%
 2 \log \log B /B}=0$. 

\subsection{From theory to practice: guidelines and recommendations\label%
{practice}}

The procedure proposed in Section \ref{tests} depends on  the nuisance
parameter $\nu$ and on the 
tuning quantities $B$
and $f\left( B\right) $, i.e.\ the number of trials and the threshold in the derandomized approach. We offer
a set of guidelines/suggestions which could inform the practical application
of these procedures.\\


There are at least two ways in
which $\nu $ in (\ref{psi_iT}) can be determined:

\begin{enumerate}
\item A \textit{direct} approach, based on using a tail index estimator for the largest moment $\nu$ admitted by the data.
This approach would offer a consistent estimator, but its properties may be rather poor in finite samples (%
\citealp{embrechts}).

\item An \textit{indirect} approach, based upon noting that a lower bound (as opposed to an exact value) for 
$\nu $ would suffice. In order to find such a bound, e.g. the tests by %
\citet{trapani16} and \citet{degiannakis2023superkurtosis} could be employed
to test for the null hypothesis $\mathbb{H}_{0}:\mathbb{E}\left\vert
y_{i,t}\right\vert ^{\nu _{0}}=\infty $. Upon rejecting, it follows
that $\nu \geq \nu _{0}$, and therefore $\nu _{0}$ can be used in (\ref%
{psi_iT}).
\end{enumerate}
In addition --- as we do in our empirical illustration --- one can use $\nu_0
=4$ when constructing $\psi_{i,NT}$, i.e. the smallest finite moment prescribed by our theory, which
we would recommend when the sample size $T$ does not afford reliable
inference. 

\smallskip

Turning to the specifics of the derandomization, we note that:

\begin{enumerate}
\item The choice of $B$ is constrained by the condition $B=O\left( \left( \log
N\right)^2\right) $; in
our simulations, we employ $B=\lfloor\left(\mathrm{\log }\,N\right)^{2}\rfloor$, which we
recommend as a guideline.\footnote{In the proof of Theorem \ref{strong-rule}, we show that the rate of approximation of (\ref{eq:derand_LIL}) is very fast in $B$ (see equations (\ref{revesz}), (\ref{lilB}) and (\ref{revesz2}) in the Supplement), which guarantees that the lower bound in (\ref{eq:derand_LIL}) is accurate even when using a \textquotedblleft small\textquotedblright\ value of $B$.}

\item The choice of $f\left( B\right) $ is based on (\ref{eq:derand_fB}); \citet{he2023one} show that the derandomized decision
rule is relatively robust to the specification of $f\left( B\right) $. We recommend $f\left( B\right) =B^{-1/4}$, which is also found
to deliver the best results in \citet{he2023one}.
\end{enumerate}

\section{Simulations\label{simulations}}

We use a similar DGP to \citet{feng2022high}: 
\begin{equation}
y_{i,t} =\alpha _{i}+\sum_{p=1}^{3}\beta _{i,p}f_{p,t}+u_{i,t}, \text{   with   } f_{t} =\overline{f}+\Phi f_{t-1}+\zeta _{t},
\label{eq:measurement} 
\end{equation}%
where $f_{t}=(f_{1,t},f_{2,t},f_{3,t})^{\prime }$, $\overline{f}%
=(0.53,0.19,0.19)^{\prime }$, $\Phi =\mathrm{diag}\left\{
-0.1,0.2,-0.2\right\} $ and $\zeta _{t}\overset{i.i.d.}{\sim }\mathcal{N}%
_{3}(0,I_{3})$. Loadings are generated as $\beta _{i,1}\overset{i.i.d.}{\sim 
}\mathcal{U}(0.3,1.8)$, $\beta _{i,2}\overset{i.i.d.}{\sim }\mathcal{U}%
(-1,1) $, and $\beta _{i,3}\overset{i.i.d.}{\sim }\mathcal{U}(-0.6,0.9)$ for
all $i$. We allow for strong cross-sectional dependence in the innovations $%
\mathbf{u}_{t}=(u_{1,t},\dots ,u_{N,t})^{\prime }$ via a factor model: 
\begin{equation}
\mathbf{u}_{t}=\boldsymbol{\gamma }g_{t}+\boldsymbol{\xi }_{t},\text{ \ \
with \ \ }g_{t}=\phi _{g}g_{t-1}+\chi _{t},  \label{eq:innov2}
\end{equation}%
where $\boldsymbol{\gamma }=(\gamma _{1},\dots ,\gamma _{N})^{\prime }$ for $%
\gamma _{i}\overset{i.i.d.}{\sim }\mathcal{U}(0.7,0.9)$, $\phi _{g}=0.4$,
and $\chi _{t}\overset{i.i.d.}{\sim }\mathcal{N}(0,1)$, with $\left\{ \chi
_{t},1\leq t\leq T\right\} $ generated independently of $\left\{ \zeta
_{t},1\leq t\leq T\right\} $. In (\ref%
{eq:innov2}), the $N$-dimensional random vectors $\left\{ \boldsymbol{\xi }%
_{t},1\leq t\leq T\right\} $ are generated independently of $\left\{ \left(\zeta
_{t}, \chi_t\right),1\leq t\leq T\right\} $  
under the following three set-ups (all with mean zero and covariance matrix $\Sigma
_{\xi }$):

\begin{enumerate}
\item \textbf{The Gaussian case}: $\boldsymbol{\xi }_{t}\overset{i.i.d.}{%
\sim }\mathcal{N}_{N}(0,I_{N})$.

\item \textbf{The Student's $t$ case}: where $\xi _{i,t}$ follows a
Students's $t$ distribution with $d=5.5$ degrees of freedom, zero mean and
unit scale, independent across $i$. In this case, $\xi _{i,t}$ and $y_{i,t}$
have regularly varying tails;

\item \textbf{The GARCH case}: we generate $\boldsymbol{\xi }_{t}=\mathbf{H}%
_{t}\boldsymbol{z}_{t}$, with: $\boldsymbol{z}_{t}=\left(
z_{1,t},...,z_{N,t}\right) ^{\prime }$ and $z_{i,t}\overset{i.i.d.}{\sim }%
\mathcal{N}(0,1)$; and $\mathbf{H}_{t}=\mathrm{diag}\left\{ h_{1,t},\dots
,h_{N,t}\right\} $ with $h_{i,t}^{2}=\omega _{i}+\pi_{i}\xi
_{i,t-1}^{2}+\beta _{i}h_{i,t-1}^{2}$, with $\omega _{i}\overset{i.i.d.}{\sim }%
\mathcal{U}(0.01,0.05)$, $\pi_{i}\overset{i.i.d.}{\sim }\mathcal{U}%
(0.01,0.04)$ and $\beta _{i}\overset{i.i.d.}{\sim }\mathcal{U}(0.85,0.95)$.%
\footnote{%
These parameter values imply that $\xi _{i,t}$ has finite sixth moment for
any $i$.}
\end{enumerate}

In all scenarios, we report  rejection frequencies under the null
and under the alternative, for nominal level $\tau =5\%$, using $N\in
\left\{ 100,200,500\right\} $ and $T\in \left\{ 100,200,300,500, 1000, 2000\right\} $. As far as the alternative hypothesis is concerned, we consider a rather
sparse alternative where $\alpha _{i}\overset{i.i.d.}{\sim }\mathcal{N}(0,1)$
for $5\%$ percent of the cross-sectional units $1\leq i\leq N$. We construct
the test statistic using a notional value of $\nu =5$ in (\ref{psi_iT}), and
consider both the \textquotedblleft one-shot\textquotedblright\ test in
Section \ref{tests}, and the derandomized version in Section \ref{derandom}%
. For the latter, we examine results using both \eqref{eq:derand_LIL} and \eqref{eq:derand_fB} with $f(B)=B^{-1/4}$. We
compare our test with the tests by \citet[][FLLM]{feng2022high}, %
\citet[][PY]{pesaran2023testing}, and \citet[][AS]{ardia2024robust}.\footnote{We set tuning parameters of the AS test as
suggested in their Monte Carlo exercise (in particular, using their
notation, we set $L=0$ and $\psi =1/3$). When applicable, i.e.\ when $N<T$, we also check performances of the test by \citet[][GRS]{gibbons1989test}
Moreover, Monte Carlo results in \citet{feng2022high} and %
\citet{pesaran2023testing} show that their tests consistently outperform
that of \citet{gungor2016multivariate}. Hence, we omit comparisons with this
last approach. The comparison with the approaches of \cite{fan2015power} and \cite{gagliardini2016time} is reported in Section \ref{MC_moreBenchmarks} of the Supplement.}

\smallskip

We start from the Gaussian case; results using the one-shot test are in
Table \ref{tab:test_Gaussian}, whereas in Table \ref{tab:derand_Gaussian} we
report rejection frequencies from the derandomization approach. Our test is
the best one at controlling the size 
for all combinations of $N$ and $T$, whereas the other tests are
consistently oversized. Our test becomes slightly undersized, and subsequently plateaus, as $T$ increases. This is not accompanied by any loss of power; further, the alternative critical values presented in Section \ref{app:MC_newCV} of the Supplement yield empirical rejection frequencies that are extremely close to the nominal size when $T\geq 500$. Table \ref{tab:derand_Gaussian} suggest that the
derandomization procedure based on $f(B)=B^{-1/4}$ also works very well.
Indeed, as predicted by the theory, the empirical rejection frequencies are extremely close to zero under
the null and quickly converge to one under the alternative; the latter is
particularly true when $N$ gets large, thus showing that our approach is
particularly suitable when $N>T$. As expected, the
threshold based on the LIL
leads to higher empirical rejection frequencies under both the null and the
alternative. As far as power is concerned, the right panels of the table
show that our tests performs satisfactorily, whilst at the same time
guaranteeing size control. We note that the test by FLLM outperforms ours in
terms of power in most cases, but it is also oversized. Turning to the case
of data with heavier tails, Tables \ref{tab:test_T} and \ref{tab:derand_T}
report empirical rejection frequencies for the Student's $t$ case. Results
for the randomized test are in line with those of Table \ref%
{tab:test_Gaussian}. As in the Gaussian case, the other tests are oversized, while ours becomes slighlty  undersized when $T$ grows large. Again, this can be solved by using the alternative critical values  from Section \ref{app:MC_newCV}.
The derandomized procedure works as expected also in the case of heavier
tails. Similar considerations hold for power as for the Gaussian case.
Finally, results under the GARCH case are in Tables \ref{tab:test_GARCH} and %
\ref{tab:derand_GARCH}; size and power of all tests behave as in the other
cases, and so does the decision rule based on Theorem \ref{strong-rule}. Our test is slighlty oversized when $N=500$ and $T=100$, but still outperforms all the others when sizes and powers are considered. The use of the alternative critical values still improves the (slight) under-rejection for large values of $T$. Notably, GRS performs very well whenever applicable, exhibiting good finite sample properties even when the DGP violates its assumption of \textit{i.i.d.} Gaussian errors. However, the fact that the test requires $T>N$ makes it inapplicable whenever one deals with a rather large number of test assets.\footnote{For instance, using the GRS  with monthly data on $N=200$ test assets implicitly  assumes that  assets' $\beta$s  are constant  over more than 16 years. This  is at odds with  all the available empirical findings.}

\begin{table}[h]
\caption{\scriptsize{Empirical rejection frequencies for the test in Theorem \protect\ref%
{asy-max}, Gaussian case.}}
\label{tab:test_Gaussian}\captionsetup{font=small}
\par
\begin{center}
{\scriptsize \begin{tabular}{ll|
                S S S S S S
                c
                S S S S S S}
\toprule
&&\multicolumn{6}{c}{$\phi_{g} = 0.4$; $\alpha_{i} = 0$ for all $i$}&&
  \multicolumn{6}{c}{$\phi_{g} = 0.4$; $\alpha_{i} \sim N(0,1)$ for 5\% of units} \\
\midrule
$N$&$\text{Test }\backslash T$
&{100}&{200}&{300}&{500}&{1000}&{2000}&&
 {100}&{200}&{300}&{500}&{1000}&{2000}\\
\midrule
100&Thm. 1 & 5.8 & 4.2 & 4.0 & 3.9 & 3.3 & 3.4 &&
             93.5 & 96.0 & 96.9 & 97.5 & 98.7 & 99. \\ 
&FLLM&12.4&11.2&10.1&9.5&9.7&10.2&&
       98.6&99.5&99.6&99.8&100.0&100.0 \\
&GRS& \multicolumn{1}{c}{--} &5.3&4.5&4.7&3.5&4.1&&
      \multicolumn{1}{c}{--}&99.4&99.5&99.9&100.0&100.0\\
&PY&21.5&19.5&17.8&19.1&18.7&18.8&&
    78.5&94.1&97.0&98.8&99.8&100.0\\
&AS&10.4&7.5&6.6&8.9&7.9&6.6&&
    22.2&54.7&85.2&97.6&100.0&100.0\\
\midrule
200&Thm. 1&7.1&3.9&3.6&3.5&4.1&4.1&&
          99.7&99.9&100.0&100.0&100.0&100.0\\ 
&FLLM&15.1&8.7&9.9&10.4&10.5&10.1&&
       100.0&100.0&100.0&100.0&100.0&100.0 \\
&GRS& \multicolumn{1}{c}{--} & \multicolumn{1}{c}{--}
     &3.8&4.1&4.7&3.9&&
       \multicolumn{1}{c}{--}& \multicolumn{1}{c}{--}
     &100.0&100.0&100.0&100.0\\
&PY&21.9&17.4&18.2&21.7&20.2&19.0&&
    86.5&98.6&99.5&100.0&100.0&100.0\\
&AS&11.3&8.3&8.5&7.8&7.7&8.6&&
    22.5&65.7&94.4&99.9&100.0&100.0 \\
\midrule
500&Thm. 1&7.0&4.6&4.0&3.8&4.0&4.1&&
          100.0&100.0&100.0&100.0&100.0&100.0 \\
&FLLM&12.1&13.4&12.0&8.9&9.7&10.5&&
       100.0&100.0&100.0&100.0&100.0&100.0 \\
&GRS& \multicolumn{1}{c}{--}& \multicolumn{1}{c}{--}
     & \multicolumn{1}{c}{--}& \multicolumn{1}{c}{--}
     &6.3&4.6&&
       \multicolumn{1}{c}{--}& \multicolumn{1}{c}{--}
     & \multicolumn{1}{c}{--}& \multicolumn{1}{c}{--}
     &100.0&100.0\\
&PY&19.2&22.0&22.1&19.1&19.3&20.4&&
    94.6&100.0&100.0&100.0&100.0&100.0\\
&AS&10.0&10.6&11.2&10.2&9.2&10.0&&
    24.8&76.8&97.9&100.0&100.0&100.0 \\
\bottomrule
\end{tabular} }
\end{center}
\par
{\scriptsize 
\textbf{Note: }{The nominal size is 5\% and powers are assessed at 5\% level
of significance; frequencies are computed across $M=1000$ Monte Carlo
samples and we set $\nu=5$ when computing $\psi_{i,NT}$.} }
\end{table}
\begin{table}[h]
\caption{\scriptsize{Empirical rejection frequencies for the decision rule of Theorem 
\protect\ref{strong-rule}, Gaussian case.}}
\label{tab:derand_Gaussian}\captionsetup{font=small}
\par
\begin{center}
{\scriptsize \centering
\begin{tabular}{ll|
                S S S S S S
                c
                S S S S S S}
\toprule
&&\multicolumn{6}{c}{$\phi_{g} = 0.4$; $\alpha_{i} = 0$ for all $i$}&&
  \multicolumn{6}{c}{$\phi_{g} = 0.4$; $\alpha_{i} \sim N(0,1)$ for 5\% of units} \\
\midrule
$N$&$\text{C.V. }\backslash T$
&{100}&{200}&{300}&{500}&{1000}&{2000}&&
 {100}&{200}&{300}&{500}&{1000}&{2000}\\
\midrule
100&LIL
&6.7 & 1.1 & 0.1 & 0.0 & 0.0 & 0.0 &&
 96.2 & 97.6 & 98.2 & 98.9 & 99.8 & 99.8\\
&$f(B)=B^{-1/4}$
&0.2 & 0.0 & 0.0 & 0.0 & 0.0 & 0.0 &&
 93.0 & 95.8 & 97.1 & 97.5 & 99.8 & 99.8\\
\midrule
200&LIL
&5.9 & 0.6 & 0.0 & 0.0 & 0.0 & 0.0 &&
 99.9 & 99.9 & 100.0 & 100.0 & 100.0 & 100.0\\
&$f(B)=B^{-1/4}$
&0.9 & 0.0 & 0.0 & 0.0 & 0.0 & 0.0 &&
 99.6 & 99.7 & 99.9 & 100.0 & 100.0 & 100.0\\
\midrule
500&LIL
&38.1 & 22.4 & 16.2 & 7.7 & 3.6 & 0.5 &&
 100.0 & 100.0 & 100.0 & 100.0 & 100.0 & 100.0\\
&$f(B)=B^{-1/4}$
&1.2 & 0.0 & 0.0 & 0.0 & 0.0 & 0.0 &&
 100.0 & 100.0 & 100.0 & 100.0 & 100.0 & 100.0\\
\bottomrule
\end{tabular}
 }
\end{center}
\par
{\scriptsize 
\textbf{Note: }{Results using either LIL-based critical values (LIL) or
critical values based on $f(B)=B^{-1/4}$. The derandomized statistic is
based on nominal level $\tau=5\%$. We set $B=\left(\mathrm{\log }%
\,N\right)^{2}$ for $Q_{N,T,B}(\tau)$ and $\nu=5$ for $\psi_{i,NT}$.} }
\end{table}
\begin{table}[h]
\caption{\scriptsize{Empirical rejection frequencies for the test in Theorem \protect
\ref{asy-max}, Student's $t$ case.}}
\label{tab:test_T}\captionsetup{font=small}
\par
\begin{center}
{\scriptsize \centering
\begin{tabular}{ll|
                S S S S S S
                c
                S S S S S S}
\toprule
&&\multicolumn{6}{c}{$\phi_{g} = 0.4$; $\alpha_{i} = 0$ for all $i$}&&
  \multicolumn{6}{c}{$\phi_{g} = 0.4$; $\alpha_{i} \sim N(0,1)$ for 5\% of units} \\
\midrule
$N$&$\text{Test }\backslash T$
&{100}&{200}&{300}&{500}&{1000}&{2000}&&
 {100}&{200}&{300}&{500}&{1000}&{2000}\\
\midrule
100&Thm. 1
&5.8&4.6&3.9&3.8&3.5&3.3&&
 88.8&93.0&94.7&96.3&98.2&98.8\\
&FLLM
&12.5&9.4&10.2&10.0&10.6&11.0&&
 97.7&99.2&99.9&100.0&100.0&100.0\\
&GRS
&\multicolumn{1}{c}{--}&4.3&4.4&5.1&4.5&3.8&&
 \multicolumn{1}{c}{--}&99.1&99.6&99.9&\multicolumn{1}{c}{--}&\multicolumn{1}{c}{--}\\
&PY
&24.2&19.6&20.2&20.8&20.6&20.8&&
 82.6&93.4&97.3&99.1&99.9&99.9\\
&AS
&9.3&7.9&8.8&8.8&8.6&9.0&&
 17.0&47.4&79.8&96.5&99.9&100.0\\
\midrule
200&Thm. 1
&5.6&4.0&3.7&3.5&4.2&4.0&&
 98.2&99.3&99.7&99.8&99.9&100.0\\
&FLLM
&10.4&11.4&10.8&11.0&10.2&9.3&&
 99.9&100.0&100.0&100.0&100.0&100.0\\
&GRS
&\multicolumn{1}{c}{--}&\multicolumn{1}{c}{--}&6.2&4.4&3.6&3.6&&
 \multicolumn{1}{c}{--}&\multicolumn{1}{c}{--}&100.0&100.0&\multicolumn{1}{c}{--}&\multicolumn{1}{c}{--}\\
&PY
&24.0&21.7&19.7&21.7&18.6&18.6&&
 89.1&97.9&99.3&100.0&100.0&100.0\\
&AS
&10.2&7.7&8.2&9.6&8.2&7.5&&
 19.2&55.3&90.4&99.3&100.0&100.0\\
\midrule
500&Thm. 1
&6.8&3.9&3.8&3.9&4.1&4.0&&
 100.0&100.0&100.0&100.0&100.0&100.0\\
&FLLM
&13.5&10.3&12.3&11.1&12.4&11.2&&
 100.0&100.0&100.0&100.0&100.0&100.0\\
&GRS
&\multicolumn{1}{c}{--}&\multicolumn{1}{c}{--}&\multicolumn{1}{c}{--}&\multicolumn{1}{c}{--}&5.0&5.0&&
 \multicolumn{1}{c}{--}&\multicolumn{1}{c}{--}&\multicolumn{1}{c}{--}&\multicolumn{1}{c}{--}&\multicolumn{1}{c}{100.0}&\multicolumn{1}{c}{100.0}\\
&PY
&27.0&20.9&20.1&20.7&21.5&20.1&&
 96.9&99.9&100.0&100.0&100.0&100.0\\
&AS
&9.7&8.5&9.9&10.8&11.5&9.1&&
 22.3&63.7&96.6&99.9&100.0&100.0\\
\bottomrule
\end{tabular}
 }
\end{center}
\par
{\scriptsize 
\textbf{Note: }{The nominal size is 5\% and powers are assessed at 5\% level
of significance; frequencies are computed across $M=1000$ Monte Carlo
samples and we set $\nu=5$ when computing $\psi_{i,NT}$. } }
\end{table}
\begin{table}[h]
\caption{\scriptsize{Empirical rejection frequencies for the decision rule of Theorem 
\protect\ref{strong-rule}, Student's $t$ case.}}
\label{tab:derand_T}\captionsetup{font=small}
\par
\begin{center}
{\scriptsize \centering
\begin{tabular}{ll|
                S S S S S S
                c
                S S S S S S}
\toprule
&&\multicolumn{6}{c}{$\phi_{g} = 0.4$; $\alpha_{i} = 0$ for all $i$}&&
  \multicolumn{6}{c}{$\phi_{g} = 0.4$; $\alpha_{i} \sim N(0,1)$ for 5\% of units} \\
\midrule
$N$&$\text{C.V. }\backslash T$
&{100}&{200}&{300}&{500}&{1000}&{2000}&&
 {100}&{200}&{300}&{500}&{1000}&{2000}\\
\midrule
100&LIL
&5.6&0.7&0.1&0.0&0.0&0.0&&
 94.5&96.6&97.4&98.3&99.4&99.8\\
&$f(B)=B^{-1/4}$
&0.3&0.1&0.0&0.0&0.0&0.0&&
 88.5&92.7&94.6&97.1&98.7&99.2\\
\midrule
200&LIL
&4.0&0.1&0.0&0.0&0.0&0.0&&
 99.6&99.7&99.9&100.0&100.0&100.0\\
&$f(B)=B^{-1/4}$
&0.4&0.0&0.0&0.0&0.0&0.0&&
 98.5&99.4&99.5&99.6&99.8&100.0\\
\midrule
500&LIL
&43.3&23.0&14.0&7.8&2.6&0.4&&
 100.0&100.0&100.0&100.0&100.0&100.0\\
&$f(B)=B^{-1/4}$
&1.3&0.1&0.0&0.0&0.0&0.0&&
 100.0&100.0&100.0&100.0&100.0&100.0\\
\bottomrule
\end{tabular}
 }
\end{center}
\par
{\scriptsize 
\textbf{Note: }{Results using either LIL-based critical values (LIL) or
based on $f(B)=B^{-1/4}$. The derandomized statistic is
based on nominal level $\tau=5\%$. We set $B=\left(\mathrm{\log }%
\,N\right)^{2}$ for $Q_{N,T,B}(\tau)$ and $\nu=5$ for $\psi_{i,NT}$. } }
\end{table}
\begin{table}[h]
\caption{\scriptsize{Empirical rejection frequencies for the test in Theorem \protect\ref%
{asy-max}, GARCH case.}}
\label{tab:test_GARCH}\captionsetup{font=small}
\par
\begin{center}
{\scriptsize \centering
\begin{tabular}{ll|
                S S S S S S
                c
                S S S S S S}
\toprule
&&\multicolumn{6}{c}{$\phi_{g} = 0.4$; $\alpha_{i} = 0$ for all $i$}&&
  \multicolumn{6}{c}{$\phi_{g} = 0.4$; $\alpha_{i} \sim N(0,1)$ for 5\% of units} \\
\midrule
$N$&$\text{Test }\backslash T$
&{100}&{200}&{300}&{500}&{1000}&{2000}&&
 {100}&{200}&{300}&{500}&{1000}&{2000}\\
\midrule
100&Thm. 1
&7.9&5.0&4.4&4.0&3.3&3.5&&
 96.5&97.5&98.0&98.7&99.6&99.9\\
&FLLM
&9.8&8.9&9.5&8.1&9.5&8.6&&
 98.6&99.5&99.6&99.8&100.0&100.0\\
&GRS
&\multicolumn{1}{c}{--}&4.6&4.6&3.5&4.3&4.2&&
 \multicolumn{1}{c}{--}&99.9&99.9&99.9&100.0&100.0\\
&PY
&21.5&20&20.2&20.2&21.0&21.7&&
 78.5&94.1&97.0&98.8&99.9&100.0\\
&AS
&10.4&8.9&8.2&8.1&7.9&6.9&&
 22.3&61.0&87.2&98.5&100.0&100.0\\
\midrule
200&Thm. 1
&7.0&4.2&3.6&3.4&4.0&4.1&&
 100.0&100.0&100.0&100.0&100.0&100.0\\
&FLLM
&9.3&9.7&9.1&9.2&8.4&6.4&&
 100.0&100.0&100.0&100.0&100.0&100.0\\
&GRS
&\multicolumn{1}{c}{--}&\multicolumn{1}{c}{--}&5.7&5.1&4.6&3.4&&
 \multicolumn{1}{c}{--}&\multicolumn{1}{c}{--}&100.0&100.0&100.0&100.0\\
&PY
&18.7&19.7&20.7&22.4&20.7&19.3&&
 86.5&98.6&99.5&100.0&100.0&100.0\\
&AS
&9.6&10.1&9.4&9.1&6.6&7.2&&
 24.9&64.1&95.6&99.4&100.0&100.0\\
\midrule
500&Thm. 1
&10.9&5.3&4.3&3.9&4.0&3.8&&
 100.0&100.0&100.0&100.0&100.0&100.0\\
&FLLM
&9.2&8.3&7.3&6.9&7.4&6.3&&
 100.0&100.0&100.0&100.0&100.0&100.0\\
&GRS
&\multicolumn{1}{c}{--}&\multicolumn{1}{c}{--}&\multicolumn{1}{c}{--}&\multicolumn{1}{c}{--}&4.0&4.3&&
 \multicolumn{1}{c}{--}&\multicolumn{1}{c}{--}&\multicolumn{1}{c}{--}&\multicolumn{1}{c}{--}&100.0&100.0\\
&PY
&18.3&21.1&19.4&18.5&19.8&19.0&&
 94.6&100.0&100.0&100.0&100.0&100.0\\
&AS
&11.4&10.3&9.1&7.5&8.1&6.3&&
 22.0&75.0&97.9&100.0&100.0&100.0\\
\bottomrule
\end{tabular}
 }
\end{center}
\par
{\scriptsize 
\textbf{Note: }{The nominal size is 5\% and powers are assessed at 5\% level
of significance; frequencies are computed across $M=1000$ Monte Carlo
samples and we set $\nu=5$ when computing $\psi_{i,NT}$.} }
\end{table}
\begin{table}[h]
\caption{\scriptsize{Empirical rejection frequencies for the decision rule of Theorem 
\protect\ref{strong-rule}, GARCH case.}}
\label{tab:derand_GARCH}\captionsetup{font=small}
\par
\begin{center}
{\scriptsize \centering
\begin{tabular}{ll|
                S S S S S S
                c
                S S S S S S}
\toprule
&&\multicolumn{6}{c}{$\phi_{g} = 0.4$; $\alpha_{i} = 0$ for all $i$}&&
  \multicolumn{6}{c}{$\phi_{g} = 0.4$; $\alpha_{i} \sim N(0,1)$ for 5\% of units} \\
\midrule
$N$&$\text{C.V. }\backslash T$
&{100}&{200}&{300}&{500}&{1000}&{2000}&&
 {100}&{200}&{300}&{500}&{1000}&{2000}\\
\midrule
100&LIL
&8.5&1.3&0.3&0.1&0.0&0.0&&
 98.3&98.6&99.4&99.5&99.4&99.8\\
&$f(B)=B^{-1/4}$
&1.0&0.2&0.0&0.0&0.0&0.0&&
 95.6&97.5&97.9&98.8&98.7&99.2\\
\midrule
200&LIL
&9.3&1.6&0.5&0.1&0.0&0.0&&
 100.0&100.0&100.0&100.0&100.0&100.0\\
&$f(B)=B^{-1/4}$
&1.9&0.0&0.0&0.0&0.0&0.0&&
 99.9&100.0&100.0&100.0&100.0&100.0\\
\midrule
500&LIL
&47.4&27.6&17.8&9.4&3.7&0.7&&
 100.0&100.0&100.0&100.0&100.0&100.0\\
&$f(B)=B^{-1/4}$
&5.9&0.5&0.1&0.0&0.0&0.0&&
 100.0&100.0&100.0&100.0&100.0&100.0\\
\bottomrule
\end{tabular}
}
\end{center}
\par
{\scriptsize 
\textbf{Note: }{Results using either LIL-based critical values (LIL) or
critical values based on $f(B)=B^{-1/4}$. The derandomized statistic is
based on nominal level $\tau=5\%$. We set $B=\left(\mathrm{\log }%
\,N\right)^{2}$ for $Q_{N,T,B}(\tau)$ and $\nu=5$ for $\psi_{i,NT}$.} }
\end{table}

We conclude this section with a further experiment, where we allow for cross-sectional heteroskedasticity. We simulate assets with higher variance, the higher $\alpha_i$. The tendency of severely misspriced assets to exhibit large variances is a clear feature of the data employed in our empirical analysis (see Section \ref{empirical}), especially for the most severely misspriced assets. Hence, we consider the case where $\sigma^{2}_{\xi_i} = Var\left(\xi_{i,t}\right)$. In the Gaussian and Student's $t$ DGP, we simulate $\bf{\xi}_t$ so that  $\sigma_{\xi_i} = 1+\vartheta|\alpha_{i}|$, while for the GARCH case set $\omega_{i} = (1+\vartheta|\alpha_{i}|)(1-\beta_i-\pi_i)$ so that $Var\left(\xi_{i,t}\right) = \left(1+\vartheta|\alpha_{i}|\right)^2$ under all DGPs, and we consider ten equally spaced values of $\vartheta$ between one and five percent. Results are in Figure \ref{fig:asset_specific}, showing that our test has unit power for all levels of asset specific variance. Conversely, the power of all the other tests decreases, with the test by \cite{feng2022high} exhibiting the largest loss of power under all DGPs. This makes sense since the test by \cite{feng2022high} is a max-$t^{2}$ test, hence being driven by both the alpha and the asset-specific variance.\footnote{Further results are in Section \ref{simulations_bis} of the Supplement, where we report results for: the
case $\phi _{g}=0$ (Section \ref{MC_phizero}); the cases where $g_{t}$ is a weak or semi-strong
omitted factor (Section \ref{MC_omitted_factors_strength}); additional power analyses for different levels of sparsity under the alternative (Section \ref{MC_sparsity}); the case when only one factor is strong,
while the others are at most semi-strong (Section \ref{MC_pricing_factors_streng}); the cases of non-tradable
factors (studied in Section \ref{nontradable}) and latent factors (studied
in Section \ref{latent}) (Sections \ref{MC-FF} and \ref{MC-latent} respectively); using alternative critical values in the rejection rule based on Theorem \ref{asy-max} (Section \ref{app:MC_newCV}).}
\begin{figure}[t]
\captionsetup{font=scriptsize} \centering
\begin{subfigure}[t]{0.325\textwidth}
                 \centering
                 \includegraphics[ width=0.98\textwidth, trim = 0cm 1.5cm 0cm 1.5cm, clip]{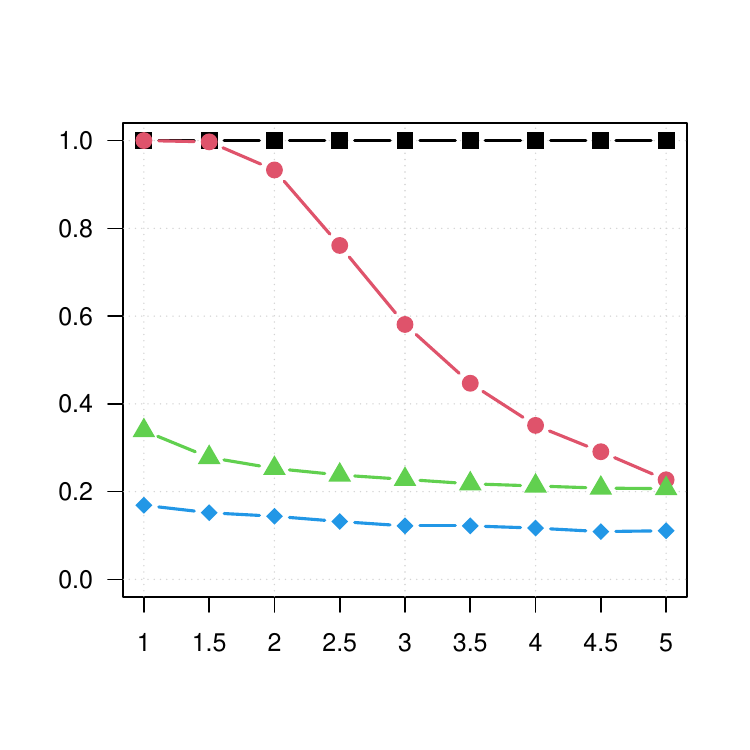}
                               \caption{Gaussian DGP, $\phi_{g} = 0.40$.}
         \end{subfigure}
\begin{subfigure}[t]{0.325\textwidth}
                 \centering
                 \includegraphics[width=0.98\textwidth, trim = 0cm 1.5cm 0cm 1.5cm, clip]{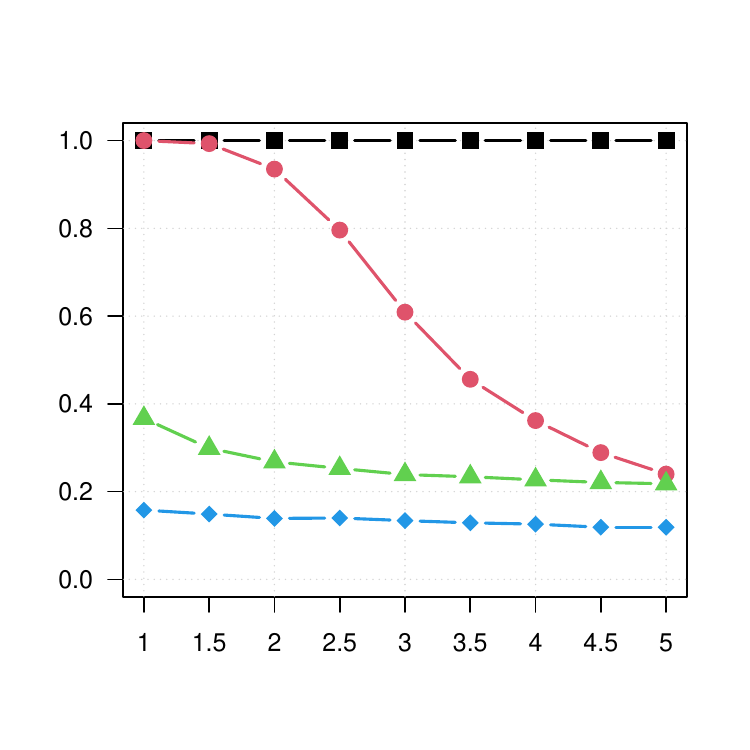}
                 \caption{Student's $t$ DGP, $\phi_{g} = 0.40$.}
         \end{subfigure}
\begin{subfigure}[t]{0.325\textwidth}
                 \centering
                 \includegraphics[width=0.98\textwidth, trim = 0cm 1.5cm 0cm 1.5cm, clip]{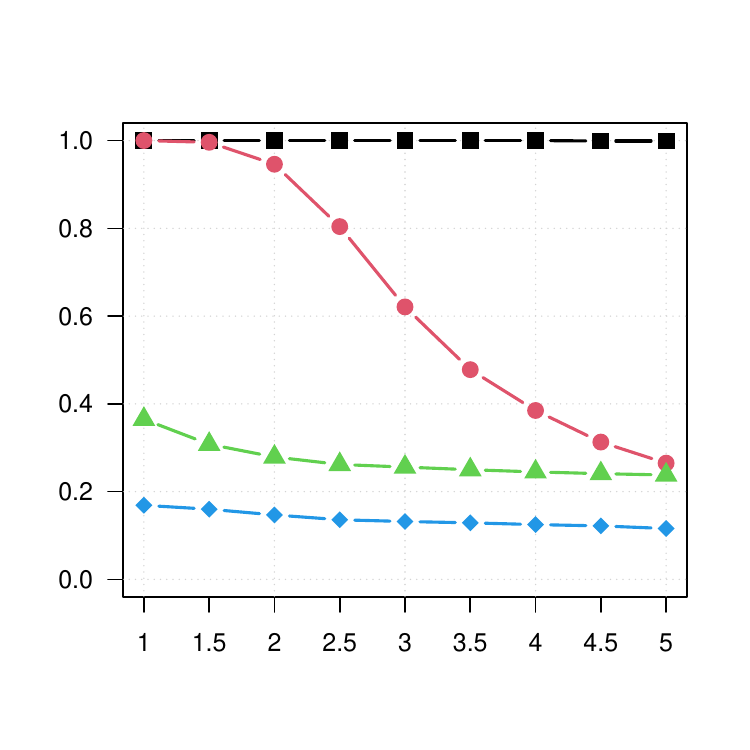}
\caption{GARCH(1,1) DGP, $\phi_{g} = 0.40$.}
         \end{subfigure}
\caption{Power curves. The horizontal axis reports $\vartheta$ in the  magnifying factor $(1+\vartheta|\alpha_i|)$ for asset-specific variances. The legend is as follows: our test, black squares; \cite{feng2022high}, red dots; \cite{pesaran2023testing}, green triangles; \cite{ardia2024robust}, light blue diamonds.}
\label{fig:asset_specific}
\end{figure}

\section{Empirical illustration\label{empirical}}

We illustrate our procedure by testing whether several linear factor pricing
models correctly price the constituents of the S{\&}P 500 index. We collect monthly data on all stocks that were part of the S\&P 500 for at
least five years between January 1985 and December 2024, using
simple percentage returns on the  $i$-th stock gross of  dividend yield
\begin{equation}
r_{i,t}=100\frac{P_{i,t}-P_{i,t-1}}{P_{i,t-1}}+\frac{DY_{i,t}}{12},
\label{pmodel}
\end{equation}%
where $P_{i,t}$ is the end of the month stock price and $DY_{i,t}$ is the percent
per annum dividend yield.\footnote{%
To ensure that the index accurately represents the US stock market, S{\&}P
regularly updates its composition. We account for these changes by revising
the set of included assets every month. Security data are sourced from \emph{%
Datastream}, while we obtain those on the pricing factors (and on the
risk-free rate) from the website of Professor French.} We define the excess returns as $y_{i,t}$, and test six linear factor pricing models
that are all encompassed by the following regression 
\begin{equation}  \label{eq:general_pricingmodel}
y_{i,t}=\alpha _{i}+\beta _{i,1}\text{MKT}_{t}+\beta _{i,2}\text{SMB}%
_{t}+\beta _{i,3}\text{HML}_{t}+\beta _{i,4}\text{RMW}_{t}+\beta _{i,5}\text{%
CMA}_{t} + \beta_{i,6}\text{MOM}_{t}+u_{i,t},
\end{equation}%
where: $\text{MKT}_{t}$ is the excess return on the market, $\text{SMB}_{t}$
the \emph{size} factor, $\text{HML}_{t}$ the \emph{book-to-market} factor, $%
\text{RMW}_{t}$ the \emph{profitability} factor, $\text{CMA}_{t}$ the \emph{%
investment strategy} factor, and $\text{MOM}_{t}$ the \emph{momentum} factor.%
\footnote{%
All factors are observable and tradable; see the website
of Professor French for their description.} Both the excess return $y_{i,t}$
and the market factor use the one-month Treasury bill rate as risk-free
rate. The first model that we test is the CAPM, which is obtained when only
the market factor is considered. We then consider a two-factor (FF2
henceforth) model based on the market and momentum factors, as well as the
usual three- and five-factors models of \citet[][FF3]{fama1993common} and %
\citet[][FF5]{fama2015five}, which  we also augment with
momentum  to obtain a four- (FF4) and a six-factors (FF6) model. In line with \citet{fan2015power} and \citet{pesaran2023testing}, we mitigate  the impact of possible time variation in the factor loadings by running our inferential procedure on  5-years rolling windows.\footnote{We use $T=60$ (i.e. $%
5 $ years) in estimation, while $N$ ranges between $437$ and $568$. Within each window, we discard securities with at least one missing
observation. In Section \ref{mht} in the Supplement, we show that using the derandomized confidence function as suggested in Section \ref{derandom} allows to control for the family-wise rejection frequency even when the number of windows passes to infinity, with no changes or corrections required to the nominal level of the individual tests.}

\subsection{Results\label{results}}

Table \ref{tab:derand_RWS&P} reports the percentage of windows for which we
can reject the null that a pricing model is correctly specified. These
empirical rejection frequencies are computed using our derandomized
procedure based on $B=(\log N)^2$, at $5\%$ nominal  significance level. We calculate $\psi _{i,NT}$ using both $\nu=5$ (in the light of the very good finite sample results found in Section \ref{simulations}), and $\nu=4$ (by way of robustness check, as suggested in our guidelines in Section \ref{practice}). We also report results obtained with the test by \cite{feng2022high}, as our closest benchmark.  

\begin{table}[h]
\caption{\scriptsize{{Results of the derandomized procedure for constituents of the S{\&}P 500 index.}}}
\label{tab:derand_RWS&P}\captionsetup{font=small}
{\scriptsize 
\begin{tabular}{ll|cccccc}
\toprule
&& CAPM & FF2 & FF3 & FF4 & FF5 & FF6 \\ 
\midrule 
Full Sample &  $\nu  = 4$ & 0.607 & 0.624 & 0.710 & 0.724 & 0.840 & 0.840 \\ 
&   $\nu  = 5$ &0.448 & 0.479 & 0.595 & 0.631 & 0.793 & 0.799 \\ 
& \cite{feng2022high} & 0.205& 0.117 & 0.348 & 0.245 & 0.438 & 0.419 \\ 
\midrule
Asian financial crisis &   $\nu  = 4$&  1.000 & 1.000 & 1.000 & 1.000 & 1.000 & 1.000 \\ 
(Jul-1997 to Dec-1998) &  $\nu = 5$&{0.860} & {0.930} & {1.000} & {1.000} & {1.000} & {1.000} \\ 
& \cite{feng2022high} & {0.722} & {0.167} & {0.722} & {0.444} & {0.333} & {0.389} \\
\midrule
Dot-com Bubble &  $\nu  = 4$ & 0.875 & 0.844 & 1.000 & 1.000 & 1.000 & 1.000 \\ 
 (Mar-2000 to Oct-2002)&  $\nu = 5$&{0.546} & {0.535} & {1.000} & {1.000} & {1.000} & {1.000} \\ 
& \cite{feng2022high} & {0.000} & {0.000} & {0.438} & {0.219} & {0.938} & {0.750} \\
\midrule
Great financial crisis &   $\nu=4$ & 1.000 & 1.000 & 1.000 & 1.000 & 1.000 & 1.000 \\ 
(Dec-2007 to Jun-2009) &  $\nu = 5$&{0.936} & {0.936} & {0.955} & {0.951} & {0.961} & {0.970} \\
& \cite{feng2022high} & 0.895 & 0.474 & 0.842 & 0.632 & 0.842 & 0.368 \\
\midrule
COVID &   $\nu=4$ & 0.765 & 0.588 & 0.529 & 0.647 & 0.647 & 0.647 \\ 
(Jan-2020 to May-2021) &  $\nu = 5$&0.448 & 0.326 & 0.304 & 0.336 & 0.392 & 0.412 \\ 
& \cite{feng2022high} & 0.118 & 0.000 & 0.000 & 0.000 & 0.000 & {0.000} \\
\hline
Market distress &  $\nu=4$ & 0.907 & 0.860 & 0.907 & 0.930 & 0.930 & 0.930 \\ 
&  $\nu = 5$&0.679 & 0.665 & 0.852 & 0.858 & 0.871 & 0.877 \\ 
& \cite{feng2022high} & 0.372 & 0.140 & 0.500 & 0.314 & 0.605 & 0.442 \\
\midrule
Market stability &  $\nu=4$ & 0.530 & 0.563 & 0.659 & 0.671 & 0.817 & 0.817 \\ 
& $\nu = 5$&0.389 & 0.432 & 0.528 & 0.573 & 0.773 & 0.778 \\ 
& \cite{feng2022high} & 0.162 & 0.111 & 0.308 & 0.228 & 0.395 & 0.413 \\
\bottomrule 
\end{tabular}
}
\par
\par
{\tiny 
\begin{tablenotes}
\item Percentage of rolling windows for which the
null is rejected. We consider both overall percentages, i.e.\ for the whole
sample between January 1985 and December 2024, and those specific to relevant subperiods. The significance level is always $\tau=5\%$
and all the windows contain $T=60$ observations. Calculation of $\psi_{i,NT}$
is based on $\nu=4$ and $\nu = 5$. The derandomized procedure uses $B=(\mathrm{log}\, N)^2$ repetitions of the test and thresholds based on $%
f(B)=B^{-1/4}$.
\end{tablenotes}
}
\par
\end{table}

The first sub-panel of Table \ref{tab:derand_RWS&P} shows rejection
frequencies for the whole sample. Results suggest that the most parsimonious models - the CAPM and FF2 - are
the least rejected models (i.e.\ the ones that produce the lowest percentage
of rejections of the null of zero  alphas).\footnote{This result is somewhat surprising in that, if the true model was a larger model that nests the CAPM or FF2, then underconditioning would entail a bias in the alpha estimates of the more parsimonious models. However, \citet{moskowitz2025} show that, in the absence of mispricing, the CAPM performs better in terms of alpha tests than several prominent multifactor models. This is because, while multifactor betas can help capture mispricing, persistence in those betas leads the multifactor models to distort expected returns well after that information gets priced correctly. Also, it is well known that  individual stocks have poorly estimated betas, with the estimation error acting both as a source of downward bias due to an errors-in-variables problem \citep{blume1975betas} and as a driver of the aforementioned persistence. However,
\citet{ang2020using} argue that, although the measurement errors in the
betas are larger in individual stocks relative to e.g. portfolio returns,
the larger cross-sectional spread in the betas of individual stocks more
than offsets this error, leading to a more accurate estimate of the market
risk premium.} 
Results using the approach of \cite{feng2022high} also generally favor the use of more parsimonious models, i.e.\ CAPM and FF2. However, the rejection frequencies appear unrealistically small, as they suggest that a linear model based on  market and momentum  correctly prices large cap US stocks for almost 90\% of the sample, which comprises several periods of market turmoil.

The next four sub-panels of Table \ref{tab:derand_RWS&P} show rejection
frequencies over four periods of market turmoil:  the Asian financial crisis,
the Burst of the Dot-com Bubble, the Global Financial Crisis, and the
COVID-19 pandemic.\footnote{%
For the COVID-19 pandemic, we consider rolling windows between January 2020
and May 2021, where the end date corresponds to the termination of lockdown
policies in most of the world. The dates for the other periods of market
turmoil are set as in \cite{pesaran2023testing}.} Our procedure suggests  that  these models almost never   price large cap US stocks during the Asian and the Global Financial Crisis. The picture is only slightly  more positive during the early 2000s crisis.  Finally, we see that the size and book-to-market factors play an important
role during the COVID-19 pandemic, as the three-factor model is the best
over this period.\footnote{%
To interpret this result note that, while the COVID-19 period shares with
other crises the fact that the stock market yielded low returns and was
characterized by high volatility and low liquidity, it also has features
that make it very distinct. Indeed, during the COVID-19 pandemic households
were required to stay home to slow the spread of the virus and a variety of
firms were severely restricted in producing their goods and services,
essentially constraining output production and limiting consumption
decisions. This distinctive feature of the COVID-19 period implies that,
while uncertainty about the end of the pandemic was very high, in the short
term recession fears and low growth expectations strongly characterize that
period \citep{gormsen2023financial}. It is not surprising, therefore, that
value stocks --- generally considered long-horizon investments --- came
under huge pressure as economic uncertainty prompted investors to shorten
their time horizons, and indeed we find that the value factor contributes to the slightly better performance of larger models during this
period (further results are available upon request). This is consistent with
the evidence in \cite{campbell2025drives}, who document that value
experienced a historically striking drop in performance during the pandemic.%
}  

This sub-periods analysis also indicates that some of the results using the approach   of   \citet{feng2022high} may be overly optimistic. Indeed, their test suggests that the FF2 model perfectly explains the cross-section of expected excess returns on large cap US stocks during the Dot-com Bubble and the COVID pandemic. More generally, and also looking at the last two sub-panels, their procedure tells us that FF2 almost always prices all large cap US stocks, no matter whether the market is experiencing a period of distress or stability. While possible, this result does not seem plausible, and we argue that it is likely due to the loss of power suffered by the test of \citet{feng2022high} when the asset-specific variance of the misspriced assets increases, as highlighted in the earlier simulation section. We study this presumption starting from  Figure \ref{fig:alpha_var}. Its upper panels present the time-series of the three largest estimated alphas for model FF2 (each point of the series represents the absolute value of the largest estimate of $\alpha_i$ for a given window). We see that these are maximal between January 1999 and June 2009, which is the period corresponding to the red shaded areas. The lower panels report the time series of asset-specific standard deviation for the assets whose estimated alpha is reported in panels (a) - (c). These standard deviations are strongly associated with the estimated alphas during the highlighted period, suggesting that, over this period, we are in a setting which is very similar to that of our Monte Carlo results in Figure \ref{fig:asset_specific}, where we would expect the test by \cite{feng2022high} to display low power. Indeed, their test rejects the FF2 model for  14\% of the windows between January 1999 and June 2006, while our procedure based on $\nu=4$ ($\nu=5$, respectively) rejects 92\% (74\%, respectively) of the times. The results in Table \ref{tab:derand_RWS&P} and Figure \ref{fig:alpha_var} suggest a similar conclusion when looking at the Asian financial crisis and the outbreak of the COVID pandemic. These empirical findings highlight the usefulness of our testing procedure particularly when  asset specific variances are highly associated with the size of absolute pricing errors, as seems to be the case for US stocks during periods of market turmoil.
\begin{figure}[!t]
\captionsetup{font=scriptsize} \centering
\begin{subfigure}[t]{0.325\textwidth}
                 \centering
                 \includegraphics[ width=0.98\textwidth, trim = 0cm 1.5cm 0cm 1.5cm, clip]{./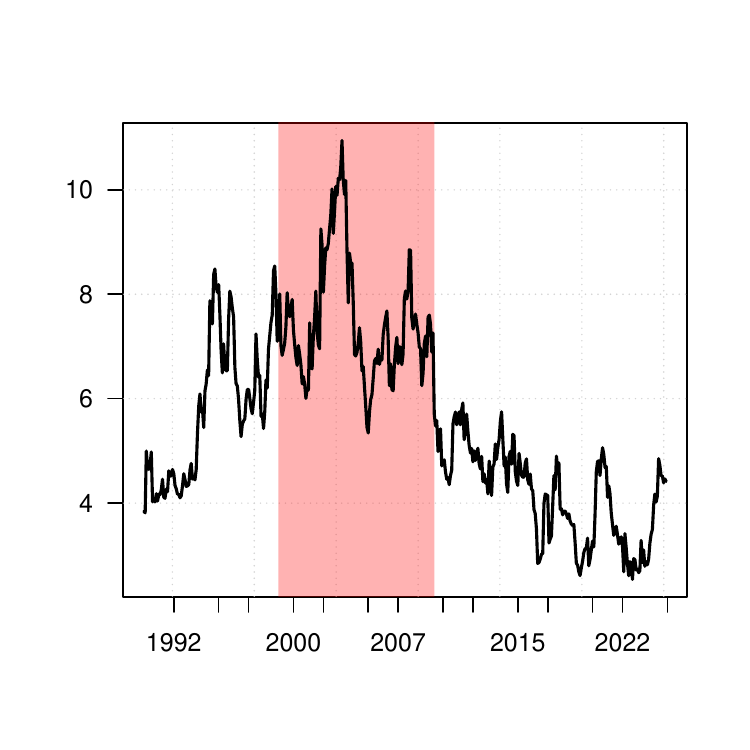}
\caption{Largest  $|\hat{\alpha}_i|$.}
         \end{subfigure}
\begin{subfigure}[t]{0.325\textwidth}
                 \centering
\includegraphics[width=0.98\textwidth, trim = 0cm 1.5cm 0cm 1.5cm, clip]
           {./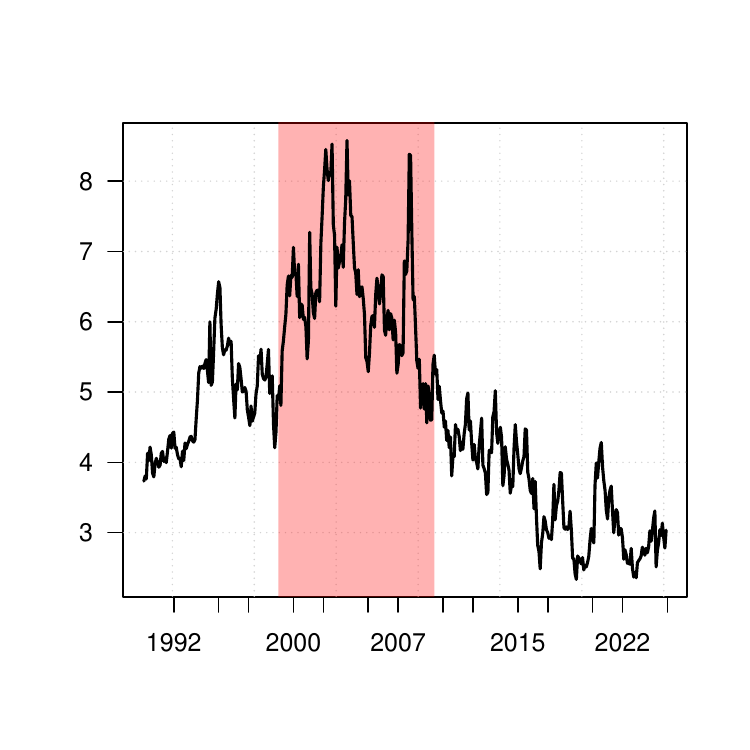}
\caption{Second largest  $|\hat{\alpha}_i|$.}
         \end{subfigure}
\begin{subfigure}[t]{0.325\textwidth}
                 \centering
\includegraphics[width=0.98\textwidth, trim = 0cm 1.5cm 0cm 1.5cm, clip]{./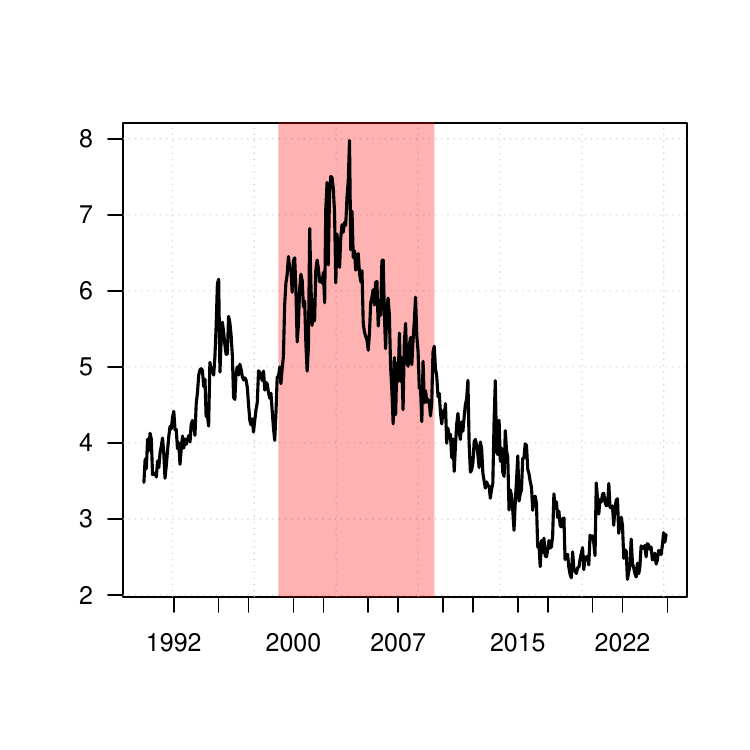}
\caption{Third largest $|\hat{\alpha}_i|$.}
         \end{subfigure}
         
\begin{subfigure}[t]{0.325\textwidth}
                 \centering
                \includegraphics[width=0.98\textwidth, trim = 0cm 1.5cm 0cm 1.5cm, clip]{./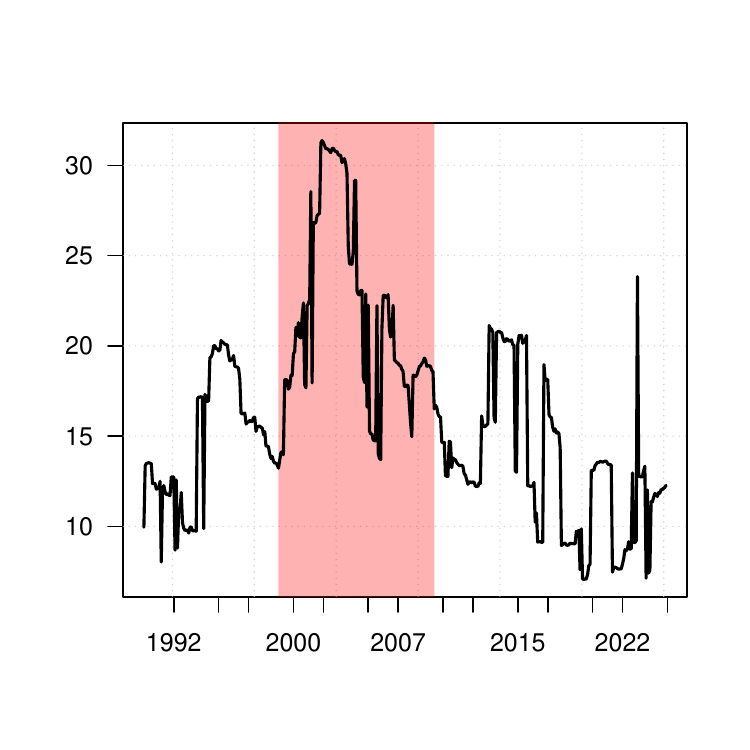}
\caption{Associated standard deviation.}
         \end{subfigure}
\begin{subfigure}[t]{0.325\textwidth}
                 \centering
\includegraphics[width=0.98\textwidth, trim = 0cm 1.5cm 0cm 1.5cm, clip]{./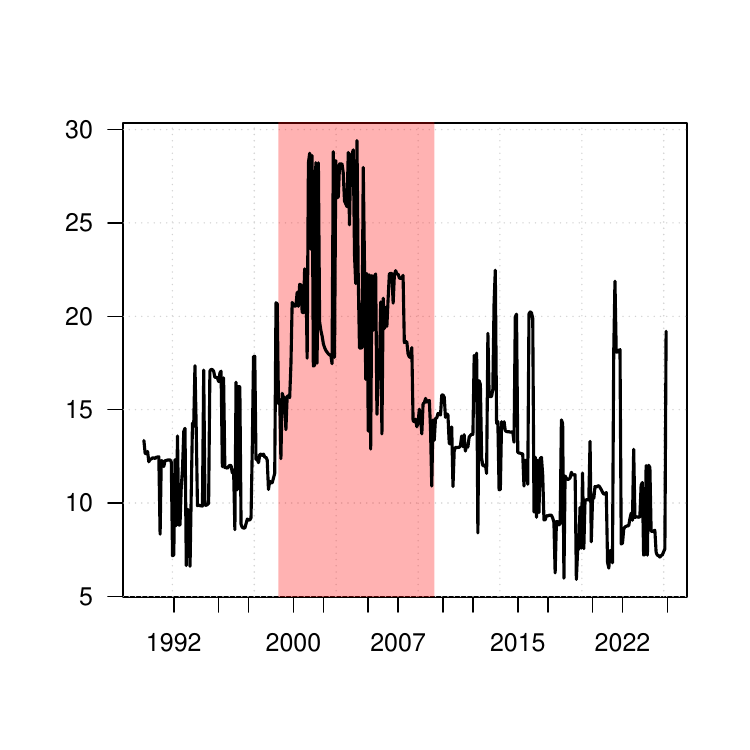}
\caption{Associated standard deviation.}
          \end{subfigure}
\begin{subfigure}[t]{0.325\textwidth}
                 \centering
\includegraphics[width=0.98\textwidth, trim = 0cm 1.5cm 0cm 1.5cm, clip]{./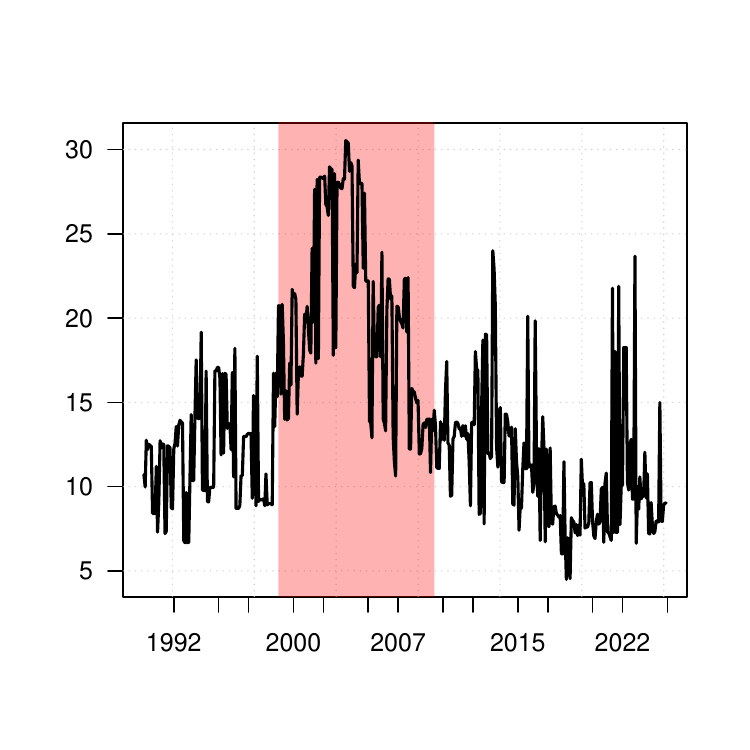}
\caption{Associated standard deviation.}
         \end{subfigure}
\caption{Upper panels: three largest estimated $\alpha$s under model FF2 for each window. Lower panels: idiosyncratic standard deviations  of the  associated stocks under model FF2. The red shaded area denotes the period January 1999 to June 2009.}
\label{fig:alpha_var}
\end{figure}

\section{Conclusions\label{conclusion}}

We propose a novel, general methodology to test for
\textquotedblleft zero alpha\textquotedblright\ in linear factor pricing
models, with observable or latent, tradable or non-tradable pricing factors. Our proposed test relies on a randomization scheme and can be used even when residuals exhibit 
 conditional heteroskedasticity, strong cross-sectional
dependence and non-Gaussianity, and allows for the number
of assets, $N$, to pass to infinity, also at a faster rate than the sample
size $T$. Extensive Monte Carlo analysis shows that the test has very good
power properties, and that it is - compared to other extant approaches -
the one that better controls the Type I Error probability in all scenarios.

\smallskip

We discuss two possible extensions which are under investigation by the authors. First, building on the theory developed herein, a randomized
test could be developed to test for the null hypotheses $\mathbb{H}%
_{0,i}\,:\,\alpha _{i}=0$ for $i=1,\dots ,N$, while controlling for
multiple testing.\footnote{By the same token, the test statistic could be constructed also when the equality is replaced by an inequality, such as $\mathbb{H}_{0,i}\,:\,\alpha _{i}\leq 0$, as in %
\citet{giglio2021thousands}.} Indeed, the individual test statistics would be
perturbed by adding randomness \textit{independently} across $i$, thus making the
randomized statistics (conditionally) independent across $i$, which would
facilitate the application of customarily employed procedures for size control under multiple testing. This would find a natural application in testing
mutual/hedge-funds performances, as well as assessing trading
strategies. \\
A second extension/application of randomized tests based on rates of convergence is model comparison. Consider two models for $y_{i,t}$, which will be denoted
henceforth using the superscripts $^{\left( 1\right) }$ and $^{\left(
2\right) }$, with $\alpha _{i}^{\left( 1\right)
} $ and $\alpha _{i}^{\left( 2\right) } $ respectively. The two models can be
compared to check whether $\alpha _{i}^{\left( 1\right)
} $ and $\alpha _{i}^{\left( 2\right) } $ differ in some
sense; in the spirit of the max-type test proposed in this paper, one can test  whether the maximally
selected difference of absolute alphas is zero, viz.
\begin{equation}
\mathbb{H}_{0}:\max_{1\leq i\leq N}\left\vert \left\vert \alpha _{i}^{\left(
1\right) }\right\vert -\left\vert \alpha _{i}^{\left( 2\right) }\right\vert
\right\vert =0.  \label{model_selection}
\end{equation}%
Under (\ref{model_selection}), the worst-case scenario discrepancy between
the alphas of the two models is zero, indicating a comparable
performance.    A test can be constructed - adapting our methodology - by defining $
\psi _{i,NT}^{\left( 1,2\right) }=\left\vert T^{1/\nu }\left[ \left\vert 
\widehat{\alpha }_{i}^{\left( 1\right) }\right\vert -\left\vert \widehat{%
\alpha }_{i}^{\left( 2\right) }\right\vert \right] /\widehat{s}_{NT}^{\left(
1,2\right) }\right\vert ^{\nu /2}, $
where $\widehat{s}_{NT}^{\left( 1,2\right) }$ denotes a scaling factor.%
\footnote{%
Even in this case, any scaling factor which is bounded away from zero and
infinity, and which renders the estimators $\widehat{\alpha }_{i}^{\left(
1\right) }$ and $\widehat{\alpha }_{i}^{\left( 2\right) }$ scale free, can
be used e.g., $
\widehat{s}_{NT}^{\left( 1,2\right) }=\sqrt{\sum_{i=1}^{N}\left( \left( 
\widehat{u}_{i,t}^{\left( 1\right) }\right) ^{2}+\left( \widehat{u}%
_{i,t}^{\left( 2\right) }\right) ^{2}\right) /NT,}$
where $\widehat{u}_{i,t}^{\left( 1\right) }$ and $\widehat{u}_{i,t}^{\left(
2\right) }$\ are the residuals from models $^{\left( 1\right) }$ and $%
^{\left( 2\right) }$ respectively, by adapting the definition of $\widehat{s}%
_{NT}$ in (\ref{s_NT}).} It is immediate to see that, under the null, $\psi
_{i,NT}^{\left( 1,2\right) }=o_{a.s.}\left( 1\right) $, whereas it would
diverge at a rate $T^{1/2}$ as long as $\left\vert \alpha _{i}^{\left(
1\right) }\right\vert \neq \left\vert \alpha _{i}^{\left( 2\right)
}\right\vert $\ for at least one $i$. Hence, our approach can be applied 
\textit{verbatim} to $\psi _{i,NT}^{\left( 1,2\right) }$, perturbing each $%
\psi _{i,NT}^{\left( 1,2\right) }$ by an \textit{i.i.d.} standard Gaussian
shock $\omega _{i}$ and using  $
Z_{NT}^{\left( 1,2\right) }=\max_{1\leq i\leq N}\left\vert \psi
_{i,NT}^{\left( 1,2\right) }+\omega _{i}\right\vert$. This test could be applied to compare two non-nested models, but also to compare a nested and a nesting model - in the latter case, upon not
rejecting $\mathbb{H}_{0}$ of (\ref{model_selection}), the applied user can
conclude that the nested, \textquotedblleft smaller\textquotedblright\ model
does not result in any worsening. Building on the previous paragraph, model comparison may also be approached as a multiple testing problem in the cross-section. In particular, we may consider the $N$-dimensional sequence of pair-wise null hypotheses $\left\{\mathbb{H}^{(m,j)}_{0,1},\dots,\mathbb{H}_{0,N}^{(m,j)}\right\}$ with generic entry $
\mathbb{H}^{(m,j)}_{0,i}\,:\, |\alpha_{i,m}| = |\alpha_{i,j}|
$, and test it against the one-sided alternative  $|\alpha_{i,j}|>|\alpha_{i,m}|$. \\
In conclusion, we would like to note that our approach - whilst focused here on the issue of testing for alpha in asset pricing models - can be exported to other set-ups. Using essentially the same arguments as in this paper, for example, one could test for the two sample problem in high dimension, or for the equality of intercept and slope in a high-dimensional regression model, or for \textquotedblleft pooling versus not pooling\textquotedblright\ in a large panel data model. In all cases, our approach would be applicable even in the presence of a whole world of misspecification for the errors - thus, being a viable alternative to other, more standard, methodologies when these fail due to their underlying assumptions not being satisfied or due to the need for a robust estimator of the long-run covariance, or for a large covariance matrix estimation.

\begin{adjustwidth}{-5pt}{-5pt}

{\footnotesize {\ 
\bibliographystyle{chicago}
\bibliography{LTbiblio}
} }

\end{adjustwidth}

\clearpage
\newpage


\doublespacing

\renewcommand*{\thesection}{\Alph{section}}

\setcounter{section}{0} 
\setcounter{equation}{0} \setcounter{lemma}{0} \setcounter{theorem}{0} %
\renewcommand{\theassumption}{A.\arabic{assumption}} 
\renewcommand{\thetheorem}{A.\arabic{theorem}} \renewcommand{\thelemma}{A.%
\arabic{lemma}} \renewcommand{\theproposition}{A.\arabic{proposition}} %
\renewcommand{\thecorollary}{A.\arabic{corollary}} \renewcommand{%
\theequation}{A.\arabic{equation}}

\section{Further simulations\label{simulations_bis}}

This section extends the Monte Carlo analysis of the main body by studying
the finite sample performances of the test based on Theorem \ref{asy-max}
under different data generating processes. Specifically, Section \ref%
{MC_moreBenchmarks} displays empirical rejection frequencies for other
competing approaches that were not included in the main body; Sections \ref%
{MC_phizero} and \ref{MC_omitted_factors_strength} modify the persistence
and factor structure for the error terms of the pricing models $u_{i,t}$,
while Section \ref{MC_pricing_factors_streng} discusses the implications of
changes in the strength of the pricing factors $%
f_{t}=(f_{1,t},f_{2,t},f_{3,t})^{\prime }$. Section \ref{MC_sparsity} studies how the power of the tests varies as a function of the percentage of assets that are misspriced under the alternative. Sections \ref{MC-FF} and \ref{MC-latent} present finite results when pricing factors are non-tradable and latent, respectively. Section \ref{app:MC_newCV} contains size and power results for the test in Theorem  \ref{asy-max} using alternative critical values.

\subsection{Further competing tests}
\label{MC_moreBenchmarks}
Tables \ref{tab:test_Gaussian_supp} to \ref{tab:test_GARCH_supp} report
empirical rejection frequencies for the tests of \citet[][FLY
henceforth]{fan2015power}, \citet[][GOS]{gagliardini2016time} under the DGPs
of the main body.\footnote{
The test of FLY is based on estimating the covariance matrix of $u_{t}$ with
the POET method of \cite{fan2013large}. As in \cite{fan2015power}, we
consider the \emph{soft} thresholding function; results based on hard and
SCAD thresholding \citep{fan2001variable} are numerically identical and
available upon request.} Here and in what follows, we do not report results for sample sizes $T=1000$ and $T=2000$ as the computational cost of both approaches was excessively demanding. Finally, for completeness, we also repeat the results of our
testing procedure based on Theorem \ref{asy-max}. As for the average-type
test of \cite{pesaran2023testing}, these approaches substantially
over-reject the null under $\mathbb{H}_0$. Estimation of a large dimensional
covariance matrix in the presence of strongly cross-sectionally correlated
and mildly persistent residuals  drives these over-rejections.

\begin{table}[!t]
\caption{Empirical rejection frequencies for the test in Theorem \protect\ref%
{asy-max}, Gaussian case.}
\label{tab:test_Gaussian_supp}\captionsetup{font=small}
\par
\begin{center}
{\scriptsize 
\begin{tabular}{ll|
                S S S S S S
                c
                S S S S S S}
\toprule
&&\multicolumn{6}{c}{$\phi_{g}=0.4$; $\alpha_i=0$ for all $i$}&&
  \multicolumn{6}{c}{$\phi_{g}=0.4$; $\alpha_i\sim N(0,1)$ for 5\% of units} \\
\midrule
$N$&$\text{Test }\backslash T$
&{100}&{200}&{300}&{500}&{1000}&{2000}&&
 {100}&{200}&{300}&{500}&{1000}&{2000}\\
\midrule
100&Thm.\ 1
&5.8&4.2&4.0&3.9&4.0&3.7&&
 93.5&96.0&96.9&97.5&99.3&99.6\\
&FLY
&35.3&21.8&15.0&13.8&
 \multicolumn{1}{c}{--}&\multicolumn{1}{c}{--}&&
 99.6&99.7&99.8&100.0&
 \multicolumn{1}{c}{--}&\multicolumn{1}{c}{--}\\
&GOS
&21.2&20.0&17.9&19.3&
 \multicolumn{1}{c}{--}&\multicolumn{1}{c}{--}&&
 79.2&94.5&97.0&98.8&
 \multicolumn{1}{c}{--}&\multicolumn{1}{c}{--}\\
\midrule
200&Thm.\ 1
&7.1&3.9&3.6&3.5&3.4&3.2&&
 99.7&99.9&100.0&100.0&99.9&100.0\\
&FLY
&38.6&18.1&15.3&11.7&
 \multicolumn{1}{c}{--}&\multicolumn{1}{c}{--}&&
 100.0&100.0&100.0&100.0&
 \multicolumn{1}{c}{--}&\multicolumn{1}{c}{--}\\
&GOS
&21.4&17.6&18.3&21.7&
 \multicolumn{1}{c}{--}&\multicolumn{1}{c}{--}&&
 86.3&98.7&99.5&100.0&
 \multicolumn{1}{c}{--}&\multicolumn{1}{c}{--}\\
\midrule
500&Thm.\ 1
&7.0&4.6&4.0&3.8&3.7&3.6&&
 100.0&100.0&100.0&100.0&100.0&100.0\\
&FLY
&45.8&26.8&19.0&12.5&
 \multicolumn{1}{c}{--}&\multicolumn{1}{c}{--}&&
 100.0&100.0&100.0&100.0&
 \multicolumn{1}{c}{--}&\multicolumn{1}{c}{--}\\
&GOS
&18.1&22.2&22.4&19.2&
 \multicolumn{1}{c}{--}&\multicolumn{1}{c}{--}&&
 93.6&100.0&100.0&100.0&
 \multicolumn{1}{c}{--}&\multicolumn{1}{c}{--}\\
\bottomrule
\end{tabular}
 }
\end{center}
\par
{\scriptsize 
\textbf{Note: }{The nominal size is 5\% and powers are assessed at 5\% level
of significance; frequencies are computed across $M=1000$ Monte Carlo
samples and we set $\nu=5$ when computing $\psi_{i,NT}$. Results are in percentage points.} }
\end{table}

\begin{table}[!t]
\caption{ Empirical rejection frequencies for the test in Theorem \protect
\ref{asy-max}, Student's $t$ case.}
\label{tab:test_T_supp}\captionsetup{font=small}
\par
\begin{center}
{\scriptsize \centering
\begin{tabular}{ll|
                S S S S S S
                c
                S S S S S S}
\toprule
&&\multicolumn{6}{c}{$\phi_{g}=0.4$; $\alpha_i=0$ for all $i$}&&
  \multicolumn{6}{c}{$\phi_{g}=0.4$; $\alpha_i\sim N(0,1)$ for 5\% of units} \\
\midrule
$N$&$\text{Test }\backslash T$
&{100}&{200}&{300}&{500}&{1000}&{2000}&&
 {100}&{200}&{300}&{500}&{1000}&{2000}\\
\midrule
100&Thm.\ 1
&5.8&4.6&3.9&3.8&3.8&3.7&&
 88.8&93.0&94.7&96.3&98.6&99.4\\
&FLY
&33.7&17.5&13.9&11.3&
 \multicolumn{1}{c}{--}&\multicolumn{1}{c}{--}&&
 99.6&99.9&100.0&100.0&
 \multicolumn{1}{c}{--}&\multicolumn{1}{c}{--}\\
&GOS
&0.0&0.0&0.0&0.0&
 \multicolumn{1}{c}{--}&\multicolumn{1}{c}{--}&&
 79.4&93.4&97.3&99.1&
 \multicolumn{1}{c}{--}&\multicolumn{1}{c}{--}\\
\midrule
200&Thm.\ 1
&5.6&4.0&3.7&3.5&3.3&3.3&&
 98.2&99.3&99.7&99.8&99.9&100.0\\
&FLY
&37.9&22.8&15.6&11.9&
 \multicolumn{1}{c}{--}&\multicolumn{1}{c}{--}&&
 100.0&100.0&100.0&100.0&
 \multicolumn{1}{c}{--}&\multicolumn{1}{c}{--}\\
&GOS
&0.0&0.0&0.0&21.9&
 \multicolumn{1}{c}{--}&\multicolumn{1}{c}{--}&&
 85.8&98.0&99.3&100.0&
 \multicolumn{1}{c}{--}&\multicolumn{1}{c}{--}\\
\midrule
500&Thm.\ 1
&6.8&3.9&3.8&3.9&3.6&3.6&&
 100.0&100.0&100.0&100.0&100.0&100.0\\
&FLY
&48.8&23.2&17.7&13.2&
 \multicolumn{1}{c}{--}&\multicolumn{1}{c}{--}&&
 100.0&100.0&100.0&100.0&
 \multicolumn{1}{c}{--}&\multicolumn{1}{c}{--}\\
&GOS
&27.1&21.1&20.3&21.1&
 \multicolumn{1}{c}{--}&\multicolumn{1}{c}{--}&&
 94.2&99.8&100.0&100.0&
 \multicolumn{1}{c}{--}&\multicolumn{1}{c}{--}\\
\bottomrule
\end{tabular}
 }
\end{center}
\par
{\scriptsize 
\textbf{Note: }{The nominal size is 5\% and powers are assessed at 5\% level
of significance; frequencies are computed across $M=1000$ Monte Carlo
samples and we set $\nu=5$ when computing $\psi_{i,NT}$. Results are in percentage points.} }
\end{table}

\begin{table}[!t]
\caption{Empirical rejection frequencies for the test in Theorem \protect\ref%
{asy-max}, GARCH case.}
\label{tab:test_GARCH_supp}\captionsetup{font=small}
\par
\begin{center}
{\scriptsize \centering
\begin{tabular}{ll|
                S S S S S S
                c
                S S S S S S}
\toprule
&&\multicolumn{6}{c}{$\phi_{g}=0.4$; $\alpha_i=0$ for all $i$}&&
  \multicolumn{6}{c}{$\phi_{g}=0.4$; $\alpha_i\sim N(0,1)$ for 5\% of units} \\
\midrule
$N$&$\text{Test }\backslash T$
&{100}&{200}&{300}&{500}&{1000}&{2000}&&
 {100}&{200}&{300}&{500}&{1000}&{2000}\\
\midrule
100&Thm.\ 1
&7.9&5.0&4.4&4.0&3.8&3.7&&
 96.5&97.5&98.0&98.7&99.6&99.7\\
&FLY
&31.3&17.4&15.4&12.3&
 \multicolumn{1}{c}{--}&\multicolumn{1}{c}{--}&&
 100.0&100.0&100.0&100.0&
 \multicolumn{1}{c}{--}&\multicolumn{1}{c}{--}\\
&GOS
&21.2&20.0&17.9&19.3&
 \multicolumn{1}{c}{--}&\multicolumn{1}{c}{--}&&
 79.2&94.5&97.0&98.8&
 \multicolumn{1}{c}{--}&\multicolumn{1}{c}{--}\\
\midrule
200&Thm.\ 1
&7.0&4.2&3.6&3.4&3.3&3.2&&
 100.0&100.0&100.0&100.0&100.0&100.0\\
&FLY
&31.5&21.2&14.4&12.0&
 \multicolumn{1}{c}{--}&\multicolumn{1}{c}{--}&&
 100.0&100.0&100.0&100.0&
 \multicolumn{1}{c}{--}&\multicolumn{1}{c}{--}\\
&GOS
&21.4&17.6&18.3&21.7&
 \multicolumn{1}{c}{--}&\multicolumn{1}{c}{--}&&
 86.3&98.7&99.5&100.0&
 \multicolumn{1}{c}{--}&\multicolumn{1}{c}{--}\\
\midrule
500&Thm.\ 1
&10.9&5.3&4.3&3.9&3.9&3.7&&
 100.0&100.0&100.0&100.0&100.0&100.0\\
&FLY
&47.1&24.7&16.5&10.3&
 \multicolumn{1}{c}{--}&\multicolumn{1}{c}{--}&&
 100.0&100.0&100.0&100.0&
 \multicolumn{1}{c}{--}&\multicolumn{1}{c}{--}\\
&GOS
&18.1&22.2&22.4&19.2&
 \multicolumn{1}{c}{--}&\multicolumn{1}{c}{--}&&
 93.6&100.0&100.0&100.0&
 \multicolumn{1}{c}{--}&\multicolumn{1}{c}{--}\\
\bottomrule
\end{tabular}
 }
\end{center}
\par
{\scriptsize 
\textbf{Note: }{The nominal size is 5\% and powers are assessed at 5\% level
of significance; frequencies are computed across $M=1000$ Monte Carlo
samples and we set $\nu=5$ when computing $\psi_{i,NT}$. Results are in percentage points.} }
\end{table}
\newpage

\subsection{No persistence in the omitted factor\label{MC_phizero}}

We complement the results of Section \ref{simulations} under $\phi _{g}=0$
in (\ref{eq:innov2}) - that is, the omitted (strong) common factor has no
persistence.

Tables \ref{tab:test_Gaussian_phi0}, \ref{tab:test_T_phi0} and \ref%
{tab:test_GARCH_phi0} contain the empirical rejection frequencies under the
null (left panels) and the alternative (right panels) for our one-shot test.
For average type tests (FLY, GOS and PY) results are in line with those of
the main body, though the degree of over-rejection under $\mathbb{H}_{0}$ is
smaller. The $p$-value combination of AS is still mildly oversized, while
the max-type approach of FLLM performs on par with ours.

Results for the \textquotedblleft strong\textquotedblright\ decision rule of
Theorem \ref{strong-rule} are in Tables \ref{tab:derand_Gaussian_phi0}, \ref%
{tab:derand_T_phi0} and \ref{tab:derand_GARCH_phi0}. As desired, empirical
rejection frequencies quickly converge to zero under the null and for all
cases considered. As in the main body, convergence is faster when using $%
f(B)=B^{-1/4}$. Similarly good results hold under the alternative, where the
rejection frequencies always converge to one.

\begin{table}[!t]
\caption{Empirical rejection frequencies for the test in Theorem \protect\ref%
{asy-max}, Gaussian case.}
\label{tab:test_Gaussian_phi0}\captionsetup{font=small}
\par
\begin{center}
{\scriptsize 
\begin{tabular}{ll|
                S S S S S S
                c
                S S S S S S}
\toprule
&&\multicolumn{6}{c}{$\phi_{\nu}=0$; $\alpha_i=0$ for all $i$}&&
  \multicolumn{6}{c}{$\phi_{\nu}=0$; $\alpha_i\sim N(0,1)$ for 5\% of units} \\
\midrule
$N$&$\text{Test }\backslash T$
&{100}&{200}&{300}&{500}&{1000}&{2000}&&
 {100}&{200}&{300}&{500}&{1000}&{2000}\\
\midrule
100&Thm.\ 1
&3.9&3.8&3.7&3.9&4.0&3.7&&
 93.5&96.3&97.3&97.9&99.4&99.7\\
&FLLM
&4.5&4.5&3.5&2.4&2.1&4.1&&
 98.8&99.5&99.7&99.8&100.0&100.0\\
&GRS
&\multicolumn{1}{c}{--}&5.1&3.7&4.0&3.0&3.4&&
 \multicolumn{1}{c}{--}&99.4&99.5&99.9&100.0&100.0\\
&FLY
&24.9&14.8&9.6&10.0&
 \multicolumn{1}{c}{--}&\multicolumn{1}{c}{--}&&
 99.5&99.6&99.8&100.0&
 \multicolumn{1}{c}{--}&\multicolumn{1}{c}{--}\\
&GOS
&8.1&7.1&6.3&6.5&
 \multicolumn{1}{c}{--}&\multicolumn{1}{c}{--}&&
 80.5&94.6&97.6&99.2&
 \multicolumn{1}{c}{--}&\multicolumn{1}{c}{--}\\
&PY
&8.7&6.8&6.0&6.5&6.2&6.5&&
 81.0&94.6&97.6&99.2&99.9&100.0\\
&AS
&6.8&6.9&6.4&7.2&7.3&7.8&&
 21.9&68.5&94.1&99.2&100.0&100.0\\
\midrule
200&Thm.\ 1
&4.9&3.4&3.3&3.3&3.3&3.2&&
 99.8&99.9&100.0&100.0&100.0&100.0\\
&FLLM
&5.5&2.7&2.7&2.8&2.1&2.7&&
 100.0&100.0&100.0&100.0&100.0&100.0\\
&GRS
&\multicolumn{1}{c}{--}&\multicolumn{1}{c}{--}&4.1&3.5&4.2&3.4&&
 \multicolumn{1}{c}{--}&\multicolumn{1}{c}{--}&100.0&100.0&100.0&100.0\\
&FLY
&29.6&13.1&10.8&9.1&
 \multicolumn{1}{c}{--}&\multicolumn{1}{c}{--}&&
 100.0&100.0&100.0&100.0&
 \multicolumn{1}{c}{--}&\multicolumn{1}{c}{--}\\
&GOS
&8.8&6.2&6.4&6.1&
 \multicolumn{1}{c}{--}&\multicolumn{1}{c}{--}&&
 88.8&98.6&99.7&100.0&
 \multicolumn{1}{c}{--}&\multicolumn{1}{c}{--}\\
&PY
&9.6&5.9&6.2&6.1&7.3&7.1&&
 90.2&98.6&99.7&100.0&100.0&100.0\\
&AS
&7.9&5.9&8.0&8.2&7.2&7.9&&
 25.3&83.3&98.4&100.0&100.0&100.0\\
\midrule
500&Thm.\ 1
&4.4&3.6&3.7&3.6&3.6&3.6&&
 100.0&100.0&100.0&100.0&100.0&100.0\\
&FLLM
&3.5&4.4&3.6&2.9&2.7&3.4&&
 100.0&100.0&100.0&100.0&100.0&100.0\\
&GRS
&\multicolumn{1}{c}{--}&\multicolumn{1}{c}{--}&\multicolumn{1}{c}{--}&\multicolumn{1}{c}{--}&5.8&4.2&&
 \multicolumn{1}{c}{--}&\multicolumn{1}{c}{--}&\multicolumn{1}{c}{--}&\multicolumn{1}{c}{--}&100.0&100.0\\
&FLY
&38.5&19.6&14.1&8.9&
 \multicolumn{1}{c}{--}&\multicolumn{1}{c}{--}&&
 100.0&100.0&100.0&100.0&
 \multicolumn{1}{c}{--}&\multicolumn{1}{c}{--}\\
&GOS
&7.1&8.4&8.7&6.8&
 \multicolumn{1}{c}{--}&\multicolumn{1}{c}{--}&&
 95.1&100.0&100.0&100.0&
 \multicolumn{1}{c}{--}&\multicolumn{1}{c}{--}\\
&PY
&8.2&8.3&8.5&6.7&6.3&7.5&&
 97.3&100.0&100.0&100.0&100.0&100.0\\
&AS
&6.9&8.9&9.9&9.0&7.7&10.7&&
 29.1&91.3&99.9&100.0&100.0&100.0\\
\bottomrule
\end{tabular}
 }
\end{center}
\par
{\scriptsize 
\textbf{Note: }{The nominal size is 5\% and powers are assessed at 5\% level
of significance; frequencies are computed across $M=1000$ Monte Carlo
samples and we set $\nu=5$ when computing $\psi_{i,NT}$. Results are in percentage points.} }
\end{table}
\begin{table}[!t]
\caption{Empirical rejection frequencies for the decision rule of Theorem 
\protect\ref{strong-rule}, Gaussian case.}
\label{tab:derand_Gaussian_phi0}\captionsetup{font=small}
\par
\begin{center}
{\scriptsize \centering
\begin{tabular}{ll|
                S S S S S S
                c
                S S S S S S}
\toprule
&&\multicolumn{6}{c}{$\phi_g=0$; $\alpha_i=0$ for all $i$}&&
  \multicolumn{6}{c}{$\phi_g=0$; $\alpha_i\sim N(0,1)$ for 5\% of units} \\
\midrule
$N$&$\text{C.V. }\backslash T$
&{100}&{200}&{300}&{500}&{1000}&{2000}&&
 {100}&{200}&{300}&{500}&{1000}&{2000}\\
\midrule
100&LIL
&1.6&0.0&0.0&0.0&0.0&0.0&&
 96.9&98.0&98.4&98.9&99.7&99.9\\
&$f(B)=B^{-1/4}$
&0.0&0.0&0.0&0.0&0.0&0.0&&
 96.9&98.0&98.4&98.9&99.6&99.8\\
\midrule
200&LIL
&0.9&0.0&0.0&0.0&0.0&0.0&&
 99.9&99.9&100.0&100.0&100.0&100.0\\
&$f(B)=B^{-1/4}$
&0.2&0.0&0.0&0.0&0.0&0.0&&
 99.9&99.9&100.0&100.0&100.0&100.0\\
\midrule
500&LIL
&27.6&13.7&8.0&2.8&1.4&0.3&&
 100.0&100.0&100.0&100.0&100.0&100.0\\
&$f(B)=B^{-1/4}$
&0.1&0.0&0.0&0.0&0.0&0.0&&
 100.0&100.0&100.0&100.0&100.0&100.0\\
\bottomrule
\end{tabular}
 }
\end{center}
\par
{\scriptsize 
\textbf{Note: }{Results using either LIL-based critical values (LIL) or
critical values based on $f(B)=B^{-1/4}$. The derandomized statistic is
based on nominal level $\tau=5\%$. We set $B=\log(N)^{2}$ for the
calculation of $Q_{N,T,B}(\tau)$ and $\nu=5$ for that of $\psi_{i,NT}$. Results are in percentage points.} }
\end{table}

\begin{table}[!t]
\caption{ Empirical rejection frequencies for the test in Theorem \protect
\ref{asy-max}, Student's $t$ case.}
\label{tab:test_T_phi0}\captionsetup{font=small}
\par
\begin{center}
{\scriptsize \centering
\begin{tabular}{ll|
                S S S S S S
                c
                S S S S S S}
\toprule
&&\multicolumn{6}{c}{$\phi_g=0$; $\alpha_i=0$ for all $i$}&&
  \multicolumn{6}{c}{$\phi_g=0$; $\alpha_i\sim N(0,1)$ for 5\% of units} \\
\midrule
$N$&$\text{Test }\backslash T$
&{100}&{200}&{300}&{500}&{1000}&{2000}&&
 {100}&{200}&{300}&{500}&{1000}&{2000}\\
\midrule
100&Thm.\ 1
&5.1&4.2&3.8&3.8&4.0&3.7&&
 89.3&93.1&95.3&96.6&98.8&99.5\\
&FLLM
&5.2&3.3&3.6&3.7&3.3&2.5&&
 97.9&99.2&99.9&100.0&100.0&100.0\\
&GRS
&\multicolumn{1}{c}{--}&4.3&4.4&5.1&3.6&3.2&&
 \multicolumn{1}{c}{--}&99.2&99.6&99.9&100.0&100.0\\
&FLY
&24.8&11.6&9.4&7.3&
 \multicolumn{1}{c}{--}&\multicolumn{1}{c}{--}&&
 99.5&99.9&100.0&100.0&
 \multicolumn{1}{c}{--}&\multicolumn{1}{c}{--}\\
&GOS
&9.9&6.7&7.1&8.2&
 \multicolumn{1}{c}{--}&\multicolumn{1}{c}{--}&&
 78.6&94.0&97.6&99.0&
 \multicolumn{1}{c}{--}&\multicolumn{1}{c}{--}\\
&PY
&11.6&7.7&7.3&8.2&7.7&8.0&&
 84.2&94.4&97.6&98.8&99.9&100.0\\
&AS
&5.4&5.6&5.8&6.0&7.1&7.8&&
 17.5&55.5&86.9&98.0&99.9&100.0\\
\midrule
200&Thm.\ 1
&4.5&3.9&3.6&3.5&3.4&3.2&&
 98.4&99.3&99.7&99.9&99.9&100.0\\
&FLLM
&4.3&4.8&3.2&4.0&3.9&4.5&&
 99.9&100.0&100.0&100.0&100.0&100.0\\
&GRS
&\multicolumn{1}{c}{--}&\multicolumn{1}{c}{--}&6.2&4.4&3.2&3.3&&
 \multicolumn{1}{c}{--}&\multicolumn{1}{c}{--}&100.0&100.0&100.0&100.0\\
&FLY
&28.2&16.2&11.9&8.0&
 \multicolumn{1}{c}{--}&\multicolumn{1}{c}{--}&&
 100.0&100.0&100.0&100.0&
 \multicolumn{1}{c}{--}&\multicolumn{1}{c}{--}\\
&GOS
&9.8&7.7&6.8&8.3&
 \multicolumn{1}{c}{--}&\multicolumn{1}{c}{--}&&
 88.1&98.4&99.6&100.0&
 \multicolumn{1}{c}{--}&\multicolumn{1}{c}{--}\\
&PY
&11.8&8.3&6.9&8.2&6.1&5.5&&
 92.3&98.5&99.6&100.0&100.0&100.0\\
&AS
&7.5&6.0&7.6&6.1&7.5&7.3&&
 18.4&64.7&95.4&99.6&100.0&100.0\\
\midrule
500&Thm.\ 1
&5.0&3.8&3.9&3.7&3.5&3.5&&
 100.0&100.0&100.0&100.0&100.0&100.0\\
&FLLM
&5.3&3.8&4.1&4.2&2.5&3.4&&
 100.0&100.0&100.0&100.0&100.0&100.0\\
&GRS
&\multicolumn{1}{c}{--}&\multicolumn{1}{c}{--}&\multicolumn{1}{c}{--}&\multicolumn{1}{c}{--}&4.8&4.8&&
 \multicolumn{1}{c}{--}&\multicolumn{1}{c}{--}&\multicolumn{1}{c}{--}&\multicolumn{1}{c}{--}&100.0&100.0\\
&FLY
&42.0&17.9&14.1&8.1&
 \multicolumn{1}{c}{--}&\multicolumn{1}{c}{--}&&
 100.0&100.0&100.0&100.0&
 \multicolumn{1}{c}{--}&\multicolumn{1}{c}{--}\\
&GOS
&16.2&9.5&7.4&6.6&
 \multicolumn{1}{c}{--}&\multicolumn{1}{c}{--}&&
 95.8&100.0&100.0&100.0&
 \multicolumn{1}{c}{--}&\multicolumn{1}{c}{--}\\
&PY
&12.5&9.5&7.7&6.5&8.1&7.5&&
 98.8&100.0&100.0&100.0&100.0&100.0\\
&AS
&7.7&7.0&8.4&8.3&9.6&8.8&&
 22.5&79.1&98.9&100.0&100.0&100.0\\
\bottomrule
\end{tabular}
 }
\end{center}
\par
{\scriptsize 
\textbf{Note: }{The nominal size is 5\% and powers are assessed at 5\% level
of significance; frequencies are computed across $M=1000$ Monte Carlo
samples and we set $\nu=5$ when computing $\psi_{i,NT}$. Results are in percentage points.} }
\end{table}
\begin{table}[!t]
\caption{Empirical rejection frequencies for the decision rule of Theorem 
\protect\ref{strong-rule}, Student's $t$ case.}
\label{tab:derand_T_phi0}\captionsetup{font=small}
\par
\begin{center}
{\scriptsize \centering
\begin{tabular}{ll|
                S S S S S S
                c
                S S S S S S}
\toprule
&&\multicolumn{6}{c}{$\phi_g=0$; $\alpha_i=0$ for all $i$}&&
  \multicolumn{6}{c}{$\phi_g=0$; $\alpha_i\sim N(0,1)$ for 5\% of units} \\
\midrule
$N$&$\text{C.V. }\backslash T$
&{100}&{200}&{300}&{500}&{1000}&{2000}&&
 {100}&{200}&{300}&{500}&{1000}&{2000}\\
\midrule
100&LIL
&1.6&0.2&0.1&0.0&0.0&0.0&&
 94.8&96.8&97.7&98.2&99.4&99.7\\
&$f(B)=B^{-1/4}$
&0.1&0.1&0.0&0.0&0.0&0.0&&
 89.5&93.5&95.2&96.4&98.9&99.2\\
\midrule
200&LIL
&1.6&0.0&0.0&0.0&0.0&0.0&&
 99.6&99.7&100.0&100.0&100.0&100.0\\
&$f(B)=B^{-1/4}$
&0.0&0.0&0.0&0.0&0.0&0.0&&
 98.8&99.4&99.6&99.7&99.8&100.0\\
\midrule
500&LIL
&35.6&13.7&9.3&3.9&0.7&0.1&&
 100.0&100.0&100.0&100.0&100.0&100.0\\
&$f(B)=B^{-1/4}$
&0.3&0.1&0.0&0.0&0.0&0.0&&
 100.0&100.0&100.0&100.0&100.0&100.0\\
\bottomrule
\end{tabular}
 }
\end{center}
\par
{\scriptsize 
\textbf{Note: }{Results using either LIL-based critical values (LIL) or
critical values based on $f(B)=B^{-1/4}$. The derandomized statistic is
based on nominal level $\tau=5\%$. We set $B=\log(N)^{2}$ for the
calculation of $Q_{N,T,B}(\tau)$ and $\nu=5$ for that of $\psi_{i,NT}$. Results are in percentage points.} }
\end{table}
\begin{table}[!t]
\caption{Empirical rejection frequencies for the test in Theorem \protect\ref%
{asy-max}, GARCH case.}
\label{tab:test_GARCH_phi0}\captionsetup{font=small}
\par
\begin{center}
{\scriptsize \centering
\begin{tabular}{ll|
                S S S S S S
                c
                S S S S S S}
\toprule
&&\multicolumn{6}{c}{$\phi_g=0$; $\alpha_i=0$ for all $i$}&&
  \multicolumn{6}{c}{$\phi_g=0$; $\alpha_i\sim N(0,1)$ for 5\% of units} \\
\midrule
$N$&$\text{Test }\backslash T$
&{100}&{200}&{300}&{500}&{1000}&{2000}&&
 {100}&{200}&{300}&{500}&{1000}&{2000}\\
\midrule
100&Thm.\ 1
&5.6&4.3&4.0&3.9&3.7&3.7&&
 96.7&98.3&98.9&99.2&99.7&99.7\\
&FLLM
&4.5&4.5&3.5&2.4&1.8&1.7&&
 98.8&99.5&99.7&99.8&100.0&100.0\\
&GRS
&\multicolumn{1}{c}{--}&4.4&3.7&3.4&4.0&3.6&&
 \multicolumn{1}{c}{--}&99.9&99.9&99.9&100.0&100.0\\
&FLY
&20.9&10.9&9.6&8.8&
 \multicolumn{1}{c}{--}&\multicolumn{1}{c}{--}&&
 100.0&100.0&100.0&100.0&
 \multicolumn{1}{c}{--}&\multicolumn{1}{c}{--}\\
&GOS
&8.1&7.1&6.3&6.5&
 \multicolumn{1}{c}{--}&\multicolumn{1}{c}{--}&&
 80.5&94.6&97.6&99.2&
 \multicolumn{1}{c}{--}&\multicolumn{1}{c}{--}\\
&PY
&8.7&6.8&6.0&6.5&8.6&7.2&&
 81.0&94.6&97.6&99.2&100.0&100.0\\
&AS
&6.8&6.9&6.4&7.2&7.2&7.8&&
 21.9&68.5&94.1&99.2&100.0&100.0\\
\midrule
200&Thm.\ 1
&4.8&3.5&3.3&3.2&3.3&3.2&&
 100.0&100.0&100.0&100.0&100.0&100.0\\
&FLLM
&5.5&2.7&2.7&2.8&2.0&1.9&&
 100.0&100.0&100.0&100.0&100.0&100.0\\
&GRS
&\multicolumn{1}{c}{--}&\multicolumn{1}{c}{--}&5.7&4.4&3.9&2.8&&
 \multicolumn{1}{c}{--}&\multicolumn{1}{c}{--}&100.0&100.0&100.0&100.0\\
&FLY
&23.4&13.9&10.4&8.5&
 \multicolumn{1}{c}{--}&\multicolumn{1}{c}{--}&&
 100.0&100.0&100.0&100.0&
 \multicolumn{1}{c}{--}&\multicolumn{1}{c}{--}\\
&GOS
&8.8&6.2&6.4&6.1&
 \multicolumn{1}{c}{--}&\multicolumn{1}{c}{--}&&
 88.8&98.6&99.7&100.0&
 \multicolumn{1}{c}{--}&\multicolumn{1}{c}{--}\\
&PY
&9.6&5.9&6.2&6.1&6.5&6.7&&
 90.2&98.6&99.7&100.0&100.0&100.0\\
&AS
&7.9&5.9&8.0&8.2&7.3&8.1&&
 25.3&83.3&98.4&100.0&100.0&100.0\\
\midrule
500&Thm.\ 1
&6.5&4.1&3.5&3.5&3.5&3.5&&
 100.0&100.0&100.0&100.0&100.0&100.0\\
&FLLM
&3.5&4.4&3.6&2.9&1.4&1.4&&
 100.0&100.0&100.0&100.0&100.0&100.0\\
&GRS
&\multicolumn{1}{c}{--}&\multicolumn{1}{c}{--}&\multicolumn{1}{c}{--}&\multicolumn{1}{c}{--}&4.0&4.1&&
 \multicolumn{1}{c}{--}&\multicolumn{1}{c}{--}&\multicolumn{1}{c}{--}&\multicolumn{1}{c}{--}&100.0&100.0\\
&FLY
&38.8&18.9&12.9&8.3&
 \multicolumn{1}{c}{--}&\multicolumn{1}{c}{--}&&
 100.0&100.0&100.0&100.0&
 \multicolumn{1}{c}{--}&\multicolumn{1}{c}{--}\\
&GOS
&7.1&8.4&8.7&6.8&
 \multicolumn{1}{c}{--}&\multicolumn{1}{c}{--}&&
 95.1&100.0&100.0&100.0&
 \multicolumn{1}{c}{--}&\multicolumn{1}{c}{--}\\
&PY
&8.2&8.3&8.5&6.7&6.6&5.7&&
 97.3&100.0&100.0&100.0&100.0&100.0\\
&AS
&6.9&8.9&9.9&9.0&9.0&7.0&&
 29.1&91.3&99.9&100.0&100.0&100.0\\
\bottomrule
\end{tabular}
 }
\end{center}
\par
{\scriptsize 
\textbf{Note: }{The nominal size is 5\% and powers are assessed at 5\% level
of significance; frequencies are computed across $M=1000$ Monte Carlo
samples and we set $\nu=5$ when computing $\psi_{i,NT}$. Results are in percentage points.} }
\end{table}
\begin{table}[!t]
\caption{Empirical rejection frequencies for the decision rule of Theorem 
\protect\ref{strong-rule}, GARCH case.}
\label{tab:derand_GARCH_phi0}\captionsetup{font=small}
\par
\begin{center}
{\scriptsize \centering
\begin{tabular}{ll|
                S S S S S S
                c
                S S S S S S}
\toprule
&&\multicolumn{6}{c}{$\phi_g=0$; $\alpha_i=0$ for all $i$}&&
  \multicolumn{6}{c}{$\phi_g=0$; $\alpha_i\sim N(0,1)$ for 5\% of units} \\
\midrule
$N$&$\text{C.V. }\backslash T$
&{100}&{200}&{300}&{500}&{1000}&{2000}&&
 {100}&{200}&{300}&{500}&{1000}&{2000}\\
\midrule
100&LIL
&2.6&0.3&0.0&0.0&0.0&0.0&&
 98.5&98.9&99.5&99.5&99.4&99.7\\
&$f(B)=B^{-1/4}$
&0.2&0.0&0.0&0.0&0.0&0.0&&
 96.5&97.8&98.8&99.3&98.9&99.2\\
\midrule
200&LIL
&3.6&0.2&0.0&0.0&0.0&0.0&&
 100.0&100.0&100.0&100.0&100.0&100.0\\
&$f(B)=B^{-1/4}$
&0.4&0.0&0.0&0.0&0.0&0.0&&
 100.0&100.0&100.0&100.0&100.0&100.0\\
\midrule
500&LIL
&37.0&14.2&8.8&3.9&1.2&0.2&&
 100.0&100.0&100.0&100.0&100.0&100.0\\
&$f(B)=B^{-1/4}$
&1.8&0.2&0.0&0.0&0.0&0.0&&
 100.0&100.0&100.0&100.0&100.0&100.0\\
\bottomrule
\end{tabular}
}
\end{center}
\par
{\scriptsize 
\textbf{Note: }{Results using either LIL-based critical values (LIL) or
critical values based on $f(B)=B^{-1/4}$. The derandomized statistic is
based on nominal level $\tau=5\%$. We set $B=\log(N)^{2}$ for the
calculation of $Q_{N,T,B}(\tau)$ and $\nu=5$ for that of $\psi_{i,NT}$. Results are in percentage points.} }
\end{table}

\newpage 

\subsection{Different strengths of the omitted factor\label%
{MC_omitted_factors_strength}}

We consider the same three-factor pricing model of Section \ref{simulations}%
. In the main body, $\gamma _{i}\overset{i.i.d.}{\sim }\mathcal{U}(0.7,0.9)$
which implies that $\gamma _{i}>0$ for any cross-sectional unit. We now
depart from that assumption in two ways: first, we consider the case where
only $\lfloor N^{0.4}\rfloor $ assets have a non-zero loading on $g_{t}$;
secondly, we look at a situation where $\lfloor N^{0.8}\rfloor $
cross-sectional units have a non-zero loading on $g_{t}$. These two
experiments correspond to the case of a weak and of a semi-strong omitted
pricing factors, respectively. Notably, the largest eigenvalue of the
covariance matrix of $\mathbf{u}_{t}$ is bounded in the weak factor case,
while it diverges to infinity in the semi-strong one. We consider the same
sample sizes and alternative hypothesis as in Section \ref{simulations}.

Empirical rejection frequencies for the weak factor case are in Table \ref%
{tab:test_T_weakOmitted}. Results for our test and for those of \cite%
{feng2022high} and \cite{fan2015power} are very similar to those in the main
body (Table \ref{tab:test_T}) both under both the null and the alternative
hypothesis. Actual sizes of the GOS and PY tests are much closer to the
nominal one, suggesting that strong-cross sectional dependence in the
residuals was the driver of their overrejections. Their powers are
substantially unaltered. The test of \cite{ardia2024robust} performs well in
terms of size, but still exhibits a lack of power when $T$ is small. All
previous considerations hold unchanged with respect to the value of $\phi_g$.

Table \ref{tab:test_T_semiStrongOmitted} shows results for the semi-strong
case. We would like to point out the importance of this data generating
process, as \cite{bailey2021measurement} described tens of semi-strong
pricing factor for the cross-section of US excess returns. Empirical
rejection frequencies are very close to those of the main body, as the FLY,
GOS, PY, and FLLM tests all become oversized when $\phi_g = 0.4$. This is
most likely an effect of strong cross-sectional dependence in the residuals,
as implied by a diverging eigenvalue in their covariance matrix. Results on
the test by of \cite{ardia2024robust} are equivalent to those of Tables \ref%
{tab:test_T} and \ref{tab:test_T_weakOmitted}.

\begin{table}[!t]
\caption{Empirical rejection frequencies for the test in Theorem \protect\ref%
{asy-max}, Student's $t$ innovations with weak omitted common factor.}
\label{tab:test_T_weakOmitted}
\begin{center}
{\scriptsize \centering
\begin{tabular}{ll|
                S S S S S S
                c
                S S S S S S}
\toprule
&&\multicolumn{6}{c}{$\phi_g=0$; $\alpha_i=0$ for all $i$}&&
  \multicolumn{6}{c}{$\phi_g=0$; $\alpha_i\sim N(0,1)$ for 5\% of units} \\
\midrule
$N$&$\text{Test }\backslash T$
&{100}&{200}&{300}&{500}&{1000}&{2000}&&
 {100}&{200}&{300}&{500}&{1000}&{2000}\\
\midrule
100&Thm.\ 1
&4.0&2.7&3.5&3.3&3.2&3.2&&
 94.1&95.9&96.8&98.0&99.4&99.7\\
&FLLM
&4.8&5.7&4.4&3.4&4.4&5.3&&
 98.4&99.8&99.9&100.0&100.0&100.0\\
&GRS
&\multicolumn{1}{c}{--}&4.5&4.7&3.4&3.6&2.7&&
 \multicolumn{1}{c}{--}&98.5&99.7&99.9&100.0&100.0\\
&FLY
&27.9&16.8&11.4&7.4&
 \multicolumn{1}{c}{--}&\multicolumn{1}{c}{--}&&
 99.0&99.8&99.9&100.0&
 \multicolumn{1}{c}{--}&\multicolumn{1}{c}{--}\\
&GOS
&8.4&6.9&6.3&5.7&
 \multicolumn{1}{c}{--}&\multicolumn{1}{c}{--}&&
 98.3&99.4&99.8&100.0&
 \multicolumn{1}{c}{--}&\multicolumn{1}{c}{--}\\
&PY
&6.3&5.3&5.8&5.3&5.5&5.9&&
 98.1&99.4&99.7&100.0&100.0&100.0\\
&AS
&5.0&5.3&6.4&5.9&5.2&5.5&&
 15.3&51.3&88.1&98.5&100.0&100.0\\
\midrule
200&Thm.\ 1
&5.1&3.5&4.0&3.0&4.2&4.0&&
 99.6&99.9&100.0&100.0&100.0&100.0\\
&FLLM
&6.1&4.6&3.2&5.4&4.6&4.4&&
 100.0&100.0&100.0&100.0&100.0&100.0\\
&GRS
&\multicolumn{1}{c}{--}&\multicolumn{1}{c}{--}&3.9&4.6&3.8&3.8&&
 \multicolumn{1}{c}{--}&\multicolumn{1}{c}{--}&100.0&100.0&100.0&100.0\\
&FLY
&32.5&14.6&10.7&8.5&
 \multicolumn{1}{c}{--}&\multicolumn{1}{c}{--}&&
 100.0&100.0&100.0&100.0&
 \multicolumn{1}{c}{--}&\multicolumn{1}{c}{--}\\
&GOS
&9.3&7.4&6.0&6.6&
 \multicolumn{1}{c}{--}&\multicolumn{1}{c}{--}&&
 99.6&100.0&100.0&100.0&
 \multicolumn{1}{c}{--}&\multicolumn{1}{c}{--}\\
&PY
&6.1&5.7&5.2&5.9&6.4&4.9&&
 100.0&100.0&100.0&100.0&100.0&100.0\\
&AS
&6.2&5.8&5.5&5.9&5.9&6.1&&
 16.3&64.3&96.5&99.9&100.0&100.0\\
\midrule
500&Thm.\ 1
&3.0&5.4&3.1&3.1&4.2&3.9&&
 100.0&100.0&100.0&100.0&100.0&100.0\\
&FLLM
&7.1&5.1&4.9&4.3&3.6&2.6&&
 100.0&100.0&100.0&100.0&100.0&100.0\\
&GRS
&\multicolumn{1}{c}{--}&\multicolumn{1}{c}{--}&\multicolumn{1}{c}{--}&\multicolumn{1}{c}{--}&4.5&5.1&&
 \multicolumn{1}{c}{--}&\multicolumn{1}{c}{--}&\multicolumn{1}{c}{--}&\multicolumn{1}{c}{--}&100.0&100.0\\
&FLY
&43.1&17.3&12.2&8.4&
 \multicolumn{1}{c}{--}&\multicolumn{1}{c}{--}&&
 100.0&100.0&100.0&100.0&
 \multicolumn{1}{c}{--}&\multicolumn{1}{c}{--}\\
&GOS
&9.9&6.4&6.2&5.3&
 \multicolumn{1}{c}{--}&\multicolumn{1}{c}{--}&&
 99.3&100.0&100.0&100.0&
 \multicolumn{1}{c}{--}&\multicolumn{1}{c}{--}\\
&PY
&4.4&4.6&5.0&4.8&4.8&4.2&&
 100.0&100.0&100.0&100.0&100.0&100.0\\
&AS
&4.6&4.3&5.0&5.4&4.8&5.4&&
 17.7&80.7&99.4&100.0&100.0&100.0\\
\midrule
&&\multicolumn{6}{c}{$\phi_g=0.4$; $\alpha_i=0$ for all $i$}&&
  \multicolumn{6}{c}{$\phi_g=0.4$; $\alpha_i\sim N(0,1)$ for 5\% of units} \\
\midrule
$N$&$\text{Test }\backslash T$
&{100}&{200}&{300}&{500}&{1000}&{2000}&&
 {100}&{200}&{300}&{500}&{1000}&{2000}\\
\midrule
100&Thm.\ 1
&4.8&3.6&2.8&3.5&3.2&3.2&&
 94.4&96.3&97.1&97.8&99.4&99.7\\
&FLLM
&5.5&6.8&5.6&4.6&6.1&7.0&&
 98.4&99.9&100.0&100.0&100.0&100.0\\
&GRS
&\multicolumn{1}{c}{--}&5.1&4.9&4.0&4.3&3.6&&
 \multicolumn{1}{c}{--}&98.4&99.7&99.9&100.0&100.0\\
&FLY
&31.3&19.0&13.3&8.5&
 \multicolumn{1}{c}{--}&\multicolumn{1}{c}{--}&&
 99.0&99.9&100.0&100.0&
 \multicolumn{1}{c}{--}&\multicolumn{1}{c}{--}\\
&GOS
&12.0&11.0&8.9&8.0&
 \multicolumn{1}{c}{--}&\multicolumn{1}{c}{--}&&
 98.4&99.5&99.7&100.0&
 \multicolumn{1}{c}{--}&\multicolumn{1}{c}{--}\\
&PY
&8.8&8.9&8.2&7.4&7.8&9.3&&
 98.3&99.4&99.7&100.0&100.0&100.0\\
&AS
&4.5&5.6&6.5&5.9&5.4&5.5&&
 15.3&51.2&87.4&98.4&100.0&100.0\\
\midrule
200&Thm.\ 1
&3.8&4.0&3.5&4.0&4.2&4.0&&
 99.7&99.9&100.0&100.0&100.0&100.0\\
&FLLM
&7.0&5.3&4.4&6.3&5.3&5.3&&
 100.0&100.0&100.0&100.0&100.0&100.0\\
&GRS
&\multicolumn{1}{c}{--}&\multicolumn{1}{c}{--}&4.7&4.7&4.5&4.4&&
 \multicolumn{1}{c}{--}&\multicolumn{1}{c}{--}&100.0&100.0&100.0&100.0\\
&FLY
&34.6&17.0&12.3&10.0&
 \multicolumn{1}{c}{--}&\multicolumn{1}{c}{--}&&
 100.0&100.0&100.0&100.0&
 \multicolumn{1}{c}{--}&\multicolumn{1}{c}{--}\\
&GOS
&12.0&9.7&8.8&9.6&
 \multicolumn{1}{c}{--}&\multicolumn{1}{c}{--}&&
 99.6&100.0&100.0&100.0&
 \multicolumn{1}{c}{--}&\multicolumn{1}{c}{--}\\
&PY
&8.2&8.0&7.9&9.1&8.2&7.0&&
 100.0&100.0&100.0&100.0&100.0&100.0\\
&AS
&6.4&5.8&5.5&5.8&5.9&6.0&&
 17.0&64.8&96.3&99.9&100.0&100.0\\
\midrule
500&Thm.\ 1
&5.1&5.0&3.6&3.0&4.2&3.9&&
 100.0&100.0&100.0&100.0&100.0&100.0\\
&FLLM
&7.3&5.9&5.0&4.5&4.4&3.2&&
 100.0&100.0&100.0&100.0&100.0&100.0\\
&GRS
&\multicolumn{1}{c}{--}&\multicolumn{1}{c}{--}&\multicolumn{1}{c}{--}&\multicolumn{1}{c}{--}&4.9&5.5&&
 \multicolumn{1}{c}{--}&\multicolumn{1}{c}{--}&\multicolumn{1}{c}{--}&\multicolumn{1}{c}{--}&100.0&100.0\\
&FLY
&46.9&19.2&14.1&9.6&
 \multicolumn{1}{c}{--}&\multicolumn{1}{c}{--}&&
 100.0&100.0&100.0&100.0&
 \multicolumn{1}{c}{--}&\multicolumn{1}{c}{--}\\
&GOS
&12.2&8.8&7.9&7.0&
 \multicolumn{1}{c}{--}&\multicolumn{1}{c}{--}&&
 99.3&100.0&100.0&100.0&
 \multicolumn{1}{c}{--}&\multicolumn{1}{c}{--}\\
&PY
&6.5&5.9&6.1&6.5&7.4&6.7&&
 100.0&100.0&100.0&100.0&100.0&100.0\\
&AS
&4.9&4.2&4.9&5.2&4.9&5.5&&
 17.8&80.9&99.4&100.0&100.0&100.0\\
\bottomrule
\end{tabular}
 }
\end{center}
\par
{\small \justifying
\textbf{Note: }{The nominal size is 5\% and powers are assessed at 5\% level
of significance; frequencies are computed across $M=1000$ Monte Carlo
samples and we set $\nu=5$ when computing $\psi_{i,NT}$. Only $\lfloor
N^{0.4}\rfloor$ cross-sectional entities have a non-zero loading on the
omitted common factor. Results are in percentage points.} }
\end{table}
\begin{table}[!t]
\caption{Empirical rejection frequencies for the test in Theorem \protect\ref%
{asy-max}; Student's $t$ innovations with semi-strong omitted common factor.}
\label{tab:test_T_semiStrongOmitted}
\begin{center}
{\scriptsize \centering
\begin{tabular}{ll|
                S S S S S S
                c
                S S S S S S}
\toprule
&&\multicolumn{6}{c}{$\phi_g=0$; $\alpha_i=0$ for all $i$}&&
  \multicolumn{6}{c}{$\phi_g=0$; $\alpha_i\sim N(0,1)$ for 5\% of units} \\
\midrule
$N$&$\text{Test }\backslash T$
&{100}&{200}&{300}&{500}&{1000}&{2000}&&
 {100}&{200}&{300}&{500}&{1000}&{2000}\\
\midrule
100&Thm.\ 1
&3.8&2.9&3.7&3.3&3.2&3.2&&
 94.0&95.7&96.8&98.3&99.0&99.5\\
&FLLM
&4.6&4.4&4.1&4.4&4.3&4.2&&
 98.6&99.6&99.9&100.0&100.0&100.0\\
&GRS
&\multicolumn{1}{c}{--}&4.9&3.7&3.2&2.9&3.0&&
 \multicolumn{1}{c}{--}&98.8&99.5&99.9&100.0&100.0\\
&FLY
&26.7&14.7&11.4&8.1&
 \multicolumn{1}{c}{--}&\multicolumn{1}{c}{--}&&
 99.6&99.8&99.9&100.0&
 \multicolumn{1}{c}{--}&\multicolumn{1}{c}{--}\\
&GOS
&8.7&6.3&6.0&6.4&
 \multicolumn{1}{c}{--}&\multicolumn{1}{c}{--}&&
 97.9&99.0&99.6&99.9&
 \multicolumn{1}{c}{--}&\multicolumn{1}{c}{--}\\
&PY
&7.3&5.8&5.8&6.1&7.9&6.4&&
 97.3&99.0&99.6&99.9&100.0&100.0\\
&AS
&5.2&6.4&6.6&4.9&5.2&5.5&&
 13.9&55.1&87.9&98.4&100.0&100.0\\
\midrule
200&Thm.\ 1
&5.1&3.7&4.1&3.0&4.2&4.0&&
 99.5&99.8&99.8&99.9&100.0&100.0\\
&FLLM
&5.3&4.8&4.2&4.6&5.1&4.8&&
 100.0&100.0&100.0&100.0&100.0&100.0\\
&GRS
&\multicolumn{1}{c}{--}&\multicolumn{1}{c}{--}&6.4&3.9&2.9&4.3&&
 \multicolumn{1}{c}{--}&\multicolumn{1}{c}{--}&100.0&100.0&100.0&100.0\\
&FLY
&32.6&16.1&12.3&7.5&
 \multicolumn{1}{c}{--}&\multicolumn{1}{c}{--}&&
 100.0&100.0&100.0&100.0&
 \multicolumn{1}{c}{--}&\multicolumn{1}{c}{--}\\
&GOS
&11.9&7.8&5.6&5.4&
 \multicolumn{1}{c}{--}&\multicolumn{1}{c}{--}&&
 99.5&100.0&100.0&100.0&
 \multicolumn{1}{c}{--}&\multicolumn{1}{c}{--}\\
&PY
&9.1&6.9&4.8&5.3&5.6&5.0&&
 99.8&100.0&100.0&100.0&100.0&100.0\\
&AS
&4.2&4.6&4.5&5.8&5.9&6.1&&
 15.4&65.3&96.5&99.9&100.0&100.0\\
\midrule
500&Thm.\ 1
&3.7&5.2&3.7&3.2&4.2&4.0&&
 100.0&100.0&100.0&100.0&100.0&100.0\\
&FLLM
&4.3&5.6&3.7&3.7&4.0&3.9&&
 100.0&100.0&100.0&100.0&100.0&100.0\\
&GRS
&\multicolumn{1}{c}{--}&\multicolumn{1}{c}{--}&\multicolumn{1}{c}{--}&\multicolumn{1}{c}{--}&4.2&4.8&&
 \multicolumn{1}{c}{--}&\multicolumn{1}{c}{--}&\multicolumn{1}{c}{--}&\multicolumn{1}{c}{--}&100.0&100.0\\
&FLY
&41.3&17.0&14.0&9.9&
 \multicolumn{1}{c}{--}&\multicolumn{1}{c}{--}&&
 100.0&100.0&100.0&100.0&
 \multicolumn{1}{c}{--}&\multicolumn{1}{c}{--}\\
&GOS
&13.6&8.6&6.7&7.1&
 \multicolumn{1}{c}{--}&\multicolumn{1}{c}{--}&&
 99.1&100.0&100.0&100.0&
 \multicolumn{1}{c}{--}&\multicolumn{1}{c}{--}\\
&PY
&9.4&6.5&5.8&6.7&6.2&6.7&&
 100.0&100.0&100.0&100.0&100.0&100.0\\
&AS
&4.6&5.5&5.0&6.5&4.8&5.4&&
 16.8&79.4&99.2&100.0&100.0&100.0\\
\midrule
&&\multicolumn{6}{c}{$\phi_g=0.4$; $\alpha_i=0$ for all $i$}&&
  \multicolumn{6}{c}{$\phi_g=0.4$; $\alpha_i\sim N(0,1)$ for 5\% of units} \\
\midrule
$N$&$\text{Test }\backslash T$
&{100}&{200}&{300}&{500}&{1000}&{2000}&&
 {100}&{200}&{300}&{500}&{1000}&{2000}\\
\midrule
100&Thm.\ 1
&5.0&3.8&3.2&3.4&3.3&3.3&&
 93.1&95.8&97.0&98.2&98.9&99.4\\
&FLLM
&8.1&6.9&6.9&7.0&9.2&9.2&&
 98.5&99.8&99.9&100.0&100.0&100.0\\
&GRS
&\multicolumn{1}{c}{--}&5.2&4.5&3.7&3.6&3.8&&
 \multicolumn{1}{c}{--}&98.8&99.5&99.9&100.0&100.0\\
&FLY
&33.2&18.3&14.0&10.9&
 \multicolumn{1}{c}{--}&\multicolumn{1}{c}{--}&&
 99.5&99.9&100.0&100.0&
 \multicolumn{1}{c}{--}&\multicolumn{1}{c}{--}\\
&GOS
&18.3&17.9&16.6&16.2&
 \multicolumn{1}{c}{--}&\multicolumn{1}{c}{--}&&
 97.8&99.0&99.6&99.9&
 \multicolumn{1}{c}{--}&\multicolumn{1}{c}{--}\\
&PY
&16.8&16.8&15.9&15.8&19.1&17.7&&
 97.5&99.0&99.6&99.9&100.0&100.0\\
&AS
&5.1&6.6&6.9&4.6&5.4&5.5&&
 13.3&52.1&87.0&98.2&100.0&100.0\\
\midrule
200&Thm.\ 1
&4.6&4.2&3.3&4.0&4.2&4.1&&
 99.5&99.7&100.0&100.0&100.0&100.0\\
&FLLM
&8.3&8.6&8.6&7.0&8.8&7.2&&
 100.0&100.0&100.0&100.0&100.0&100.0\\
&GRS
&\multicolumn{1}{c}{--}&\multicolumn{1}{c}{--}&6.7&4.4&3.0&4.3&&
 \multicolumn{1}{c}{--}&\multicolumn{1}{c}{--}&100.0&100.0&100.0&100.0\\
&FLY
&39.0&19.4&15.2&9.4&
 \multicolumn{1}{c}{--}&\multicolumn{1}{c}{--}&&
 100.0&100.0&100.0&100.0&
 \multicolumn{1}{c}{--}&\multicolumn{1}{c}{--}\\
&GOS
&21.4&18.8&16.9&16.2&
 \multicolumn{1}{c}{--}&\multicolumn{1}{c}{--}&&
 99.2&100.0&100.0&100.0&
 \multicolumn{1}{c}{--}&\multicolumn{1}{c}{--}\\
&PY
&19.7&18.2&16.3&15.8&17.7&17.8&&
 99.7&100.0&100.0&100.0&100.0&100.0\\
&AS
&4.1&6.0&5.6&6.0&5.9&6.0&&
 14.2&63.1&95.8&99.8&100.0&100.0\\
\midrule
500&Thm.\ 1
&6.2&5.3&3.7&3.1&4.0&3.9&&
 100.0&100.0&100.0&100.0&100.0&100.0\\
&FLLM
&8.3&8.7&7.8&7.2&8.5&9.2&&
 100.0&100.0&100.0&100.0&100.0&100.0\\
&GRS
&\multicolumn{1}{c}{--}&\multicolumn{1}{c}{--}&\multicolumn{1}{c}{--}&\multicolumn{1}{c}{--}&4.6&5.1&&
 \multicolumn{1}{c}{--}&\multicolumn{1}{c}{--}&\multicolumn{1}{c}{--}&\multicolumn{1}{c}{--}&100.0&100.0\\
&FLY
&48.0&21.3&16.3&11.6&
 \multicolumn{1}{c}{--}&\multicolumn{1}{c}{--}&&
 100.0&100.0&100.0&100.0&
 \multicolumn{1}{c}{--}&\multicolumn{1}{c}{--}\\
&GOS
&26.3&21.3&19.0&20.5&
 \multicolumn{1}{c}{--}&\multicolumn{1}{c}{--}&&
 98.8&100.0&100.0&100.0&
 \multicolumn{1}{c}{--}&\multicolumn{1}{c}{--}\\
&PY
&22.0&19.2&17.8&20.2&19.3&18.7&&
 100.0&100.0&100.0&100.0&100.0&100.0\\
&AS
&6.4&5.5&5.5&6.9&4.9&5.5&&
 16.8&76.0&99.1&100.0&100.0&100.0\\
\bottomrule
\end{tabular}
 }
\end{center}
\par
{\small \justifying
\textbf{Note: }{The nominal size is 5\% and powers are assessed at 5\% level
of significance; frequencies are computed across $M=1000$ Monte Carlo
samples and we set $\nu=5$ when computing $\psi_{i,NT}$. Only $\lfloor
N^{0.8}\rfloor$ assets have a non-zero loading on the omitted common factor. Results are in percentage points.} }
\end{table}

\newpage
\clearpage

\subsection{Different levels of sparsity under the alternative\label{MC_sparsity}}

In this section, we consider power under twelve different
alternatives, each based on a different percentage of mis-priced assets: $1\%,$ $2\%,$ $3\%,$ $\dots ,$ $9\%,$ $10\%,$ $15\%$ and $20\%$. For brevity, we only focus on sample sizes $T=100$ and $N=500$, which are the most relevant for the empirical analysis of Section \ref{empirical}. Figure \ref%
{fig:multi_alpha} reports the empirical rejection frequencies for the twelve
levels of sparsity (the horizontal axis reports the percentage of mis-priced
assets) across all DGPs. The upper panels consider $\phi_g = 0$, while the case $\phi_g = 0.40$ is in the lower ones. Tests by FLY and FLLM always achieve unit power, while our approach performs almost equally well (and better than the other
tests), as its power converges to one almost immediately.

\begin{figure}[!t]
\captionsetup{font=scriptsize} \centering
\begin{subfigure}[t]{0.325\textwidth}
                 \centering
                 \includegraphics[ width=0.98\textwidth, trim = 0cm 1.5cm 0cm 1.5cm, clip]{./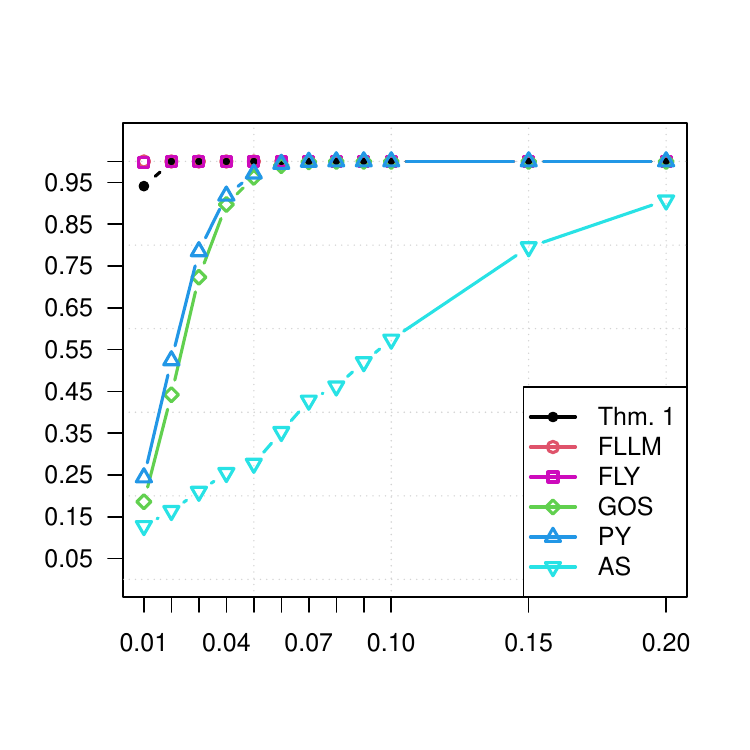}
                               \caption{Gaussian DGP, $\phi_{g} = 0$.}
                 \label{fig:phi0_gau}
         \end{subfigure}
\begin{subfigure}[t]{0.325\textwidth}
                 \centering
                 \includegraphics[width=0.98\textwidth, trim = 0cm 1.5cm 0cm 1.5cm, clip]{./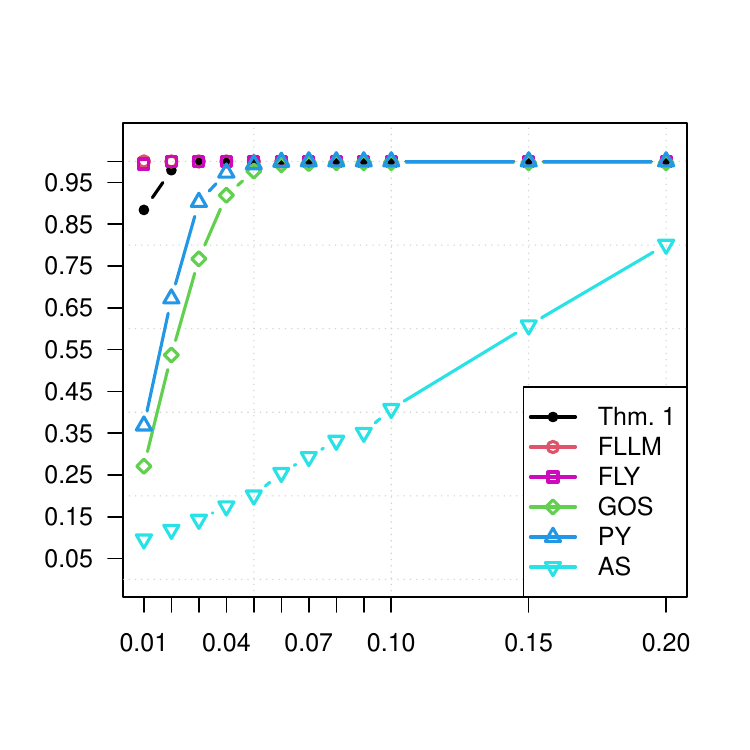}
                 \caption{Student's $t$ DGP, $\phi_{g} = 0$.}
                 \label{fig:phi_0_T}
         \end{subfigure}
\begin{subfigure}[t]{0.325\textwidth}
                 \centering
                 \includegraphics[width=0.98\textwidth, trim = 0cm 1.5cm 0cm 1.5cm, clip]{./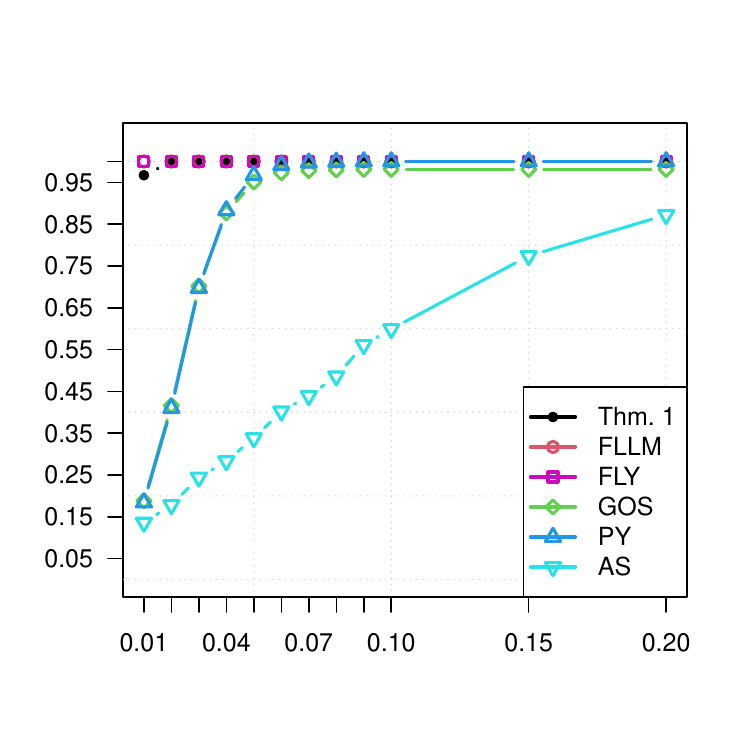}
\caption{GARCH(1,1) DGP, $\phi_{g} = 0$.}
                 \label{fig:phi_0_garch}
         \end{subfigure}

         \begin{subfigure}[t]{0.325\textwidth}
                 \centering
                 \includegraphics[ width=0.98\textwidth, trim = 0cm 1.5cm 0cm 1.5cm, clip]{./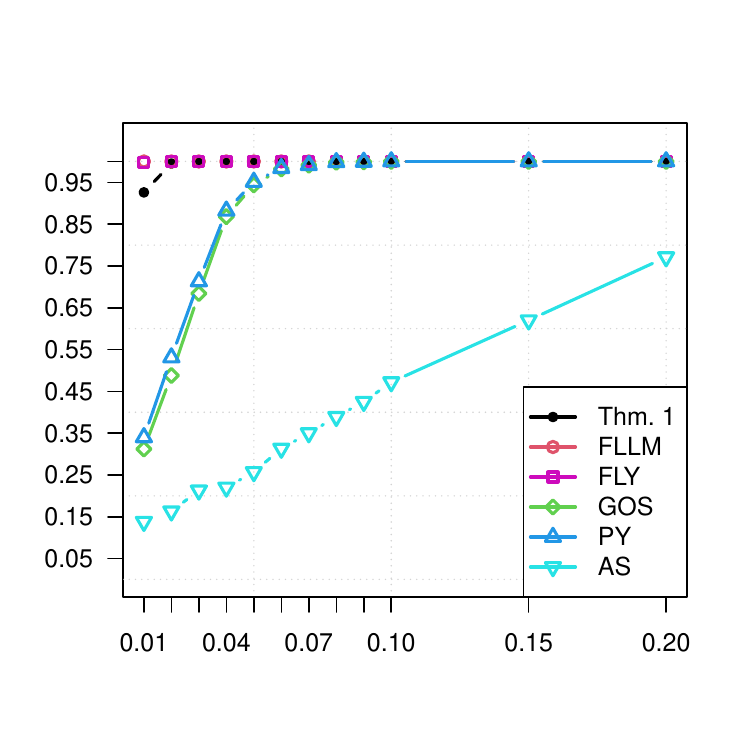}
                               \caption{Gaussian DGP, $\phi_{g} = 0.40$.}
                 \label{fig:phi040_gau}
         \end{subfigure}
\begin{subfigure}[t]{0.325\textwidth}
                 \centering
                 \includegraphics[width=0.98\textwidth, trim = 0cm 1.5cm 0cm 1.5cm, clip]{./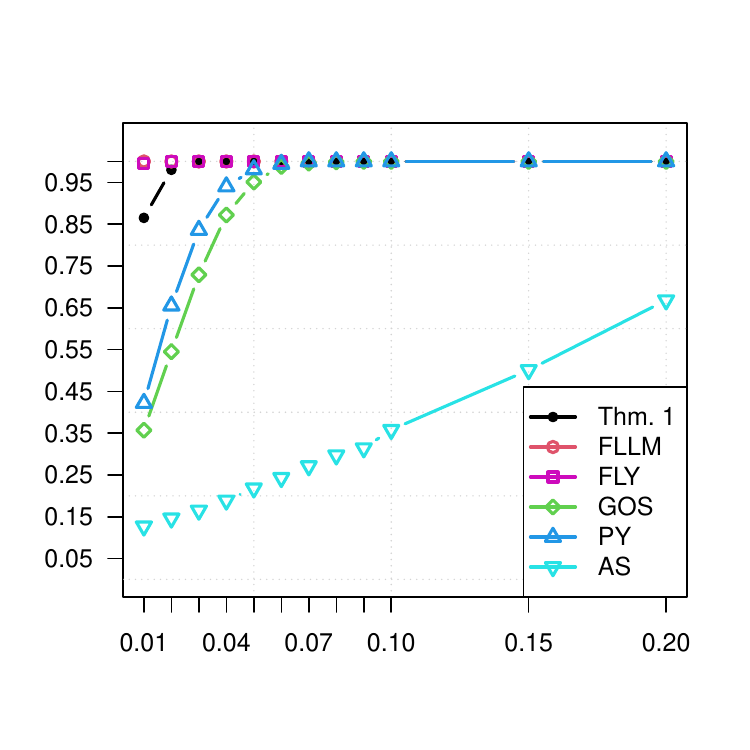}
                 \caption{Student's $t$ DGP, $\phi_{g} = 0.40$.}
                 \label{fig:phi_040_T}
         \end{subfigure}
\begin{subfigure}[t]{0.325\textwidth}
                 \centering
                 \includegraphics[width=0.98\textwidth, trim = 0cm 1.5cm 0cm 1.5cm, clip]{./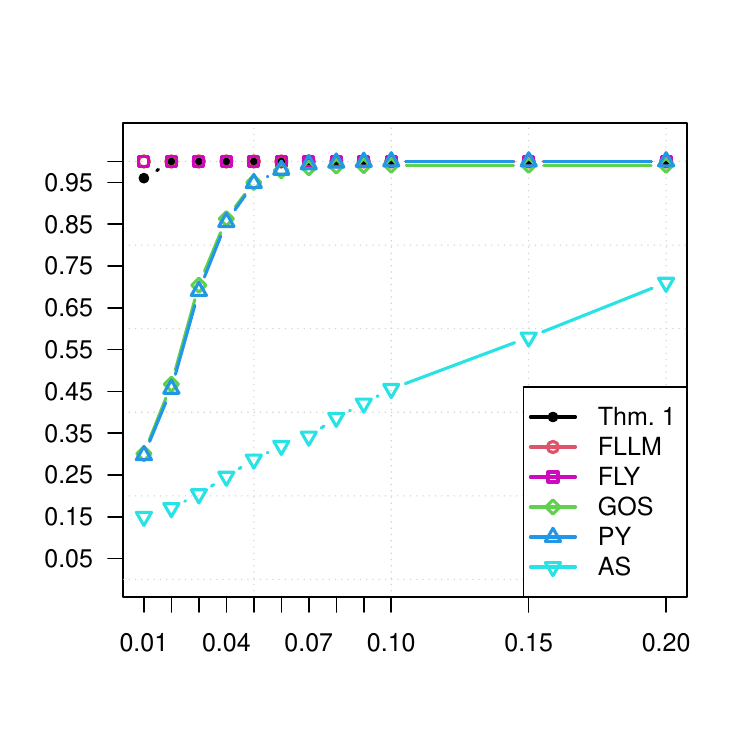}
\caption{GARCH(1,1) DGP, $\phi_{g} = 0.40$.}
                 \label{fig:phi_040_garch}
         \end{subfigure}
\caption{Power curves. The horizontal axis reports the percentages of
mis-priced assets. }
\label{fig:multi_alpha}
\end{figure}

\newpage

\subsection{Strong and semi-strong pricing factors\label%
{MC_pricing_factors_streng}}

We now consider the same data generating process of Section \ref%
{MC_omitted_factors_strength} but assuming that the omitted factor $g_{t}$
is strong. This time, however, we assume that $\beta _{i,2}=\beta _{i,3}=0$
for $N-\lfloor N^{0.8}\rfloor $ randomly chosen assets (cf. %
\citealp{bailey2021measurement}, who found that only the market factor is
strong, while other $140$ factors are at most semi-strong). Empirical
rejection frequencies for this data generating process are reported in Table %
\ref{tab:test_T_semiStrongPricing}. Results on all tests are substantially
equivalent to those of Table \ref{tab:test_T}, thus validating results of
the main body also in the case where only one pricing factor is strong. This
holds true irrespectively of the value of $\phi _{g }$.

\begin{table}[!t]
\caption{Empirical rejection frequencies for the test in Theorem \protect\ref%
{asy-max}; Student's $t$ innovations with one strong and two semi-strong
pricing factors.}
\label{tab:test_T_semiStrongPricing}
\begin{center}
{\scriptsize \centering
\begin{tabular}{ll|
                S S S S S S
                c
                S S S S S S}
\toprule
&&\multicolumn{6}{c}{$\phi_g=0$; $\alpha_i=0$}&&
  \multicolumn{6}{c}{$\phi_g=0$; $\alpha_i\sim N(0,1)$ (5\%)}\\
\midrule
$N$&Test$\backslash T$
&100&200&300&500&1000&2000&&
 100&200&300&500&1000&2000\\
\midrule
100&Thm.\ 1
&3.6&2.9&3.8&3.4&3.3&3.2&&
 89.3&93.2&95.1&96.9&98.5&99.2\\
&FLLM
&4.1&3.3&2.8&4.1&3.8&3.4&&
 98.0&99.5&100.0&100.0&100.0&100.0\\
&GRS
&\multicolumn{1}{c}{--}&5.3&5.5&3.6&2.9&3.0&&
 \multicolumn{1}{c}{--}&98.9&99.6&100.0&100.0&100.0\\
&FLY
&23.2&15.4&11.5&10.1&\multicolumn{1}{c}{--}&\multicolumn{1}{c}{--}&&
 99.4&99.9&100.0&100.0&\multicolumn{1}{c}{--}&\multicolumn{1}{c}{--}\\
&GOS
&6.9&7.1&7.4&7.4&\multicolumn{1}{c}{--}&\multicolumn{1}{c}{--}&&
 79.4&94.7&97.3&98.8&\multicolumn{1}{c}{--}&\multicolumn{1}{c}{--}\\
&PY
&8.7&7.2&7.2&7.2&7.3&8.0&&
 85.0&94.9&97.3&98.7&99.9&99.9\\
&AS
&6.8&5.7&6.6&6.5&5.2&5.6&&
 14.8&55.2&87.2&98.2&99.9&100.0\\
\midrule
200&Thm.\ 1
&4.6&3.9&3.9&3.1&4.1&4.0&&
 98.4&99.6&99.8&99.8&100.0&100.0\\
&FLLM
&4.7&4.2&3.2&3.5&3.3&3.3&&
 100.0&100.0&100.0&100.0&100.0&100.0\\
&GRS
&\multicolumn{1}{c}{--}&\multicolumn{1}{c}{--}&6.2&4.0&2.9&4.3&&
 \multicolumn{1}{c}{--}&\multicolumn{1}{c}{--}&100.0&100.0&100.0&100.0\\
&FLY
&28.4&14.4&11.4&8.5&\multicolumn{1}{c}{--}&\multicolumn{1}{c}{--}&&
 100.0&100.0&100.0&100.0&\multicolumn{1}{c}{--}&\multicolumn{1}{c}{--}\\
&GOS
&6.5&7.1&5.9&7.0&\multicolumn{1}{c}{--}&\multicolumn{1}{c}{--}&&
 89.2&98.6&99.8&100.0&\multicolumn{1}{c}{--}&\multicolumn{1}{c}{--}\\
&PY
&10.0&8.1&5.7&7.0&5.2&5.5&&
 93.4&98.8&99.8&100.0&100.0&100.0\\
&AS
&6.9&7.3&6.4&7.4&4.9&4.7&&
 18.6&64.4&96.6&99.9&100.0&100.0\\
\midrule
500&Thm.\ 1
&3.4&5.3&3.3&3.2&4.2&4.0&&
 100.0&100.0&100.0&100.0&100.0&100.0\\
&FLLM
&4.9&3.7&2.8&3.2&2.6&3.4&&
 100.0&100.0&100.0&100.0&100.0&100.0\\
&GRS
&\multicolumn{1}{c}{--}&\multicolumn{1}{c}{--}&\multicolumn{1}{c}{--}&\multicolumn{1}{c}{--}&4.2&4.8&&
 \multicolumn{1}{c}{--}&\multicolumn{1}{c}{--}&\multicolumn{1}{c}{--}&\multicolumn{1}{c}{--}&100.0&100.0\\
&FLY
&38.5&19.3&13.8&8.6&\multicolumn{1}{c}{--}&\multicolumn{1}{c}{--}&&
 100.0&100.0&100.0&100.0&\multicolumn{1}{c}{--}&\multicolumn{1}{c}{--}\\
&GOS
&9.7&6.0&6.1&6.7&\multicolumn{1}{c}{--}&\multicolumn{1}{c}{--}&&
 97.2&99.9&100.0&100.0&\multicolumn{1}{c}{--}&\multicolumn{1}{c}{--}\\
&PY
&12.7&7.0&6.2&6.5&6.9&7.5&&
 99.2&100.0&100.0&100.0&100.0&100.0\\
&AS
&7.5&6.8&8.3&8.4&5.3&6.9&&
 21.5&78.2&99.7&100.0&100.0&100.0\\
\midrule
&&\multicolumn{6}{c}{$\phi_g=0.4$; $\alpha_i=0$}&&
  \multicolumn{6}{c}{$\phi_g=0.4$; $\alpha_i\sim N(0,1)$ (5\%)}\\
\midrule
$N$&$\text{Test }\backslash T$
&{100}&{200}&{300}&{500}&{1000}&{2000}&&
 {100}&{200}&{300}&{500}&{1000}&{2000}\\
 \midrule
100&Thm.\ 1
&5.3&3.9&3.2&3.6&3.3&3.2&&
 88.9&93.4&94.5&97.3&98.3&99.1\\
&FLLM
&11.6&9.0&10.1&9.5&11.5&11.1&&
 97.8&99.4&100.0&100.0&100.0&100.0\\
&GRS
&\multicolumn{1}{c}{--}&5.6&6.1&4.5&3.6&3.8&&
 \multicolumn{1}{c}{--}&100.0&100.0&100.0&100.0&100.0\\
&FLY
&33.2&18.3&14.0&10.9&\multicolumn{1}{c}{--}&\multicolumn{1}{c}{--}&&
 99.4&100.0&100.0&100.0&\multicolumn{1}{c}{--}&\multicolumn{1}{c}{--}\\
&GOS
&20.5&17.4&19.0&19.8&\multicolumn{1}{c}{--}&\multicolumn{1}{c}{--}&&
 80.4&93.8&97.7&98.9&\multicolumn{1}{c}{--}&\multicolumn{1}{c}{--}\\
&PY
&23.3&17.4&18.6&19.5&21.4&20.9&&
 84.5&93.4&97.4&98.7&99.9&100.0\\
&AS
&7.3&7.7&8.0&8.5&6.1&6.2&&
 15.9&46.7&79.9&96.9&99.7&100.0\\
\midrule
200&Thm.\ 1
&5.0&4.5&3.9&4.3&4.2&4.0&&
 98.4&99.4&99.8&100.0&99.9&100.0\\
&FLLM
&12.1&10.5&9.5&10.3&10.1&9.1&&
 100.0&100.0&100.0&100.0&100.0&100.0\\
&GRS
&\multicolumn{1}{c}{--}&\multicolumn{1}{c}{--}&6.4&4.6&3.0&4.3&&
 \multicolumn{1}{c}{--}&\multicolumn{1}{c}{--}&100.0&100.0&100.0&100.0\\
&FLY
&39.0&19.4&15.2&9.4&\multicolumn{1}{c}{--}&\multicolumn{1}{c}{--}&&
 100.0&100.0&100.0&100.0&\multicolumn{1}{c}{--}&\multicolumn{1}{c}{--}\\
&GOS
&19.3&20.2&17.9&20.2&\multicolumn{1}{c}{--}&\multicolumn{1}{c}{--}&&
 87.7&98.4&99.6&100.0&\multicolumn{1}{c}{--}&\multicolumn{1}{c}{--}\\
&PY
&24.0&20.1&17.8&20.0&18.5&18.7&&
 90.9&98.4&99.6&100.0&100.0&100.0\\
&AS
&8.7&9.4&9.0&10.9&5.1&5.6&&
 18.5&56.5&91.4&99.4&100.0&100.0\\
\midrule
500&Thm.\ 1
&6.0&5.4&4.0&3.4&4.2&4.1&&
 100.0&100.0&100.0&100.0&100.0&100.0\\
&FLLM
&13.7&10.2&10.5&9.4&9.1&10.5&&
 100.0&100.0&100.0&100.0&100.0&100.0\\
&GRS
&\multicolumn{1}{c}{--}&\multicolumn{1}{c}{--}&\multicolumn{1}{c}{--}&\multicolumn{1}{c}{--}&4.6&5.1&&
 \multicolumn{1}{c}{--}&\multicolumn{1}{c}{--}&\multicolumn{1}{c}{--}&\multicolumn{1}{c}{--}&100.0&100.0\\
&FLY
&48.0&21.3&16.3&11.6&\multicolumn{1}{c}{--}&\multicolumn{1}{c}{--}&&
 100.0&100.0&100.0&100.0&\multicolumn{1}{c}{--}&\multicolumn{1}{c}{--}\\
&GOS
&21.6&17.6&18.2&18.4&\multicolumn{1}{c}{--}&\multicolumn{1}{c}{--}&&
 94.6&99.9&100.0&100.0&\multicolumn{1}{c}{--}&\multicolumn{1}{c}{--}\\
&PY
&24.6&18.1&18.1&18.2&20.8&19.7&&
 97.7&99.9&100.0&100.0&100.0&100.0\\
&AS
&9.4&7.8&8.4&9.1&6.2&7.3&&
 21.1&65.3&96.1&99.9&100.0&100.0\\
\bottomrule
\end{tabular}
 }
\end{center}
\par
{\small \justifying
\textbf{Note: }{The nominal size is 5\% and powers are assessed at 5\% level
of significance; frequencies are computed across $M=1000$ Monte Carlo
samples and we set $\nu=5$ when computing $\psi_{i,NT}$. All the assets are
exposed to the first pricing factor, while only $\lfloor N^{0.8}\rfloor $ of
them load on the remaining ones. Results are in percentage points.} }
\end{table}

\newpage
\clearpage

\subsection{Non-tradable factors\label{MC-FF}}

We now study the finite sample properties of the testing procedure described
in Theorem \ref{ff-obs}. To do it, we consider a three-factor pricing model
similar to that of Section \ref{MC_omitted_factors_strength} but with no
factor structure in $u_{i,t}$, viz. 
\begin{equation*}
\begin{aligned} &y_{i,t} =\alpha _{i}+\beta^{\prime}\lambda +
\sum_{p=1}^{3}\beta _{i,p}v_{p,t}+u_{i,t}, \\ &v_{t} =\Phi
v_{t-1}+\zeta_{t}, \end{aligned}
\end{equation*}%
where we have constructed the DGP in terms of $v_{t}$ for coherence with
Section \ref{nontradable} (see equation \eqref{eq:DGP_thousands}, in
particular). Similarly to the main body, $\mathbf{u}_{t}=\left(
u_{1,t},\dots ,u_{N,t}\right) ^{\prime }$ follows one of the following three
specifications:

\begin{enumerate}
\item \textbf{The Gaussian case}: $\mathbf{u}_{t}\overset{i.i.d.}{\sim }%
\mathcal{N}_{N}(0,I_{N})$.

\item \textbf{The Student's $t$ case}: $u_{i,t}$ follows a Students's $t$
distribution with $d=5.5$ degrees of freedom, zero mean and unit scale,
independent across $i$. In this case, $u_{i,t}$ and $y_{i,t}$ have regularly
varying tails;

\item \textbf{The GARCH case}: we generate $\mathbf{u}_{t}=\mathbf{H}_{t}%
\mathbf{z}_{t}$, with: $\mathbf{z}_{t}=\left( z_{1,t},...,z_{N,t}\right)
^{\prime }$ and $z_{i,t}\overset{i.i.d.}{\sim }\mathcal{N}(0,1)$; and $%
\mathbf{H}_{t}=\mathrm{diag}\left\{ h_{1,t},\dots ,h_{N,t}\right\} $ with $%
h_{i,t}^{2}=\omega _{i}+\alpha _{i}\xi _{i,t}^{2}+\beta _{i}h_{i,t-1}^{2}$.
\end{enumerate}

Values of $\Phi$, $\beta = \left(\beta_1,\beta_2,\beta_3\right)^{\prime}$,
and of GARCH parameters are set as in Section \ref{simulations}, while $%
\lambda_p \overset{i.i.d.}{\sim}\mathcal{U}(0,1/2)$ for $p=1,2,3$.

Table \ref{tab:size_FM} reports empirical rejection frequencies under the
null (left panels) and under the alternative (right panel) for the test in
Theorem \ref{ff-obs}. As in Section \ref{simulations}, we set the nominal
size to $\tau = 5\%$ and study powers for the same significance level.
Because our test is the only one with a fully-fledged asymptotic theory for
non-tradabale factors, we do not report rejection frequencies for other
approaches. Our procedure satisfactorily controls the size for any sample
size and specification of the residuals $u_{i,t}$. Empirical powers are
consistently one. Table \ref{tab:derand_FM} presents rejection frequencies
for a derandomized procedure similar to that of Theorem \ref{strong-rule}
but based on Theorem \ref{ff-obs} rather than \ref{asy-max}. No matter the
residuals' properties, empirical rejection frequencies always converge to
zero as the sample size increases. As for the analysis in Section \ref%
{simulations}, this convergence is much faster using critical values based
on $f(B) = B^{-1/4}$. 
\begin{table}[!t]
\caption{Empirical rejection frequencies for the test in Theorem \protect\ref%
{ff-obs}.}
\label{tab:size_FM}
\begin{center}
{\scriptsize \centering
\begin{tabular}{l|
                S S S S S S
                c
                S S S S S S}
\toprule
&\multicolumn{13}{c}{Gaussian case}\\
&\multicolumn{6}{c}{$\alpha_i=0$ for all $i$}&&
 \multicolumn{6}{c}{$\alpha_i\sim N(0,1)$ for 5\% of units}\\
\midrule
$N\backslash T$
&{100}&{200}&{300}&{500}&{1000}&{2000}&&
 {100}&{200}&{300}&{500}&{1000}&{2000}\\
\midrule
100
&4.1&3.3&2.5&2.4&4.2&3.0&&
 100.0&100.0&100.0&100.0&100.0&100.0\\
200
&4.8&3.7&3.8&4.5&4.5&4.0&&
 100.0&100.0&100.0&100.0&100.0&100.0\\
500
&4.3&3.9&3.3&3.2&3.5&3.6&&
 100.0&100.0&100.0&100.0&100.0&100.0\\
\midrule
&\multicolumn{13}{c}{Student's $t$ case}\\
\midrule
$N\backslash T$
&{100}&{200}&{300}&{500}&{1000}&{2000}&&
 {100}&{200}&{300}&{500}&{1000}&{2000}\\
\midrule
100
&2.6&3.3&3.4&2.2&2.9&3.2&&
 100.0&100.0&100.0&100.0&100.0&100.0\\
200
&3.5&3.2&4.0&3.8&3.7&2.1&&
 100.0&100.0&100.0&100.0&100.0&100.0\\
500
&5.8&4.0&3.0&4.0&2.7&3.3&&
 100.0&100.0&100.0&100.0&100.0&100.0\\
\midrule
&\multicolumn{13}{c}{GARCH case}\\
\midrule
$N\backslash T$
&{100}&{200}&{300}&{500}&{1000}&{2000}&&
 {100}&{200}&{300}&{500}&{1000}&{2000}\\
\midrule
100
&5.5&4.5&3.2&4.3&4.0&2.5&&
 100.0&100.0&100.0&100.0&100.0&100.0\\
200
&8.1&3.7&4.2&1.9&2.7&2.6&&
 100.0&100.0&100.0&100.0&100.0&100.0\\
500
&9.1&3.8&4.0&4.5&2.9&3.4&&
 100.0&100.0&100.0&100.0&100.0&100.0\\
\bottomrule
\end{tabular}
 }
\end{center}
\par
{\small \justifying
\textbf{Note: }{The nominal size is 5\% and powers are assessed at 5\% level
of significance; frequencies are computed across $M=1000$ Monte Carlo
samples and we set $\nu=5$ when computing $\psi_{i,NT}^{FM}$. Results are in percentage points.} }
\end{table}

\begin{table}[!t]
\caption{Empirical rejection frequencies for a derandomized procedure based
on Theorem \protect\ref{ff-obs}.}
\label{tab:derand_FM}
\begin{center}
{\scriptsize \centering
\begin{tabular}{ll|
                S S S S S S
                c
                S S S S S S}
\toprule
&&\multicolumn{13}{c}{Gaussian case}\\
&&\multicolumn{6}{c}{$\alpha_i=0$ for all $i$}&&
  \multicolumn{6}{c}{$\alpha_i\sim N(0,1)$ for 1\% of units}\\
\midrule
$N$&$\text{C.V. }\backslash T$
&{100}&{200}&{300}&{500}&{1000}&{2000}&&
 {100}&{200}&{300}&{500}&{1000}&{2000}\\
\midrule
100&LIL
&0.1&0.0&0.0&0.0&0.0&0.0&&
 100.0&100.0&100.0&100.0&100.0&100.0\\
&$f(B)=B^{-1/4}$
&0.0&0.0&0.0&0.0&0.0&0.0&&
 100.0&100.0&100.0&100.0&100.0&100.0\\
200&LIL
&0.2&0.0&0.0&0.0&0.0&0.0&&
 100.0&100.0&100.0&100.0&100.0&100.0\\
&$f(B)=B^{-1/4}$
&0.1&0.0&0.0&0.0&0.0&0.0&&
 100.0&100.0&100.0&100.0&100.0&100.0\\
500&LIL
&27.7&8.6&5.3&1.7&0.0&0.0&&
 100.0&100.0&100.0&100.0&100.0&100.0\\
&$f(B)=B^{-1/4}$
&0.0&0.0&0.0&0.0&0.0&0.0&&
 100.0&100.0&100.0&100.0&100.0&100.0\\
\midrule
&&\multicolumn{13}{c}{Student's $t$ case}\\
\midrule
$N$&$\text{C.V. }\backslash T$
&{100}&{200}&{300}&{500}&{1000}&{2000}&&
 {100}&{200}&{300}&{500}&{1000}&{2000}\\
\midrule
100&LIL
&0.9&0.0&0.0&0.0&0.0&0.0&&
 100.0&100.0&100.0&100.0&100.0&100.0\\
&$f(B)=B^{-1/4}$
&0.0&0.0&0.0&0.0&0.0&0.0&&
 100.0&100.0&100.0&100.0&100.0&100.0\\
200&LIL
&0.5&0.0&0.0&0.0&0.0&0.0&&
 100.0&100.0&100.0&100.0&100.0&100.0\\
&$f(B)=B^{-1/4}$
&0.1&0.0&0.0&0.0&0.0&0.0&&
 100.0&100.0&100.0&100.0&100.0&100.0\\
500&LIL
&30.8&10.4&4.9&1.9&0.0&0.0&&
 100.0&100.0&100.0&100.0&100.0&100.0\\
&$f(B)=B^{-1/4}$
&0.0&0.0&0.0&0.0&0.0&0.0&&
 100.0&100.0&100.0&100.0&100.0&100.0\\
\midrule
&&\multicolumn{13}{c}{GARCH noise}\\
\midrule
$N$&$\text{C.V. }\backslash T$
&{100}&{200}&{300}&{500}&{1000}&{2000}&&
 {100}&{200}&{300}&{500}&{1000}&{2000}\\
\midrule
100&LIL
&3.9&0.5&0.1&0.0&0.0&0.0&&
 100.0&100.0&100.0&100.0&100.0&100.0\\
&$f(B)=B^{-1/4}$
&0.3&0.2&0.0&0.0&0.0&0.0&&
 100.0&100.0&100.0&100.0&100.0&100.0\\
200&LIL
&6.7&0.9&0.0&0.1&0.0&0.0&&
 100.0&100.0&100.0&100.0&100.0&100.0\\
&$f(B)=B^{-1/4}$
&1.0&0.4&0.0&0.0&0.0&0.0&&
 100.0&100.0&100.0&100.0&100.0&100.0\\
500&LIL
&45.3&17.2&8.3&3.4&0.0&0.0&&
 100.0&100.0&100.0&100.0&100.0&100.0\\
&$f(B)=B^{-1/4}$
&2.6&0.3&0.0&0.1&0.0&0.0&&
 100.0&100.0&100.0&100.0&100.0&100.0\\
\bottomrule
\end{tabular}
 }
\end{center}
\par
{\scriptsize 
\textbf{Note: }{Results using either LIL-based critical values (LIL) or
critical values based on $f(B)=B^{-1/4}$. The derandomized statistic is
based on nominal level $\tau=5\%$. We set $B=\left(\mathrm{\log }%
\,N\right)^{2}$ and $\nu=5$. Results are in percentage points.} }
\end{table}

\newpage

\subsection{Latent factors\label{MC-latent}}

We now investigate the finite sample properties of the test in Theorem \ref%
{ff-unobs} when considering the DGP of Section \ref{MC-FF} with unobserved
factors. Again, size and power are studied for nominal level $\tau = 5\%$.
Results for empirical rejections frequencies under the null (left panels)
and under the alternative (right panels) are reported in Table \ref%
{tab:size_PC}. Again, the testing procedure exhibits satisfactory size
control and excellent power properties. The test is a bit oversized in the
GARCH case, particularly when $T=100$. Given the results on the other
estimators (standard OLS and Fama-MacBeth), this over-rejection seem to be
due to problems in PCA-based estimation of factors and loadings under this
particular GARCH DGP. Similar conclusions hold when we look at results on
the derandomized procedure in Table \ref{tab:derandomized_PC}, where we see
that the approach works fine up to some over-rejection of the null in the
GARCH case.

\begin{table}[!t]
\caption{Empirical rejection frequencies for the test in Theorem \protect\ref%
{ff-unobs}.}
\label{tab:size_PC}
\begin{center}
{\scriptsize \centering
\begin{tabular}{l|
                S S S S S S
                c
                S S S S S S}
\toprule
&\multicolumn{13}{c}{Gaussian case}\\
&\multicolumn{6}{c}{$\alpha_i = 0$ for all $i$}&&
 \multicolumn{6}{c}{$\alpha_i \sim N(0,1)$ for 5\% of units}\\
\midrule
$N\backslash T$
&{100}&{200}&{300}&{500}&{1000}&{2000}&&
 {100}&{200}&{300}&{500}&{1000}&{2000}\\
\midrule
100
&4.7&3.3&2.5&2.4&4.2&3.0&&
 100.0&100.0&100.0&100.0&100.0&100.0\\
200
&5.2&3.7&3.8&4.5&4.5&3.8&&
 100.0&100.0&100.0&100.0&100.0&100.0\\
500
&5.0&4.0&3.3&3.2&3.5&3.6&&
 100.0&100.0&100.0&100.0&100.0&100.0\\
\midrule
&\multicolumn{13}{c}{Student's $t$ case}\\
\midrule
$N\backslash T$
&{100}&{200}&{300}&{500}&{1000}&{2000}&&
 {100}&{200}&{300}&{500}&{1000}&{2000}\\
\midrule
100
&3.3&3.2&3.3&2.2&2.9&3.2&&
 100.0&100.0&100.0&100.0&100.0&100.0\\
200
&4.2&3.3&4.1&3.8&3.7&2.1&&
 100.0&100.0&100.0&100.0&100.0&100.0\\
500
&6.5&4.0&3.0&4.1&2.7&3.3&&
 100.0&100.0&100.0&100.0&100.0&100.0\\
\midrule
&\multicolumn{13}{c}{GARCH case}\\
\midrule
$N\backslash T$
&{100}&{200}&{300}&{500}&{1000}&{2000}&&
 {100}&{200}&{300}&{500}&{1000}&{2000}\\
\midrule
100
&8.0&4.7&3.0&4.3&4.2&3.0&&
 100.0&100.0&100.0&100.0&100.0&100.0\\
200
&13.1&5.1&4.3&1.9&4.5&3.8&&
 100.0&100.0&100.0&100.0&100.0&100.0\\
500
&19.3&6.4&4.9&4.6&3.5&3.6&&
 100.0&100.0&100.0&100.0&100.0&100.0\\
\bottomrule
\end{tabular}
 }
\end{center}
\par
{\small \justifying
\textbf{Note: }{The nominal size is 5\% and powers are assessed at 5\% level
of significance; frequencies are computed across $M=1000$ Monte Carlo
samples and we set $\nu=5$ when computing $\psi_{i,NT}^{PC}$.  Results are in percentage points.} }
\end{table}

\begin{table}[!t]
\caption{Empirical rejection frequencies for a derandomized procedure based
on Theorem \protect\ref{ff-unobs}.}
\label{tab:derandomized_PC}
\begin{center}
{\scriptsize \centering
\begin{tabular}{ll|
                S S S S S S
                c
                S S S S S S}
\toprule
&&\multicolumn{13}{c}{Gaussian case} \\
&&\multicolumn{6}{c}{$\phi_{\nu}=0$; $\alpha_i=0$ for all $i$}&&
  \multicolumn{6}{c}{$\phi_{\nu}=0$; $\alpha_i\sim N(0,1)$ for 1\% of units} \\
\midrule
$N$&$\text{C.V. }\backslash T$
&{100}&{200}&{300}&{500}&{1000}&{2000}&&
 {100}&{200}&{300}&{500}&{1000}&{2000}\\
\midrule
100&LIL
&0.5&0.0&0.0&0.0&0.0&0.0&&
 100.0&100.0&100.0&100.0&100.0&100.0\\
&$f(B)=B^{-1/4}$
&0.0&0.0&0.0&0.0&0.0&0.0&&
 100.0&100.0&100.0&100.0&100.0&100.0\\
200&LIL
&0.7&0.0&0.0&0.0&0.0&0.0&&
 100.0&100.0&100.0&100.0&100.0&100.0\\
&$f(B)=B^{-1/4}$
&0.0&0.0&0.0&0.0&0.0&0.0&&
 100.0&100.0&100.0&100.0&100.0&100.0\\
500&LIL
&39.9&13.4&8.7&3.2&0.0&0.0&&
 100.0&100.0&100.0&100.0&100.0&100.0\\
&$f(B)=B^{-1/4}$
&0.0&0.0&0.0&0.0&0.0&0.0&&
 100.0&100.0&100.0&100.0&100.0&100.0\\
\midrule
&&\multicolumn{13}{c}{Student's $t$ case} \\
\midrule
$N$&$\text{C.V. }\backslash T$
&{100}&{200}&{300}&{500}&{1000}&{2000}&&
 {100}&{200}&{300}&{500}&{1000}&{2000}\\
\midrule
100&LIL
&0.7&0.0&0.0&0.0&0.0&0.0&&
 100.0&100.0&100.0&100.0&100.0&100.0\\
&$f(B)=B^{-1/4}$
&0.0&0.0&0.0&0.0&0.0&0.0&&
 100.0&100.0&100.0&100.0&100.0&100.0\\
200&LIL
&1.5&0.0&0.1&0.0&0.0&0.0&&
 100.0&100.0&100.0&100.0&100.0&100.0\\
&$f(B)=B^{-1/4}$
&0.2&0.0&0.1&0.0&0.0&0.0&&
 100.0&100.0&100.0&100.0&100.0&100.0\\
500&LIL
&37.8&13.8&8.8&3.6&0.0&0.0&&
 100.0&100.0&100.0&100.0&100.0&100.0\\
&$f(B)=B^{-1/4}$
&0.3&0.0&0.0&0.0&0.0&0.0&&
 100.0&100.0&100.0&100.0&100.0&100.0\\
\midrule
&&\multicolumn{13}{c}{GARCH case} \\
\midrule
$N$&$\text{C.V. }\backslash T$
&{100}&{200}&{300}&{500}&{1000}&{2000}&&
 {100}&{200}&{300}&{500}&{1000}&{2000}\\
\midrule
100&LIL
&9.6&0.5&0.2&0.0&0.0&0.0&&
 100.0&100.0&100.0&100.0&100.0&100.0\\
&$f(B)=B^{-1/4}$
&2.5&0.0&0.0&0.0&0.0&0.0&&
 100.0&100.0&100.0&100.0&100.0&100.0\\
200&LIL
&17.4&3.9&0.4&0.0&0.0&0.0&&
 100.0&100.0&100.0&100.0&100.0&100.0\\
&$f(B)=B^{-1/4}$
&4.5&0.8&0.0&0.0&0.0&0.0&&
 100.0&100.0&100.0&100.0&100.0&100.0\\
500&LIL
&72.7&34.2&17.4&7.2&0.0&0.0&&
 100.0&100.0&100.0&100.0&100.0&100.0\\
&$f(B)=B^{-1/4}$
&12.0&1.7&0.2&0.1&0.0&0.0&&
 100.0&100.0&100.0&100.0&100.0&100.0\\
\bottomrule
\end{tabular}
 }
\end{center}
\par
{\scriptsize 
\textbf{Note: }{Results using either LIL-based critical values (LIL) or
critical values based on $f(B)=B^{-1/4}$. The derandomized statistic is
based on nominal level $\tau=5\%$. We set $B=\left(\mathrm{\log }%
\,N\right)^{2} $ and $\nu=5$. Results are in percentage points.} }
\end{table}

\newpage
\subsection{Alternative critical values\label{app:MC_newCV}}

The results in Section \ref{simulations} suggest that a test based on $c_\tau$ becomes slightly  undersized as $T$ grows large (for any given $N$). In this section, we consider an alternative, fixed $N$ family of critical values 
\begin{equation}
c_{\tau}^{(1)}  = \Phi^{-1}\left((1-\tau)^{1/N}\right). \label{crv_fixed_N}
\end{equation}
These critical values are theoretically supported by the fact that the Gumbel cumulative distribution function  appears as the limit of the term $\Phi^{N}\left(a_{n}x +b_{n}\right)$ in the proof of Theorem \ref{asy-max}. Hence, these Gaussian quantiles area a natural finite sample counterpart to the Gumbel ones.

Results for the DGPs of the main body, even when $\phi_g = 0$, are in Table \ref{tab:size_newCV}. Empirical rejection frequencies are always larger than those based on asymptotic critical values. This is beneficial when $T$ gets larger, where asymptotic critical values returned empirical sizes lower than the nominal one (i.e.\ 5\%). Notably, large $T$ results (say, $T$ at most 300), are now on par with, if not better than, those for the GRS approach. Hence, we suggest using our test with asymptotic critical values when $T$ is small. When $T$ is larger, the user should employ either our test with adjusted critical values or, if the sample size permits it,  the GRS one.

\begin{table}[!t]
\caption{Empirical rejection frequencies for the test in Theorem \ref{asy-max}}
\label{tab:size_newCV}
\begin{center}
{\scriptsize \centering
\begin{tabular}{l|
                S S S S S S
                c
                S S S S S S}
\toprule
&\multicolumn{13}{c}{Gaussian Case, $\phi_g = 0$}\\
&\multicolumn{6}{c}{$\alpha_i = 0$ for all $i$}&&
 \multicolumn{6}{c}{$\alpha_i \sim N(0,1)$ for 5\% of units}\\
\midrule
$N\backslash T$
&{100}&{200}&{300}&{500}&{1000}&{2000}&&
 {100}&{200}&{300}&{500}&{1000}&{2000}\\
\midrule
100&7.1  &5.4 & 5.5  &5.5  &5.3  &5.2&&95.2   &97.0 &98.1& 98.9 &99.4 &99.8\\
200 &6.3&  5.4  &5.4  &5.1  &5.1&  5.1&&99.8 &99.9& 99.9& 99.9& 100.0 &100.0\\
500&7.1  &5.2&  5.4 & 5.2  &5.0  &4.7&&100.0&100.0&100.0&100.0&100.0&100.0\\
\midrule
&\multicolumn{13}{c}{Gaussian Case, $\phi_g = 0.4$}\\
&\multicolumn{6}{c}{$\alpha_i = 0$ for all $i$}&&
 \multicolumn{6}{c}{$\alpha_i \sim N(0,1)$ for 5\% of units}\\
\midrule
$N\backslash T$
&{100}&{200}&{300}&{500}&{1000}&{2000}&&
 {100}&{200}&{300}&{500}&{1000}&{2000}\\
\midrule
100&8.9  &6.1  &5.7 & 5.6&  5.3  &5.3&&95.1& 96.1& 97.9& 98.7 &99.3 &99.6\\
200&8.4 & 6.0  &5.5 & 5.3 & 5.2 & 5.3&& 99.8& 99.9& 99.9& 99.9 &99.9& 100.0\\
500& 9.7 & 6.4 & 5.7  &5.4 & 5.0  &4.8&&100.0&100.0&100.0&100.0&100.0&100.0\\
\midrule
&\multicolumn{13}{c}{Student's $t$ case, $\phi_g = 0$}\\
&\multicolumn{6}{c}{$\alpha_i = 0$ for all $i$}&&
 \multicolumn{6}{c}{$\alpha_i \sim N(0,1)$ for 5\% of units}\\
\midrule
$N\backslash T$
&{100}&{200}&{300}&{500}&{1000}&{2000}&&
 {100}&{200}&{300}&{500}&{1000}&{2000}\\
\midrule
100&7.0&  5.6&  5.3&  5.3&  5.3&  5.2&&90.4& 94.9 &96.7 &97.6& 98.8 &99.6\\
200& 6.4&  6.1&  5.2 & 4.9&  5.1  &4.9&&98.9& 99.7 &99.9 &99.9 &99.9 &100.0\\
500&  6.1 & 5.3 & 5.1&  5.0  &4.9  &4.8&& 100.0&100.0&100.0&100.0&100.0&100.0\\
\midrule
&\multicolumn{13}{c}{Student's $t$ case, $\phi_g = 0.4$}\\
&\multicolumn{6}{c}{$\alpha_i = 0$ for all $i$}&&
 \multicolumn{6}{c}{$\alpha_i \sim N(0,1)$ for 5\% of units}\\
\midrule
$N\backslash T$
&{100}&{200}&{300}&{500}&{1000}&{2000}&&
 {100}&{200}&{300}&{500}&{1000}&{2000}\\
\midrule
100&7.9 & 5.9  &5.6 & 5.4  &5.5  &5.3&&89.4 &93.6   &96.0 &97.8 &98.7 &99.5\\
200&8.2 & 6.2 & 5.9 & 5.1  &5.3 & 5.0&&98.9& 99.3& 99.9 &99.9 &99.9 &100.0\\
500& 8.3 & 5.1  &5.3 & 5.2 & 5.0 & 5.0&& 100.0&100.0&100.0&100.0&100.0&100.0\\
\midrule
&\multicolumn{13}{c}{GARCH case, $\phi_g = 0$}\\
&\multicolumn{6}{c}{$\alpha_i = 0$ for all $i$}&&
 \multicolumn{6}{c}{$\alpha_i \sim N(0,1)$ for 5\% of units}\\
\midrule
$N\backslash T$
&{100}&{200}&{300}&{500}&{1000}&{2000}&&
 {100}&{200}&{300}&{500}&{1000}&{2000}\\
\midrule
100&7.2 &5.7 &5.6& 5.3&5.2& 5.3&&97.9 &98.5 &98.7& 99.3 &99.7 &99.8\\
 200&7.9&5.7&5.2&5.1&5.1&5.0&&100.0&100.0&100.0&100.0&100.0&100.0\\
500&8.9&5.7&5.4&5.1&4.9&4.8&&100.0&100.0&100.0&100.0&100.0&100.0\\
\midrule
&\multicolumn{13}{c}{GARCH case, $\phi_g = 0.4$}\\
&\multicolumn{6}{c}{$\alpha_i = 0$ for all $i$}&&
 \multicolumn{6}{c}{$\alpha_i \sim N(0,1)$ for 5\% of units}\\
\midrule
$N\backslash T$
&{100}&{200}&{300}&{500}&{1000}&{2000}&&
 {100}&{200}&{300}&{500}&{1000}&{2000}\\
\midrule
100&10.0&6.7&5.9&5.3&5.3&5.4&&97.4&98.1&98.5&99.2&99.6&99.7\\
 200&10.8&6.9&5.8&5.2&5.2&5.1&&100.0&100.0&100.0&100.0&100.0&100.0\\
500&12.7&6.9&6.1&5.7&5.2&4.9&&100.0&100.0&100.0&100.0&100.0&100.0\\
\bottomrule
\end{tabular}
 }
\end{center}
\par
{\small \justifying
\textbf{Note: }{The nominal size is 5\% and powers are assessed at 5\% level
of significance; frequencies are computed across $M=1000$ Monte Carlo
samples and we set $\nu=5$ when computing $\psi_{i,NT}$.  Results are in percentage points.} }
\end{table}

\clearpage\newpage

\setcounter{equation}{0} \setcounter{lemma}{0} \setcounter{theorem}{0} %
\renewcommand{\theassumption}{B.\arabic{assumption}} 
\renewcommand{\thetheorem}{B.\arabic{theorem}} \renewcommand{\thelemma}{B.%
\arabic{lemma}} \renewcommand{\theproposition}{B.\arabic{proposition}} %
\renewcommand{\thecorollary}{A.\arabic{corollary}} \renewcommand{%
\theequation}{B.\arabic{equation}}

\section{Extensions\label{extend}}

As mentioned in the introduction to the main paper, our main contribution is
a methodology to test for no pricing errors, and we have primarily 
focused on (\ref{eq:fac_model}) and assumed factors are observable and
tradable only for simplicity. 

In this extension, we show that our methods can be readily extended to more
complex settings. In essence, we show that - as long as a consistent
estimator of the $\alpha _{i}$s is available - our methodology can be
applied even to the case of latent or non-tradable, obtaining the same results as in the case of observable and tradable ones. As illustrative examples, we consider  Fama-MacBeth estimation for the case of observable but non-tradable factors (Section \ref{nontradable}), and principal component analysis (PCA) estimation for latent factors (Section \ref{latent}); in both
cases, we directly use the estimation techniques proposed by %
\citet{giglio2021thousands} for the $\alpha _{i}$s. We only report the main
results on the \textquotedblleft one shot\textquotedblright\ tests; the extension to derandomization can be done by
following \textit{verbatim} Section \ref{derandom}. We note, however, that
strong factors are required in these cases. \\
We also consider the extension of the derandomized decision rule in the case of multiple testing (Section \ref{mht}), and of Assumption \ref{asymptotics} (Section \ref{complements}).

\smallskip

Henceforth, we define $\mathbb{M}_{1_{N}}=\mathbb{I}_{N}-N^{-1}\mathbf{\iota 
}_{N}\mathbf{\iota }_{N}^{\prime }$, where $\mathbf{\iota }_{N}$ is an $%
N\times 1$ vector of ones.

\subsection{Non-tradable factors and Fama-MacBeth estimation\label%
{nontradable}}

Consider first the case of a linear factor pricing model based on $K$
observable, \textit{non-tradable} factors 
\begin{equation}
y_{i,t}=\alpha _{i}+\beta _{i}^{\prime }\lambda +\beta _{i}^{\prime
}v_{t}+u_{i,t},  \label{eq:DGP_thousands}
\end{equation}%
where $v_{t}=f_{t}-\mathbb{E}\left( f_{t}\right) $ and $\lambda \in \mathbb{R%
}^{K}$ is the vector of risk premia for the $K$ factors $f_{t}$. 

Estimation of $\alpha _{i}$ is based on Algorithm 3 in %
\citet{giglio2021thousands}.

\smallskip

\begin{description}
\item[Step 1] Estimate $\beta _{i}$ by OLS in the time-series regressions $%
y_{i,t}=\pi _{i}+\beta _{i}'f_{t}+u_{i,t}$, 
\begin{equation}
\widehat{\beta }_{i}=\left[ \sum_{t=1}^{T}\left( f_{t}-\overline{f}\right)
\left( f_{t}-\overline{f}\right) ^{\prime }\right] ^{-1}\left[
\sum_{t=1}^{T}\left( f_{t}-\overline{f}\right) \left( y_{i,t}-\overline{y}%
_{i}\right) \right] ,  \label{beta_i_ff}
\end{equation}%
with $\overline{y}_{i}=T^{-1}\sum_{t=1}^{T}y_{i,t}$, and define $\widehat{%
\mathbf{\beta }}=\left( \widehat{\beta }_{1},...,\widehat{\beta }_{N}\right)
^{\prime }$.

\item[Step 2] Define $\widehat{\lambda }$ as the OLS estimates of a
regression of $\overline{\mathbf{y}}=\left( \overline{y}_{1},...,\overline{y}%
_{N}\right) ^{\prime }$ onto a vector of ones and $\widehat{\mathbf{\beta }}$
\begin{equation}
\widehat{\lambda }=\left( \widehat{\mathbf{\beta }}^{\prime }\mathbb{M}%
_{1_{N}}\widehat{\mathbf{\beta }}\right) ^{-1}\left( \widehat{\mathbf{\beta }%
}^{\prime }\mathbb{M}_{1_{N}}\bar{\mathbf{y}}\right) .  \label{lambda_ff}
\end{equation}

\item[Step 3] The estimator of $\alpha _{i}$ is given by 
\begin{equation}
\widehat{\alpha }_{i}^{FM}=\overline{y}_{i}-\widehat{\beta }_{i}^{\prime }%
\widehat{\lambda }.  \label{eq:alpha_FM}
\end{equation}
\end{description}

\smallskip

Based on $\widehat{\alpha }_{i}^{FM}$ defined in (\ref{eq:alpha_FM}), we can
construct the same test statistic as before, based on 
\begin{equation*}
\psi _{i,NT}^{FM}=\left\vert \frac{T^{1/\nu }\widehat{\alpha }_{i}^{FM}}{%
\widehat{s}_{NT}^{FM}}\right\vert ^{\nu /2},
\end{equation*}%
where the rescaling sequence $\widehat{s}_{NT}^{FM}$ is constructed as in (%
\ref{s_NT}), using the residuals $\widehat{u}_{i,t}^{FM}=y_{i,t}-\left( 
\widehat{\alpha }_{i}^{FM}+\widehat{\beta }_{i}^{\prime }f_{t}\right) $.
Defining $z_{i,NT}^{FM}=\psi _{i,NT}^{FM}+\omega _{i}$,\footnote{%
As before, $\omega _{i}\overset{i.i.d.}{\sim }\mathcal{N}\left( 0,1\right) $
generated independently of the sample $\left\{ \left( u_{i,t},f_{t}^{\prime
}\right) ^{\prime },1\leq i\leq N,1\leq t\leq T\right\} $.} our test can be
based on 
\begin{equation}
Z_{N,T}^{FM}=\max_{1\leq i\leq N}z_{i,NT}^{FM}.  \label{z_FM}
\end{equation}

\bigskip 

In order to derive our asymptotic theory, we consider the following
assumptions. Let%
\begin{equation}
\mathbf{S}_{\beta }=\frac{1}{N}\sum_{i=1}^{N}\left( \beta _{i}-\overline{%
\beta }\right) \left( \beta _{i}-\overline{\beta }\right) ^{\prime },
\label{s-beta}
\end{equation}%
where we define 
\begin{equation}
\overline{\beta }=\frac{1}{N}\sum_{i=1}^{N}\beta _{i}.  \label{beta-bar}
\end{equation}

\begin{assumption}
\label{cov_beta} It holds that: \textit{(i)} $\lambda $ and $\beta _{i}$ are
fixed with $\left\Vert \lambda \right\Vert <\infty $ and $\max_{1\leq i\leq
N}\left\Vert \beta _{i}\right\Vert <\infty $; and \textit{(ii)} $\mathbf{S}%
_{\beta }$ is positive definite for all values of $N$.
\end{assumption}

\begin{assumption}
\label{weak_CS_dep} It holds that: \textit{(i)} $\left\{ u_{i,t},1\leq t\leq
T\right\} $ and $\left\{ f_{t},1\leq t\leq T\right\} $ are two mutually
independent groups; and \textit{(ii)} $\sum_{i=1}^{N}\sum_{j=1}^{N}%
\sum_{s=1}^{T}\sum_{t=1}^{T}\left\vert \mathbb{E}\left(
u_{i,t}u_{j,s}\right) \right\vert \leq c_{0}NT$.
\end{assumption}

\bigskip 

It holds that

\begin{theorem}
\label{ff-obs}We assume that the assumptions of Theorem \ref{asy-max} are
satisfied, and that Assumptions \ref{cov_beta} and \ref{weak_CS_dep} also hold. Then, the same
result as in Theorem \ref{asy-max} holds.
\end{theorem}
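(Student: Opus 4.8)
The plan is to reduce Theorem \ref{ff-obs} to Theorem \ref{asy-max} by observing that the extreme-value and randomization arguments behind the latter are entirely agnostic to how the $\alpha_i$s are estimated: they use the estimated alphas only through three facts, namely that (a) under $\mathbb{H}_0$ one has $\max_{1\le i\le N}\psi_{i,NT}^{FM}\overset{a.s.}{\rightarrow}0$; (b) under $\mathbb{H}_A$ at least one $\psi_{i_0,NT}^{FM}\overset{a.s.}{\rightarrow}\infty$; and (c) the rescaling sequence $\widehat{s}_{NT}^{FM}$ is bounded away from $0$ and $\infty$ almost surely. Once (a)--(c) hold, adding the i.i.d.\ $\mathcal{N}(0,1)$ shocks $\omega_i$ and passing to the maximum $Z_{N,T}^{FM}$ in (\ref{z_FM}) yields the Gumbel limit (\ref{th-null}) and the divergence (\ref{th-alt}) verbatim, conditionally on the sample, for almost all realizations. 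Item (c) follows exactly as in the tradable case, since $\widehat{u}_{i,t}^{FM}$ differs from $\widehat{u}_{i,t}$ only through consistent first-stage quantities. Thus the whole task is to establish the rate conditions (a) and (b) for $\widehat{\alpha}_i^{FM}$.

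The starting point is the algebraic decomposition of the Fama--MacBeth error. Writing $\bar v=T^{-1}\sum_{t=1}^T v_t$ and using $\bar y_i=\alpha_i+\beta_i'\lambda+\beta_i'\bar v+\bar u_i$ together with (\ref{eq:alpha_FM}), one obtains
\begin{equation*}
\widehat{\alpha}_i^{FM}-\alpha_i=\bar u_i+\beta_i'\bar v-(\widehat{\beta}_i-\beta_i)'\lambda-\widehat{\beta}_i'(\widehat{\lambda}-\lambda).
\end{equation*}
The first three summands are of exactly the type already controlled in the proof of Theorem \ref{asy-max}: $\bar u_i$ and $\bar v$ are centred time averages of $\mathcal{L}_\nu$-decomposable Bernoulli shifts, hence $O_{a.s.}(T^{-1/2})$, while $\widehat{\beta}_i-\beta_i$ is a first-stage OLS error, also $O_{a.s.}(T^{-1/2})$. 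The corresponding \emph{uniform}-in-$i$ maximal bounds that guarantee $\max_{1\le i\le N}\bigl|T^{1/\nu}(\bar u_i+\beta_i'\bar v-(\widehat{\beta}_i-\beta_i)'\lambda)\bigr|\overset{a.s.}{\rightarrow}0$ follow from the $\nu$-th moment conditions and the rate restriction of Assumption \ref{asymptotics}, by the same union-bound/maximal-inequality argument as in the tradable case (recall $1/\nu<1/2$). The genuinely new piece is $\widehat{\beta}_i'(\widehat{\lambda}-\lambda)$, arising from the second-stage estimation of the risk premia.

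To control it, normalise $\widehat{Q}=N^{-1}\widehat{\mathbf{\beta}}'\mathbb{M}_{1_N}\widehat{\mathbf{\beta}}$, which converges to a positive definite limit under Assumption \ref{cov_beta}, and write $\widehat{\lambda}-\lambda=\widehat{Q}^{-1}N^{-1}\widehat{\mathbf{\beta}}'\mathbb{M}_{1_N}(\bar{\mathbf{y}}-\widehat{\mathbf{\beta}}\lambda)$. Under $\mathbb{H}_0$ the residual vector satisfies $\bar{\mathbf{y}}-\widehat{\mathbf{\beta}}\lambda=-(\widehat{\mathbf{\beta}}-\mathbf{\beta})\lambda+\mathbf{\beta}\bar v+\bar{\mathbf{u}}$, so the numerator splits into a $\bar v$-term of order $O_{a.s.}(T^{-1/2})$; a cross-sectional average $N^{-1}\widehat{\mathbf{\beta}}'\mathbb{M}_{1_N}\bar{\mathbf{u}}$, whose control is exactly what the weak cross-sectional dependence of Assumption \ref{weak_CS_dep} delivers; and the errors-in-variables term $N^{-1}(\widehat{\mathbf{\beta}}-\mathbf{\beta})'\mathbb{M}_{1_N}(\widehat{\mathbf{\beta}}-\mathbf{\beta})\lambda$, which is $O_{a.s.}(T^{-1})$ since each $\widehat{\beta}_i-\beta_i$ is $O_{a.s.}(T^{-1/2})$. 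Hence $\widehat{\lambda}-\lambda=O_{a.s.}(T^{-1/2})$; because $\widehat{\lambda}$ is common across $i$ and $\max_{1\le i\le N}|\widehat{\beta}_i|=O_{a.s.}(1)$, the contribution $\widehat{\beta}_i'(\widehat{\lambda}-\lambda)$ is uniformly $O_{a.s.}(T^{-1/2})$, so $T^{1/\nu}$ times it vanishes a.s. Combining the four pieces gives $\max_{1\le i\le N}\psi_{i,NT}^{FM}\overset{a.s.}{\rightarrow}0$, i.e.\ (a).

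For the alternative, consistency of $\widehat{\lambda}$ persists provided the nonzero alphas do not contaminate the cross-sectional regression, i.e.\ $N^{-1}\widehat{\mathbf{\beta}}'\mathbb{M}_{1_N}\mathbf{\alpha}$ remains negligible; this is precisely the additional content of Assumptions \ref{cov_beta}--\ref{weak_CS_dep} in this regime. Then, for a mispriced unit $i_0$ with $\alpha_{i_0}\neq 0$ one has $\widehat{\alpha}_{i_0}^{FM}\rightarrow\alpha_{i_0}$, whence $T^{1/\nu}\widehat{\alpha}_{i_0}^{FM}\to\infty$ and $\psi_{i_0,NT}^{FM}\to\infty$, giving $Z_{N,T}^{FM}\ge\psi_{i_0,NT}^{FM}+\omega_{i_0}\to\infty$ and hence (\ref{th-alt}). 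I expect the main obstacle to be the second-stage analysis of $\widehat{\lambda}-\lambda$: reconciling the $O_{a.s.}(T^{-1})$ errors-in-variables bias and the cross-sectional averaging with the stringent uniform requirement $\max_{1\le i\le N}|\widehat{\alpha}_i^{FM}|=o_{a.s.}(T^{-1/\nu})$, and, under $\mathbb{H}_A$, ensuring that the mispricing signal survives the projection onto the estimated betas rather than being absorbed into $\widehat{\lambda}$.
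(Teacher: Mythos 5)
There are two genuine gaps in your proposal, and both concern exactly the places where the paper's proof does real work.

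First, your reduction step (a) misstates what the proof of Theorem \ref{asy-max} needs under the null. That proof does not run on $\max_{1\leq i\leq N}\psi_{i,NT}\rightarrow 0$: the Gumbel limit is obtained by writing $\prod_{i=1}^{N}\Phi\left(a_{N}x+b_{N}-\psi_{i,NT}\right)=\Phi^{N}\left(a_{N}x+b_{N}\right)\exp\left(\sum_{i=1}^{N}\log\left(1-c_{i}\psi_{i,NT}\right)\right)$ with $c_{i}$ bounded only by a constant, so the aggregated correction vanishes only if $\sum_{i=1}^{N}\psi_{i,NT}\rightarrow 0$ a.s. This is precisely what Lemma \ref{psi} delivers in the tradable case and what Lemma \ref{psi-fm} (built on Lemmas \ref{beta-sum}, \ref{lambda-h0} and \ref{normalising_FM}) delivers in the Fama--MacBeth case; indeed the paper's proof of Theorem \ref{ff-obs} under $\mathbb{H}_{0}$ is literally a citation of Lemma \ref{psi-fm}. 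Your condition is strictly weaker: if, say, $\psi_{i,NT}\equiv 1/\log\log N$ for all $i$, the max vanishes but $\left(Z_{N,T}-b_{N}\right)/a_{N}$ is shifted by roughly $\sqrt{2\log N}/\log\log N\rightarrow\infty$, so no Gumbel limit holds. Nor can you recover the sum from the max by a union bound without losing the asymptotic regime: $\mathbb{E}\max_{i}\left\vert\overline{u}_{i}\right\vert^{\nu}\leq\sum_{i}\mathbb{E}\left\vert\overline{u}_{i}\right\vert^{\nu}\leq c_{0}NT^{-\nu/2}$ gives $\max_{i}\left\vert\overline{u}_{i}\right\vert=O\left(N^{1/\nu}T^{-1/2}\right)$, hence $N\max_{i}\psi_{i,NT}^{FM}=O\left(N^{3/2}T^{1/2-\nu/4}\right)$, which requires $N^{3/2}=o\left(T^{\nu/4-1/2}\right)$ --- strictly stronger than Assumption \ref{asymptotics}. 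The paper avoids this loss by bounding the sum directly, via $\mathbb{E}\sum_{i}\left\vert\cdot\right\vert^{\nu/2}\leq c_{0}NT^{-\nu/4}$ together with the Baum--Katz-type Lemma \ref{stout}.

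Second, your power argument assumes away the hardest point. Assumptions \ref{cov_beta} and \ref{weak_CS_dep} place no restriction on $\mathbf{\alpha}$, so they cannot make the contamination term $N^{-1}\widehat{\mathbf{\beta}}^{\prime}\mathbb{M}_{1_{N}}\mathbf{\alpha}$ negligible; under dense alternatives this term is $O\left(1\right)$, $\widehat{\lambda}$ is inconsistent, and your claim $\widehat{\alpha}_{i_{0}}^{FM}\rightarrow\alpha_{i_{0}}$ fails. The paper's route is different: Lemma \ref{lambda-hA} keeps the contamination explicitly, $\widehat{\lambda}-\lambda=N^{-1}\mathbf{S}_{\beta}^{-1}\mathbf{\beta}^{\prime}\mathbb{M}_{1_{N}}\mathbf{\alpha}+o_{a.s.}\left(T^{-1/2}\left(\log T\right)^{1+\epsilon}\right)$, so each $\widehat{\alpha}_{i}^{FM}$ converges to a pseudo-alpha equal to $\alpha_{i}$ plus a bias common across $i$ (which is $O\left(N^{-1}\right)$ only in the sparse case). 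Power under general alternatives is then secured by the identification argument in (\ref{power-ff})--(\ref{power-ff-2}): if every pseudo-alpha were zero, the common bias would force all $\alpha_{i}$ to coincide, whence $\mathbb{M}_{1_{N}}\mathbf{\alpha}=0$ and $\mathbf{\alpha}=0$, contradicting $\mathbb{H}_{A}$ of (\ref{hA}). You flag this absorption problem yourself as the ``main obstacle'' but leave it unresolved; resolving it is exactly the substance of the paper's proof under the alternative, and without it your argument covers only alternatives for which $\widehat{\lambda}$ happens to remain consistent.
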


\subsection{Latent factors\label{latent}}

Consider now a linear factor pricing model based on $K$ latent factors 
\begin{equation}
y_{i,t}=\alpha _{i}+\beta _{i}^{\prime }\lambda +\beta _{i}^{\prime
}v_{t}+u_{i,t},  \label{latent-f}
\end{equation}%
where $v_{t}=f_{t}-\mathbb{E}\left( f_{t}\right) $ is \textit{not observable}%
, $\beta _{i}$ is a $K\times 1$ vector of loadings and, as above, $\lambda $
is the vector of risk premia for the $K$ latent factors $f_{t}$.

Write $\widetilde{\mathbf{y}}_{t}=\mathbf{\beta }\widetilde{v}_{t}+%
\widetilde{\mathbf{u}}_{t}$, where $\widetilde{\mathbf{y}}_{t}=\mathbf{y}%
_{t}-\overline{\mathbf{y}}$ with $\mathbf{y}_{t}=\left(
y_{1,t},...,y_{N,t}\right) ^{\prime }$, $\widetilde{v}_{t}=v_{t}-\left(
T^{-1}\sum_{t=1}^{T}v_{t}\right) $, $\widetilde{\mathbf{u}}_{t}$ is defined
analogously, and $\mathbf{\beta }=\left( \beta _{1},...,\beta _{N}\right)
^{\prime }$. Define also the $N\times N$ sample second moment matrix%
\begin{equation}
\widehat{\mathbf{\Sigma }}_{y}=\frac{1}{NT}\sum_{t=1}^{T}\widetilde{\mathbf{y%
}}_{t}\widetilde{\mathbf{y}}_{t}^{\prime }.  \label{sig-y}
\end{equation}%
The estimation of $\alpha _{i}$ follows Algorithm 4 in %
\citet{giglio2021thousands}.

\smallskip

\begin{description}
\item[Step 1] Estimate $\mathbf{\beta }$ using PCA, with estimator $\widehat{%
\mathbf{\beta }}^{PC}$ given by the eigenvectors corresponding to the first $%
K$ eigenvalues of $\widehat{\mathbf{\Sigma }}_{y}$ under the constraint $%
\left( \widehat{\mathbf{\beta }}^{PC}\right) ^{\prime }\widehat{\mathbf{%
\beta }}^{PC}=N\mathbb{I}_{K}$.\footnote{%
Our discussion implicitly assumes that $K$ is known. Of course, this is not
the case in practice, where $K$ has to be determined by the user. This is
ordinarily done using consistent estimators such as those of \cite{baing02}, 
\cite{ahnhorenstein13}, and \cite{trapani2018randomized}. Note that the
result in Theorem \ref{ff-unobs} holds unchanged when $K$ is estimated using
these consistent estimators.}

\item[Steps 2 and 3] These steps are the same as in the previous section,
with $\widehat{\lambda }^{PC}=\left( \widehat{\mathbf{\beta }}^{PC\prime }%
\mathbb{M}_{1_{N}}\widehat{\mathbf{\beta }}^{PC}\right) ^{-1}\left( \widehat{%
\mathbf{\beta }}^{PC\prime }\mathbb{M}_{1_{N}}\bar{\mathbf{y}}\right) $, and 
\begin{equation}
\widehat{\alpha }_{i}^{PC}=\overline{y}_{i}-\left( \widehat{\beta }%
_{i}^{PC}\right) ^{\prime }\widehat{\lambda }^{PC}.  \label{alpha-unobs}
\end{equation}
\end{description}

\smallskip

Let $C_{N,T}=\min \left\{ N,T\right\} $. Based on $\widehat{\alpha }_{i}^{PC}
$ defined in (\ref{alpha-unobs}), we define 
\begin{equation*}
\psi _{i,NT}^{PC}=\left\vert \frac{C_{N,T}^{1/\nu }\widehat{\alpha }_{i}^{PC}%
}{\widehat{s}_{NT}^{PC}}\right\vert ^{\nu /2},
\end{equation*}%
where $\widehat{s}_{NT}^{PC}$ is constructed as in (\ref{s_NT}), using $%
\widehat{u}_{i,t}^{PC}=y_{i,t}-\left( \widehat{\alpha }_{i}^{PC}+\widehat{%
\beta }_{i}^{PC\prime }\widehat{f}_{t}^{PC}\right) $ and $\widehat{f}%
_{t}^{PC}=N^{-1}\widehat{\mathbf{\beta }}^{PC\prime }\widetilde{\mathbf{y}}%
_{t}$. Letting $z_{i,NT}^{PC}=\psi _{i,NT}^{PC}+\omega _{i},$ with $\omega
_{i}$ defined as above, the test is based on 
\begin{equation}
Z_{N,T}^{PC}=\max_{1\leq i\leq N}z_{i,NT}^{PC}.  \label{z_PC}
\end{equation}%
We consider the following assumptions - which complement and extend the
assumptions in the main paper - for the case of \textit{latent factors}.

\begin{assumption}
\label{loadings} It holds that: \textit{(i)} $\lambda$ and $\mathbf{\beta }$
are nonrandom with $\left\Vert\lambda\right\Vert <\infty$ and $\left\Vert 
\mathbf{\beta }\right\Vert <\infty $; \textit{(ii)} $\mathbf{\beta }^{\prime
}\mathbf{\beta }=N\mathbb{I}_{K}$; and \textit{(iii)} $\mathbf{S}_{\beta }$
is positive definite for all values of $N$.
\end{assumption}

\begin{assumption}
\label{factors} It holds that $\mathbb{E}\left( \widetilde{v}_{t}\widetilde{v%
}_{t}^{\prime }\right) $ is a positive definite matrix.
\end{assumption}

\begin{assumption}
\label{idiosyncratic} Let $\gamma _{s,t}=\sum_{i=1}^{N}\mathbb{E}\left(
u_{i,t}u_{i,s}\right) /N$. It holds that: \textit{(i)} $\sum_{s=1}^{T}\left%
\vert \gamma _{s,t}\right\vert <c_{0}$ for $1\leq t\leq T$; \textit{(ii)} $%
\mathbb{E}\left\vert \sum_{i=1}^{N}\left( u_{i,s}u_{i,t}-\gamma
_{s,t}\right) \right\vert ^{2}<c_{0}N$ for $1\leq s,t\leq T$; \textit{(iii)} 
$\mathbb{E}\left\Vert \sum_{i=1}^{N}\beta _{i}u_{i,t}\right\Vert
^{4}<c_{0}N^{2}$ for $1\leq t\leq T$; \textit{(iv)} $\sum_{i=1}^{N}\left%
\vert \mathbb{E}\left( u_{i,t}u_{j,s}\right) \right\vert \leq c_{0}$ for all 
$1\leq t,s\leq T$ and $1\leq j\leq N$; \textit{(v)} $\sum_{i=1}^{N}%
\sum_{s=1}^{T}\left\vert \mathbb{E}\left( u_{i,t}u_{j,s}\right) \right\vert
\leq c_{0}$ for all $1\leq t\leq T$ and $1\leq j\leq N$; \textit{(vi)} 
\begin{equation*}
\mathbb{E}\left\Vert \sum_{t=1}^{T}w^{\prime }\left( \mathbf{u}_{t}u_{i,t}-%
\mathbb{E}\left( \mathbf{u}_{0}u_{i,0}\right) \right) \right\Vert ^{\nu
/2}\leq c_{0}\left( NT\right) ^{\nu /4},
\end{equation*}%
for any $w$ such that $\left\Vert w\right\Vert =O\left( N^{1/2}\right) $.
\end{assumption}

\begin{assumption}
\label{fact-idios} It holds that $\left\{ v_{t},1\leq t\leq T\right\} $ and $%
\left\{ u_{i,t},1\leq t\leq T\right\} $ are two mutually independent groups,
for $1\leq i\leq N$.
\end{assumption}

\bigskip 

It holds that

\begin{theorem}
\label{ff-unobs}We assume that the assumptions of Theorem \ref{asy-max} are
satisfied, and that Assumptions \ref{loadings}-\ref{fact-idios} also hold. Then, the same result as in
Theorem \ref{asy-max} holds.
\end{theorem}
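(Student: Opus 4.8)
The plan is to reduce Theorem \ref{ff-unobs} to the argument underlying Theorem \ref{asy-max}. As in that proof, the only properties of the individual statistics that matter are: (i) under $\mathbb{H}_0$, a uniform-in-$i$ almost sure bound forcing $\max_{1\le i\le N}\psi_{i,NT}^{PC}=o_{a.s.}(a_N)$, so that the perturbing Gaussians dominate and, conditionally on the sample, $(Z_{N,T}^{PC}-b_N)/a_N$ inherits the classical Gumbel limit of $\max_{1\le i\le N}\omega_i$; (ii) under $\mathbb{H}_A$, divergence $\psi_{i^\ast,NT}^{PC}\overset{a.s.}{\to}\infty$ for at least one mispriced unit $i^\ast$; and (iii) the rescaling $\widehat{s}_{NT}^{PC}$ being bounded away from $0$ and $\infty$ almost surely. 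Steps (i)--(iii) were verified for the OLS-based $\psi_{i,NT}$ using only $\widehat{\alpha}_{i,T}-\alpha_i=O_{a.s.}(T^{-1/2})$ together with the moment and dependence bounds of Assumptions \ref{error}--\ref{regressor}; hence it suffices to re-establish the analogous facts for the PCA--Fama--MacBeth estimator $\widehat{\alpha}_i^{PC}$, this time with the slower rate $C_{N,T}^{-1/2}$, which is precisely why $C_{N,T}^{1/\nu}$, rather than $T^{1/\nu}$, appears in the definition of $\psi_{i,NT}^{PC}$.

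The core task is therefore a uniform expansion of $\widehat{\alpha}_i^{PC}-\alpha_i$. Writing $\widehat{\alpha}_i^{PC}=\overline{y}_i-(\widehat{\beta}_i^{PC})'\widehat{\lambda}^{PC}$ and using $\overline{y}_i=\alpha_i+\beta_i'\lambda+\beta_i'\overline{v}+\overline{u}_i$, I would decompose the error into: (a) the time-average term $\overline{u}_i=O_{a.s.}(T^{-1/2})$, controlled uniformly in $i$ by the same maximal inequalities for decomposable Bernoulli shifts used in Theorem \ref{asy-max}; (b) the loading/premium term $\beta_i'\lambda-(\widehat{\beta}_i^{PC})'\widehat{\lambda}^{PC}$; and (c) the resulting cross terms. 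The central difficulty is that PCA identifies $\mathbf{\beta}$ only up to an invertible rotation $H$, so I would first invoke the standard consistency theory (guaranteed by Assumptions \ref{loadings}--\ref{fact-idios}) to obtain rates of order $C_{N,T}^{-1}$ for $N^{-1}\|\widehat{\mathbf{\beta}}^{PC}-\mathbf{\beta}H\|^2$ and for $\|\widehat{f}_t^{PC}-H^{-1}v_t\|$, and then show that the rotation cancels in the rotation-invariant product $(\widehat{\beta}_i^{PC})'\widehat{\lambda}^{PC}$, exactly as in the non-tradable case of Theorem \ref{ff-obs}. Indeed, the latent model is structurally the observable non-tradable model of Section \ref{nontradable} with the true factors replaced by $\widehat{f}_t^{PC}$; the extra work is to absorb the factor-estimation error into the residual rate without degrading it beyond $C_{N,T}^{-1/2}$.

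The hardest and most delicate step will be (i): upgrading the in-probability PCA rates to the \emph{almost sure}, \emph{uniform-in-$i$} control needed to bound $\max_{1\le i\le N}\psi_{i,NT}^{PC}$. Concretely, under $\mathbb{H}_0$ one has $\psi_{i,NT}^{PC}=C_{N,T}^{(2-\nu)/4}\,|C_{N,T}^{1/2}(\widehat{\alpha}_i^{PC}-\alpha_i)/\widehat{s}_{NT}^{PC}|^{\nu/2}$, and I would bound $\max_i$ of the second factor by combining a moment bound uniform in $i$ (finite moments of order just above $2$, available because $\nu>4$) with a Borel--Cantelli and union-bound argument over the $N$ units, precisely as in Theorem \ref{asy-max}. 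The interplay between the number of units $N$, the moment index $\nu$, and the PCA rate $C_{N,T}^{-1/2}$ is what must be balanced here, and the admissible growth $N=O(T^{\frac12(\frac\nu2-1)-\varepsilon})$ of Assumption \ref{asymptotics}, together with the PCA-specific restrictions in the Supplement that tie $N$ to $T$, is exactly what makes this maximum vanish faster than $a_N$. Finally, the divergence (ii) and the boundedness (iii) are comparatively routine: for (ii) the consistency $\widehat{\alpha}_{i^\ast}^{PC}\overset{a.s.}{\to}\alpha_{i^\ast}\neq0$ makes $C_{N,T}^{1/\nu}|\widehat{\alpha}_{i^\ast}^{PC}|\to\infty$, and for (iii) the PCA residuals $\widehat{u}_{i,t}^{PC}$ differ from $u_{i,t}$ only by negligible estimation terms, so $\widehat{s}_{NT}^{PC}$ converges to the same positive, finite limit as in the tradable case.
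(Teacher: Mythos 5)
Your treatment of the null hypothesis is essentially sound, though it takes a slightly different route from the paper. You reduce the Gumbel limit to the condition $\max_{1\le i\le N}\psi_{i,NT}^{PC}=o_{a.s.}(a_N)$ and then sandwich $Z_{N,T}^{PC}$ between $\max_i\omega_i$ and $\max_i\omega_i+\max_i\psi_{i,NT}^{PC}$; the paper instead verifies the summed condition $\sum_{i=1}^{N}\psi_{i,NT}^{PC}=o_{a.s.}(1)$ (Lemma \ref{psi-pc}, proved via Lemmas \ref{l2norm}--\ref{normalising_PC}) and feeds it into the product expansion of $\prod_i\Phi(a_Nx+b_N-\psi_{i,NT})$ used in the proof of Theorem \ref{asy-max}. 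Neither sufficient condition implies the other in general, but both follow from the polynomial-in-$T$ rates the lemmas actually deliver, and your Markov-plus-union-bound verification would in practice amount to controlling the same sums of $\nu/2$-th moments that the paper controls; so this part is a legitimate, arguably cleaner, variant. Your plan for the PCA rotation (rates for $N^{-1}\Vert\widehat{\boldsymbol\beta}^{PC}-\boldsymbol\beta\mathbf{H}\Vert^{2}$, invertibility of $\mathbf{H}$, cancellation of $\mathbf{H}$ in $(\widehat\beta_i^{PC})'\widehat\lambda^{PC}$) also matches the paper's lemma structure.

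The genuine gap is under the alternative, which you dismiss as ``comparatively routine'' on the grounds that $\widehat{\alpha}_{i^{\ast}}^{PC}\overset{a.s.}{\to}\alpha_{i^{\ast}}\neq 0$. This consistency claim is false in general: under $\mathbb{H}_A$ the cross-sectional regression of $\overline{\mathbf{y}}$ on $\widehat{\boldsymbol\beta}^{PC}$ is misspecified, and the nonzero alphas contaminate the estimated risk premia. Precisely, Lemma \ref{lambda-pc-hA} gives
\begin{equation*}
\widehat{\lambda }^{PC}-\mathbf{H}^{-1}\lambda =\frac{1}{N}\left( \mathbf{H}^{\prime }\mathbf{S}_{\beta }\mathbf{H}\right) ^{-1}\mathbf{H}^{\prime }\boldsymbol{\beta }^{\prime }\mathbb{M}_{1_{N}}\boldsymbol{\alpha }+o_{a.s.}\left( \cdot \right) ,
\end{equation*}
so that, by the expansion (\ref{alpha-PC}), $\widehat{\alpha}_i^{PC}$ converges not to $\alpha_i$ but to $\alpha_i$ minus a cross-sectional bias term involving all the alphas. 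Under a sparse alternative this bias is $O(m/N)$ and your argument can be repaired (but only after proving the lemma on $\widehat\lambda^{PC}$ under $\mathbb{H}_A$, which is not ``routine''); under dense alternatives the bias is $O(1)$ and can cancel the signal for individual units. This is exactly why the paper's proof of Theorem \ref{ff-obs} --- which Theorem \ref{ff-unobs} inherits \emph{mutatis mutandis} --- ends with the non-cancellation argument (\ref{power-ff})--(\ref{power-ff-2}): one must show that $\alpha_i$ plus the bias cannot vanish simultaneously for all $i$ unless every $\alpha_i=0$, which is excluded under $\mathbb{H}_A$. Without this step, the divergence of $Z_{N,T}^{PC}$ under the alternative, i.e.\ the analogue of (\ref{th-alt}), is not established, so your proposal as written does not prove the power half of the theorem.
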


We conjecture that Theorem \ref{ff-unobs} holds with minor modifications to
Assumptions \ref{loadings}-\ref{fact-idios} when one estimates factors and
loadings with the Risk-Premium PCA (RP-PCA) approach of \cite%
{lettau2020estimating, lettau2020factors}. In fact, just like our Theorem %
\ref{ff-unobs}, their theory for the strong factors case relies on
conditions that are extremely similar to those of \cite{bai03}. Similar
considerations also hold for the Projected PCA approach of \cite%
{fan2016projected}, which is increasingly being used in asset pricing
studies (see \citealp{kim2021arbitrage} and \citealp{hong2025dynamic}, among
others).

\subsection{On multiple testing and the derandomized confidence function\label{mht}}

We consider the use of the derandomized rule discussed in Section \ref{derandom} in
the main paper in the presence of multiple testing. Whilst this extension is
specifically designed to address the presence of multiple tests carried out
across multiple windows as in Section \ref{empirical}, it can be extended beyond this
specific case.

\bigskip 

Consider the case where the test is repeated across $1\leq v\leq W$ windows,
whose data may overlap fully, partly, or not at all, for the null hypotheses
that%
\[
\mathbb{H}_{0}^{\left( v\right) }:\max_{1\leq i\leq N}\left\vert \alpha
_{i}^{\left( v\right) }\right\vert =0,
\]%
where $\alpha _{i}^{\left( v\right) }$ is the intercept estimated for unit $i
$ with the dataset pertaining to window $v$. A natural question is how
frequently - in the case whereby $\mathbb{H}_{0}^{\left( v\right) }$ is
satisfied across all $1\leq v\leq W$ - can one null $\mathbb{H}_{0}^{\left(
v\right) }$ be rejected. This question corresponds to the idea of
family-wise size control. 

Suppose that, at each window $v$, the test is carried out using the
randomized confidence function defined in equation (\ref{q}), which we denote with
the short-hand notation $Q_{v}\left( \tau \right) $. At each $v$, the
randomness added to the statistic $\psi _{i,NT}^{\left( v\right) }$, say $%
\omega _{i}^{\left( b\right) ,\left( v\right) }$, is independent across $%
1\leq b\leq B$ and across $v$. Then, conditional on the whole sample, $%
Q_{v}\left( \tau \right) $ is independent across $v$. Our question can now
be formalised by calculating%
\[
\mathbb{P}^{\ast }\left[ \text{reject any null hypothesis}|\left\{ \mathbb{H}%
_{0}^{\left( v\right) }\text{ is true, }1\leq v\leq W\right\} \right] .
\]%
The following theorem stipulates that, in essence, family-wise rejection
probability is controlled, as long as there are not too many windows.

\begin{theorem}
\label{family}We assume that Assumptions \ref{error}-\ref{asymptotics} are satisfied. Then, if 
\begin{equation}
\log W-\tau ^{-1}B\left\vert f\left( B\right) \right\vert ^{2}-\tau
^{-3/2}B\left\vert f\left( B\right) \right\vert ^{3}\rightarrow -\infty ,
\label{suff-family}
\end{equation}%
it holds that, as $\min \left\{ N,T,B\right\} \rightarrow \infty $ with $%
B=O\left( \left( \log N\right) ^{2}\right) $%
\[
\mathbb{P}^{\ast }\left[ \text{reject any null hypothesis}|\left\{ \mathbb{H}%
_{0}^{\left( v\right) }\text{ is true, }1\leq v\leq W\right\} \right] =0,
\]%
a.s. conditionally on the sample. 
\end{theorem}

The theorem states that family-wise size control is guaranteed, with the
probability of rejecting any null tending to zero. This requires the
condition in (\ref{suff-family}), which is stated generally, and in which,
for example, one could choose a sample size-adjusted nominal level $\tau $,
akin to a Bonferroni condition. On the other hand, keeping $\tau $ fixed, as
we do in Section \ref{empirical}, and using, as suggested in Section \ref{practice}, $f\left(
B\right) =B^{-1/4}$, (\ref{suff-family}) would be satisfied as long as%
\[
\log W-B^{1/2}\rightarrow -\infty ;
\]%
seeing as we recommend using $B=\left\lfloor \left( \log N\right)
^{2}\right\rfloor $, the above would be equivalent to 
\[
W=o\left( N\right) .
\]

\subsection{Extension of the asymptotic regime in Assumption \protect\ref{asymptotics}\label{complements}}

Here and henceforth, the Euclidean norm of a vector is denoted as $%
\left\Vert \mathbf{\cdot }\right\Vert $. Given an $m\times n$ matrix $%
\mathbf{A}$ with element $a_{ij}$\ we use the following notation for its
norms: $\left\Vert \mathbf{A}\right\Vert $ is the Euclidean/spectral norm,
defined as $\left\Vert \mathbf{A}\right\Vert \leq \sqrt{\lambda _{\max
}\left( \mathbf{A}^{\prime }\mathbf{A}\right) }$; $\left\Vert \mathbf{A}%
\right\Vert_{F}$ is the Frobenious norm; $\left\Vert \mathbf{A}%
\right\Vert_{1}$ is the $\mathcal{L}_{1}$-norm defined as $\left\Vert 
\mathbf{A}\right\Vert _{1}=\max_{1\leq j\leq n}\sum_{i=1}^{m}\left\vert
a_{ij}\right\vert $; the $\mathcal{L}_{\infty }$-norm$\ \left\Vert \mathbf{A}%
\right\Vert _{\infty \text{ }}$is defined as $\left\Vert \mathbf{A}%
\right\Vert _{\infty \text{ }}=\max_{1\leq i\leq m}\sum_{j=1}^{n}\left\vert
a_{ij}\right\vert $.

\bigskip 

We discuss how to extend the asymptotic regime required in Assumption \ref%
{asymptotics}. In particular, we show how we can further relax the
assumption to $N=O\left( T^{\nu /4-\varepsilon }\right) $, which allows for
larger values of $N$ compared to Assumption \ref{asymptotics}, and to the
corresponding Assumption A1(iii) in \citet{feng2022high}. In such a case, we
would need to redefine $\psi _{i,NT}$ as 
\begin{equation}
\psi _{i,NT}=\left\vert \frac{T^{\delta }\widehat{\alpha }_{i,T}}{\widehat{s}%
_{NT}}\right\vert ^{\nu /2},  \label{psi-delta}
\end{equation}%
where $\delta $ is a user-chosen quantity such that 
\begin{equation}
_{{}}0<\delta <\frac{1}{2}-\frac{2}{\nu }\frac{\log N}{\log T}.
\label{delta}
\end{equation}%
Equation (\ref{delta}) does not suggest a decision rule \textit{per se}, but
only an upper bound for $\delta $, which is a tuning parameter. The
rationale underpinning (\ref{psi-delta}) is based on the fact that - upon
inspecting the proofs of Theorem \ref{asy-max} and Lemma \ref{psi} - we
require that, under the null, $\sum_{i=1}^{N}\psi _{i,NT}=o_{a.s.}\left(
1\right) $. In turn, this follows as long as $N\left\vert T^{\delta }%
\widehat{\alpha }_{i,T}\right\vert ^{\nu /2}$ drifts to zero; intuitively,
under the null $\widehat{\alpha }_{i,T}$ drifts to zero at a rate $%
O_{a.s.}\left( T^{-1/2}\right) $, and therefore $N\left\vert T^{\delta }%
\widehat{\alpha }_{i,T}\right\vert ^{\nu /2}=O_{a.s.}\left( NT^{\left(
\delta -1/2\right) \nu /2}\right) =o_{a.s.}\left( 1\right) $ by the
definition of $\delta $\ in (\ref{delta}).

The same arguments hold in the case of nontradable and latent factors, upon
replacing $T^{1/\nu }$ with $T^{\delta }$ in the definition of $\psi
_{i,NT}^{FM}$, and $C_{NT}^{1/\nu }$ with $C_{NT}^{\delta }$\ in the
definition of $\psi _{i,NT}^{PC}$, respectively.

\newpage

\newpage 

\renewcommand*{\thesection}{\Alph{section}}

\setcounter{equation}{0} \setcounter{lemma}{0} \setcounter{theorem}{0} %
\renewcommand{\theassumption}{C.\arabic{assumption}} 
\renewcommand{\thetheorem}{C.\arabic{theorem}} \renewcommand{\thelemma}{C.%
\arabic{lemma}} \renewcommand{\theproposition}{C.\arabic{proposition}} %
\renewcommand{\thecorollary}{C.\arabic{corollary}} \renewcommand{%
\theequation}{C.\arabic{equation}} \renewcommand{\theremark}{C.%
\arabic{remark}}

\section{Technical lemmas\label{lemmas}}

Henceforth, we denote the distribution function of the standard normal
calculated at $-\infty <x<\infty $ as $\Phi \left( x\right) $.

\subsection{Preliminary lemmas}

We begin with a Baum-Katz-type theorem which is also reported in %
\citet{massacci2022}.

\begin{lemma}
\label{stout}Consider a multi-index partial sum process $U_{S_{1},..,S_{h}}=%
\sum_{i_{2}=1}^{S_{2}}\cdot \cdot \cdot \sum_{i_{h}=1}^{S_{h}}\xi
_{i_{1},...,i_{h}}$, and assume that, for some $q\geq 1$ 
\begin{equation*}
\mathbb{E}\sum_{i_{1}=1}^{S_{1}}\left\vert U_{S_{1},..,S_{h}}\right\vert
^{q}\leq c_{0}S_{1}\prod\limits_{j=2}^{h}S_{j}^{d_{j}},
\end{equation*}%
where $d_{j}\geq 1$ for all $1\leq j\leq h$. Then it holds that 
\begin{equation*}
\limsup_{\min \left\{ S_{1},...,S_{h}\right\} \rightarrow \infty }\frac{%
\sum_{i_{1}=1}^{S_{1}}\left\vert U_{S_{1},..,S_{h}}\right\vert ^{q}}{%
S_{1}\prod\limits_{j=2}^{h}S_{j}^{d_{j}}\left( \prod\limits_{j=1}^{h}\log
S_{j}\right) ^{2+\epsilon }}=0\text{ a.s.,}
\end{equation*}%
for all $\epsilon >0$.

\begin{proof}
We begin by noting that the function%
\begin{equation*}
g\left( x_{1},....,x_{h}\right) =x_{1}\prod\limits_{j=2}^{h}x_{j}^{d_{j}},
\end{equation*}%
is superadditive. Consider the vector $\left( y_{1},....,y_{h}\right) $ such
that $y_{i}\geq x_{i}$ for all $1\leq i\leq h$. Then, for any two $s$ and $t$
such that $x_{1}+s\leq y_{1}+t$%
\begin{eqnarray*}
\frac{1}{s}\left[ g\left( x_{1}+s,....,x_{h}\right) -g\left(
x_{1},....,x_{h}\right) \right] &=&\prod\limits_{j=2}^{h}x_{j}^{d_{j}}, \\
\frac{1}{t}\left[ g\left( y_{1}+t,....,y_{h}\right) -g\left(
y_{1},....,y_{h}\right) \right] &=&\prod\limits_{j=2}^{h}x_{j}^{d_{j}},
\end{eqnarray*}%
whence it trivially follows that%
\begin{equation*}
\frac{1}{s}\left[ g\left( x_{1}+s,....,x_{h}\right) -g\left(
x_{1},....,x_{h}\right) \right] =\frac{1}{t}\left[ g\left(
y_{1}+t,....,y_{h}\right) -g\left( y_{1},....,y_{h}\right) \right] .
\end{equation*}%
\citet{he2023one} also showed that, for any two nonzero $s$ and $t$ such
that $x_{i}+s\leq y_{i}+t$, $2\leq i\leq h$%
\begin{equation*}
\frac{1}{s}\left[ g\left( x_{1}+s,....,x_{h}\right) -g\left(
x_{1},....,x_{h}\right) \right] \leq \frac{1}{t}\left[ g\left(
y_{1}+t,....,y_{h}\right) -g\left( y_{1},....,y_{h}\right) \right] .
\end{equation*}%
Thus, $g\left( x_{1},....,x_{h}\right) $ is an S-convex function (see
Definition 2.1 and Proposition 2.3 in \citealp{potra}), and therefore it is
superadditive (by Proposition 2.9 in \citealp{potra}). Hence we can apply
the maximal inequality in Corollary 4 in \citet{moricz1983} with - (in his
notation) $f\left( R\right) =S_{1}\prod\limits_{j=2}^{h}S_{j}^{d_{j}}$ and $%
\phi \left( \cdot \right) =c_{0}$. Letting%
\begin{equation*}
V_{i_{1},...,i_{h}}=\sum_{j_{1}=1}^{i_{1}}\left\vert
\sum_{j_{2}=1}^{i_{2}}\cdot \cdot \cdot \sum_{j_{h}=1}^{i_{h}}\xi
_{j_{1},...,j_{h}}\right\vert ^{q},
\end{equation*}%
it follows that%
\begin{equation*}
\mathbb{E}\max_{1\leq i_{1}\leq S_{1},....,1\leq i_{h}\leq
S_{h}}V_{i_{1},...,i_{h}}\leq
c_{0}S_{1}\prod\limits_{j=2}^{h}S_{j}^{d_{j}}\left(
\prod\limits_{j=1}^{h}\log S_{j}\right) .
\end{equation*}%
Hence we have%
\begin{eqnarray*}
&&\sum_{S_{1}=1}^{\infty }\cdot \cdot \cdot \sum_{S_{h}=1}^{\infty }\frac{1}{%
\prod\limits_{j=1}^{h}S_{j}}\mathbb{P}\left( \max_{1\leq i_{1}\leq
S_{1},....,1\leq i_{h}\leq S_{h}}V_{i_{1},...,i_{h}}\geq \varepsilon
S_{1}\prod\limits_{j=2}^{h}S_{j}^{d_{j}}\left( \prod\limits_{j=1}^{h}\log
S_{j}\right) ^{2+\epsilon }\right) \\
&\leq &\varepsilon ^{-1}\sum_{S_{1}=1}^{\infty }\cdot \cdot \cdot
\sum_{S_{h}=1}^{\infty }\frac{1}{S_{1}^{2}\prod%
\limits_{j=2}^{h}S_{j}^{d_{j}+1}\left( \prod\limits_{j=1}^{h}\log
S_{j}\right) ^{2+\epsilon }}\mathbb{E}\max_{1\leq i_{1}\leq S_{1},....,1\leq
i_{h}\leq S_{h}}V_{i_{1},...,i_{h}} \\
&\leq &c_{0}\varepsilon ^{-1}\sum_{S_{1}=1}^{\infty }\cdot \cdot \cdot
\sum_{S_{h}=1}^{\infty }\frac{1}{\prod\limits_{j=1}^{h}S_{j}\left(
\prod\limits_{j=1}^{h}\log S_{j}\right) ^{1+\epsilon }}\leq c_{1}\varepsilon
^{-1}.
\end{eqnarray*}%
The desired result now follows by repeating the proof of Lemma A.1 in %
\citet{BT2}.
\end{proof}
\end{lemma}

\medskip

The following estimate on the growth rate of moments of partial sums will be
used throughout the paper, and it can be contrasted with Proposition 4.1 in %
\citet{berkeshormann}.

\begin{lemma}
\label{summability}Let $w_{t}$ be a centered, $\mathcal{L}_{q}$-decomposable
Bernoulli shift with $q\geq 2$. Then, if $a>1$, it holds that%
\begin{equation}
E\left( \sum_{t=1}^{m}w_{t}\right) ^{2}\leq c_{0}m,  \label{sum41}
\end{equation}%
Further, for all $2<p\leq q$, if $a>\left( q-1\right) /\left( q-2\right) $,
it holds that 
\begin{equation}
E\left( \sum_{t=1}^{m}w_{t}\right) ^{p}\leq c_{0}m^{p/2}.  \label{sum4}
\end{equation}

\begin{proof}
We begin by showing (\ref{sum41})%
\begin{equation}
E\left( \sum_{t=1}^{m}w_{t}\right) ^{2}\leq c_{0}m.  \label{sumsq}
\end{equation}%
By stationarity, we can write%
\begin{eqnarray*}
E\left( \sum_{t=1}^{m}w_{t}\right) ^{2} &=&E\left(
\sum_{t=1}^{m}\sum_{s=1}^{m}w_{t}w_{s}\right) =mE\left( w_{0}^{2}\right)
+2\sum_{t=1}^{m}\left( m-t\right) E\left( w_{t}w_{0}\right)  \\
&\leq &mE\left( w_{0}^{2}\right) +2\sum_{t=1}^{m}\left\vert E\left(
w_{t}w_{0}\right) \right\vert .
\end{eqnarray*}%
Consider now the coupling $\widetilde{w}_{t,t}$, and note that%
\begin{equation*}
E\left( w_{t}w_{0}\right) =E\left( \left( w_{t}-\widetilde{w}_{t,t}\right)
w_{0}\right) +E\left( \widetilde{w}_{t,t}w_{0}\right) =E\left( \left( w_{t}-%
\widetilde{w}_{t,t}\right) w_{0}\right) ,
\end{equation*}%
on account of the independence between $\widetilde{w}_{t,t}$ and $w_{0}$.
Further%
\begin{equation*}
\left\vert E\left( \left( w_{t}-\widetilde{w}_{t,t}\right) w_{0}\right)
\right\vert \leq \left\vert w_{0}\right\vert _{2}\left\vert w_{t}-\widetilde{%
w}_{t,t}\right\vert _{2}\leq c_{0}t^{-a},
\end{equation*}%
and therefore 
\begin{equation*}
\sum_{t=1}^{m}\left\vert E\left( w_{t}w_{0}\right) \right\vert =O\left(
m\right) .
\end{equation*}%
The desired result now follows by putting everything together. We now show
that%
\begin{equation*}
E\left( \sum_{t=1}^{m}w_{t}\right) ^{q}\leq c_{0}m^{q/2}.
\end{equation*}
Define $\widetilde{w}_{t,\ell }$ with $\ell =\left\lfloor m^{\varsigma
}\right\rfloor $, where 
\begin{equation}
\frac{1}{2a}<\varsigma <\frac{q-2}{2\left( q-1\right) }.  \label{funnyletter}
\end{equation}%
It holds that%
\begin{eqnarray*}
E\left( \sum_{t=1}^{m}w_{t}\right) ^{q} &\leq &2^{q-1}\left( E\left(
\sum_{t=1}^{m}\widetilde{w}_{t,\ell }\right) ^{q}+E\left(
\sum_{t=1}^{m}\left( w_{t}-\widetilde{w}_{t,\ell }\right) \right)
^{q}\right)  \\
&\leq &2^{q-1}\left( E\left( \sum_{t=1}^{m}\widetilde{w}_{t,\ell }\right)
^{q}+E\left( \sum_{t=1}^{m}\left\vert w_{t}-\widetilde{w}_{t,\ell
}\right\vert \right) ^{q}\right) .
\end{eqnarray*}%
We have%
\begin{equation*}
E\left( \sum_{t=1}^{m}\left\vert w_{t}-\widetilde{w}_{t,\varsigma
}\right\vert \right) ^{q}\leq m^{q-1}\sum_{t=1}^{m}E\left\vert w_{t}-%
\widetilde{w}_{t,\varsigma }\right\vert ^{q}\leq c_{0}m^{q-1}m\ell
^{-qa}\leq c_{1}m^{q/2},
\end{equation*}%
on account of (\ref{funnyletter}). We now estimate $E\left( \sum_{t=1}^{m}%
\widetilde{w}_{t,\ell }\right) ^{q}$; consider the $\left\lfloor m/\ell
\right\rfloor +1$ blocks%
\begin{equation*}
\mathcal{B}_{i}=\sum_{t=\ell \left( i-1\right) +1}^{\ell i}\widetilde{w}%
_{t,\ell }\text{, }1\leq i\leq \left\lfloor m/\ell \right\rfloor \text{ \ \
and \ \ }\mathcal{B}_{\left\lfloor m/\ell \right\rfloor
+1}=\sum_{t=\left\lfloor m/\ell \right\rfloor +1}^{m}\widetilde{w}_{t,\ell }.
\end{equation*}%
Note that, by construction, the sequence of blocks $\mathcal{B}_{i}$ with $i$
even is an independent sequence, and so is the sequence of the $\mathcal{B}%
_{i}$s with odd $i$. Hence we can write%
\begin{equation*}
\sum_{t=1}^{m}\widetilde{w}_{t,\ell }=\sum_{i=1}^{\left\lfloor m/\ell
\right\rfloor /2}\mathcal{B}_{2i}+\sum_{i=1}^{\left\lfloor m/\ell
\right\rfloor /2}\mathcal{B}_{2\left( i-1\right) +1}+\mathcal{B}%
_{\left\lfloor m/\ell \right\rfloor +1}.
\end{equation*}%
Thus%
\begin{eqnarray*}
E\left( \sum_{t=1}^{m}w_{t}\right) ^{q} &\leq &3^{q-1}\left( E\left(
\sum_{i=1}^{\left\lfloor m/\ell \right\rfloor /2}\mathcal{B}_{2i}\right)
^{q}+E\left( \sum_{i=1}^{\left\lfloor m/\ell \right\rfloor /2}\mathcal{B}%
_{2\left( i-1\right) +1}\right) ^{q}+E\left( \mathcal{B}_{\left\lfloor
m/\ell \right\rfloor +1}\right) ^{q}\right)  \\
&\leq &3^{p-1}\left( E\left\vert \sum_{i=1}^{\left\lfloor m/\ell
\right\rfloor /2}\mathcal{B}_{2i}\right\vert ^{q}+E\left\vert
\sum_{i=1}^{\left\lfloor m/\ell \right\rfloor /2}\mathcal{B}_{2\left(
i-1\right) +1}\right\vert ^{q}+E\left\vert \mathcal{B}_{\left\lfloor m/\ell
\right\rfloor +1}\right\vert ^{q}\right) 
\end{eqnarray*}%
On account of the independence of the $\mathcal{B}_{2i}$s across $i$, we can
use Rosenthal's inequality (see e.g. Theorem 2.9 in \citealp{petrov1995}),
whence%
\begin{equation}
E\left\vert \sum_{i=1}^{\left\lfloor m/\ell \right\rfloor /2}\mathcal{B}%
_{2i}\right\vert ^{q}\leq c\left( q\right) \left( E\sum_{i=1}^{\left\lfloor
m/\ell \right\rfloor /2}\left\vert \mathcal{B}_{2i}\right\vert
^{q}+\left\vert \sum_{i=1}^{\left\lfloor m/\ell \right\rfloor /2}E\left( 
\mathcal{B}_{2i}^{2}\right) \right\vert ^{q/2}\right) ,  \label{rosenthal}
\end{equation}%
where $c\left( q\right) $ is a positive, finite constant that depends only
on $p$. We already know from (\ref{sumsq}) that%
\begin{equation*}
E\left( \mathcal{B}_{2i}^{2}\right) \leq c_{0}\ell ,
\end{equation*}%
and therefore%
\begin{equation*}
\left\vert \sum_{i=1}^{\left\lfloor m/\ell \right\rfloor /2}E\left( \mathcal{%
B}_{2i}^{2}\right) \right\vert ^{q/2}\leq c_{0}m^{q/2},
\end{equation*}%
for some $c_{0}$. Further%
\begin{eqnarray*}
&&E\sum_{i=1}^{\left\lfloor m/\ell \right\rfloor /2}\left\vert \mathcal{B}%
_{2i}\right\vert ^{q} \\
&=&E\sum_{i=1}^{\left\lfloor m/\ell \right\rfloor /2}\left\vert \sum_{t=\ell
\left( 2i-1\right) +1}^{2\ell i}\widetilde{w}_{t,\ell }\right\vert ^{q}\leq
c_{0}\left\lfloor \frac{m}{\ell }\right\rfloor \ell ^{q-1}\sum_{t=\ell
\left( 2i-1\right) +1}^{2\ell i}E\left\vert \widetilde{w}_{t,\ell
}\right\vert ^{q} \\
&\leq &c_{1}\frac{m}{\ell }\ell ^{q}\leq c_{2}m^{\varsigma \left( q-1\right)
+1}\leq c_{3}m^{q/2},
\end{eqnarray*}%
by the definition of $\varsigma $\ in (\ref{funnyletter}). Putting all
together, (\ref{rosenthal}) now yields%
\begin{equation*}
E\left\vert \sum_{i=1}^{\left\lfloor m/\ell \right\rfloor /2}\mathcal{B}%
_{2i}\right\vert ^{q}\leq c_{0}m^{q/2},
\end{equation*}%
and the same holds for the odd blocks $\mathcal{B}_{2\left( i-1\right) +1}$,
and, similarly, for $E\left\vert \mathcal{B}_{\left\lfloor m/\ell
\right\rfloor +1}\right\vert ^{q}$. Equation (\ref{sum4}) now follows from
Lyapunov's inequality. 
\end{proof}
\end{lemma}

\subsection{Lemmas for Section \protect\ref{tests}}

\begin{lemma}
\label{average-ft}We assume that Assumption \ref{regressor} is satisfied.
Then it holds that%
\begin{equation*}
\overline{f}=\frac{1}{T}\sum_{t=1}^{T}f_{t}=\mathbb{E}f_{t}+o_{a.s.}\left(
1\right) .
\end{equation*}

\begin{proof}
We report the proof for the case $K=1$, for simplicity and without loss of
generality. The proof follows from standard arguments; indeed%
\begin{equation*}
\frac{1}{T}\sum_{t=1}^{T}f_{t}=\mathbb{E}f_{t}+\frac{1}{T}%
\sum_{t=1}^{T}\left( f_{t}-\mathbb{E}f_{t}\right) .
\end{equation*}%
Recall that, by Assumption \ref{regressor}, $f_{t}-\mathbb{E}f_{t}$ is a
centered, $\mathcal{L}_{\nu }$-decomposable Bernoulli shift; thus, by Lemma %
\ref{summability}%
\begin{equation*}
\mathbb{E}\left\vert \sum_{t=1}^{T}\left( f_{t}-\mathbb{E}f_{t}\right)
\right\vert ^{p}\leq c_{0}T^{p/2},
\end{equation*}%
for all $2\leq p\leq \nu $, whence Lemma \ref{stout} readily entails that 
\begin{equation*}
\frac{1}{T}\sum_{t=1}^{T}\left( f_{t}-\mathbb{E}f_{t}\right) =o_{a.s.}\left(
1\right) .
\end{equation*}
\end{proof}
\end{lemma}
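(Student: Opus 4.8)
The plan is to reduce the statement to a strong law of large numbers for a centered, weakly dependent sequence and then invoke the two preliminary lemmas already established. First I would decompose
\begin{equation*}
\frac{1}{T}\sum_{t=1}^{T}f_{t}=\mathbb{E}f_{t}+\frac{1}{T}\sum_{t=1}^{T}\left( f_{t}-\mathbb{E}f_{t}\right),
\end{equation*}
so that the claim is equivalent to showing that the centered average $T^{-1}\sum_{t=1}^{T}\left(f_{t}-\mathbb{E}f_{t}\right)$ vanishes almost surely. Since $\overline{f}$ is $K$-dimensional, I would argue componentwise: each coordinate of $f_{t}$ is itself a scalar $\mathcal{L}_{\nu}$-decomposable Bernoulli shift inheriting the moment bound and the decay rate $a$ from Assumption \ref{regressor}, so it suffices to treat the scalar case $K=1$.

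Next I would verify that the centered process $w_{t}:=f_{t}-\mathbb{E}f_{t}$ satisfies the hypotheses of Lemma \ref{summability}. Centering leaves the Bernoulli-shift representation intact (replacing the generating map $g$ by $g-\mathbb{E}g$), preserves finiteness of the $\mathcal{L}_{\nu}$-norm, and does not affect the coupling bound $\left\vert w_{t}-w_{t,\ell}^{\ast}\right\vert_{\nu}\leq c_{0}\ell^{-a}$; moreover $w_{t}$ is centered by construction. Crucially, the decay condition $a>\left(\nu-1\right)/\left(\nu-2\right)$ imposed in Assumption \ref{regressor} is exactly the rate required by Lemma \ref{summability} with $p=\nu$. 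Applying that lemma then yields the moment bound
\begin{equation*}
\mathbb{E}\left\vert \sum_{t=1}^{T}\left( f_{t}-\mathbb{E}f_{t}\right)\right\vert^{\nu}\leq c_{0}T^{\nu/2}.
\end{equation*}

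Finally I would feed this estimate into Lemma \ref{stout}: a bound of the form $\mathbb{E}\vert U_{T}\vert^{\nu}\leq c_{0}T^{\nu/2}$ is precisely the moment hypothesis required by the Baum--Katz-type result, and the lemma delivers
\begin{equation*}
\frac{1}{T}\sum_{t=1}^{T}\left( f_{t}-\mathbb{E}f_{t}\right)=o_{a.s.}\left(1\right),
\end{equation*}
since $\nu/2>1$ guarantees summability of the bounding series; combining with the decomposition above gives the claim. I do not expect any genuine obstacle here, as the content is entirely a routine application of Lemmas \ref{summability} and \ref{stout}; the only point warranting care is confirming that centering preserves the decomposable-Bernoulli-shift structure together with the precise decay rate needed to invoke Lemma \ref{summability}. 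As a transparent alternative to Lemma \ref{stout}, one could conclude directly via Markov's inequality and the Borel--Cantelli lemma, noting that the moment bound gives $\mathbb{P}\left(\left\vert T^{-1}\sum_{t=1}^{T}w_{t}\right\vert>\varepsilon\right)\leq c_{0}\varepsilon^{-\nu}T^{-\nu/2}$ with $\sum_{T}T^{-\nu/2}<\infty$ because $\nu>4$.
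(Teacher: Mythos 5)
Your proposal is correct and follows essentially the same route as the paper's proof: the same centering decomposition, the same verification that $f_{t}-\mathbb{E}f_{t}$ inherits the $\mathcal{L}_{\nu}$-decomposable Bernoulli shift structure with $a>\left(\nu-1\right)/\left(\nu-2\right)$, the application of Lemma \ref{summability} with $p=\nu$ to obtain $\mathbb{E}\left\vert \sum_{t=1}^{T}\left(f_{t}-\mathbb{E}f_{t}\right)\right\vert^{\nu}\leq c_{0}T^{\nu/2}$, and the conclusion via Lemma \ref{stout}. Your added details (the componentwise reduction to $K=1$ and the Markov--Borel--Cantelli alternative, which indeed works since $\nu/2>1$) are sound but do not change the argument.
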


\begin{lemma}
\label{average-error}We assume that Assumption \ref{error} is satisfied.
Then it holds that%
\begin{equation*}
\sum_{i=1}^{N}\left\vert \sum_{t=1}^{T}u_{i,t}\right\vert ^{\gamma
}=o_{a.s.}\left( NT^{\gamma /2}\left( \log N\log T\right) ^{2+\epsilon
}\right) ,
\end{equation*}%
for all $\epsilon >0$ and all $2\leq \gamma \leq \nu $.

\begin{proof}
We estimate convergence rate of 
\begin{equation*}
\sum_{i=1}^{N}\mathbb{E}\left\vert \sum_{t=1}^{T}u_{i,t}\right\vert ^{\gamma
}.
\end{equation*}%
By Assumption \ref{error}, we can use Lemma \ref{summability}, which entails
that, for all $1\leq i\leq N$%
\begin{equation*}
\mathbb{E}\left\vert \sum_{t=1}^{T}u_{i,t}\right\vert ^{\gamma }\leq c_{\nu
}T^{\gamma /2},
\end{equation*}%
where $c_{\nu }$ is a positive, finite constant which depends only on $\nu $%
, whence 
\begin{equation*}
\sum_{i=1}^{N}\mathbb{E}\left\vert \sum_{t=1}^{T}u_{i,t}\right\vert ^{\gamma
}\leq c_{0}NT^{\gamma /2}.
\end{equation*}%
The desired result now readily obtains from Lemma \ref{stout}.
\end{proof}
\end{lemma}

\begin{lemma}
\label{denominator-LS}We assume that Assumption \ref{regressor} is
satisfied. Then it holds that%
\begin{equation*}
\frac{1}{T}\sum_{t=1}^{T}\left( f_{t}-\overline{f}\right) \left( f_{t}-%
\overline{f}\right) ^{\prime }=\mathcal{V}\left( f\right) +o_{a.s.}\left(
1\right) .
\end{equation*}

\begin{proof}
As above, we report the proof for the case $K=1$, for simplicity and without
loss of generality. It holds that%
\begin{equation*}
\frac{1}{T}\sum_{t=1}^{T}\left( f_{t}-\overline{f}\right) ^{2}=\frac{1}{T}%
\sum_{t=1}^{T}f_{t}^{2}-\overline{f}^{2}.
\end{equation*}%
Consider $f_{t}^{2}$; Assumption \ref{regressor}\textit{(i)} immediately
entails that $\left\{ f_{t}^{2},-\infty <t<\infty \right\} $ is an $\mathcal{%
L}_{\nu /2}$-decomposable Bernoulli shift with rate $a>\left( \nu -1\right)
/\left( \nu -2\right) $. Indeed, letting 
\begin{equation*}
f_{t}=g^{\left( f\right) }\left( \eta _{t}^{\left( f\right) },\eta
_{t-1}^{\left( f\right) },...\right) ,
\end{equation*}%
where $g^{\left( f\right) }:S^{\infty }\rightarrow 
\mathbb{R}
^{K}$ is a non random measurable function and $\left\{ \eta _{t}^{\left(
f\right) },-\infty <t<\infty \right\} $ is an \textit{i.i.d.} sequence with
values in a measurable space $S$, and consider the coupling construction 
\begin{equation*}
f_{t}^{\prime }=g^{\left( f\right) }\left( \eta _{t}^{\left( f\right)
},...,\eta _{t-\ell +1}^{\left( f\right) },\eta _{t-\ell ,t,\ell }^{\ast
\left( f\right) },\eta _{t-\ell -1,t,\ell }^{\ast \left( f\right)
}...\right) ,
\end{equation*}%
with $\left\{ \eta _{s,t,\ell }^{\ast \left( f\right) },-\infty <s,\ell
,t<\infty \right\} $ \textit{i.i.d.} copies of $\eta _{0}^{\left( f\right) }$
independent of $\left\{ \eta _{t}^{\left( f\right) },-\infty <t<\infty
\right\} $. Then we have%
\begin{eqnarray*}
&&\left\vert f_{t}^{2}-\left( f_{t}^{\prime }\right) ^{2}\right\vert _{\nu
/2} \\
&=&\left\vert \left( f_{t}+f_{t}^{\prime }\right) \left( f_{t}-f_{t}^{\prime
}\right) \right\vert _{\nu /2}\leq \left\vert f_{t}+f_{t}^{\prime
}\right\vert _{\nu }\left\vert f_{t}-f_{t}^{\prime }\right\vert _{\nu } \\
&\leq &2\left\vert f_{t}\right\vert _{\nu }\left\vert f_{t}-f_{t}^{\prime
}\right\vert _{\nu }\leq c_{0}\ell ^{-a},
\end{eqnarray*}%
having used the Cauchy-Schwartz inequality, Minkowski's inequality, and the
facts that - by Assumption \ref{regressor} - $\left\vert f_{t}\right\vert
_{\nu }=\left\vert f_{t}^{\prime }\right\vert _{\nu }<\infty $ and $%
\left\vert f_{t}-f_{t}^{\prime }\right\vert _{\nu }\leq c_{0}\ell ^{-a}$,
with $a>\left( \nu -1\right) /\left( \nu -2\right) $. Hence%
\begin{equation*}
T^{-\nu /2}\mathbb{E}\left\vert \sum_{t=1}^{T}\left( f_{t}^{2}-\mathbb{E}%
f_{t}^{2}\right) \right\vert ^{\nu /2}\leq c_{\nu /2}T^{-\nu /4},
\end{equation*}%
by Lemma \ref{summability}, from which it follows from standard arguments
that%
\begin{equation*}
\frac{1}{T}\sum_{t=1}^{T}f_{t}^{2}=\mathbb{E}f_{t}^{2}+o_{a.s.}\left(
1\right) .
\end{equation*}%
By the same token, it is not hard to see that%
\begin{equation*}
\overline{f}=\mathbb{E}\left( f_{t}\right) +o_{a.s.}\left( 1\right) .
\end{equation*}%
Thus we have%
\begin{equation*}
\frac{1}{T}\sum_{t=1}^{T}\left( f_{t}-\overline{f}\right) ^{2}=\mathbb{E}%
f_{t}^{2}-\left( \mathbb{E}f_{t}\right) ^{2}+o_{a.s.}\left( 1\right) ,
\end{equation*}%
and the desired result obtains from Assumption \ref{regressor}\textit{(ii)}.
\end{proof}
\end{lemma}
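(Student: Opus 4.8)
The plan is to reduce the claim to a strong law of large numbers for the (cross-)second moments of the components of $f_t$. Entrywise, the $(k,l)$ element of $\frac{1}{T}\sum_{t=1}^{T}(f_t-\overline{f})(f_t-\overline{f})^{\prime}$ equals $\frac{1}{T}\sum_{t=1}^{T}f_{t,k}f_{t,l}-\overline{f}_k\overline{f}_l$, whose target limit is the $(k,l)$ element of $\mathcal{V}(f)$, namely $\mathbb{E}(f_{t,k}f_{t,l})-\mathbb{E}(f_{t,k})\mathbb{E}(f_{t,l})$. Since Lemma \ref{average-ft} already gives $\overline{f}_k=\mathbb{E}(f_{t,k})+o_{a.s.}(1)$, it suffices to establish $\frac{1}{T}\sum_{t=1}^{T}f_{t,k}f_{t,l}\overset{a.s.}{\rightarrow}\mathbb{E}(f_{t,k}f_{t,l})$ for every pair $(k,l)$. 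As in the companion lemmas I would present the argument for the representative diagonal term, i.e.\ take $K=1$ and prove $\frac{1}{T}\sum_{t=1}^{T}f_t^2=\mathbb{E}f_t^2+o_{a.s.}(1)$, the off-diagonal products being entirely analogous.

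The key step is to verify that the centered sequence $\{f_t^2-\mathbb{E}f_t^2\}$ is itself an $\mathcal{L}_{\nu/2}$-decomposable Bernoulli shift, inheriting the decay rate $a$ of $f_t$. Writing $f_t=g^{(f)}(\eta_t^{(f)},\eta_{t-1}^{(f)},\dots)$ and introducing the coupling construction $f_t'$ that replaces the innovations older than $\ell$ lags by independent copies, I would control the coupling error through the factorization
\begin{equation*}
\left\vert f_t^2-(f_t')^2\right\vert_{\nu/2}=\left\vert (f_t+f_t')(f_t-f_t')\right\vert_{\nu/2}\leq\left\vert f_t+f_t'\right\vert_{\nu}\left\vert f_t-f_t'\right\vert_{\nu}\leq 2\left\vert f_t\right\vert_{\nu}\left\vert f_t-f_t'\right\vert_{\nu}\leq c_0\ell^{-a},
\end{equation*}
using the Cauchy--Schwarz and Minkowski inequalities together with $\left\vert f_t\right\vert_{\nu}<\infty$ and the decay bound $\left\vert f_t-f_t'\right\vert_{\nu}\leq c_0\ell^{-a}$ supplied by Assumption \ref{regressor}. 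The essential point is that squaring costs one factor of the moment index, which is precisely why Assumption \ref{regressor} is posed at level $\nu$ rather than $\nu/2$: the $\mathcal{L}_{\nu/2}$ coupling bound for $f_t^2$ is purchased with the $\mathcal{L}_{\nu}$ control of $f_t$.

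With this in hand, Lemma \ref{summability} applied at index $\nu/2$ yields the moment bound
\begin{equation*}
\mathbb{E}\left\vert\sum_{t=1}^{T}\left(f_t^2-\mathbb{E}f_t^2\right)\right\vert^{\nu/2}\leq c_0 T^{\nu/4},
\end{equation*}
and feeding this into the Baum--Katz-type Lemma \ref{stout} delivers $\frac{1}{T}\sum_{t=1}^{T}(f_t^2-\mathbb{E}f_t^2)=o_{a.s.}(1)$, hence $\frac{1}{T}\sum_{t=1}^{T}f_t^2=\mathbb{E}f_t^2+o_{a.s.}(1)$. Combining this with $\overline{f}=\mathbb{E}f_t+o_{a.s.}(1)$ from Lemma \ref{average-ft} gives $\frac{1}{T}\sum_{t=1}^{T}(f_t-\overline{f})^2=\mathbb{E}f_t^2-(\mathbb{E}f_t)^2+o_{a.s.}(1)=\mathcal{V}(f)+o_{a.s.}(1)$, with positive definiteness of the limit being exactly Assumption \ref{regressor}.

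I expect the main obstacle to be the verification in the second paragraph together with its attendant bookkeeping: confirming that the decay rate $a$ inherited from $f_t$ suffices for the rate condition imposed by Lemma \ref{summability} at the reduced index $\nu/2$, and then adapting the clean factorization above to the off-diagonal cross-products $f_{t,k}f_{t,l}$. For the latter the single product must be split through the triangle-inequality decomposition $f_{t,k}f_{t,l}-f_{t,k}'f_{t,l}'=f_{t,k}(f_{t,l}-f_{t,l}')+f_{t,l}'(f_{t,k}-f_{t,k}')$ before applying H\"{o}lder's inequality, after which the same Lemma \ref{summability} and Lemma \ref{stout} machinery applies verbatim. Everything downstream of the decomposability check is routine.
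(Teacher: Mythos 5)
Your proposal follows essentially the same route as the paper's own proof: the same reduction to the scalar case, the same coupling factorization $\left\vert f_t^2-(f_t')^2\right\vert_{\nu/2}\leq 2\left\vert f_t\right\vert_{\nu}\left\vert f_t-f_t'\right\vert_{\nu}$ via Cauchy--Schwarz and Minkowski to establish $\mathcal{L}_{\nu/2}$-decomposability, the same application of Lemma \ref{summability} followed by the Baum--Katz-type Lemma \ref{stout} (which is what the paper's ``standard arguments'' amount to, as in Lemma \ref{average-ft}), and the same final combination with $\overline{f}=\mathbb{E}f_t+o_{a.s.}(1)$. Your added remarks on the off-diagonal cross-products and on checking that the inherited decay rate $a$ meets the rate condition of Lemma \ref{summability} at index $\nu/2$ are sensible refinements of points the paper passes over silently, but they do not change the argument.
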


\begin{lemma}
\label{cross-prod}We assume that Assumptions \ref{error}-\ref{exogeneity}
are satisfied. Then it holds that%
\begin{equation*}
\sum_{i=1}^{N}\left\Vert \sum_{t=1}^{T}f_{t}u_{i,t}\right\Vert ^{\gamma
}=o_{a.s.}\left( NT^{\gamma /2}\left( \log N\log T\right) ^{2+\epsilon
}\right) ,
\end{equation*}%
for every $\epsilon >0$ and $2\leq \gamma \leq \nu /2$.

\begin{proof}
Let $K=1$ with no loss of generality. We begin by showing that $\left\{
f_{t}u_{i,t},-\infty <t<\infty \right\} $ is an $\mathcal{L}_{\nu /2}$%
-decomposable Bernoulli shift with rate $a>\left( \nu -1\right) /\left( \nu
-2\right) $. Recall that, by Assumption \ref{exogeneity}, $E\left(
f_{t}u_{i,t}\right) =0$, and 
\begin{eqnarray*}
f_{t} &=&g^{\left( f\right) }\left( \eta _{t}^{\left( f\right) },\eta
_{t-1}^{\left( f\right) },...\right) , \\
u_{i,t} &=&g^{\left( u_{i}\right) }\left( \eta _{t}^{\left( i\right) },\eta
_{t-1}^{\left( i\right) },...\right) ,
\end{eqnarray*}%
where $g^{\left( f\right) }:S^{\infty }\rightarrow 
\mathbb{R}
^{K}$ and $g^{\left( u_{i}\right) }:S^{\infty }\rightarrow 
\mathbb{R}
$, $1\leq i\leq N$, are non random measurable function and $\left\{ \eta
_{t}^{\left( f\right) },-\infty <t<\infty \right\} $ and $\left\{ \eta
_{t}^{\left( i\right) },-\infty <t<\infty \right\} $\ are \textit{i.i.d.}
sequences with values in a measurable space $S$, and consider the coupling
constructions%
\begin{eqnarray*}
f_{t}^{\prime } &=&g^{\left( f\right) }\left( \eta _{t}^{\left( f\right)
},...,\eta _{t-\ell +1}^{\left( f\right) },\eta _{t-\ell ,t,\ell }^{\ast
\left( f\right) },\eta _{t-\ell -1,t,\ell }^{\ast \left( f\right)
}...\right) , \\
u_{i,t}^{\prime } &=&g^{\left( u_{i}\right) }\left( \eta _{t}^{\left(
i\right) },...,\eta _{t-\ell +1}^{\left( i\right) },\eta _{t-\ell ,t,\ell
}^{\ast \left( i\right) },\eta _{t-\ell -1,t,\ell }^{\ast \left( i\right)
}...\right) ,
\end{eqnarray*}%
where $\left\{ \eta _{s,t,\ell }^{\ast \left( f\right) },-\infty <s,\ell
,t<\infty \right\} $ and $\left\{ \eta _{s,t,\ell }^{\ast \left( i\right)
},-\infty <s,\ell ,t<\infty \right\} $\ are \textit{i.i.d.} copies of $\eta
_{0}^{\left( f\right) }$ and $\eta _{0}^{\left( i\right) }$\ respectively,
independent of $\left\{ \eta _{t}^{\left( f\right) },-\infty <t<\infty
\right\} $ and $\left\{ \eta _{t}^{\left( i\right) },-\infty <t<\infty
\right\} $. Then we have, by elementary arguments%
\begin{eqnarray*}
&&\left\vert f_{t}u_{i,t}-f_{t}^{\prime }u_{i,t}^{\prime }\right\vert
_{\gamma } \\
&\leq &\left\vert \left( f_{t}-f_{t}^{\prime }\right) u_{i,t}^{\prime
}\right\vert _{\gamma }+\left\vert f_{t}^{\prime }\left(
u_{i,t}-u_{i,t}^{\prime }\right) \right\vert _{\gamma }+\left\vert \left(
f_{t}-f_{t}^{\prime }\right) \left( u_{i,t}-u_{i,t}^{\prime }\right)
\right\vert _{\gamma } \\
&\leq &\left\vert u_{i,t}\right\vert _{2\gamma }\left\vert
f_{t}-f_{t}^{\prime }\right\vert _{2\gamma }+\left\vert f_{t}\right\vert
_{2\gamma }\left\vert u_{i,t}-u_{i,t}^{\prime }\right\vert _{2\gamma
}+\left\vert f_{t}-f_{t}^{\prime }\right\vert _{2\gamma }\left\vert
u_{i,t}-u_{i,t}^{\prime }\right\vert _{2\gamma } \\
&\leq &\left\vert u_{i,t}\right\vert _{\nu }\left\vert f_{t}-f_{t}^{\prime
}\right\vert _{\nu }+\left\vert f_{t}\right\vert _{\nu }\left\vert
u_{i,t}-u_{i,t}^{\prime }\right\vert _{\nu }+\left\vert f_{t}-f_{t}^{\prime
}\right\vert _{\nu }\left\vert u_{i,t}-u_{i,t}^{\prime }\right\vert _{\nu }
\\
&\leq &c_{0}\ell ^{-a}+c_{1}\ell ^{-a}+c_{2}\ell ^{-2a}\leq c_{3}\ell ^{-a}.
\end{eqnarray*}%
Then it holds that%
\begin{equation*}
\sum_{i=1}^{N}\mathbb{E}\left\vert \sum_{t=1}^{T}f_{t}u_{i,t}\right\vert
^{\gamma }\leq c_{0}NT^{\gamma /2},
\end{equation*}%
having used Lemma \ref{summability}. The desired result now follows from
Lemma \ref{stout}.
\end{proof}
\end{lemma}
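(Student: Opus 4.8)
The plan is to show that for each $i$ the product sequence $\left\{ f_{t}u_{i,t}\right\} _{t}$ is a centered $\mathcal{L}_{\nu /2}$-decomposable Bernoulli shift, obtain from Lemma \ref{summability} a per-unit moment bound of order $T^{\gamma /2}$, sum this bound across the $N$ units, and finally convert the resulting expected-value bound into the almost-sure statement via Lemma \ref{stout}. Throughout I would take $K=1$ without loss of generality, the vector case following componentwise.

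First I would record that $\mathbb{E}\left( f_{t}u_{i,t}\right) =0$ by Assumption \ref{exogeneity}, so the product is centered. Representing $f_{t}=g^{\left( f\right) }\left( \eta _{t}^{\left( f\right) },\eta _{t-1}^{\left( f\right) },\dots \right) $ and $u_{i,t}=g^{\left( u_{i}\right) }\left( \eta _{t}^{\left( i\right) },\eta _{t-1}^{\left( i\right) },\dots \right) $ as in Assumptions \ref{regressor} and \ref{error}, the product is a measurable function of the joint innovation stream and is therefore itself a Bernoulli shift. To pin down its decomposability rate I would introduce the $\ell $-truncated couplings $f_{t}^{\prime }$ and $u_{i,t}^{\prime }$ and expand
\begin{equation*}
f_{t}u_{i,t}-f_{t}^{\prime }u_{i,t}^{\prime }=\left( f_{t}-f_{t}^{\prime }\right) u_{i,t}^{\prime }+f_{t}^{\prime }\left( u_{i,t}-u_{i,t}^{\prime }\right) +\left( f_{t}-f_{t}^{\prime }\right) \left( u_{i,t}-u_{i,t}^{\prime }\right) .
\end{equation*}
Taking $\mathcal{L}_{\gamma }$-norms, applying Minkowski's inequality and then H\"{o}lder's inequality with exponents $\left( 2\gamma ,2\gamma \right) $ bounds each summand by a product of a level norm and a coupling-difference norm, both measured in $\mathcal{L}_{2\gamma }$. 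Because $\gamma \leq \nu /2$ forces $2\gamma \leq \nu $, each such factor is dominated by the corresponding $\mathcal{L}_{\nu }$-norm, which is finite by Assumptions \ref{error} and \ref{regressor} and which decays like $\ell ^{-a}$ for the differences; the cross term is of order $\ell ^{-2a}$ and hence negligible. This delivers $\left\vert f_{t}u_{i,t}-f_{t}^{\prime }u_{i,t}^{\prime }\right\vert _{\gamma }\leq c_{3}\ell ^{-a}$, so the product is $\mathcal{L}_{\nu /2}$-decomposable with a rate inherited from the factors and the errors.

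With decomposability established, Lemma \ref{summability} yields $\mathbb{E}\left\vert \sum_{t=1}^{T}f_{t}u_{i,t}\right\vert ^{\gamma }\leq c_{\nu }T^{\gamma /2}$ uniformly over $1\leq i\leq N$, and summing over $i$ gives $\sum_{i=1}^{N}\mathbb{E}\left\vert \sum_{t=1}^{T}f_{t}u_{i,t}\right\vert ^{\gamma }\leq c_{0}NT^{\gamma /2}$. This is precisely the hypothesis of Lemma \ref{stout} with $h=2$, $\left( S_{1},S_{2}\right) =\left( N,T\right) $, exponent $q=\gamma $, and $d_{2}=\gamma /2$, whose conclusion is exactly the claimed $o_{a.s.}\left( NT^{\gamma /2}\left( \log N\log T\right) ^{2+\epsilon }\right) $ bound. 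I expect the main obstacle to be the rate and exponent bookkeeping in the H\"{o}lder step: forming the product doubles the effective integrability exponent, which is exactly why $\gamma $ must be capped at $\nu /2$ rather than the $\nu $ permissible for a single factor, and one must check that the decomposability rate the product inherits still satisfies the hypothesis of Lemma \ref{summability} at this reduced order. This is the same moment-expansion effect flagged after Assumption \ref{asymptotics}, and it is the structural reason that the admissible growth of $N$ relative to $T$ is ultimately governed by $\nu /2$.
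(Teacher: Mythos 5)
Your proposal is correct and follows essentially the same route as the paper's own proof: the identical three-term coupling decomposition of $f_{t}u_{i,t}-f_{t}^{\prime }u_{i,t}^{\prime }$, the same H\"{o}lder step with exponents $2\gamma \leq \nu $ yielding the $\ell ^{-a}$ decomposability rate, and then the same two-lemma pipeline (Lemma \ref{summability} for the moment bound $c_{0}NT^{\gamma /2}$, Lemma \ref{stout} for the almost-sure conversion). Your closing remark about verifying that the inherited rate $a$ still meets the hypothesis of Lemma \ref{summability} at the reduced order $\nu /2$ is a fair point of care, but it does not change the argument, which matches the paper's.
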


\begin{lemma}
\label{normalising}We assume that Assumptions \ref{error}-\ref{exogeneity}
are satisfied. Then it holds that%
\begin{eqnarray*}
\liminf_{\min \left\{ N,T\right\} \rightarrow \infty }\frac{1}{NT}%
\sum_{i=1}^{N}\sum_{t=1}^{T}\widehat{u}_{i,t}^{2} &>&0, \\
\limsup_{\min \left\{ N,T\right\} \rightarrow \infty }\frac{1}{NT}%
\sum_{i=1}^{N}\sum_{t=1}^{T}\widehat{u}_{i,t}^{2} &<&\infty .
\end{eqnarray*}

\begin{proof}
The proof uses several arguments used also elsewhere, so we omit passages
when possible to avoid repetitions. It holds that%
\begin{eqnarray*}
&&\frac{1}{NT}\sum_{i=1}^{N}\sum_{t=1}^{T}\widehat{u}_{i,t}^{2} \\
&=&\frac{1}{NT}\sum_{i=1}^{N}\sum_{t=1}^{T}u_{i,t}^{2}+\frac{1}{NT}%
\sum_{i=1}^{N}\sum_{t=1}^{T}\left( \widehat{\alpha }_{i}-\alpha _{i}\right)
^{2}+\frac{1}{NT}\sum_{i=1}^{N}\sum_{t=1}^{T}\left( \widehat{\beta }%
_{i}-\beta _{i}\right) ^{2}f_{t}^{2} \\
&&+\frac{2}{NT}\sum_{i=1}^{N}\sum_{t=1}^{T}\left( \widehat{\alpha }%
_{i}-\alpha _{i}\right) u_{i,t}+\frac{2}{NT}\sum_{i=1}^{N}\sum_{t=1}^{T}%
\left( \widehat{\beta }_{i}-\beta _{i}\right) f_{t}u_{i,t} \\
&&+\frac{2}{NT}\sum_{i=1}^{N}\sum_{t=1}^{T}\left( \widehat{\alpha }%
_{i}-\alpha _{i}\right) \left( \widehat{\beta }_{i}-\beta _{i}\right) f_{t}
\\
&=&I+II+III+IV+V+VI.
\end{eqnarray*}%
It holds that%
\begin{equation*}
I=\frac{1}{NT}\sum_{i=1}^{N}\sum_{t=1}^{T}\mathbb{E}u_{i,t}^{2}+\frac{1}{NT}%
\sum_{i=1}^{N}\sum_{t=1}^{T}\left( u_{i,t}^{2}-\mathbb{E}u_{i,t}^{2}\right)
=I_{a}+I_{b}.
\end{equation*}%
By Assumption \ref{error}, it follows immediately that $0<I_{a}<\infty $;
also, it is easy to see that $u_{i,t}^{2}-\mathbb{E}u_{i,t}^{2}$ is a
centered, $\mathcal{L}_{\nu /2}$-decomposable Bernoulli shift (see the
arguments in the proof of Lemma \ref{denominator-LS}), and therefore, by
Lemma \ref{summability}%
\begin{eqnarray*}
&&\mathbb{E}\left\vert \frac{1}{NT}\sum_{i=1}^{N}\sum_{t=1}^{T}\left(
u_{i,t}^{2}-\mathbb{E}u_{i,t}^{2}\right) \right\vert ^{2} \\
&\leq &\frac{1}{NT^{2}}\sum_{i=1}^{N}\mathbb{E}\left\vert
\sum_{t=1}^{T}\left( u_{i,t}^{2}-\mathbb{E}u_{i,t}^{2}\right) \right\vert
^{2}\leq c_{0}T^{-1},
\end{eqnarray*}%
whence Lemma \ref{stout} yields $I_{b}=o_{a.s.}\left( 1\right) $. Note also
that%
\begin{equation*}
\frac{1}{N}\sum_{i=1}^{N}\left( \widehat{\beta }_{i}-\beta _{i}\right)
^{2}\left( \frac{1}{T}\sum_{t=1}^{T}f_{t}^{2}\right) ,
\end{equation*}%
with $T^{-1}\sum_{t=1}^{T}f_{t}^{2}=O_{a.s.}\left( 1\right) $ by Lemma \ref%
{average-ft} and%
\begin{equation*}
\frac{1}{N}\sum_{i=1}^{N}\left( \widehat{\beta }_{i}-\beta _{i}\right) ^{2}=%
\frac{\frac{1}{N}\sum_{i=1}^{N}\left( \frac{1}{T}\sum_{t=1}^{T}\left( f_{t}-%
\overline{f}\right) u_{i,t}\right) ^{2}}{\left( \frac{1}{T}%
\sum_{t=1}^{T}\left( f_{t}-\overline{f}\right) ^{2}\right) ^{2}}.
\end{equation*}%
We know from Lemma \ref{denominator-LS} that 
\begin{equation*}
\frac{1}{T}\sum_{t=1}^{T}\left( f_{t}-\overline{f}\right)
^{2}=c_{0}+o_{a.s.}\left( 1\right) ,
\end{equation*}%
with $c_{0}>0$. Further, using Lemma \ref{cross-prod}, it follows that%
\begin{equation*}
\frac{1}{N}\sum_{i=1}^{N}\left\Vert \frac{1}{T}\sum_{t=1}^{T}\left( f_{t}-%
\overline{f}\right) u_{i,t}\right\Vert ^{2}=o_{a.s.}\left( 1\right) ,
\end{equation*}%
whence $III=o_{a.s.}\left( 1\right) $. The same arguments as in the proof of
Lemma \ref{psi} entail that $II=o_{a.s.}\left( 1\right) $. Finally, a
routine application of H\"{o}lder's inequality yields that $%
IV-VI=o_{a.s.}\left( 1\right) $.
\end{proof}
\end{lemma}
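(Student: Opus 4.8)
The plan is to expand the OLS residual as $\widehat u_{i,t}=u_{i,t}-(\widehat\alpha_{i}-\alpha_{i})-(\widehat\beta_{i}-\beta_{i})'f_{t}$, square it, and average, so that $\frac{1}{NT}\sum_{i=1}^{N}\sum_{t=1}^{T}\widehat u_{i,t}^{2}$ decomposes (taking $K=1$ without loss of generality) into the genuine error term $I=\frac{1}{NT}\sum_{i}\sum_{t}u_{i,t}^{2}$, the two quadratic estimation-error terms $II=\frac{1}{N}\sum_{i}(\widehat\alpha_{i}-\alpha_{i})^{2}$ and $III=\frac{1}{N}\sum_{i}(\widehat\beta_{i}-\beta_{i})^{2}\,\frac{1}{T}\sum_{t}f_{t}^{2}$, and three cross terms $IV,V,VI$. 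I would then show that $I$ is trapped between two positive finite constants while $II,\dots,VI$ all vanish almost surely, so that the whole average is asymptotically pinned to a strictly positive, finite quantity, which delivers both bounds at once.

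First I would handle the leading term. Splitting $I=I_{a}+I_{b}$ with $I_{a}=\frac{1}{NT}\sum_{i}\sum_{t}\mathbb{E}u_{i,t}^{2}$, Assumption \ref{error} gives $0<\min_{i}\mathbb{E}u_{i,t}^{2}\le I_{a}$, and the uniform $\mathcal L_{\nu}$ bound of the same assumption gives $I_{a}\le\sup_{i}\mathbb{E}u_{i,t}^{2}<\infty$, so $\liminf I_{a}>0$ and $\limsup I_{a}<\infty$. For the centred part $I_{b}=\frac{1}{NT}\sum_{i}\sum_{t}(u_{i,t}^{2}-\mathbb{E}u_{i,t}^{2})$, I note that $u_{i,t}^{2}-\mathbb{E}u_{i,t}^{2}$ is a centred $\mathcal L_{\nu/2}$-decomposable Bernoulli shift via the coupling argument already used in Lemma \ref{denominator-LS}. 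Crucially, \emph{no} restriction on cross-sectional dependence is needed here, because the Cauchy--Schwarz inequality yields $\mathbb{E}\big|\frac{1}{N}\sum_{i}Y_{i}\big|^{2}\le\frac{1}{N}\sum_{i}\mathbb{E}Y_{i}^{2}$; setting $Y_{i}=\frac{1}{T}\sum_{t}(u_{i,t}^{2}-\mathbb{E}u_{i,t}^{2})$ and using the second-moment estimate behind Lemma \ref{summability} gives $\mathbb{E}Y_{i}^{2}\le c_{0}/T$, whence $\mathbb{E}|I_{b}|^{2}\le c_{0}/T$ and Lemma \ref{stout} upgrades this to $I_{b}=o_{a.s.}(1)$.

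Next I would dispose of the estimation-error terms, the decisive quantity being $\frac{1}{N}\sum_{i}(\widehat\beta_{i}-\beta_{i})^{2}$. Writing $\widehat\beta_{i}-\beta_{i}=\big[\frac{1}{T}\sum_{t}(f_{t}-\overline f)u_{i,t}\big]\big/\big[\frac{1}{T}\sum_{t}(f_{t}-\overline f)^{2}\big]$, the denominator tends almost surely to the positive constant $\mathcal V(f)$ by Lemma \ref{denominator-LS}, while the numerator average $\frac{1}{N}\sum_{i}\big|\frac{1}{T}\sum_{t}(f_{t}-\overline f)u_{i,t}\big|^{2}$ is $o_{a.s.}(1)$ once $(f_{t}-\overline f)u_{i,t}$ is split and Lemmas \ref{cross-prod} and \ref{average-error} are applied with $\gamma=2$ (admissible since $\nu>4$). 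With $\frac{1}{T}\sum_{t}f_{t}^{2}=O_{a.s.}(1)$ (Lemma \ref{average-ft} together with the argument of Lemma \ref{denominator-LS}), this gives $III=o_{a.s.}(1)$. For $II$ I would use $\widehat\alpha_{i}-\alpha_{i}=-(\widehat\beta_{i}-\beta_{i})\overline f+\overline u_{i}$, so that $II\le 2\overline f^{\,2}\,\frac{1}{N}\sum_{i}(\widehat\beta_{i}-\beta_{i})^{2}+2\,\frac{1}{N}\sum_{i}\overline u_{i}^{\,2}$, with the second average $o_{a.s.}(1)$ by Lemma \ref{average-error} at $\gamma=2$. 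Each cross term $IV,V,VI$ is then bounded by a Cauchy--Schwarz product of two of the quantities already shown to vanish, so $IV+V+VI=o_{a.s.}(1)$.

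Collecting the pieces gives $\frac{1}{NT}\sum_{i}\sum_{t}\widehat u_{i,t}^{2}=I_{a}+o_{a.s.}(1)$ with $I_{a}$ bounded away from zero and from infinity, which is the claim. The one genuine obstacle is controlling $I_{b}$ and the averaged loadings error $\frac{1}{N}\sum_{i}(\widehat\beta_{i}-\beta_{i})^{2}$ when the errors may be arbitrarily cross-sectionally dependent; the resolution throughout is that only a \emph{second} moment in the cross-section is ever required, which the Cauchy--Schwarz step reduces to a sum of per-unit time-series moments, after which the Bernoulli-shift variance estimate of Lemma \ref{summability} and the Baum--Katz almost-sure bound of Lemma \ref{stout} close the argument.
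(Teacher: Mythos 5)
Your proposal is correct and follows essentially the same route as the paper's own proof: the identical six-term decomposition $I+\cdots+VI$, with $I_{a}$ bounded via Assumption \ref{error}, $I_{b}$ handled by noting $u_{i,t}^{2}-\mathbb{E}u_{i,t}^{2}$ is a centered $\mathcal{L}_{\nu/2}$-decomposable Bernoulli shift and invoking Lemma \ref{summability} plus Lemma \ref{stout}, $III$ via Lemmas \ref{denominator-LS} and \ref{cross-prod}, and the cross terms by Cauchy--Schwarz/H\"{o}lder. The only difference is presentational: where the paper disposes of $II$ by citing ``the same arguments as in the proof of Lemma \ref{psi}'', you spell out the expansion $\widehat{\alpha}_{i}-\alpha_{i}=-\left(\widehat{\beta}_{i}-\beta_{i}\right)\overline{f}+\overline{u}_{i}$ and apply Lemma \ref{average-error} with $\gamma=2$, which is exactly what that cross-reference amounts to.
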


\begin{lemma}
\label{psi}We assume that Assumptions \ref{error}-\ref{asymptotics} are
satisfied. Then, under the null in \eqref{null} it holds that%
\begin{equation*}
\sum_{i=1}^{N}\psi_{i,NT}=o_{a.s.}\left( 1\right) .
\end{equation*}

\begin{proof}
Let - for simplicity and with no loss of generality - $K=1$. Recall (\ref%
{alpha-hat}), whence also%
\begin{equation*}
\widehat{\alpha }_{i}=\alpha _{i}-\left( \widehat{\beta }_{i}-\beta
_{i}\right) \overline{f}+\overline{u}_{i}=-\left( \widehat{\beta }_{i}-\beta
_{i}\right) \overline{f}+\overline{u}_{i},
\end{equation*}%
under $\mathbb{H}_{0}$. Hence we have 
\begin{eqnarray*}
\sum_{i=1}^{N}\psi _{i,NT} &=&\frac{T^{1/2}}{\left\vert \hat{s}%
_{NT}\right\vert ^{\nu /2}}\sum_{i=1}^{N}\left\vert \widehat{\alpha }%
_{i}\right\vert ^{\nu /2} \\
&\leq &\frac{T^{1/2}}{\left\vert \hat{s}_{NT}\right\vert ^{\nu /2}}%
\sum_{i=1}^{N}\left\vert \overline{u}_{i}\right\vert ^{\nu /2}+\frac{T^{1/2}%
}{\left\vert \hat{s}_{NT}\right\vert ^{\nu /2}}\sum_{i=1}^{N}\left\vert
\left( \widehat{\beta }_{i}-\beta _{i}\right) \overline{f}\right\vert ^{\nu
/2} \\
&=&\frac{T^{1/2}}{\left\vert \hat{s}_{NT}\right\vert ^{\nu /2}}%
\sum_{i=1}^{N}\left\vert \overline{u}_{i}\right\vert ^{\nu /2}+\frac{T^{1/2}%
}{\left\vert \hat{s}_{NT}\right\vert ^{\nu /2}}\sum_{i=1}^{N}\left\vert 
\frac{\sum_{t=1}^{T}\left( f_{t}\overline{f}-\overline{f}^{2}\right) u_{i,t}%
}{\sum_{t=1}^{T}\left( f_{t}-\overline{f}\right) ^{2}}\right\vert ^{\nu /2},
\end{eqnarray*}%
We know from Lemma \ref{normalising} that there exists a positive, finite
constant $c_{0}$ and a couple of random variables $\left( N_{0},T_{0}\right) 
$ such that, for all $N\geq N_{0}$ and $T\geq T_{0}$%
\begin{equation*}
\frac{T^{1/2}}{\left\vert \hat{s}_{NT}\right\vert ^{\nu /2}}%
\sum_{i=1}^{N}\left\vert \overline{u}_{i}\right\vert ^{\nu /2}\leq
c_{0}T^{1/2}\sum_{i=1}^{N}\left\vert \overline{u}_{i}\right\vert ^{\nu /2};
\end{equation*}%
using Lemma \ref{average-error}, it follows that%
\begin{equation*}
\frac{T^{1/2}}{\left\vert \hat{s}_{NT}\right\vert ^{\nu /2}}%
\sum_{i=1}^{N}\left\vert \overline{u}_{i}\right\vert ^{\nu
/2}=o_{a.s.}\left( NT^{1/2}T^{-\nu /4}\left( \log N\log T\right)
^{2+\epsilon }\right) .
\end{equation*}%
Also%
\begin{eqnarray*}
&&\frac{T^{1/2}}{\left\vert \hat{s}_{NT}\right\vert ^{\nu /2}}%
\sum_{i=1}^{N}\left\vert \frac{\sum_{t=1}^{T}\left( f_{t}\overline{f}-%
\overline{f}^{2}\right) u_{i,t}}{\sum_{t=1}^{T}\left( f_{t}-\overline{f}%
\right) ^{2}}\right\vert ^{\nu /2} \\
&\leq &\frac{T^{1/2}}{\left\vert \hat{s}_{NT}\right\vert ^{\nu /2}}\frac{%
\sum_{i=1}^{N}\left\vert \overline{f}\frac{1}{T}\sum_{t=1}^{T}f_{t}u_{i,t}%
\right\vert ^{\nu /2}}{\left\vert \frac{1}{T}\sum_{t=1}^{T}\left( f_{t}-%
\overline{f}\right) ^{2}\right\vert ^{\nu /2}}+\frac{T^{1/2}}{\left\vert 
\hat{s}_{NT}\right\vert ^{\nu /2}}\frac{\sum_{i=1}^{N}\left\vert \overline{f}%
^{2}\frac{1}{T}\sum_{t=1}^{T}u_{i,t}\right\vert ^{\nu /2}}{\left\vert \frac{1%
}{T}\sum_{t=1}^{T}\left( f_{t}-\overline{f}\right) ^{2}\right\vert ^{\nu /2}}%
.
\end{eqnarray*}%
Lemmas \ref{average-ft}, \ref{denominator-LS} and \ref{normalising} entail
that there exists a positive, finite constant $c_{0}$ and a random variable $%
T_{0}$ such that, for all $T\geq T_{0}$%
\begin{eqnarray*}
\frac{T^{1/2}}{\left\vert \hat{s}_{NT}\right\vert ^{\nu /2}}\frac{%
\sum_{i=1}^{N}\left\vert \overline{f}\frac{1}{T}\sum_{t=1}^{T}f_{t}u_{i,t}%
\right\vert ^{\nu /2}}{\left\vert \frac{1}{T}\sum_{t=1}^{T}\left( f_{t}-%
\overline{f}\right) ^{2}\right\vert ^{\nu /2}} &\leq
&c_{0}T^{1/2}\sum_{i=1}^{N}\left\vert \frac{1}{T}\sum_{t=1}^{T}f_{t}u_{i,t}%
\right\vert ^{\nu /2}, \\
\frac{T^{1/2}}{\left\vert \hat{s}_{NT}\right\vert ^{\nu /2}}\frac{%
\sum_{i=1}^{N}\left\vert \overline{f}^{2}\frac{1}{T}\sum_{t=1}^{T}u_{i,t}%
\right\vert ^{\nu /2}}{\left\vert \frac{1}{T}\sum_{t=1}^{T}\left( f_{t}-%
\overline{f}\right) ^{2}\right\vert ^{\nu /2}} &\leq
&c_{0}T^{1/2}\sum_{i=1}^{N}\left\vert \frac{1}{T}\sum_{t=1}^{T}u_{i,t}\right%
\vert ^{\nu /2}.
\end{eqnarray*}%
We already know from the above that the second term is $o_{a.s.}\left(
1\right) $. Using Lemma \ref{cross-prod}, it finally follows that%
\begin{equation*}
\frac{T^{1/2}}{\left\vert \hat{s}_{NT}\right\vert ^{\nu /2}}\frac{%
\sum_{i=1}^{N}\left\vert \overline{f}\frac{1}{T}\sum_{t=1}^{T}f_{t}u_{i,t}%
\right\vert ^{\nu /2}}{\left\vert \frac{1}{T}\sum_{t=1}^{T}\left( f_{t}-%
\overline{f}\right) ^{2}\right\vert ^{\nu /2}}=o_{a.s.}\left(
NT^{1/2}T^{-\nu /4}\left( \log N\log T\right) ^{2+\epsilon }\right) .
\end{equation*}%
By combining the results above, we receive%
\begin{equation}
\sum_{i=1}^{N}\psi _{i,NT}=o_{a.s.}\left( NT^{1/2}T^{-\nu /4}\left( \log
N\log T\right) ^{2+\epsilon }\right) .  \label{psi_rateas}
\end{equation}%
The desired result now obtains by Assumption \ref{asymptotics}.
\end{proof}
\end{lemma}

\subsection{Lemmas for Section \protect\ref{nontradable}\label{lemma-ff}}

We now report a series of lemmas for the case, discussed in Section \ref%
{nontradable}, of nontradable factors. In order for the notation not to be
overly burdensome, we will assume - unless otherwise stated - $K=1$ whenever
possible and with no loss of generality.

\smallskip

Recall the short-hand notation $\mathbf{S}_{\beta }$ defined in (\ref{s-beta}%
), let $\mathbf{\alpha }=\left( \alpha _{1},...,\alpha _{N}\right) ^{\prime
} $ and 
\begin{equation}
\widehat{\mathbf{S}}_{\beta }=\frac{1}{N}\sum_{i=1}^{N}\left( \widehat{\beta 
}_{i}-\frac{1}{N}\sum_{i=1}^{N}\widehat{\beta }_{i}\right) \left( \widehat{%
\beta }_{i}-\frac{1}{N}\sum_{i=1}^{N}\widehat{\beta }_{i}\right) ^{\prime },
\label{s-beta-hat}
\end{equation}%
and%
\begin{equation}
\mathbf{S}_{f}=\frac{1}{T}\sum_{t=1}^{T}\left( f_{t}-\overline{f}\right)
\left( f_{t}-\overline{f}\right) ^{\prime },  \label{s-f}
\end{equation}%
and note that, after standard passages%
\begin{eqnarray}
\widehat{\lambda } &=&\lambda +\frac{1}{N}\mathbf{S}_{\beta }^{-1}\mathbf{%
\beta }^{\prime }\mathbb{M}_{1_{N}}\mathbf{\alpha +}\overline{v}\mathbf{+}%
\frac{1}{N}\left( \widehat{\mathbf{S}}_{\beta }^{-1}-\mathbf{S}_{\beta
}^{-1}\right) \mathbf{\beta }^{\prime }\mathbb{M}_{1_{N}}\mathbf{\alpha }
\label{lambda-ff-expansion} \\
&&+\frac{1}{N}\widehat{\mathbf{S}}_{\beta }^{-1}\left( \widehat{\mathbf{%
\beta }}\mathbf{-\beta }\right) ^{\prime }\mathbb{M}_{1_{N}}\mathbf{\alpha +}%
\frac{1}{N}\widehat{\mathbf{S}}_{\beta }^{-1}\widehat{\mathbf{\beta }}%
^{\prime }\mathbb{M}_{1_{N}}\left( \widehat{\mathbf{\beta }}\mathbf{-\beta }%
\right) \lambda  \notag \\
&&+\frac{1}{N}\widehat{\mathbf{S}}_{\beta }^{-1}\widehat{\mathbf{\beta }}%
^{\prime }\mathbb{M}_{1_{N}}\left( \mathbf{\beta -}\widehat{\mathbf{\beta }}%
\right) \overline{v}+\frac{1}{N}\widehat{\mathbf{S}}_{\beta }^{-1}\mathbf{%
\beta }^{\prime }\mathbb{M}_{1_{N}}\overline{\mathbf{u}}  \notag \\
&&+\frac{1}{N}\widehat{\mathbf{S}}_{\beta }^{-1}\left( \widehat{\mathbf{%
\beta }}-\mathbf{\beta }\right) ^{\prime }\mathbb{M}_{1_{N}}\overline{%
\mathbf{u}}.  \notag
\end{eqnarray}

Finally, under both $\mathbb{H}_{0}$ and $\mathbb{H}_{A}$ we have%
\begin{equation}
\widehat{\alpha }_{i}^{FM}=\alpha _{i}+\beta _{i}^{\prime }\overline{v}+%
\overline{u}_{i}-\left( \widehat{\beta }_{i}-\beta _{i}\right) ^{\prime
}\lambda -\beta _{i}^{\prime }\left( \widehat{\lambda }-\lambda \right)
-\left( \widehat{\beta }_{i}-\beta _{i}\right) ^{\prime }\left( \widehat{%
\lambda }-\lambda \right) .  \label{alpha-ff}
\end{equation}

\smallskip

\begin{lemma}
\label{beta-hat-ff}We assume that Assumptions \ref{error}-\ref{exogeneity}
and \ref{cov_beta} and \ref{weak_CS_dep} are satisfied. Then it holds that%
\begin{eqnarray}
N^{-1}\left\Vert \widehat{\mathbf{\beta }}-\mathbf{\beta }\right\Vert ^{2}
&=&o_{a.s.}\left( \frac{\left( \log N\log T\right) ^{2+\epsilon }}{T}\right)
,  \label{beta-hat-ff-norm} \\
\left\Vert \widehat{\mathbf{S}}_{\beta }-\mathbf{S}_{\beta }\right\Vert
&=&o_{a.s.}\left( \frac{\left( \log N\log T\right) ^{2+\epsilon }}{T}\right)
+o_{a.s.}\left( \frac{\left( \log N\log T\right) ^{1+\epsilon }}{\sqrt{NT}}%
\right) ,  \label{s-beta-hat-error}
\end{eqnarray}%
for all $\epsilon >0$.

\begin{proof}
Recall that we use $K=1$, and note that%
\begin{equation*}
\widehat{\mathbf{\beta }}-\mathbf{\beta =S}_{f}^{-1}\left( \frac{1}{T}%
\sum_{t=1}^{T}\mathbf{u}_{t}\left( v_{t}-\overline{v}\right) \right) ,
\end{equation*}%
having defined $\mathbf{u}_{t}=\left( u_{1,t},...,u_{N,t}\right) ^{\prime }$
and $\overline{v}=T^{-1}\sum_{t=1}^{T}v_{t}$, whence also%
\begin{equation*}
\widehat{\beta }_{i}-\beta _{i}\mathbf{=S}_{f}^{-1}\left( \frac{1}{T}%
\sum_{t=1}^{T}\left( v_{t}-\overline{v}\right) u_{i,t}\right) .
\end{equation*}%
Note that%
\begin{eqnarray*}
&&N^{-1}\left\Vert \widehat{\mathbf{\beta }}-\mathbf{\beta }\right\Vert ^{2}
\\
&\leq &N^{-1}\sum_{i=1}^{N}\left\vert \widehat{\beta }_{i}-\beta
_{i}\right\vert ^{2}=\mathbf{S}_{f}^{-2}\times N^{-1}\sum_{i=1}^{N}\left( 
\frac{1}{T}\sum_{t=1}^{T}\left( v_{t}-\overline{v}\right) u_{i,t}\right)
^{2},
\end{eqnarray*}%
and%
\begin{eqnarray*}
&&\mathbb{E}\left[ N^{-1}\sum_{i=1}^{N}\left( \frac{1}{T}\sum_{t=1}^{T}%
\left( v_{t}-\overline{v}\right) u_{i,t}\right) ^{2}\right] \\
&=&\frac{1}{N}\sum_{i=1}^{N}\frac{1}{T^{2}}\sum_{t=1}^{T}\sum_{s=1}^{T}%
\mathbb{E}\left[ \left( v_{t}-\overline{v}\right) \left( v_{s}-\overline{v}%
\right) \right] E\left( u_{i,t}u_{i,s}\right) \\
&\leq &\frac{1}{N}\sum_{i=1}^{N}\frac{1}{T^{2}}\sum_{t=1}^{T}\sum_{s=1}^{T}%
\left( \mathbb{E}\left\vert v_{t}-\overline{v}\right\vert ^{2}\right)
\left\vert E\left( u_{i,t}u_{i,s}\right) \right\vert \leq c_{0}T,
\end{eqnarray*}%
now the desired result follows by using Lemma \ref{denominator-LS}. Turning
to (\ref{s-beta-hat-error}), note that 
\begin{equation*}
\widehat{\mathbf{S}}_{\beta }=\frac{1}{N}\sum_{i=1}^{N}\widehat{\beta }%
_{i}^{2}-\left( \frac{1}{N}\sum_{i=1}^{N}\widehat{\beta }_{i}\right)
^{2},\qquad \mathbf{S}_{\beta }=\frac{1}{N}\sum_{i=1}^{N}\beta
_{i}^{2}-\left( \frac{1}{N}\sum_{i=1}^{N}\beta _{i}\right) ^{2}
\end{equation*}%
with 
\begin{equation}
\widehat{\beta }_{i}^{2}=\left( \beta _{i}+\mathbf{S}_{f}^{-1}\frac{1}{T}%
\sum_{t=1}^{T}u_{i,t}(v_{t}-\overline{v})\right) ^{2}.  \label{beta-i-hat}
\end{equation}%
Hence%
\begin{eqnarray*}
\left\Vert \widehat{\mathbf{S}}_{\beta }-\mathbf{S}_{\beta }\right\Vert
&\leq &\left\Vert \frac{1}{N}\sum_{i=1}^{N}\widehat{\beta }_{i}^{2}-\frac{1}{%
N}\sum_{i=1}^{N}\beta _{i}^{2}\right\Vert +\left\Vert \frac{1}{N}%
\sum_{i=1}^{N}\left( \widehat{\beta }_{i}+\beta _{i}\right) \right\Vert
\left\Vert \frac{1}{N}\sum_{i=1}^{N}\left( \widehat{\beta }_{i}-\beta
_{i}\right) \right\Vert \\
&=&I+II.
\end{eqnarray*}%
Using (\ref{beta-i-hat})%
\begin{eqnarray*}
\frac{1}{N}\sum_{i=1}^{N}\widehat{\beta }_{i}^{2} &=&\frac{1}{N}%
\sum_{i=1}^{N}\beta _{i}^{2}+\frac{1}{NT^{2}}\sum_{i=1}^{N}\mathbf{S}%
_{f}^{-2}\left( \sum_{t=1}^{T}u_{i,t}\left( v_{t}-\overline{v}\right)
\right) ^{2}+\mathbf{S}_{f}^{-1}\frac{2}{NT}\sum_{i=1}^{N}\beta
_{i}\sum_{t=1}^{T}u_{i,t}\left( v_{t}-\overline{v}\right) \\
&=&I_{a}+I_{b}+I_{c};
\end{eqnarray*}%
the same passages as above readily yield 
\begin{equation*}
I_{b}=o_{a.s.}\left( \frac{\left( \log N\log T\right) ^{2+\epsilon }}{T}%
\right) ;
\end{equation*}%
also, using Assumptions \ref{cov_beta} and \ref{weak_CS_dep}%
\begin{eqnarray*}
&&\mathbb{E}\left( \frac{1}{NT}\sum_{i=1}^{N}\beta
_{i}\sum_{t=1}^{T}u_{i,t}\left( v_{t}-\overline{v}\right) \right) ^{2} \\
&=&\frac{1}{N^{2}T^{2}}\mathbb{E}\left( \sum_{i=1}^{N}\sum_{j=1}^{N}\beta
_{i}\beta _{j}\sum_{t=1}^{T}\sum_{s=1}^{T}\left( v_{t}-\overline{v}\right)
\left( v_{s}-\overline{v}\right) u_{i,t}u_{j,s}\right) \\
&\leq &\frac{1}{N^{2}T^{2}}\sum_{i=1}^{N}\sum_{j=1}^{N}\sum_{t=1}^{T}%
\sum_{s=1}^{T}\left( \mathbb{E}\left( v_{t}-\overline{v}\right) ^{2}\right)
\left\vert \mathbb{E}\left( u_{i,t}u_{j,s}\right) \right\vert \leq
c_{0}\left( NT\right) ^{-1},
\end{eqnarray*}%
so that%
\begin{equation*}
I_{c}=o_{a.s.}\left( \frac{\left( \log N\log T\right) ^{1+\epsilon }}{\left(
NT\right) ^{1/2}}\right) .
\end{equation*}%
By the same token, turning to $II$ we have 
\begin{equation*}
\frac{1}{N}\sum_{i=1}^{N}\left( \widehat{\beta }_{i}-\beta _{i}\right) =%
\mathbf{S}_{f}^{-1}\frac{1}{N}\sum_{i=1}^{N}\left( \frac{1}{T}%
\sum_{t=1}^{T}\left( v_{t}-\overline{v}\right) u_{i,t}\right) ,
\end{equation*}%
where, by the same logic as above, it follows that%
\begin{equation*}
\left\Vert \frac{1}{N}\sum_{i=1}^{N}\left( \widehat{\beta }_{i}-\beta
_{i}\right) \right\Vert =o_{a.s.}\left( \frac{\left( \log N\log T\right)
^{1+\epsilon }}{\left( NT\right) ^{1/2}}\right) .
\end{equation*}%
The desired result follows from noting%
\begin{equation*}
\left\Vert \frac{1}{N}\sum_{i=1}^{N}\left( \widehat{\beta }_{i}+\beta
_{i}\right) \right\Vert \leq \left\Vert \frac{2}{N}\sum_{i=1}^{N}\beta
_{i}\right\Vert +\left\Vert \frac{1}{N}\sum_{i=1}^{N}\left( \widehat{\beta }%
_{i}-\beta _{i}\right) \right\Vert .
\end{equation*}
\end{proof}
\end{lemma}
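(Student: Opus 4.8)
The plan is to start from the closed-form first-stage OLS representation. Working, as throughout this section, with $K=1$, demeaning the time-series regression removes $\alpha_i$ and $\beta_i\lambda$ and leaves $y_{i,t}-\overline{y}_i=\beta_i(v_t-\overline{v})+(u_{i,t}-\overline{u}_i)$, where $v_t-\overline v=f_t-\overline f$. Hence
\[
\widehat{\beta}_i-\beta_i=\mathbf{S}_f^{-1}\,\frac{1}{T}\sum_{t=1}^{T}\left(v_t-\overline{v}\right)u_{i,t},
\]
with $\mathbf{S}_f$ the demeaned sample second moment of the factor in \eqref{s-f}. By Lemma \ref{denominator-LS}, $\mathbf{S}_f=\mathcal{V}(f)+o_{a.s.}(1)$ with $\mathcal{V}(f)>0$, so $\mathbf{S}_f^{-1}=O_{a.s.}(1)$ and every bound below is driven by the numerator $T^{-1}\sum_t(v_t-\overline v)u_{i,t}$.

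For \eqref{beta-hat-ff-norm}, I would write $N^{-1}\|\widehat{\boldsymbol\beta}-\boldsymbol\beta\|^2=\mathbf{S}_f^{-2}\,N^{-1}\sum_{i=1}^{N}\left(T^{-1}\sum_t(v_t-\overline v)u_{i,t}\right)^2$ and split $v_t=f_t-\mathbb{E}f_t$, so that $T^{-1}\sum_t(v_t-\overline v)u_{i,t}=T^{-1}\sum_t f_t u_{i,t}-\mathbb{E}(f_t)\,T^{-1}\sum_t u_{i,t}+o_{a.s.}(1)\cdot T^{-1}\sum_t u_{i,t}$, the last piece coming from $\overline f=\mathbb{E}f_t+o_{a.s.}(1)$ (Lemma \ref{average-ft}). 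Squaring, summing over $i$, and applying Lemma \ref{cross-prod} with $\gamma=2$ (legitimate since $\nu>4$) to the $f_t u_{i,t}$ terms and Lemma \ref{average-error} with $\gamma=2$ to the $u_{i,t}$ terms gives $N^{-1}\sum_i(T^{-1}\sum_t(v_t-\overline v)u_{i,t})^2=o_{a.s.}(T^{-1}(\log N\log T)^{2+\epsilon})$, which is \eqref{beta-hat-ff-norm}. This step needs only Assumptions \ref{error}--\ref{exogeneity}; the new assumptions enter solely for the sharper rate below.

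For \eqref{s-beta-hat-error}, writing $e_i=\widehat\beta_i-\beta_i$ I would expand $\widehat{\mathbf S}_\beta-\mathbf S_\beta=\big(N^{-1}\sum_i\widehat\beta_i^2-N^{-1}\sum_i\beta_i^2\big)-\big[(N^{-1}\sum_i\widehat\beta_i)^2-(N^{-1}\sum_i\beta_i)^2\big]$ and use $\widehat\beta_i^2-\beta_i^2=2\beta_i e_i+e_i^2$. The term $N^{-1}\sum_i e_i^2$ is exactly the quantity controlled in \eqref{beta-hat-ff-norm}, hence $o_{a.s.}(T^{-1}(\log N\log T)^{2+\epsilon})$. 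The delicate pieces are the cross terms $N^{-1}\sum_i\beta_i e_i=\mathbf S_f^{-1}(NT)^{-1}\sum_i\sum_t\beta_i(v_t-\overline v)u_{i,t}$ and $N^{-1}\sum_i e_i=\mathbf S_f^{-1}(NT)^{-1}\sum_i\sum_t(v_t-\overline v)u_{i,t}$. For each I would compute the second moment: using the mutual independence of $\{v_t\}$ and $\{u_{i,t}\}$ (Assumption \ref{weak_CS_dep}(i)) the expectation factors as $\mathbb{E}[(v_t-\overline v)(v_s-\overline v)]\,\mathbb{E}(u_{i,t}u_{j,s})$, and bounding $\mathbb{E}|v_t-\overline v|^2\le c_0$ together with $\max_i\|\beta_i\|<\infty$ (Assumption \ref{cov_beta}) yields
\[
\mathbb{E}\Big(\tfrac{1}{NT}\sum_i\sum_t\beta_i(v_t-\overline v)u_{i,t}\Big)^2\le \frac{c_0}{N^2T^2}\sum_{i,j,s,t}\big|\mathbb{E}(u_{i,t}u_{j,s})\big|\le \frac{c_1}{NT},
\]
the last inequality being precisely Assumption \ref{weak_CS_dep}(ii). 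An $L_2$ bound of order $(NT)^{-1}$ for these doubly-indexed averages, upgraded to an almost-sure statement via the multi-index maximal inequality underlying Lemma \ref{stout}, gives both cross terms at rate $o_{a.s.}((\log N\log T)^{1+\epsilon}/\sqrt{NT})$. Finally $\|N^{-1}\sum_i(\widehat\beta_i+\beta_i)\|\le 2\|N^{-1}\sum_i\beta_i\|+\|N^{-1}\sum_i e_i\|=O_{a.s.}(1)$, so the mean-square term is of the same $\sqrt{NT}$ order, and collecting the three contributions proves \eqref{s-beta-hat-error}.

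The main obstacle is the cross term $N^{-1}\sum_i\beta_i e_i$ (and its analogue $N^{-1}\sum_i e_i$): one must exploit the averaging over $i$ to obtain the improved $\sqrt{NT}$ rate rather than the crude $\sqrt{T}$ rate that term-by-term bounds would give, and this is exactly where the cross-sectional-cum-serial covariance-summability condition $\sum_{i,j,s,t}|\mathbb{E}(u_{i,t}u_{j,s})|\le c_0 NT$ of Assumption \ref{weak_CS_dep}(ii) is indispensable. The subtlety is that, unlike in \eqref{beta-hat-ff-norm} where each squared summand is handled individually (so only serial dependence matters), here the entire cross-sectional sum sits inside the square, so cross-unit correlations must be summed; the passage from the $L_2$ bound to the a.s. logarithmic rate then requires the two-parameter maximal inequality rather than a one-dimensional Baum--Katz argument.
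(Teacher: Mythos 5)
Your proposal is correct and follows essentially the same route as the paper's proof: the same closed-form expression $\widehat{\beta }_{i}-\beta _{i}=\mathbf{S}_{f}^{-1}\bigl(T^{-1}\sum_{t=1}^{T}(v_{t}-\overline{v})u_{i,t}\bigr)$, the same decomposition of $\widehat{\mathbf{S}}_{\beta }-\mathbf{S}_{\beta }$ into a squared term plus the cross terms $N^{-1}\sum_{i}\beta _{i}(\widehat{\beta }_{i}-\beta _{i})$ and $N^{-1}\sum_{i}(\widehat{\beta }_{i}-\beta _{i})$, and the same key step of using Assumption \ref{weak_CS_dep}\textit{(ii)} to obtain the $(NT)^{-1}$ second-moment bound for those cross terms before upgrading to the almost-sure rate $o_{a.s.}\bigl((\log N\log T)^{1+\epsilon }/\sqrt{NT}\bigr)$. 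The only (immaterial) difference is in (\ref{beta-hat-ff-norm}): you bound $N^{-1}\sum_{i}\bigl(T^{-1}\sum_{t}(v_{t}-\overline{v})u_{i,t}\bigr)^{2}$ by splitting off the $f_{t}u_{i,t}$ and $u_{i,t}$ pieces and invoking Lemmas \ref{cross-prod} and \ref{average-error} with $\gamma =2$ — so that, as you correctly note, only Assumptions \ref{error}--\ref{exogeneity} are needed for that claim — whereas the paper computes the second moment directly via the factorization $\mathbb{E}\bigl[(v_{t}-\overline{v})(v_{s}-\overline{v})\bigr]\mathbb{E}(u_{i,t}u_{i,s})$, which uses the independence in Assumption \ref{weak_CS_dep}\textit{(i)}.
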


\begin{lemma}
\label{beta-sum}We assume that Assumptions \ref{error}-\ref{exogeneity} and %
\ref{cov_beta} and \ref{weak_CS_dep} are satisfied. Then it holds that%
\begin{equation*}
\sum_{i=1}^{N}\left\Vert \widehat{\beta }_{i}-\beta _{i}\right\Vert ^{\nu
/2}=o_{a.s.}\left( NT^{-\nu /4}\left( \log N\log T\right) ^{2+\epsilon
}\right) ,
\end{equation*}%
for all $\epsilon >0$.

\begin{proof}
Recall that 
\begin{equation*}
\widehat{\beta }_{i}-\beta _{i}\mathbf{=S}_{f}^{-1}\left( \frac{1}{T}%
\sum_{t=1}^{T}\left( v_{t}-\overline{v}\right) u_{i,t}\right) .
\end{equation*}%
Then 
\begin{equation*}
\sum_{i=1}^{N}\left\Vert \widehat{\beta }_{i}-\beta _{i}\right\Vert ^{\nu
/2}\leq c_{0}\left( \sum_{i=1}^{N}\left\Vert \frac{1}{T}%
\sum_{t=1}^{T}v_{t}u_{i,t}\right\Vert ^{\nu /2}+\sum_{i=1}^{N}\left\Vert 
\frac{1}{T}\sum_{t=1}^{T}\overline{v}u_{i,t}\right\Vert ^{\nu /2}\right) ,
\end{equation*}%
and we can readily show - by following the arguments above - that 
\begin{equation*}
\mathbb{E}\sum_{i=1}^{N}\left\Vert \frac{1}{T}\sum_{t=1}^{T}v_{t}u_{i,t}%
\right\Vert ^{\nu /2}\leq c_{0}NT^{-\nu /4},
\end{equation*}%
so that%
\begin{equation*}
\sum_{i=1}^{N}\left\Vert \frac{1}{T}\sum_{t=1}^{T}v_{t}u_{i,t}\right\Vert
^{\nu /2}=o_{a.s.}\left( NT^{-\nu /4}\left( \log N\log T\right) ^{2+\epsilon
}\right) .
\end{equation*}%
Recall that $\overline{v}=o_{a.s.}T^{-1/2}\left( \log T\right) ^{1+\epsilon
} $, and note%
\begin{equation*}
\mathbb{E}\sum_{i=1}^{N}\left\Vert \frac{1}{T}\sum_{t=1}^{T}u_{i,t}\right%
\Vert ^{\nu /2}\leq c_{0}NT^{-\nu /4},
\end{equation*}%
so that ultimately we receive the desired result by putting all together.
\end{proof}
\end{lemma}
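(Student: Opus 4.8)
The plan is to work from the closed form $\widehat{\beta}_i-\beta_i=\mathbf{S}_f^{-1}\left(T^{-1}\sum_{t=1}^T(v_t-\overline{v})u_{i,t}\right)$ already recorded in the proof of Lemma~\ref{beta-hat-ff}, peeling off the normalizing matrix and then bounding the cross-sectional sum of the numerators by appealing to the preparatory lemmas rather than redoing any coupling. First I would use Lemma~\ref{denominator-LS} to write $\mathbf{S}_f=\mathcal{V}(f)+o_{a.s.}(1)$; since $\mathcal{V}(f)$ is positive definite by Assumption~\ref{regressor}, continuity of matrix inversion gives a random index $T_0$ beyond which $\|\mathbf{S}_f^{-1}\|\le c_0$ almost surely, so that $\sum_{i=1}^N\|\widehat{\beta}_i-\beta_i\|^{\nu/2}\le c_0^{\nu/2}\sum_{i=1}^N\|T^{-1}\sum_t(v_t-\overline{v})u_{i,t}\|^{\nu/2}$ eventually.

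Next I would decompose the numerator. Writing $v_t=f_t-\mathbb{E}f_t$ and $\overline{u}_i=T^{-1}\sum_t u_{i,t}$ gives $T^{-1}\sum_t(v_t-\overline{v})u_{i,t}=T^{-1}\sum_t f_t u_{i,t}-(\mathbb{E}f_t)\overline{u}_i-\overline{v}\,\overline{u}_i$, and since $\nu/2>2$ the $c_r$-inequality reduces the claim to bounding the three sums $\Sigma_1=\sum_i\|T^{-1}\sum_t f_t u_{i,t}\|^{\nu/2}$, $\Sigma_2=|\mathbb{E}f_t|^{\nu/2}\sum_i|\overline{u}_i|^{\nu/2}$, and $\Sigma_3=|\overline{v}|^{\nu/2}\sum_i|\overline{u}_i|^{\nu/2}$. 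For $\Sigma_1$ I would invoke Lemma~\ref{cross-prod} with $\gamma=\nu/2$ (its hypothesis $\mathbb{E}(f_t u_{i,t})=0$ being exactly Assumption~\ref{exogeneity}), obtaining $\Sigma_1=T^{-\nu/2}\,o_{a.s.}(NT^{\nu/4}(\log N\log T)^{2+\epsilon})=o_{a.s.}(NT^{-\nu/4}(\log N\log T)^{2+\epsilon})$. For $\Sigma_2$ Lemma~\ref{average-error} with $\gamma=\nu/2\le\nu$ yields the same rate for $\sum_i|\overline{u}_i|^{\nu/2}$, and $|\mathbb{E}f_t|$ is a fixed constant. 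For $\Sigma_3$ I would use $\overline{v}=o_{a.s.}(T^{-1/2}(\log T)^{1+\epsilon})$ from Lemma~\ref{average-ft}, so that $|\overline{v}|^{\nu/2}$ supplies an extra $T^{-\nu/4}$ and makes $\Sigma_3$ of strictly smaller order than the target. Combining the three and restoring the $c_0^{\nu/2}$ factor gives the conclusion.

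The hard part is not any single estimate but the uniformity across the cross-section: the moment bounds $\mathbb{E}\|\sum_t f_t u_{i,t}\|^{\nu/2}\le c_0 T^{\nu/4}$ and $\mathbb{E}|\sum_t u_{i,t}|^{\nu/2}\le c_0 T^{\nu/4}$ must hold with a constant $c_0$ independent of $i$ before one can sum over $i$ and feed the total into the Baum--Katz--type Lemma~\ref{stout} to pass from moments to an almost-sure rate (at the cost of the $(\log N\log T)^{2+\epsilon}$ inflation). This uniformity, together with the centering $\mathbb{E}(f_t u_{i,t})=0$ that makes $f_t u_{i,t}$ a centered $\mathcal{L}_{\nu/2}$-decomposable Bernoulli shift of rate $a>1$ amenable to Lemma~\ref{summability}, is already discharged inside Lemmas~\ref{cross-prod} and~\ref{average-error}; so in practice this lemma is a short assembly of those two estimates plus the bound on $\|\mathbf{S}_f^{-1}\|$, and the only genuine care needed is checking that the $\overline{v}$ cross term is negligible.
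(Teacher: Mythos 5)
Your proposal is correct and follows essentially the same route as the paper's proof: peel off $\mathbf{S}_{f}^{-1}$ via Lemma \ref{denominator-LS}, decompose the numerator, and convert uniform-in-$i$ moment bounds into almost-sure rates through Lemma \ref{stout}. The only (immaterial) difference is bookkeeping --- the paper keeps the centered product $v_{t}u_{i,t}$ together and re-derives its moment bound by repeating the argument of Lemma \ref{cross-prod}, whereas you expand $v_{t}=f_{t}-\mathbb{E}f_{t}$ so as to cite Lemmas \ref{cross-prod} and \ref{average-error} verbatim, at the cost of one extra (harmless) term $(\mathbb{E}f_{t})\overline{u}_{i}$.
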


\smallskip

We now report two lemmas on the rates of $\widehat{\lambda }-\lambda $ under
the null and under the alternative.

\begin{lemma}
\label{lambda-h0}We assume that $\mathbb{H}_{0}$ of (\ref{null}) holds, and
that Assumptions \ref{error}-\ref{exogeneity} and \ref{cov_beta} and \ref%
{weak_CS_dep} are satisfied. Then it holds that%
\begin{equation*}
\widehat{\lambda }-\lambda =o_{a.s.}\left( T^{-1/2}\left( \log T\right)
^{1+\epsilon }\right) ,
\end{equation*}%
for all $\epsilon >0$.

\begin{proof}
Recall (\ref{lambda-ff-expansion}). Under $\mathbb{H}_{0}$, it holds that $%
\mathbf{\alpha }=0$ and therefore%
\begin{eqnarray*}
\widehat{\lambda } &=&\lambda \mathbf{+}\overline{v}+\frac{1}{N}\widehat{%
\mathbf{S}}_{\beta }^{-1}\widehat{\mathbf{\beta }}^{\prime }\mathbb{M}%
_{1_{N}}\left( \widehat{\mathbf{\beta }}\mathbf{-\beta }\right) \lambda \\
&&+\frac{1}{N}\widehat{\mathbf{S}}_{\beta }^{-1}\widehat{\mathbf{\beta }}%
^{\prime }\mathbb{M}_{1_{N}}\left( \mathbf{\beta -}\widehat{\mathbf{\beta }}%
\right) \overline{v}+\frac{1}{N}\widehat{\mathbf{S}}_{\beta }^{-1}\mathbf{%
\beta }^{\prime }\mathbb{M}_{1_{N}}\overline{\mathbf{u}} \\
&&+\frac{1}{N}\widehat{\mathbf{S}}_{\beta }^{-1}\left( \widehat{\mathbf{%
\beta }}-\mathbf{\beta }\right) ^{\prime }\mathbb{M}_{1_{N}}\overline{%
\mathbf{u}} \\
&=&\lambda +I+II+III+IV+V.
\end{eqnarray*}%
We begin by noting that, from standard passages, $I=o_{a.s.}\left(
T^{-1/2}\left( \log T\right) ^{1+\epsilon }\right) $. Note that, combining (%
\ref{s-beta-hat-error}) and Assumption \ref{cov_beta}\textit{(ii)}%
\begin{eqnarray*}
\left\Vert \widehat{\mathbf{S}}_{\beta }^{-1}-\mathbf{S}_{\beta
}^{-1}\right\Vert &\leq &\left\Vert \left( \widehat{\mathbf{S}}_{\beta }\pm 
\mathbf{S}_{\beta }\right) ^{-1}\right\Vert \left\Vert \mathbf{S}_{\beta
}^{-1}\right\Vert \left\Vert \widehat{\mathbf{S}}_{\beta }-\mathbf{S}_{\beta
}\right\Vert \\
&=&o_{a.s.}\left( \frac{\left( \log N\log T\right) ^{2+\epsilon }}{T}\right)
+o_{a.s.}\left( \frac{\left( \log N\log T\right) ^{1+\epsilon }}{\sqrt{NT}}%
\right) ,
\end{eqnarray*}%
and therefore we have 
\begin{equation*}
\left\Vert \widehat{\mathbf{S}}_{\beta }^{-1}\right\Vert =O_{a.s.}\left(
1\right) .
\end{equation*}

Consider now%
\begin{eqnarray*}
&&\left\Vert \frac{1}{N}\widehat{\mathbf{S}}_{\beta }^{-1}\widehat{\mathbf{%
\beta }}^{\prime }\mathbb{M}_{1_{N}}\left( \widehat{\mathbf{\beta }}\mathbf{%
-\beta }\right) \lambda \right\Vert \\
&\leq &\frac{1}{N}\left\Vert \widehat{\mathbf{S}}_{\beta }^{-1}\pm \mathbf{S}%
_{\beta }^{-1}\right\Vert \left\Vert \left( \widehat{\mathbf{\beta }}\pm 
\mathbf{\beta }\right) ^{\prime }\mathbb{M}_{1_{N}}\left( \widehat{\mathbf{%
\beta }}\mathbf{-\beta }\right) \right\Vert \left\Vert \lambda \right\Vert .
\end{eqnarray*}%
We have%
\begin{eqnarray*}
&&\left\Vert \left( \widehat{\mathbf{\beta }}\pm \mathbf{\beta }\right)
^{\prime }\mathbb{M}_{1_{N}}\left( \widehat{\mathbf{\beta }}\mathbf{-\beta }%
\right) \right\Vert \\
&\leq &\left\Vert \mathbf{\beta }^{\prime }\mathbb{M}_{1_{N}}\left( \widehat{%
\mathbf{\beta }}\mathbf{-\beta }\right) \right\Vert +\left\Vert \left( 
\widehat{\mathbf{\beta }}-\mathbf{\beta }\right) ^{\prime }\mathbb{M}%
_{1_{N}}\left( \widehat{\mathbf{\beta }}\mathbf{-\beta }\right) \right\Vert
\\
&=&a+b.
\end{eqnarray*}%
Consider $a$, and let $\mathbf{\beta }^{\prime }\mathbb{M}_{1_{N}}=\mathbf{w}%
^{\prime }$ for short; we have%
\begin{equation*}
\frac{1}{N}\mathbf{w}^{\prime }\left( \widehat{\mathbf{\beta }}\mathbf{%
-\beta }\right) =\frac{1}{N}\sum_{i=1}^{N}w_{i}\left( \widehat{\beta }%
_{i}-\beta _{i}\right) =\frac{1}{N}\sum_{i=1}^{N}w_{i}\frac{1}{T\mathbf{S}%
_{f}}\sum_{t=1}^{T}u_{i,t}\left( v_{t}-\overline{v}\right) ,
\end{equation*}%
and therefore%
\begin{eqnarray*}
&&\mathbb{E}\left\vert \frac{1}{N}\mathbf{w}^{\prime }\left( \widehat{%
\mathbf{\beta }}\mathbf{-\beta }\right) \right\vert ^{2} \\
&=&\frac{1}{N^{2}T^{2}\mathbf{S}_{f}^{2}}\sum_{i=1}^{N}%
\sum_{j=1}^{N}w_{i}w_{j}\mathbb{E}\left[ \sum_{t=1}^{T}%
\sum_{s=1}^{T}u_{i,t}u_{i,s}\left( v_{t}-\overline{v}\right) \left( v_{s}-%
\overline{v}\right) \right] \\
&\leq &c_{0}\frac{1}{N^{2}T^{2}\mathbf{S}_{f}^{2}}\mathbb{E}\left[ \left(
v_{t}-\overline{v}\right) ^{2}\right] \sum_{i=1}^{N}\sum_{j=1}^{N}%
\sum_{t=1}^{T}\sum_{s=1}^{T}\left\vert \mathbb{E}\left(
u_{i,t}u_{i,s}\right) \right\vert \leq c_{1}\frac{1}{NT},
\end{eqnarray*}%
so that%
\begin{equation*}
a=o_{a.s.}\left( \frac{\left( \log N\log T\right) ^{1+\epsilon }}{\sqrt{NT}}%
\right) .
\end{equation*}%
Also%
\begin{eqnarray*}
\frac{1}{N}\left\Vert \left( \widehat{\mathbf{\beta }}-\mathbf{\beta }%
\right) ^{\prime }\mathbb{M}_{1_{N}}\left( \widehat{\mathbf{\beta }}\mathbf{%
-\beta }\right) \right\Vert &\leq &\frac{1}{N}\left\Vert \widehat{\mathbf{%
\beta }}-\mathbf{\beta }\right\Vert ^{2}+\frac{1}{N^{2}}\left[
\sum_{i=1}^{N}\left( \widehat{\beta }_{i}-\beta _{i}\right) \right] ^{2} \\
&=&o_{a.s.}\left( \frac{\left( \log N\log T\right) ^{2+\epsilon }}{T}\right)
,
\end{eqnarray*}%
following the proof of Lemma \ref{beta-hat-ff}. Hence, we obtain that 
\begin{equation*}
II=o_{a.s.}\left( \frac{\left( \log N\log T\right) ^{1+\epsilon }}{\sqrt{NT}}%
\right) +o_{a.s.}\left( \frac{\left( \log N\log T\right) ^{2+\epsilon }}{T}%
\right) .
\end{equation*}%
The same logic yields that $III$ is dominated by $II$. Turning to $IV$, it
holds that%
\begin{equation*}
\frac{1}{N}\mathbf{\beta }^{\prime }\mathbb{M}_{1_{N}}\overline{\mathbf{u}}=%
\frac{1}{N}\sum_{i=1}^{N}\beta _{i}\frac{1}{T}\sum_{t=1}^{T}u_{i,t}-\frac{1}{%
N}\sum_{i=1}^{N}\beta _{i}\frac{1}{NT}\sum_{i=1}^{N}\sum_{t=1}^{T}u_{i,t};
\end{equation*}%
we have%
\begin{eqnarray*}
&&\mathbb{E}\left[ \left( \frac{1}{N}\sum_{i=1}^{N}\beta _{i}\frac{1}{T}%
\sum_{t=1}^{T}u_{i,t}\right) ^{2}\right] \\
&=&\frac{1}{N^{2}T^{2}}\sum_{i=1}^{N}\sum_{i=1}^{N}\beta _{i}\beta
_{j}\sum_{t=1}^{T}\sum_{t=1}^{T}\mathbb{E}\left( u_{i,t}u_{j,s}\right) \\
&\leq &\frac{1}{N^{2}T^{2}}\sum_{i=1}^{N}\sum_{i=1}^{N}\sum_{t=1}^{T}%
\sum_{t=1}^{T}\left\vert \mathbb{E}\left( u_{i,t}u_{j,s}\right) \right\vert
\leq c_{0}\frac{1}{NT},
\end{eqnarray*}%
and 
\begin{eqnarray*}
&&\mathbb{E}\left[ \left( \frac{1}{NT}\sum_{i=1}^{N}\sum_{t=1}^{T}u_{i,t}%
\right) ^{2}\right] \\
&\leq &\frac{1}{N^{2}T^{2}}\sum_{i,j=1}^{N}\sum_{t,s=1}^{T}\mathbb{E}\left(
u_{i,t}u_{j,s}\right) \leq c_{0}\frac{1}{NT},
\end{eqnarray*}%
by Assumption \ref{weak_CS_dep}\textit{(ii)}, so that%
\begin{equation}
IV=o_{a.s.}\left( \frac{\left( \log N\log T\right) ^{1+\epsilon }}{\sqrt{NT}}%
\right) .  \label{proof-a-caso}
\end{equation}%
We conclude by only sketching the arguments for $V$; seeing as%
\begin{eqnarray*}
&&\frac{1}{N}\left( \widehat{\mathbf{\beta }}-\mathbf{\beta }\right)
^{\prime }\mathbb{M}_{1_{N}}\overline{\mathbf{u}} \\
&=&\frac{1}{N}\left( \widehat{\mathbf{\beta }}-\mathbf{\beta }\right)
^{\prime }\overline{\mathbf{u}}=\frac{1}{N}\sum_{i=1}^{N}\left( \widehat{%
\beta }_{i}-\beta _{i}\right) \frac{1}{T}\sum_{t=1}^{T}u_{i,t} \\
&\leq &\left( \frac{1}{N}\sum_{i=1}^{N}\left( \widehat{\beta }_{i}-\beta
_{i}\right) ^{2}\right) ^{1/2}\left( \frac{1}{N}\sum_{i=1}^{N}\left( \frac{1%
}{T}\sum_{t=1}^{T}u_{i,t}\right) ^{2}\right) ^{1/2},
\end{eqnarray*}%
and noting%
\begin{equation*}
\mathbb{E}\left[ \frac{1}{N}\sum_{i=1}^{N}\left( \frac{1}{T}%
\sum_{t=1}^{T}u_{i,t}\right) ^{2}\right] \leq c_{0}T^{-1},
\end{equation*}%
using Lemma \ref{beta-hat-ff} it follows that%
\begin{equation*}
V=o_{a.s.}\left( \frac{\left( \log N\log T\right) ^{2+\epsilon }}{T}\right) .
\end{equation*}%
The desired result now follows.
\end{proof}
\end{lemma}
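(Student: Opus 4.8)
The plan is to begin from the expansion (\ref{lambda-ff-expansion}) and exploit the fact that, under $\mathbb{H}_{0}$, the pricing errors vanish ($\mathbf{\alpha}=0$), so that the three terms of (\ref{lambda-ff-expansion}) carrying the factor $\mathbb{M}_{1_{N}}\mathbf{\alpha}$ disappear identically. What remains is $\widehat{\lambda}-\lambda=I+II+III+IV+V$, with $I=\overline{v}$, $II=N^{-1}\widehat{\mathbf{S}}_{\beta}^{-1}\widehat{\mathbf{\beta}}^{\prime}\mathbb{M}_{1_{N}}(\widehat{\mathbf{\beta}}-\mathbf{\beta})\lambda$, $III=N^{-1}\widehat{\mathbf{S}}_{\beta}^{-1}\widehat{\mathbf{\beta}}^{\prime}\mathbb{M}_{1_{N}}(\mathbf{\beta}-\widehat{\mathbf{\beta}})\overline{v}$, $IV=N^{-1}\widehat{\mathbf{S}}_{\beta}^{-1}\mathbf{\beta}^{\prime}\mathbb{M}_{1_{N}}\overline{\mathbf{u}}$, and $V=N^{-1}\widehat{\mathbf{S}}_{\beta}^{-1}(\widehat{\mathbf{\beta}}-\mathbf{\beta})^{\prime}\mathbb{M}_{1_{N}}\overline{\mathbf{u}}$. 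The guiding idea is that $I=\overline{v}$ carries the sharp rate $o_{a.s.}(T^{-1/2}(\log T)^{1+\epsilon})$ and that every other term is of strictly smaller order. Since $\overline{v}=T^{-1}\sum_{t}v_{t}$ is the centered average of a decomposable Bernoulli shift, this rate follows from the moment bound of Lemma \ref{summability} via a strong-law/Baum-Katz argument, exactly as in Lemma \ref{average-ft}.

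A preliminary reduction common to $II$--$V$ is to show $\|\widehat{\mathbf{S}}_{\beta}^{-1}\|=O_{a.s.}(1)$. I would obtain this from the bound (\ref{s-beta-hat-error}) of Lemma \ref{beta-hat-ff}, which gives $\|\widehat{\mathbf{S}}_{\beta}-\mathbf{S}_{\beta}\|=o_{a.s.}(1)$, combined with the positive definiteness of $\mathbf{S}_{\beta}$ furnished by Assumption \ref{cov_beta}(ii); the resolvent sandwich $\|\widehat{\mathbf{S}}_{\beta}^{-1}-\mathbf{S}_{\beta}^{-1}\|\leq\|\widehat{\mathbf{S}}_{\beta}^{-1}\|\,\|\mathbf{S}_{\beta}^{-1}\|\,\|\widehat{\mathbf{S}}_{\beta}-\mathbf{S}_{\beta}\|$ then delivers uniform a.s.\ boundedness. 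With this in hand, $II$ and $III$ reduce to controlling $N^{-1}\widehat{\mathbf{\beta}}^{\prime}\mathbb{M}_{1_{N}}(\widehat{\mathbf{\beta}}-\mathbf{\beta})$, which I would split via $\widehat{\mathbf{\beta}}=\mathbf{\beta}+(\widehat{\mathbf{\beta}}-\mathbf{\beta})$ into a linear piece $a=N^{-1}\mathbf{\beta}^{\prime}\mathbb{M}_{1_{N}}(\widehat{\mathbf{\beta}}-\mathbf{\beta})$ and a quadratic piece $b=N^{-1}(\widehat{\mathbf{\beta}}-\mathbf{\beta})^{\prime}\mathbb{M}_{1_{N}}(\widehat{\mathbf{\beta}}-\mathbf{\beta})$. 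The quadratic piece is dispatched directly by (\ref{beta-hat-ff-norm}), yielding $b=o_{a.s.}(T^{-1}(\log N\log T)^{2+\epsilon})$, and for $III$ the additional factor $\overline{v}=o_{a.s.}(1)$ makes it dominated by $II$.

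The main obstacle, and the only place where cross-sectional dependence must actively be controlled, is the aggregated linear pieces $a$ and $IV$. For $a$, writing $\mathbf{w}^{\prime}=\mathbf{\beta}^{\prime}\mathbb{M}_{1_{N}}$ and substituting $\widehat{\beta}_{i}-\beta_{i}=\mathbf{S}_{f}^{-1}T^{-1}\sum_{t}u_{i,t}(v_{t}-\overline{v})$, I would compute the second moment $\mathbb{E}|N^{-1}\mathbf{w}^{\prime}(\widehat{\mathbf{\beta}}-\mathbf{\beta})|^{2}$; factoring the errors from the factors via the mutual independence in Assumption \ref{weak_CS_dep}(i) leaves the quadruple sum $\sum_{i,j}\sum_{t,s}w_{i}w_{j}\,\mathbb{E}(u_{i,t}u_{j,s})\,\mathbb{E}[(v_{t}-\overline{v})(v_{s}-\overline{v})]$, which the summability restriction in Assumption \ref{weak_CS_dep}(ii) bounds by $c_{0}NT$, producing an $O((NT)^{-1})$ mean-square rate and hence $a=o_{a.s.}((NT)^{-1/2}(\log N\log T)^{1+\epsilon})$ via Lemma \ref{stout}. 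Term $IV$ is treated identically after expanding $N^{-1}\mathbf{\beta}^{\prime}\mathbb{M}_{1_{N}}\overline{\mathbf{u}}=N^{-1}\sum_{i}\beta_{i}\overline{u}_{i}-(N^{-1}\sum_{i}\beta_{i})(N^{-1}\sum_{i}\overline{u}_{i})$, with both averages again mean-square bounded by $c_{0}(NT)^{-1}$ through Assumption \ref{weak_CS_dep}(ii). Finally, $V$ I would handle by Cauchy-Schwarz, $V\lesssim(N^{-1}\|\widehat{\mathbf{\beta}}-\mathbf{\beta}\|^{2})^{1/2}(N^{-1}\sum_{i}\overline{u}_{i}^{2})^{1/2}$, using (\ref{beta-hat-ff-norm}) for the first factor and $\mathbb{E}[N^{-1}\sum_{i}\overline{u}_{i}^{2}]\leq c_{0}T^{-1}$ for the second, to get $V=o_{a.s.}(T^{-1}(\log N\log T)^{2+\epsilon})$. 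Collecting the five bounds, the rate is set by $I=\overline{v}$: since $N$ is polynomially bounded in $T$ (Assumption \ref{asymptotics}, in force via Theorem \ref{asy-max}), one has $\log N=O(\log T)$, so the $(NT)^{-1/2}$ and $T^{-1}$ contributions — even carrying their $\log N\log T$ factors — are $o(T^{-1/2}(\log T)^{1+\epsilon})$, and the stated rate follows.
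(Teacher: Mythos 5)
Your proposal is correct and follows essentially the same route as the paper's proof: the identical five-term decomposition of (\ref{lambda-ff-expansion}) under $\mathbb{H}_{0}$, the same resolvent argument giving $\left\Vert \widehat{\mathbf{S}}_{\beta }^{-1}\right\Vert =O_{a.s.}\left( 1\right) $ from (\ref{s-beta-hat-error}) and Assumption \ref{cov_beta}\textit{(ii)}, the same linear/quadratic split of $\widehat{\mathbf{\beta }}^{\prime }\mathbb{M}_{1_{N}}\left( \widehat{\mathbf{\beta }}-\mathbf{\beta }\right) $ with the mean-square bounds driven by Assumption \ref{weak_CS_dep} and converted to a.s.\ rates via Lemma \ref{stout}, and the same Cauchy--Schwarz treatment of $V$. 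If anything, your closing step is slightly more careful than the paper's, which ends with \textquotedblleft the desired result now follows\textquotedblright\ without spelling out, as you do, that the $\left( \log N\log T\right) $-laden remainders are dominated by the $o_{a.s.}\left( T^{-1/2}\left( \log T\right) ^{1+\epsilon }\right) $ rate of $\overline{v}$ once $\log N=O\left( \log T\right) $.
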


\begin{lemma}
\label{lambda-hA}We assume that $\mathbb{H}_{A}$ of (\ref{hA}) holds, and
that Assumptions \ref{error}-\ref{exogeneity} and \ref{cov_beta} and \ref%
{weak_CS_dep} are satisfied. Then it holds that%
\begin{equation*}
\widehat{\lambda }-\lambda =\frac{1}{N}\mathbf{S}_{\beta }^{-1}\mathbf{\beta 
}^{\prime }\mathbb{M}_{1_{N}}\mathbf{\alpha }+o_{a.s.}\left( T^{-1/2}\left(
\log T\right) ^{1+\epsilon }\right) ,
\end{equation*}%
for all $\epsilon >0$.

\begin{proof}
Considering 
\begin{eqnarray*}
\widehat{\lambda } &=&\lambda +\frac{1}{N}\mathbf{S}_{\beta }^{-1}\mathbf{%
\beta }^{\prime }\mathbb{M}_{1_{N}}\mathbf{\alpha +}\overline{v}\mathbf{+}%
\frac{1}{N}\left( \widehat{\mathbf{S}}_{\beta }^{-1}-\mathbf{S}_{\beta
}^{-1}\right) \mathbf{\beta }^{\prime }\mathbb{M}_{1_{N}}\mathbf{\alpha } \\
&&+\frac{1}{N}\widehat{\mathbf{S}}_{\beta }^{-1}\left( \widehat{\mathbf{%
\beta }}\mathbf{-\beta }\right) ^{\prime }\mathbb{M}_{1_{N}}\mathbf{\alpha +}%
\frac{1}{N}\widehat{\mathbf{S}}_{\beta }^{-1}\widehat{\mathbf{\beta }}%
^{\prime }\mathbb{M}_{1_{N}}\left( \widehat{\mathbf{\beta }}\mathbf{-\beta }%
\right) \lambda \\
&&+\frac{1}{N}\widehat{\mathbf{S}}_{\beta }^{-1}\widehat{\mathbf{\beta }}%
^{\prime }\mathbb{M}_{1_{N}}\left( \mathbf{\beta -}\widehat{\mathbf{\beta }}%
\right) \overline{v}+\frac{1}{N}\widehat{\mathbf{S}}_{\beta }^{-1}\mathbf{%
\beta }^{\prime }\mathbb{M}_{1_{N}}\overline{\mathbf{u}} \\
&&+\frac{1}{N}\widehat{\mathbf{S}}_{\beta }^{-1}\left( \widehat{\mathbf{%
\beta }}-\mathbf{\beta }\right) ^{\prime }\mathbb{M}_{1_{N}}\overline{%
\mathbf{u}} \\
&=&\lambda +I+II+III+IV+V+VI+VII+VIII,
\end{eqnarray*}%
the only terms that require some analysis are $III$ and $IV$. However, we
already know that%
\begin{equation*}
\left\Vert \widehat{\mathbf{S}}_{\beta }^{-1}-\mathbf{S}_{\beta
}^{-1}\right\Vert =o_{a.s.}\left( \frac{\left( \log N\log T\right)
^{2+\epsilon }}{T}\right) +o_{a.s.}\left( \frac{\left( \log N\log T\right)
^{1+\epsilon }}{\sqrt{NT}}\right) ;
\end{equation*}%
further%
\begin{eqnarray*}
&&\frac{1}{N}\left( \widehat{\mathbf{\beta }}\mathbf{-\beta }\right)
^{\prime }\mathbb{M}_{1_{N}}\mathbf{\alpha } \\
&\mathbf{=}&\frac{1}{N}\sum_{i=1}^{N}\left( \alpha _{i}-\frac{1}{N}%
\sum_{i=1}^{N}\alpha _{i}\right) \left( \widehat{\beta }_{i}-\beta
_{i}\right) =\frac{1}{N}\sum_{i=1}^{N}\widetilde{w}_{i}\frac{1}{T}%
\sum_{t=1}^{T}\left( v_{t}-\overline{v}\right) u_{i,t},
\end{eqnarray*}%
which can be shown to be $o_{a.s.}\left( \left( NT\right) ^{-1/2}\left( \log
N\log T\right) ^{1+\epsilon }\right) $.
\end{proof}
\end{lemma}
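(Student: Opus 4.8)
The plan is to begin from the exact expansion of $\widehat{\lambda}-\lambda$ recorded in (\ref{lambda-ff-expansion}) and to show that, under $\mathbb{H}_{A}$, the single term $\frac{1}{N}\mathbf{S}_{\beta}^{-1}\mathbf{\beta}^{\prime}\mathbb{M}_{1_{N}}\mathbf{\alpha}$ is retained while each of the remaining seven terms is $o_{a.s.}(T^{-1/2}(\log T)^{1+\epsilon})$. The key structural observation is that these remainders fall into two groups: those not involving $\mathbf{\alpha}$ (the term $\overline{v}$, the two terms carrying $\widehat{\mathbf{\beta}}-\mathbf{\beta}$ against $\lambda$ and $\overline{v}$, and the two error-averaging terms in $\overline{\mathbf{u}}$), and the two new terms $\frac{1}{N}(\widehat{\mathbf{S}}_{\beta}^{-1}-\mathbf{S}_{\beta}^{-1})\mathbf{\beta}^{\prime}\mathbb{M}_{1_{N}}\mathbf{\alpha}$ and $\frac{1}{N}\widehat{\mathbf{S}}_{\beta}^{-1}(\widehat{\mathbf{\beta}}-\mathbf{\beta})^{\prime}\mathbb{M}_{1_{N}}\mathbf{\alpha}$. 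The first group is dealt with verbatim as in the proof of Lemma \ref{lambda-h0}, since the bounds obtained there for the analogous terms never exploited $\mathbf{\alpha}=0$; I would therefore simply reuse those estimates.

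The substantive content lies in the two $\mathbf{\alpha}$-dependent terms. For $\frac{1}{N}(\widehat{\mathbf{S}}_{\beta}^{-1}-\mathbf{S}_{\beta}^{-1})\mathbf{\beta}^{\prime}\mathbb{M}_{1_{N}}\mathbf{\alpha}$ I would factor out the inverse-matrix discrepancy $\|\widehat{\mathbf{S}}_{\beta}^{-1}-\mathbf{S}_{\beta}^{-1}\|$, which is $o_{a.s.}((\log N\log T)^{2+\epsilon}/T)+o_{a.s.}((\log N\log T)^{1+\epsilon}/\sqrt{NT})$ by the bound derived in Lemma \ref{lambda-h0} from Lemma \ref{beta-hat-ff} and Assumption \ref{cov_beta}(ii), and bound the remaining scalar $\frac{1}{N}\mathbf{\beta}^{\prime}\mathbb{M}_{1_{N}}\mathbf{\alpha}$ by a constant using $\max_{i}\|\beta_{i}\|<\infty$ together with boundedness of the centred alphas; the product is then of smaller order than $T^{-1/2}(\log T)^{1+\epsilon}$. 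For $\frac{1}{N}\widehat{\mathbf{S}}_{\beta}^{-1}(\widehat{\mathbf{\beta}}-\mathbf{\beta})^{\prime}\mathbb{M}_{1_{N}}\mathbf{\alpha}$ I would use $\|\widehat{\mathbf{S}}_{\beta}^{-1}\|=O_{a.s.}(1)$ and, writing $\widetilde{w}_{i}=\alpha_{i}-N^{-1}\sum_{j}\alpha_{j}$ and substituting $\widehat{\beta}_{i}-\beta_{i}=\mathbf{S}_{f}^{-1}T^{-1}\sum_{t}(v_{t}-\overline{v})u_{i,t}$, reduce to controlling $\frac{1}{N}\sum_{i=1}^{N}\widetilde{w}_{i}\,T^{-1}\sum_{t=1}^{T}(v_{t}-\overline{v})u_{i,t}$. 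The plan is to compute its conditional second moment, expand the resulting quadruple sum over $(i,j,t,s)$, pull out $\mathbb{E}(v_{t}-\overline{v})^{2}=O(1)$, and invoke the weak cross-sectional dependence bound $\sum_{i,j}\sum_{t,s}|\mathbb{E}(u_{i,t}u_{j,s})|\le c_{0}NT$ of Assumption \ref{weak_CS_dep}(ii) to obtain an $O((NT)^{-1})$ mean-square rate; Lemma \ref{stout} then upgrades this to the almost-sure rate $o_{a.s.}((NT)^{-1/2}(\log N\log T)^{1+\epsilon})$.

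Collecting the leading term and all remainders gives the claim, once one notes that $(NT)^{-1/2}(\log N\log T)^{1+\epsilon}$ and the two inverse-matrix rates are all $o_{a.s.}(T^{-1/2}(\log T)^{1+\epsilon})$ because $N\to\infty$. I expect the main obstacle to be precisely the term $\frac{1}{N}(\widehat{\mathbf{\beta}}-\mathbf{\beta})^{\prime}\mathbb{M}_{1_{N}}\mathbf{\alpha}$: unlike in the null case, the weights $\widetilde{w}_{i}$ are not centred shocks but fixed (possibly non-sparse) quantities, so the cross-sectional aggregation could in principle inflate the variance to order $N^{2}T$ rather than $NT$. The argument hinges on Assumption \ref{weak_CS_dep}(ii) to keep the double sum of $|\mathbb{E}(u_{i,t}u_{j,s})|$ at order $NT$, which is exactly what confines the weighted average to the $(NT)^{-1/2}$ scale and renders it negligible against the retained term $\frac{1}{N}\mathbf{S}_{\beta}^{-1}\mathbf{\beta}^{\prime}\mathbb{M}_{1_{N}}\mathbf{\alpha}$.
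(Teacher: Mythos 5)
Your proposal is correct and follows essentially the same route as the paper: you start from the expansion (\ref{lambda-ff-expansion}), reuse the Lemma \ref{lambda-h0} bounds for all terms not involving $\mathbf{\alpha}$, control the $\left(\widehat{\mathbf{S}}_{\beta}^{-1}-\mathbf{S}_{\beta}^{-1}\right)$ term via the rate already established there, and handle $\frac{1}{N}\widehat{\mathbf{S}}_{\beta}^{-1}\left(\widehat{\mathbf{\beta}}-\mathbf{\beta}\right)^{\prime}\mathbb{M}_{1_{N}}\mathbf{\alpha}$ by reducing it to the weighted average $\frac{1}{N}\sum_{i=1}^{N}\widetilde{w}_{i}\,T^{-1}\sum_{t=1}^{T}\left(v_{t}-\overline{v}\right)u_{i,t}$ and bounding its second moment through Assumption \ref{weak_CS_dep}\textit{(ii)} before upgrading to an almost-sure rate with Lemma \ref{stout}. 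This second-moment calculation is exactly the step the paper leaves implicit in its phrase ``which can be shown to be $o_{a.s.}\left(\left(NT\right)^{-1/2}\left(\log N\log T\right)^{1+\epsilon}\right)$,'' so your write-up supplies the omitted details rather than a different argument.
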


\begin{lemma}
\label{normalising_FM} We assume that Assumptions \ref{error}-\ref%
{exogeneity}, \ref{cov_beta} and \ref{weak_CS_dep} are satisfied. Then it
holds that%
\begin{eqnarray*}
\liminf_{\min \left\{ N,T\right\} \rightarrow \infty }\frac{1}{NT}%
\sum_{i=1}^{N}\sum_{t=1}^{T}\left( \widehat{u}_{i,t}^{FM}\right) ^{2} &>&0,
\\
\limsup_{\min \left\{ N,T\right\} \rightarrow \infty }\frac{1}{NT}%
\sum_{i=1}^{N}\sum_{t=1}^{T}\left( \widehat{u}_{i,t}^{FM}\right) ^{2}
&<&\infty .
\end{eqnarray*}

\begin{proof}
The proof is very similat to that of Lemma \ref{normalising}, and we report
only the main passages to save space. It holds that 
\begin{eqnarray*}
&&\frac{1}{NT}\sum_{i=1}^{N}\sum_{t=1}^{T}\left( \widehat{u}%
_{i,t}^{FM}\right) ^{2} \\
&=&\frac{1}{NT}\sum_{i=1}^{N}\sum_{t=1}^{T}u_{i,t}^{2}+\frac{1}{N}%
\sum_{i=1}^{N}\left( \widehat{\alpha }_{i}^{FM}-\alpha _{i}\right) ^{2}+%
\frac{1}{NT}\sum_{i=1}^{N}\sum_{t=1}^{T}\left( \widehat{\beta }_{i}-\beta
_{i}\right) ^{2}f_{t}^{2} \\
&&+\frac{2}{NT}\sum_{i=1}^{N}\sum_{t=1}^{T}\left( \widehat{\alpha }%
_{i}^{FM}-\alpha _{i}\right) u_{i,t}+\frac{2}{NT}\sum_{i=1}^{N}%
\sum_{t=1}^{T}\left( \widehat{\beta }_{i}-\beta _{i}\right) f_{t}u_{i,t} \\
&&+\frac{2}{NT}\sum_{i=1}^{N}\sum_{t=1}^{T}\left( \widehat{\alpha }%
_{i}^{FM}-\alpha _{i}\right) \left( \widehat{\beta }_{i}-\beta _{i}\right)
f_{t} \\
&=&I+II+III+IV+V+VI.
\end{eqnarray*}%
The rates of terms $I$, $III$ and $V$ are the same as in the proof of Lemma %
\ref{normalising}. Note that $II\geq 0$; we derive an upper bound for it
using (\ref{alpha-ff}). Noting that%
\begin{eqnarray*}
&&\frac{1}{N}\sum_{i=1}^{N}\left( \widehat{\alpha }_{i}^{FM}-\alpha
_{i}\right) ^{2} \\
&\leq &c_{0}\left[ \frac{1}{N}\sum_{i=1}^{N}\beta _{i}^{2}\overline{v}^{2}+%
\frac{1}{N}\sum_{i=1}^{N}\overline{u}_{i}^{2}+\left( \frac{1}{N}%
\sum_{i=1}^{N}\left( \widehat{\beta }_{i}-\beta _{i}\right) ^{2}\right)
\lambda ^{2}\right. \\
&&\left. \left( \frac{1}{N}\sum_{i=1}^{N}\beta _{i}^{2}\right) \left( 
\widehat{\lambda }-\lambda \right) ^{2}+\left( \frac{1}{N}%
\sum_{i=1}^{N}\left( \widehat{\beta }_{i}-\beta _{i}\right) ^{2}\right)
\left( \widehat{\lambda }-\lambda \right) ^{2}\right] ,
\end{eqnarray*}%
the results above readily entail that $II=o_{a.s.}\left( 1\right) $.
Similarly%
\begin{eqnarray*}
IV &=&\frac{2}{NT}\left( \sum_{i=1}^{N}\sum_{t=1}^{T}\beta
_{i}u_{i,t}\right) \overline{v}+\frac{2}{N}\sum_{i=1}^{N}\overline{u}%
_{i}\left( \frac{1}{T}\sum_{t=1}^{T}u_{i,t}\right) -\frac{2}{NT}\left(
\sum_{i=1}^{N}\left( \widehat{\beta }_{i}-\beta _{i}\right)
\sum_{t=1}^{T}u_{i,t}\right) \lambda \\
&&-\frac{2}{NT}\left( \sum_{i=1}^{N}\sum_{t=1}^{T}\beta _{i}u_{i,t}\right)
\left( \widehat{\lambda }-\lambda \right) -\frac{2}{NT}\left(
\sum_{i=1}^{N}\left( \widehat{\beta }_{i}-\beta _{i}\right) \left(
\sum_{t=1}^{T}u_{i,t}\right) \right) \left( \widehat{\lambda }-\lambda
\right) \\
&=&IV_{a}+IV_{b}+IV_{c}+IV_{d}+IV_{e}.
\end{eqnarray*}%
Since it is immediate to see that%
\begin{equation*}
\mathbb{E}\left( \sum_{i=1}^{N}\sum_{t=1}^{T}\beta _{i}u_{i,t}\right)
^{2}\leq c_{0}NT,
\end{equation*}%
we have 
\begin{equation*}
IV_{a}=o_{a.s.}\left( \frac{\left( \log N\log ^{2}T\right) ^{1+\epsilon }}{%
N^{1/2}T}\right) ;
\end{equation*}%
by the same arguments%
\begin{equation*}
\mathbb{E}\left( \frac{2}{N}\sum_{i=1}^{N}\overline{u}_{i}\left( \frac{1}{T}%
\sum_{t=1}^{T}u_{i,t}\right) \right) =\frac{2}{N}\sum_{i=1}^{N}\mathbb{E}%
\left( \overline{u}_{i}^{2}\right) \leq c_{0}T^{-1},
\end{equation*}%
and therefore%
\begin{equation*}
IV_{b}=o_{a.s.}\left( \frac{\left( \log T\right) ^{2+\epsilon }}{T}\right) .
\end{equation*}%
The other terms can be shown to be dominated by using the arguments above.
Noting that%
\begin{equation*}
\frac{1}{N}\sum_{i=1}^{N}\left( \widehat{\beta }_{i}-\beta _{i}\right)
\left( \frac{1}{T}\sum_{t=1}^{T}u_{i,t}\right) \leq \left( \frac{1}{N}%
\sum_{i=1}^{N}\left( \widehat{\beta }_{i}-\beta _{i}\right) ^{2}\right)
^{1/2}\left( \frac{1}{N}\sum_{i=1}^{N}\left( \frac{1}{T}%
\sum_{t=1}^{T}u_{i,t}\right) ^{2}\right) ^{1/2},
\end{equation*}%
it is easy to see that $VI=o_{a.s.}\left( 1\right) $. The desired result now
follows.
\end{proof}
\end{lemma}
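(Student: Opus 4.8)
The plan is to mimic the proof of Lemma \ref{normalising}, exploiting the fact that the Fama--MacBeth residual admits the representation $\widehat{u}_{i,t}^{FM}=u_{i,t}-\left( \widehat{\alpha }_{i}^{FM}-\alpha _{i}\right) -\left( \widehat{\beta }_{i}-\beta _{i}\right) ^{\prime }f_{t}$, so that squaring and averaging yields the six-term decomposition
\begin{equation*}
\frac{1}{NT}\sum_{i=1}^{N}\sum_{t=1}^{T}\left( \widehat{u}_{i,t}^{FM}\right) ^{2}=I+II+III+IV+V+VI,
\end{equation*}
where $I=\frac{1}{NT}\sum_{i,t}u_{i,t}^{2}$, $II=\frac{1}{N}\sum_{i}\left( \widehat{\alpha }_{i}^{FM}-\alpha _{i}\right) ^{2}$, $III=\frac{1}{NT}\sum_{i,t}\left( \widehat{\beta }_{i}-\beta _{i}\right) ^{2}f_{t}^{2}$, and $IV,V,VI$ are the three cross terms. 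The whole average is then sandwiched between two positive, finite constants by showing that $I$ converges to a strictly positive, finite limit while $II$ through $VI$ are all $o_{a.s.}\left( 1\right) $.

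First I would dispatch the terms that are identical to Lemma \ref{normalising}. For $I$, I split $u_{i,t}^{2}=\mathbb{E}u_{i,t}^{2}+\left( u_{i,t}^{2}-\mathbb{E}u_{i,t}^{2}\right) $; the mean part is bounded away from zero and infinity by Assumption \ref{error} (which supplies $\min_{i}\mathbb{E}u_{i,t}^{2}>0$ and $\nu >4$ moments), and the centred part is $o_{a.s.}\left( 1\right) $ because $u_{i,t}^{2}-\mathbb{E}u_{i,t}^{2}$ is a centred $\mathcal{L}_{\nu /2}$-decomposable Bernoulli shift, so that Lemma \ref{summability} gives the moment bound and Lemma \ref{stout} converts it into almost sure convergence. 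Term $III$ is handled exactly as in Lemma \ref{normalising}, combining the boundedness of $T^{-1}\sum_{t}f_{t}^{2}$ from Lemmas \ref{average-ft}--\ref{denominator-LS} with the rate $N^{-1}\sum_{i}\left\Vert \widehat{\beta }_{i}-\beta _{i}\right\Vert ^{2}=o_{a.s.}\left( 1\right) $ of Lemma \ref{beta-hat-ff}, and term $V$ is controlled directly by Lemma \ref{cross-prod}.

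The genuinely new work is in term $II$ and the cross terms feeding off it. Here I would insert the expansion \eqref{alpha-ff}, $\widehat{\alpha }_{i}^{FM}-\alpha _{i}=\beta _{i}^{\prime }\overline{v}+\overline{u}_{i}-\left( \widehat{\beta }_{i}-\beta _{i}\right) ^{\prime }\lambda -\beta _{i}^{\prime }\left( \widehat{\lambda }-\lambda \right) -\left( \widehat{\beta }_{i}-\beta _{i}\right) ^{\prime }\left( \widehat{\lambda }-\lambda \right) $, and bound $\left( \widehat{\alpha }_{i}^{FM}-\alpha _{i}\right) ^{2}$ by a constant times the sum of the five squared pieces. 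Each averaged piece then vanishes: $\beta _{i}$ is bounded by Assumption \ref{cov_beta}, $\overline{v}=o_{a.s.}( T^{-1/2}( \log T) ^{1+\epsilon }) $, the average of $\overline{u}_{i}^{2}$ is $o_{a.s.}\left( 1\right) $ by Lemma \ref{average-error} (taking $\gamma =2$), $N^{-1}\sum_{i}( \widehat{\beta }_{i}-\beta _{i}) ^{2}$ is controlled by Lemma \ref{beta-hat-ff}, and the rate of $\widehat{\lambda }-\lambda $ is supplied by Lemma \ref{lambda-h0} under $\mathbb{H}_{0}$ and by Lemma \ref{lambda-hA} under $\mathbb{H}_{A}$ (in the latter case the leading $N^{-1}\mathbf{S}_{\beta }^{-1}\mathbf{\beta }^{\prime }\mathbb{M}_{1_{N}}\mathbf{\alpha }$ term is itself $O_{a.s.}\left( 1\right) $ and enters only through a vanishing product). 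This gives $II=o_{a.s.}\left( 1\right) $; the cross terms $IV=\frac{2}{NT}\sum_{i,t}\left( \widehat{\alpha }_{i}^{FM}-\alpha _{i}\right) u_{i,t}$ and $VI$ are then split along the same five pieces and bounded by Cauchy--Schwarz, each factor having already been shown to vanish, with the weak cross-sectional dependence bound of Assumption \ref{weak_CS_dep} (namely $\mathbb{E}( \sum_{i,t}\beta _{i}u_{i,t}) ^{2}\leq c_{0}NT$) doing the work wherever sums of $u_{i,t}$ are weighted by loadings.

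The main obstacle, and what distinguishes this from the OLS case, is term $II$: the Fama--MacBeth intercept inherits estimation error both from the first-stage $\widehat{\beta }_{i}$ (an errors-in-variables effect) and from the second-stage cross-sectional regression for $\widehat{\lambda }$. The clean route is therefore to establish the rate of $\widehat{\lambda }-\lambda $ first --- which is exactly the content of Lemmas \ref{lambda-h0}--\ref{lambda-hA} and itself rests on controlling $\widehat{\mathbf{S}}_{\beta }^{-1}$ through Lemma \ref{beta-hat-ff} together with Assumptions \ref{cov_beta} and \ref{weak_CS_dep} --- and only then to feed these rates into \eqref{alpha-ff}. The delicate book-keeping is to keep every averaged cross term $o_{a.s.}\left( 1\right) $ uniformly while assuming only finitely many moments and weak, rather than zero, cross-sectional dependence; Hölder's inequality and the Baum--Katz-type almost sure convergence of Lemma \ref{stout} carry this through, just as in Lemma \ref{normalising}.
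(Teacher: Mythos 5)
Your proposal is correct and takes essentially the same route as the paper's own proof: the identical six-term decomposition, with $I$, $III$, $V$ dispatched as in Lemma \ref{normalising}, term $II$ bounded by inserting the expansion (\ref{alpha-ff}) and feeding in Lemmas \ref{beta-hat-ff}, \ref{lambda-h0}/\ref{lambda-hA} and \ref{average-error}, and the cross terms controlled via the Assumption \ref{weak_CS_dep} bound $\mathbb{E}\left( \sum_{i=1}^{N}\sum_{t=1}^{T}\beta _{i}u_{i,t}\right) ^{2}\leq c_{0}NT$ together with Cauchy--Schwarz. The only cosmetic difference is that the paper expands $IV$ explicitly into five sub-terms ($IV_{a}$--$IV_{e}$) and bounds their moments one by one, which is the same calculation you compress into a single Cauchy--Schwarz step.
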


\begin{lemma}
\label{psi-fm} We assume that Assumptions \ref{error}-\ref{exogeneity}, \ref%
{cov_beta} and \ref{weak_CS_dep} are satisfied. Then it holds that, under $%
\mathbb{H}_{0}$%
\begin{equation*}
\sum_{i=1}^{N}\psi _{i,NT}^{FM}=o_{a.s.}\left( 1\right) .
\end{equation*}

\begin{proof}
Consider the case $K=1$, and recall (\ref{alpha-ff}), which under $\mathbb{H}%
_{0}$ becomes%
\begin{equation*}
\widehat{\alpha }_{i}^{FM}=\beta _{i}\overline{v}+\overline{u}_{i}-\left( 
\widehat{\beta }_{i}-\beta _{i}\right) \lambda -\beta _{i}\left( \widehat{%
\lambda }-\lambda \right) -\left( \widehat{\beta }_{i}-\beta _{i}\right)
\left( \widehat{\lambda }-\lambda \right) .
\end{equation*}%
Hence we have%
\begin{eqnarray*}
&&\sum_{i=1}^{N}\psi _{i,NT}^{FM} \\
&=&\frac{T^{1/2}}{\left\vert \widehat{s}_{NT}^{FM}\right\vert ^{\nu /2}}%
\sum_{i=1}^{N}\left\vert \widehat{\alpha }_{i}^{FM}\right\vert ^{\nu /2} \\
&\leq &c_{0}\left[ \frac{T^{1/2}}{\left\vert \widehat{s}_{NT}^{FM}\right%
\vert ^{\nu /2}}\left( \sum_{i=1}^{N}\left\vert \beta _{i}\right\vert ^{\nu
/2}\right) \left\vert \overline{v}\right\vert ^{\nu /2}+\frac{T^{1/2}}{%
\left\vert \widehat{s}_{NT}^{FM}\right\vert ^{\nu /2}}\sum_{i=1}^{N}\left%
\vert \overline{u}_{i}\right\vert ^{\nu /2}+\frac{T^{1/2}}{\left\vert 
\widehat{s}_{NT}^{FM}\right\vert ^{\nu /2}}\left( \sum_{i=1}^{N}\left\vert 
\widehat{\beta }_{i}-\beta _{i}\right\vert ^{\nu /2}\right) \left\vert
\lambda \right\vert ^{\nu /2}\right. \\
&&\left. +\frac{T^{1/2}}{\left\vert \widehat{s}_{NT}^{FM}\right\vert ^{\nu
/2}}\left( \sum_{i=1}^{N}\left\vert \beta _{i}\right\vert ^{\nu /2}\right)
\left\vert \widehat{\lambda }-\lambda \right\vert ^{\nu /2}+\frac{T^{1/2}}{%
\left\vert \widehat{s}_{NT}^{FM}\right\vert ^{\nu /2}}\left(
\sum_{i=1}^{N}\left\vert \widehat{\beta }_{i}-\beta _{i}\right\vert ^{\nu
/2}\right) \left\vert \widehat{\lambda }-\lambda \right\vert ^{\nu /2}\right]
\\
&=&I+II+III+IV+V.
\end{eqnarray*}%
By Assumption \ref{cov_beta}\textit{(i)}, $\sum_{i=1}^{N}\left\vert \beta
_{i}\right\vert ^{\nu /2}=O(N) $, so that 
\begin{equation*}
I=o_{a.s.}\left(NT^{1/2-\nu /4}\left( \log T\right) ^{\left( 1+\epsilon
\right) \nu /2}\right) =o_{a.s.}\left( 1\right) ,
\end{equation*}%
seeing as $\nu \geq 4$. Also, we have already shown in the proof of Lemma \ref%
{psi} that $II=o_{a.s.}\left( 1\right) $. Moreover, Lemma \ref{beta-sum}
yields%
\begin{equation*}
\frac{T^{1/2}}{\left\vert \widehat{s}_{NT}^{FM}\right\vert ^{\nu /2}}\left(
\sum_{i=1}^{N}\left\vert \widehat{\beta }_{i}-\beta _{i}\right\vert ^{\nu
/2}\right) \left\vert \lambda \right\vert ^{\nu /2}=o_{a.s.}\left(
NT^{1/2}T^{-\nu /4}\left( \log N\log T\right) ^{2+\epsilon }\right)
=o_{a.s.}\left( 1\right) .
\end{equation*}%
Finally, by Assumption \ref{cov_beta}\textit{(i)} and Lemma \ref{lambda-h0}
entail%
\begin{equation*}
IV=o_{a.s.}\left( NT^{1/2-\nu /4}\left( \log T\right) ^{\left( 1+\epsilon
\right) \nu /2}\right) =o_{a.s.}\left( 1\right) .
\end{equation*}%
Finally, it is easy to see that $V$ is dominated by $III$. The desired
result now follows by putting all together.
\end{proof}
\end{lemma}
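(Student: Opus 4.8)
The plan is to reduce the claim to a single rate bound on a weighted sum of estimation errors and then dispatch each piece using the lemmas already established for the Fama--MacBeth case. Writing out the rescaling, $\psi_{i,NT}^{FM}=T^{1/2}\,|\widehat{\alpha}_{i}^{FM}|^{\nu/2}/|\widehat{s}_{NT}^{FM}|^{\nu/2}$, and invoking Lemma \ref{normalising_FM} to replace $|\widehat{s}_{NT}^{FM}|^{-\nu/2}$ by a quantity that is bounded away from zero almost surely (so that eventually $T^{1/2}/|\widehat{s}_{NT}^{FM}|^{\nu/2}\le c_{0}T^{1/2}$), the target becomes $T^{1/2}\sum_{i=1}^{N}|\widehat{\alpha}_{i}^{FM}|^{\nu/2}=o_{a.s.}(1)$. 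First I would insert the expansion \eqref{alpha-ff}, which under $\mathbb{H}_{0}$ (so $\alpha_{i}=0$) reads $\widehat{\alpha}_{i}^{FM}=\beta_{i}^{\prime}\overline{v}+\overline{u}_{i}-(\widehat{\beta}_{i}-\beta_{i})^{\prime}\lambda-\beta_{i}^{\prime}(\widehat{\lambda}-\lambda)-(\widehat{\beta}_{i}-\beta_{i})^{\prime}(\widehat{\lambda}-\lambda)$, and apply the $c_{r}$-inequality to split $\sum_{i}|\widehat{\alpha}_{i}^{FM}|^{\nu/2}$ into five nonnegative sums $I,\dots,V$, one per summand.

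Next I would bound the five pieces separately, in each case pulling the deterministic factors out and citing the corresponding rate lemma. The term built from $\overline{u}_{i}$ is exactly the object controlled in the proof of Lemma \ref{psi} through Lemma \ref{average-error}, hence $o_{a.s.}(1)$. For the two pieces carrying $\overline{v}$ and $\widehat{\lambda}-\lambda$ I would use Assumption \ref{cov_beta}(i) to get $\sum_{i}\|\beta_{i}\|^{\nu/2}=O(N)$, together with $\overline{v}=o_{a.s.}(T^{-1/2}(\log T)^{1+\epsilon})$ (the centered average of $v_{t}$, via Lemmas \ref{summability} and \ref{stout} exactly as in Lemma \ref{average-ft}) and $\widehat{\lambda}-\lambda=o_{a.s.}(T^{-1/2}(\log T)^{1+\epsilon})$ from Lemma \ref{lambda-h0}. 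For the piece carrying $\widehat{\beta}_{i}-\beta_{i}$ I would invoke Lemma \ref{beta-sum}, which supplies $\sum_{i}\|\widehat{\beta}_{i}-\beta_{i}\|^{\nu/2}=o_{a.s.}(NT^{-\nu/4}(\log N\log T)^{2+\epsilon})$. The last, mixed term $V$ is dominated by the $\widehat{\beta}$-term, since it carries an additional vanishing factor $|\widehat{\lambda}-\lambda|^{\nu/2}$.

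Collecting orders, each of the leading pieces is of the form $o_{a.s.}\big(N\,T^{1/2-\nu/4}\,(\log N\log T)^{c}\big)$ for some finite $c$. The decisive observation is the arithmetic identity $\tfrac{\nu}{4}-\tfrac12=\tfrac12\big(\tfrac{\nu}{2}-1\big)$, so that Assumption \ref{asymptotics}, namely $N=O\big(T^{\frac12(\nu/2-1)-\varepsilon}\big)$, gives $N\,T^{1/2-\nu/4}=O(T^{-\varepsilon})$, which overwhelms any polylogarithmic factor and forces each piece to $o_{a.s.}(1)$; because $\nu>4$ the exponent $\tfrac{\nu}{4}-\tfrac12$ is strictly positive, so the bound is genuine. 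Summing the five $o_{a.s.}(1)$ contributions delivers the result.

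I expect the main obstacle to be the rate bookkeeping rather than any individual estimate: the Fama--MacBeth construction introduces the extra error $\widehat{\lambda}-\lambda$, and one must check that its almost-sure rate is sharp enough that, after aggregating over all $N$ units, the total still vanishes under Assumption \ref{asymptotics}. All the genuinely hard work — controlling $\widehat{\lambda}-\lambda$ (Lemma \ref{lambda-h0}), the aggregated loading error (Lemma \ref{beta-sum}), and the boundedness of the normalizer (Lemma \ref{normalising_FM}) — is carried by the preceding lemmas, so the proof here is mostly a matter of assembling those rates and verifying that the weakest one still clears the threshold fixed by Assumption \ref{asymptotics}.
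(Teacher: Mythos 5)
Your proof is correct and follows essentially the same route as the paper's: the same five-term decomposition of $\widehat{\alpha}_{i}^{FM}$ under $\mathbb{H}_{0}$, the same reliance on Lemmas \ref{normalising_FM}, \ref{average-error} (via Lemma \ref{psi}), \ref{beta-sum} and \ref{lambda-h0}, and the same rate bookkeeping under Assumption \ref{asymptotics}. If anything, you are more explicit than the paper in spelling out that the vanishing of the terms of order $NT^{1/2-\nu/4}$ hinges on Assumption \ref{asymptotics} (via the identity $\nu/4-1/2=\tfrac12(\nu/2-1)$) rather than on $\nu>4$ alone, which is a welcome clarification of a point the paper's write-up leaves implicit.
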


\subsection{Lemmas for Section \protect\ref{latent}\label{lemma-latent}}

We now report a series of lemmas for the case, discussed in Section \ref%
{latent}, of latent factors. As in the previous subsection, in the proofs we
will assume $K=1$ whenever possible.

\smallskip

Recall that - with reference to (\ref{latent-f}) - $\mathbf{\beta }=\left(
\beta _{1},...,\beta _{N}\right) ^{\prime }$. Let $\widehat{\Phi }$ be the
diagonal matrix containing, in descending order, the $K$ largest eigenvalues
of $\widehat{\mathbf{\Sigma }}_{y}$. Then, by definition%
\begin{equation*}
\widehat{\mathbf{\Sigma }}_{y}\widehat{\mathbf{\beta }}^{PC}=\widehat{%
\mathbf{\beta }}^{PC}\widehat{\Phi },
\end{equation*}%
which implies the following expansion 
\begin{eqnarray}
\widehat{\mathbf{\beta }}^{PC} &=&\widehat{\mathbf{\Sigma }}_{y}\widehat{%
\mathbf{\beta }}^{PC}\widehat{\Phi }^{-1}  \label{beta-pc-dec} \\
&=&\left[ \frac{1}{NT}\sum_{t=1}^{T}\left( \mathbf{\beta }\widetilde{v}_{t}+%
\widetilde{\mathbf{u}}_{t}\right) \left( \mathbf{\beta }\widetilde{v}_{t}+%
\widetilde{\mathbf{u}}_{t}\right) ^{\prime }\right] \widehat{\mathbf{\beta }}%
^{PC}\widehat{\Phi }^{-1}  \notag \\
&=&\mathbf{\beta H+}\frac{1}{N}\mathbf{\beta }\left( \frac{1}{T}%
\sum_{t=1}^{T}\widetilde{v}_{t}\widetilde{\mathbf{u}}_{t}^{\prime }\right) 
\widehat{\mathbf{\beta }}^{PC}\widehat{\Phi }^{-1}+\frac{1}{N}\left( \frac{1%
}{T}\sum_{t=1}^{T}\widetilde{\mathbf{u}}_{t}\widetilde{v}_{t}^{\prime
}\right) \mathbf{\beta }^{\prime }\widehat{\mathbf{\beta }}^{PC}\widehat{%
\Phi }^{-1}  \notag \\
&&+\left[ \frac{1}{NT}\sum_{t=1}^{T}\widetilde{\mathbf{u}}_{t}\widetilde{%
\mathbf{u}}_{t}^{\prime }\right] \widehat{\mathbf{\beta }}^{PC}\widehat{\Phi 
}^{-1},  \notag
\end{eqnarray}%
with the constraint $\left( \widehat{\mathbf{\beta }}^{PC}\right) ^{\prime }%
\widehat{\mathbf{\beta }}^{PC}=N\times \mathbb{I}_{K}$ and having defined%
\begin{equation*}
\mathbf{H=}\left( \frac{1}{T}\sum_{t=1}^{T}\widetilde{v}_{t}\widetilde{v}%
_{t}^{\prime }\right) \frac{\mathbf{\beta }^{\prime }\widehat{\mathbf{\beta }%
}^{PC}}{N}\widehat{\Phi }^{-1}.
\end{equation*}%
Similarly, we have also the unit-by-unit version of (\ref{beta-pc-dec})%
\begin{eqnarray}
\widehat{\beta }_{i}^{PC} &=&\mathbf{H}^{\prime }\beta _{i}+\widehat{\Phi }%
^{-1}\widehat{\mathbf{\beta }}^{PC\prime }\left( \frac{1}{NT}\sum_{t=1}^{T}%
\widetilde{\mathbf{u}}_{t}\widetilde{v}_{t}^{\prime }\right) \beta _{i}
\label{beta-pc-scalar} \\
&&+\widehat{\Phi }^{-1}\widehat{\mathbf{\beta }}^{PC\prime }\mathbf{\beta }%
\left( \frac{1}{NT}\sum_{t=1}^{T}\widetilde{v}_{t}\widetilde{u}_{i,t}\right)
+\widehat{\Phi }^{-1}\widehat{\mathbf{\beta }}^{PC\prime }\left( \frac{1}{NT}%
\sum_{t=1}^{T}\widetilde{\mathbf{u}}_{t}\widetilde{u}_{i,t}\right) .  \notag
\end{eqnarray}

Then, considering 
\begin{equation*}
\widehat{\lambda }^{PC}=\left( \widehat{\mathbf{\beta }}^{PC\prime }\mathbb{M%
}_{1_{N}}\widehat{\mathbf{\beta }}^{PC}\right) ^{-1}\left( \widehat{\mathbf{%
\beta }}^{PC\prime }\mathbb{M}_{1_{N}}\bar{\mathbf{y}}\right) ,
\end{equation*}%
we have%
\begin{align}
\widehat{\lambda }^{PC}=& \left( \widehat{\mathbf{\beta }}^{PC\prime }%
\mathbb{M}_{1_{N}}\widehat{\mathbf{\beta }}^{PC}\right) ^{-1}\left( \widehat{%
\mathbf{\beta }}^{PC\prime }\mathbb{M}_{1_{N}}\bar{\mathbf{y}}\right)
\label{lambda-pc} \\
=& \left( \widehat{\mathbf{\beta }}^{PC\prime }\mathbb{M}_{1_{N}}\widehat{%
\mathbf{\beta }}^{PC}\right) ^{-1}\left( \widehat{\mathbf{\beta }}^{PC\prime
}\mathbb{M}_{1_{N}}\mathbf{\alpha }\right) +\left( \widehat{\mathbf{\beta }}%
^{PC\prime }\mathbb{M}_{1_{N}}\widehat{\mathbf{\beta }}^{PC}\right)
^{-1}\left( \widehat{\mathbf{\beta }}^{PC\prime }\mathbb{M}_{1_{N}}\mathbf{%
\beta }\right) \lambda  \notag \\
& +\left( \widehat{\mathbf{\beta }}^{PC\prime }\mathbb{M}_{1_{N}}\widehat{%
\mathbf{\beta }}^{PC}\right) ^{-1}\left( \widehat{\mathbf{\beta }}^{PC\prime
}\mathbb{M}_{1_{N}}\mathbf{\beta }\right) \overline{v}+\left( \widehat{%
\mathbf{\beta }}^{PC\prime }\mathbb{M}_{1_{N}}\widehat{\mathbf{\beta }}%
^{PC}\right) ^{-1}\left( \widehat{\mathbf{\beta }}^{PC\prime }\mathbb{M}%
_{1_{N}}\overline{\mathbf{u}}\right) .  \notag
\end{align}

\begin{lemma}
\label{inversion}We assume that Assumptions \ref{error}-\ref{exogeneity},
and \ref{loadings}-\ref{fact-idios} are satisfied. Then it holds that $%
\left\Vert \widehat{\Phi }^{-1}\right\Vert =O_{a.s.}\left( 1\right) $.

\begin{proof}
Let $\Phi =diag\left\{ \Phi _{1},...,\Phi _{K}\right\} $ denote the diagonal
matrix containing the $K$ largest eigenvalues of $\mathbf{\beta }\mathbb{E}%
\left( \widetilde{v}_{t}\widetilde{v}_{t}^{\prime }\right) \mathbf{\beta }%
^{\prime }/N$. Then, it is immediate to verify that the assumptions of Lemma
2.2 in \cite{trapani2018randomized} hold, and 
\begin{equation*}
\widehat{\Phi }_{j}=\Phi _{j}+o_{a.s.}\left( 1\right) ,
\end{equation*}%
for all $1\leq j\leq K$. Seeing as, using the multiplicative Weyl's
inequality (theorem 7 in \citealt{merikoski2004inequalities}), and
Assumptions \ref{loadings} and \ref{factors}%
\begin{equation*}
\Phi _{K}\geq \rho _{K}\left( \mathbf{\beta }^{\prime }\mathbf{\beta }%
/N\right) \rho _{\min }\left( \mathbb{E}\left( \widetilde{v}_{t}\widetilde{v}%
_{t}^{\prime }\right) \right) \geq c_{0}>0,
\end{equation*}%
where $\rho _{k}\left( A\right) $ denotes the $k$-th largest eigenvalue of
matrix $A$, the desired result follows.
\end{proof}
\end{lemma}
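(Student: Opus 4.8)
The plan is to reduce the statement to a lower bound on the smallest of the $K$ retained eigenvalues. Writing $\widehat{\Phi}=\diag\{\widehat{\Phi}_1,\dots,\widehat{\Phi}_K\}$ with the eigenvalues arranged in descending order, one has $\left\Vert \widehat{\Phi}^{-1}\right\Vert=\widehat{\Phi}_K^{-1}$, so it is enough to show that $\widehat{\Phi}_K$ stays bounded away from zero for almost all realizations of the sample. To this end I would introduce the population analogue: using the decomposition $\widetilde{\mathbf{y}}_t=\mathbf{\beta}\widetilde{v}_t+\widetilde{\mathbf{u}}_t$, the scaled second-moment matrix $\widehat{\mathbf{\Sigma}}_y$ splits into a ``signal'' part $\mathbf{\beta}\left(\frac{1}{T}\sum_{t=1}^{T}\widetilde{v}_t\widetilde{v}_t'\right)\mathbf{\beta}'/N$, an idiosyncratic part $\frac{1}{NT}\sum_{t=1}^{T}\widetilde{\mathbf{u}}_t\widetilde{\mathbf{u}}_t'$, and two signal--noise cross terms, and I would let $\Phi=\diag\{\Phi_1,\dots,\Phi_K\}$ collect the $K$ largest eigenvalues of the limiting signal matrix $\mathbf{\beta}\,\mathbb{E}(\widetilde{v}_t\widetilde{v}_t')\,\mathbf{\beta}'/N$.

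The next step is eigenvalue consistency. Under Assumptions \ref{loadings}--\ref{fact-idios} --- in particular the boundedness and weak-dependence conditions on the idiosyncratic terms collected in Assumption \ref{idiosyncratic}, the positive-definiteness in Assumption \ref{factors}, and the mutual independence in Assumption \ref{fact-idios} --- the hypotheses of Lemma 2.2 in \cite{trapani2018randomized} are satisfied, so that the cross products and the fluctuation of the idiosyncratic block are asymptotically negligible and one obtains $\widehat{\Phi}_j=\Phi_j+o_{a.s.}\left(1\right)$ for every $1\le j\le K$.

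Finally I would bound $\Phi_K$ away from zero by a purely deterministic argument. By the multiplicative Weyl inequality (Theorem 7 in \citet{merikoski2004inequalities}) one has $\Phi_K\ge\rho_K\left(\mathbf{\beta}'\mathbf{\beta}/N\right)\rho_{\min}\left(\mathbb{E}(\widetilde{v}_t\widetilde{v}_t')\right)$, where $\rho_k(\cdot)$ denotes the $k$-th largest eigenvalue. Assumption \ref{loadings}\textit{(ii)} gives $\mathbf{\beta}'\mathbf{\beta}/N=\mathbb{I}_K$, hence $\rho_K\left(\mathbf{\beta}'\mathbf{\beta}/N\right)=1$, while Assumption \ref{factors} ensures $\rho_{\min}\left(\mathbb{E}(\widetilde{v}_t\widetilde{v}_t')\right)>0$; therefore $\Phi_K\ge c_0>0$. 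Combining this with the consistency of the previous step, $\widehat{\Phi}_K$ is eventually at least $c_0/2$ for almost all realizations, which yields $\left\Vert \widehat{\Phi}^{-1}\right\Vert=O_{a.s.}\left(1\right)$.

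The step I expect to be the genuine obstacle is the eigenvalue-consistency claim, i.e. verifying that the conditions of the cited lemma hold in the present setting. Concretely, this requires showing that the idiosyncratic second-moment block $\frac{1}{NT}\sum_{t=1}^{T}\widetilde{\mathbf{u}}_t\widetilde{\mathbf{u}}_t'$ has operator norm that does not diverge and that the signal--noise cross products vanish almost surely, so that this perturbation cannot contaminate the top $K$ eigenvalues of $\widehat{\mathbf{\Sigma}}_y$. This is exactly where the moment and summability bounds in Assumption \ref{idiosyncratic} do the work, and matching them to the precise hypotheses of the consistency lemma is the delicate part; everything after that reduces to the elementary eigenvalue inequality above.
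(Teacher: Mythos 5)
Your proposal is correct and follows essentially the same route as the paper's own proof: the reduction to bounding $\widehat{\Phi}_K$ away from zero, the eigenvalue consistency $\widehat{\Phi}_j=\Phi_j+o_{a.s.}(1)$ via Lemma 2.2 of \cite{trapani2018randomized}, and the lower bound $\Phi_K\geq\rho_K\left(\mathbf{\beta}'\mathbf{\beta}/N\right)\rho_{\min}\left(\mathbb{E}\left(\widetilde{v}_t\widetilde{v}_t'\right)\right)>0$ from the multiplicative Weyl inequality together with Assumptions \ref{loadings} and \ref{factors} are exactly the paper's steps. Your additional remarks (that $\mathbf{\beta}'\mathbf{\beta}/N=\mathbb{I}_K$ makes $\rho_K=1$, and that the delicate point is matching Assumption \ref{idiosyncratic} to the hypotheses of the cited consistency lemma) only make explicit what the paper leaves as "immediate to verify."
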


\begin{lemma}
\label{l2norm}We assume that Assumptions \ref{error}-\ref{exogeneity}, and %
\ref{loadings}-\ref{fact-idios} are satisfied. Then it holds that 
\begin{equation*}
\frac{1}{N}\left\Vert \widehat{\mathbf{\beta }}^{PC}-\mathbf{\beta H}%
\right\Vert ^{2}=o_{a.s.}\left( \frac{\left( \log N\log T\right)
^{2+\epsilon }}{T}\right) +O\left( \frac{1}{N^{2}}\right) .
\end{equation*}

\begin{proof}
By (\ref{beta-pc-dec}), we have%
\begin{eqnarray*}
&&\left\Vert \widehat{\mathbf{\beta }}^{PC}-\mathbf{\beta H}\right\Vert ^{2}
\\
&\leq &c_{0}\left( \frac{1}{N^{2}}\left\Vert \mathbf{\beta }\left( \frac{1}{T%
}\sum_{t=1}^{T}\widetilde{v}_{t}\widetilde{\mathbf{u}}_{t}^{\prime }\right) 
\widehat{\mathbf{\beta }}^{PC}\widehat{\Phi }^{-1}\right\Vert ^{2}+\frac{1}{%
N^{2}}\left\Vert \left( \frac{1}{T}\sum_{t=1}^{T}\widetilde{\mathbf{u}}_{t}%
\widetilde{v}_{t}^{\prime }\right) \mathbf{\beta }^{\prime }\widehat{\mathbf{%
\beta }}^{PC}\widehat{\Phi }^{-1}\right\Vert ^{2}\right. \\
&&\qquad +\left. \left\Vert \left( \frac{1}{NT}\sum_{t=1}^{T}\widetilde{%
\mathbf{u}}_{t}\widetilde{\mathbf{u}}_{t}^{\prime }\right) \widehat{\mathbf{%
\beta }}^{PC}\widehat{\Phi }^{-1}\right\Vert ^{2}\right) \\
&=&I+I^{\prime }+II.
\end{eqnarray*}%
It holds that%
\begin{equation*}
I\leq \frac{1}{N^{2}}\left\Vert \mathbf{\beta }\right\Vert ^{2}\left\Vert 
\widehat{\mathbf{\beta }}^{PC}\right\Vert ^{2}\left\Vert \widehat{\Phi }%
^{-1}\right\Vert ^{2}\left\Vert \frac{1}{T}\sum_{t=1}^{T}\widetilde{v}_{t}%
\widetilde{\mathbf{u}}_{t}^{\prime }\right\Vert ^{2}\leq c_{0}\left\Vert 
\frac{1}{T}\sum_{t=1}^{T}\widetilde{v}_{t}\widetilde{\mathbf{u}}_{t}^{\prime
}\right\Vert ^{2},
\end{equation*}%
on account of Assumption \ref{loadings} and Lemma \ref{inversion}. Also note
that%
\begin{equation*}
\left\Vert \frac{1}{T}\sum_{t=1}^{T}\widetilde{v}_{t}\widetilde{\mathbf{u}}%
_{t}^{\prime }\right\Vert ^{2}\leq \left\Vert \frac{1}{T}\sum_{t=1}^{T}v_{t}%
\mathbf{u}_{t}^{\prime }\right\Vert ^{2}+\left\Vert \overline{v}\right\Vert
^{2}\left\Vert \overline{\mathbf{u}}\right\Vert ^{2}.
\end{equation*}%
It holds that%
\begin{equation*}
\left\Vert \frac{1}{T}\sum_{t=1}^{T}v_{t}\mathbf{u}_{t}^{\prime }\right\Vert
^{2}=\sum_{i=1}^{N}\left( \frac{1}{T}\sum_{t=1}^{T}v_{t}u_{i,t}\right) ^{2},
\end{equation*}%
and%
\begin{eqnarray*}
&&\mathbb{E}\left[ \sum_{i=1}^{N}\left( \frac{1}{T}%
\sum_{t=1}^{T}v_{t}u_{i,t}\right) ^{2}\right] \\
&=&\frac{1}{T^{2}}\sum_{i=1}^{N}\sum_{t,s=1}^{T}\mathbb{E}\left(
v_{t}v_{s}u_{i,t}u_{i,s}\right) \leq \frac{1}{T^{2}}\sum_{i=1}^{N}%
\sum_{t,s=1}^{T}\mathbb{E}\left( v_{t}v_{s}\right) \mathbb{E}\left(
u_{i,t}u_{i,s}\right) \\
&\leq &\frac{1}{T^{2}}\mathbb{E}\left( v_{0}^{2}\right)
\sum_{i=1}^{N}\sum_{t,s=1}^{T}\left\vert \mathbb{E}\left(
u_{i,t}u_{i,s}\right) \right\vert \leq c_{0}\frac{N}{T}.
\end{eqnarray*}%
Also, we know that $\left\Vert \overline{v}\right\Vert =o_{a.s.}\left(
T^{-1/2}\left( \log T\right) ^{1+\epsilon }\right) $, and%
\begin{equation*}
\left\Vert \overline{\mathbf{u}}\right\Vert ^{2}=\sum_{i=1}^{N}\left( \frac{1%
}{T}\sum_{t=1}^{T}u_{i,t}\right) ^{2}=\frac{1}{T^{2}}\sum_{i=1}^{N}%
\sum_{t,s=1}^{T}u_{i,t}u_{i,s},
\end{equation*}%
with%
\begin{equation*}
\mathbb{E}\left( \frac{1}{T^{2}}\sum_{i=1}^{N}\sum_{t,s=1}^{T}u_{i,t}u_{i,s}%
\right) \leq \frac{1}{T^{2}}\sum_{i=1}^{N}\sum_{t,s=1}^{T}\left\vert \mathbb{%
E}\left( u_{i,t}u_{i,s}\right) \right\vert \leq c_{0}\frac{N}{T},
\end{equation*}%
so that $\left\Vert \overline{\mathbf{u}}\right\Vert =o_{a.s.}\left(
N^{1/2}T^{-1/2}\left( \log T\log N\right) ^{1+\epsilon }\right) $. Putting
all together, we ultimately receive%
\begin{equation*}
I=o_{a.s.}\left( \frac{N}{T}\left( \log N\log T\right) ^{2+\epsilon }\right)
.
\end{equation*}%
Turning to $II$, note%
\begin{eqnarray*}
&&\left\Vert \left( \frac{1}{NT}\sum_{t=1}^{T}\widetilde{\mathbf{u}}_{t}%
\widetilde{\mathbf{u}}_{t}^{\prime }\right) \widehat{\mathbf{\beta }}^{PC}%
\widehat{\Phi }^{-1}\right\Vert ^{2} \\
&\leq &c_{0}\left\Vert \left( \frac{1}{NT}\sum_{t=1}^{T}\mathbf{u}_{t}%
\mathbf{u}_{t}^{\prime }-\mathbb{E}\left( \mathbf{u}_{0}\mathbf{u}%
_{0}^{\prime }\right) \right) \widehat{\mathbf{\beta }}^{PC}\widehat{\Phi }%
^{-1}\right\Vert ^{2}+c_{0}\left\Vert \frac{1}{N}\mathbb{E}\left( \mathbf{u}%
_{0}\mathbf{u}_{0}^{\prime }\right) \widehat{\mathbf{\beta }}^{PC}\widehat{%
\Phi }^{-1}\right\Vert ^{2}+c_{0}\left\Vert \frac{1}{N}\overline{\mathbf{u}}%
\overline{\mathbf{u}}^{\prime }\widehat{\mathbf{\beta }}^{PC}\widehat{\Phi }%
^{-1}\right\Vert ^{2} \\
&=&II_{a}+II_{b}+II_{c}.
\end{eqnarray*}%
We already know from the above that $\left\Vert \overline{\mathbf{u}}%
\right\Vert ^{2}=o_{a.s.}\left( NT^{-1}\left( \log N\log T\right)
^{2+\epsilon }\right) $, which readily yields 
\begin{equation*}
II_{c}=o_{a.s.}\left( NT^{-2}\left( \log N\log T\right) ^{4+\epsilon
}\right) .
\end{equation*}%
Also%
\begin{equation*}
\left\Vert \mathbb{E}\left( \mathbf{u}_{0}\mathbf{u}_{0}^{\prime }\right)
\right\Vert ^{2}\leq \left\Vert \mathbb{E}\left( \mathbf{u}_{0}\mathbf{u}%
_{0}^{\prime }\right) \right\Vert _{1}^{2}=\left( \max_{1\leq i\leq
N}\sum_{j=1}^{N}\left\vert \mathbb{E}\left( u_{i,0}u_{j,0}\right)
\right\vert \right) ^{2}\leq c_{0},
\end{equation*}%
so that%
\begin{equation*}
II_{b}=O\left( \frac{1}{N}\right) .
\end{equation*}%
Finally%
\begin{equation*}
\left\Vert \frac{1}{NT}\sum_{t=1}^{T}\mathbf{u}_{t}\mathbf{u}_{t}^{\prime }-%
\mathbb{E}\left( \mathbf{u}_{0}\mathbf{u}_{0}^{\prime }\right) \right\Vert
_{F}^{2}=\frac{1}{N^{2}T^{2}}\sum_{i,j=1}^{N}\left(
\sum_{t,s=1}^{T}Cov\left( u_{i,t}u_{j,t},u_{i,s}u_{j,s}\right) \right) \leq
c_{0}\frac{1}{T},
\end{equation*}%
and therefore%
\begin{equation*}
II_{a}=o_{a.s.}\left( \frac{N}{T}\left( \log N\log T\right) ^{2+\epsilon
}\right) .
\end{equation*}

The desired result obtains by putting all together.
\end{proof}
\end{lemma}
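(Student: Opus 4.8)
The plan is to start from the decomposition (\ref{beta-pc-dec}), which already isolates $\widehat{\mathbf{\beta}}^{PC}-\mathbf{\beta H}$ as the sum of three stochastic terms: the two ``factor$\times$idiosyncratic'' cross terms and the ``idiosyncratic$\times$idiosyncratic'' term. Using $\|a+b+c\|^2\le 3(\|a\|^2+\|b\|^2+\|c\|^2)$ I would reduce the claim to bounding the three squared norms separately, call them $I$, $I'$, $II$. Throughout, three deterministic/almost-sure facts do the heavy lifting: the normalization $(\widehat{\mathbf{\beta}}^{PC})^{\prime}\widehat{\mathbf{\beta}}^{PC}=N\mathbb{I}_{K}$ gives $\|\widehat{\mathbf{\beta}}^{PC}\|=O(N^{1/2})$; Assumption \ref{loadings}\textit{(ii)} gives $\|\mathbf{\beta}\|=O(N^{1/2})$; and Lemma \ref{inversion} supplies $\|\widehat{\Phi}^{-1}\|=O_{a.s.}(1)$. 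These let me peel off the matrix factors and reduce each term to a genuinely stochastic average.

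For the cross terms $I$ and $I'$ (which are mirror images), after peeling off $\mathbf{\beta}$, $\widehat{\mathbf{\beta}}^{PC}$ and $\widehat{\Phi}^{-1}$ one is left with $c_{0}\|T^{-1}\sum_{t}\widetilde{v}_{t}\widetilde{\mathbf{u}}_{t}^{\prime}\|^2$. I would split the centering, $T^{-1}\sum_{t}\widetilde{v}_{t}\widetilde{\mathbf{u}}_{t}^{\prime}=T^{-1}\sum_{t}v_{t}\mathbf{u}_{t}^{\prime}-\overline{v}\,\overline{\mathbf{u}}^{\prime}$, and treat the leading piece via $\|T^{-1}\sum_{t}v_{t}\mathbf{u}_{t}^{\prime}\|^2=\sum_{i=1}^{N}(T^{-1}\sum_{t}v_{t}u_{i,t})^2$. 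Independence of the $v$- and $u$-groups (Assumption \ref{fact-idios}) together with the absolute summability of $\gamma_{s,t}$ (Assumption \ref{idiosyncratic}\textit{(i)}) gives the expectation bound $\le c_{0}N/T$, and Lemma \ref{stout} upgrades this to $o_{a.s.}(NT^{-1}(\log N\log T)^{2+\epsilon})$. The correction $\overline{v}\,\overline{\mathbf{u}}^{\prime}$ is negligible using $\|\overline{v}\|=o_{a.s.}(T^{-1/2}(\log T)^{1+\epsilon})$ and $\|\overline{\mathbf{u}}\|^2=o_{a.s.}(NT^{-1}(\log N\log T)^{2+\epsilon})$ (the latter being Lemma \ref{average-error} with $\gamma=2$). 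Dividing the resulting $I=o_{a.s.}(NT^{-1}(\log N\log T)^{2+\epsilon})$ by $N$ produces exactly the first term of the claim.

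The idiosyncratic term $II$ is where the work — and the $O(N^{-2})$ remainder — comes from. I would write $(NT)^{-1}\sum_{t}\widetilde{\mathbf{u}}_{t}\widetilde{\mathbf{u}}_{t}^{\prime}=N^{-1}\mathbf{D}+N^{-1}\mathbb{E}(\mathbf{u}_{0}\mathbf{u}_{0}^{\prime})-N^{-1}\overline{\mathbf{u}}\,\overline{\mathbf{u}}^{\prime}$ with $\mathbf{D}=T^{-1}\sum_{t}\mathbf{u}_{t}\mathbf{u}_{t}^{\prime}-\mathbb{E}(\mathbf{u}_{0}\mathbf{u}_{0}^{\prime})$, giving pieces $II_{a},II_{b},II_{c}$. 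For the bias piece $II_{b}$, the bound $\|\mathbb{E}(\mathbf{u}_{0}\mathbf{u}_{0}^{\prime})\|\le\|\mathbb{E}(\mathbf{u}_{0}\mathbf{u}_{0}^{\prime})\|_{1}\le c_{0}$ (Assumption \ref{idiosyncratic}\textit{(iv)}) yields $II_{b}=O(N^{-1})$, which after the overall division by $N$ produces the $O(N^{-2})$ term. For $II_{a}$, peeling off the matrices leaves $II_{a}\le O_{a.s.}(N^{-1})\|\mathbf{D}\|_{F}^2$; I would then bound $\mathbb{E}\|\mathbf{D}\|_{F}^2=\sum_{i,j}\Var(T^{-1}\sum_{t}u_{i,t}u_{j,t})\le c_{0}N^2/T$ and invoke Lemma \ref{stout} for the almost-sure rate, so that $N^{-1}II_{a}$ contributes another $o_{a.s.}(T^{-1}(\log N\log T)^{2+\epsilon})$ term; $II_{c}$ is lower order via the bound on $\|\overline{\mathbf{u}}\|^2$. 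Collecting $I$, $I'$, $II$ and dividing by $N$ gives the stated rate.

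The main obstacle is the control of $\|\mathbf{D}\|_{F}^2$: establishing $\Var(T^{-1}\sum_{t}u_{i,t}u_{j,t})\le c_{0}/T$ \emph{uniformly} in $(i,j)$ requires knowing that the products $u_{i,t}u_{j,t}$ again form $\mathcal{L}_{\nu/2}$-decomposable Bernoulli shifts (which needs $\nu>4$, so that the products retain more than two moments) with summable time-autocovariance, and then passing from the Frobenius-norm \emph{moment} bound to an almost-sure rate over the double index $(i,j)$ through the multi-index Baum--Katz inequality of Lemma \ref{stout}. Keeping the $N^{-1}$ and $T^{-1}$ scalings straight across the spectral- versus Frobenius-norm bookkeeping, and verifying the fourth-order weak-dependence input behind $\mathbb{E}\|\mathbf{D}\|_{F}^2\le c_{0}N^2/T$, is the delicate part; everything else is routine manipulation with the already-established averaging lemmas.
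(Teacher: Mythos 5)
Your proposal is correct and follows essentially the same route as the paper's proof: the same decomposition of $\widehat{\mathbf{\beta}}^{PC}-\mathbf{\beta H}$ from (\ref{beta-pc-dec}) into the two cross terms and the idiosyncratic term, the same peeling-off of $\|\mathbf{\beta}\|$, $\|\widehat{\mathbf{\beta}}^{PC}\|$, $\|\widehat{\Phi}^{-1}\|$, the same $N/T$ moment bound for $\|T^{-1}\sum_{t}v_{t}\mathbf{u}_{t}^{\prime}\|^{2}$, and the same split of $II$ into variance, bias (the source of the $O(N^{-2})$ via $\|\mathbb{E}(\mathbf{u}_{0}\mathbf{u}_{0}^{\prime})\|_{1}\leq c_{0}$), and sample-mean pieces, with Lemma \ref{stout} upgrading moment bounds to almost-sure rates. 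If anything, your write-up is slightly more careful than the paper's on two points it leaves implicit: the correct $N^{-1}$ scaling in the decomposition of $(NT)^{-1}\sum_{t}\widetilde{\mathbf{u}}_{t}\widetilde{\mathbf{u}}_{t}^{\prime}$, and the need for the products $u_{i,t}u_{j,t}$ to retain summable autocovariances (using $\nu>4$) behind the bound $\mathbb{E}\|\mathbf{D}\|_{F}^{2}\leq c_{0}N^{2}/T$.
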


\begin{lemma}
\label{matrix-h}We assume that Assumptions \ref{error}-\ref{exogeneity}, and %
\ref{loadings}-\ref{fact-idios} are satisfied. Then it holds that $%
\left\Vert \mathbf{H}\right\Vert =O_{a.s.}\left( 1\right) $, and $\left\Vert 
\mathbf{H}^{-1}\right\Vert =O_{a.s.}\left( 1\right) $.

\begin{proof}
Note%
\begin{equation*}
\left\Vert \mathbf{H}\right\Vert \leq \left\Vert \frac{1}{T}\sum_{t=1}^{T}%
\widetilde{v}_{t}\widetilde{v}_{t}^{\prime }\right\Vert \frac{\left\Vert 
\mathbf{\beta }\right\Vert \left\Vert \widehat{\mathbf{\beta }}%
^{PC}\right\Vert }{N}\left\Vert \widehat{\Phi }^{-1}\right\Vert .
\end{equation*}%
Standard arguments yield $\left\Vert T^{-1}\sum_{t=1}^{T}\widetilde{v}_{t}%
\widetilde{v}_{t}^{\prime }\right\Vert =O_{a.s.}\left( 1\right) $; further, $%
\left\Vert \widehat{\mathbf{\beta }}^{PC}\right\Vert =N^{1/2}$ by
construction, and $\left\Vert \mathbf{\beta }\right\Vert =N^{1/2}$ by
Assumption \ref{loadings}. The desired result now follows by Lemma \ref%
{inversion}. As far as the second part of the lemma is concerned, recall the
identification restriction $\mathbf{\beta }^{\prime }\mathbf{\beta }=N%
\mathbb{I}_{K}$, and that, by construction $\left( \widehat{\mathbf{\beta }}%
^{PC}\right) ^{\prime }\widehat{\mathbf{\beta }}^{PC}=N\mathbb{I}_{K}$. Then
we have 
\begin{eqnarray*}
\mathbb{I}_{K} &=&\frac{1}{N}\left( \widehat{\mathbf{\beta }}^{PC}\right)
^{\prime }\widehat{\mathbf{\beta }}^{PC} \\
&=&\frac{1}{N}\left( \widehat{\mathbf{\beta }}^{PC}-\mathbf{\beta H+\beta H}%
\right) ^{\prime }\left( \widehat{\mathbf{\beta }}^{PC}-\mathbf{\beta
H+\beta H}\right) \\
&=&\mathbf{H}^{\prime }\left( \frac{1}{N}\mathbf{\beta }^{\prime }\mathbf{%
\beta }\right) \mathbf{H}+\mathbf{H}^{\prime }\frac{1}{N}\mathbf{\beta }%
^{\prime }\left( \widehat{\mathbf{\beta }}^{PC}-\mathbf{\beta H}\right) +%
\left[ \mathbf{H}^{\prime }\frac{1}{N}\mathbf{\beta }^{\prime }\left( 
\widehat{\mathbf{\beta }}^{PC}-\mathbf{\beta H}\right) \right] ^{\prime } \\
&&+\frac{1}{N}\left( \widehat{\mathbf{\beta }}^{PC}-\mathbf{\beta H}\right)
^{\prime }\left( \widehat{\mathbf{\beta }}^{PC}-\mathbf{\beta H}\right) \\
&=&\mathbf{H}^{\prime }\mathbf{H}+\mathbf{H}^{\prime }\frac{1}{N}\mathbf{%
\beta }^{\prime }\left( \widehat{\mathbf{\beta }}^{PC}-\mathbf{\beta H}%
\right) +\left[ \mathbf{H}^{\prime }\frac{1}{N}\mathbf{\beta }^{\prime
}\left( \widehat{\mathbf{\beta }}^{PC}-\mathbf{\beta H}\right) \right]
^{\prime } \\
&&+\frac{1}{N}\left( \widehat{\mathbf{\beta }}^{PC}-\mathbf{\beta H}\right)
^{\prime }\left( \widehat{\mathbf{\beta }}^{PC}-\mathbf{\beta H}\right) .
\end{eqnarray*}%
Using Lemma \ref{l2norm} repeatedly, it is easy to see that 
\begin{equation*}
\mathbf{H}^{\prime }\frac{1}{N}\mathbf{\beta }^{\prime }\left( \widehat{%
\mathbf{\beta }}^{PC}-\mathbf{\beta H}\right) +\left[ \mathbf{H}^{\prime }%
\frac{1}{N}\mathbf{\beta }^{\prime }\left( \widehat{\mathbf{\beta }}^{PC}-%
\mathbf{\beta H}\right) \right] ^{\prime }+\frac{1}{N}\left( \widehat{%
\mathbf{\beta }}^{PC}-\mathbf{\beta H}\right) ^{\prime }\left( \widehat{%
\mathbf{\beta }}^{PC}-\mathbf{\beta H}\right) =o_{a.s.}\left( 1\right) ,
\end{equation*}%
and therefore%
\begin{equation*}
\mathbf{H}^{\prime }\mathbf{H=}\mathbb{I}_{K}+o_{a.s.}\left( 1\right) ,
\end{equation*}%
so that $\mathbf{H}^{-1}=\mathbf{H}^{\prime }$. This proves that $\mathbf{H}$
is invertible.
\end{proof}
\end{lemma}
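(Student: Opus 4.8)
The plan is to establish the two claims separately: the upper bound $\left\Vert \mathbf{H}\right\Vert =O_{a.s.}\left( 1\right) $ by a direct submultiplicative estimate, and the bound $\left\Vert \mathbf{H}^{-1}\right\Vert =O_{a.s.}\left( 1\right) $ by showing that $\mathbf{H}^{\prime }\mathbf{H}$ is asymptotically the identity. Recalling
\begin{equation*}
\mathbf{H}=\left( \frac{1}{T}\sum_{t=1}^{T}\widetilde{v}_{t}\widetilde{v}_{t}^{\prime }\right) \frac{\mathbf{\beta }^{\prime }\widehat{\mathbf{\beta }}^{PC}}{N}\widehat{\Phi }^{-1},
\end{equation*}
I would bound $\left\Vert \mathbf{H}\right\Vert \leq \left\Vert T^{-1}\sum_{t=1}^{T}\widetilde{v}_{t}\widetilde{v}_{t}^{\prime }\right\Vert \, N^{-1}\left\Vert \mathbf{\beta }\right\Vert \left\Vert \widehat{\mathbf{\beta }}^{PC}\right\Vert \left\Vert \widehat{\Phi }^{-1}\right\Vert $. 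The first factor is $O_{a.s.}\left( 1\right) $ by the same centered Bernoulli-shift arguments used for $f_{t}$ in Lemma \ref{denominator-LS}, now applied to $\widetilde{v}_{t}$; the identification constraint $\left( \widehat{\mathbf{\beta }}^{PC}\right) ^{\prime }\widehat{\mathbf{\beta }}^{PC}=N\mathbb{I}_{K}$ gives $\left\Vert \widehat{\mathbf{\beta }}^{PC}\right\Vert =N^{1/2}$, while $\left\Vert \mathbf{\beta }\right\Vert =N^{1/2}$ follows from Assumption \ref{loadings}, so that $N^{-1}\left\Vert \mathbf{\beta }\right\Vert \left\Vert \widehat{\mathbf{\beta }}^{PC}\right\Vert \leq 1$; finally $\left\Vert \widehat{\Phi }^{-1}\right\Vert =O_{a.s.}\left( 1\right) $ by Lemma \ref{inversion}. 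Multiplying these bounds yields the first claim.

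For the second claim the idea is to use both identification restrictions at once. Writing $\widehat{\mathbf{\beta }}^{PC}=\mathbf{\beta H}+\left( \widehat{\mathbf{\beta }}^{PC}-\mathbf{\beta H}\right) $ and substituting into $\mathbb{I}_{K}=N^{-1}\left( \widehat{\mathbf{\beta }}^{PC}\right) ^{\prime }\widehat{\mathbf{\beta }}^{PC}$ produces
\begin{equation*}
\mathbb{I}_{K}=\mathbf{H}^{\prime }\left( \frac{1}{N}\mathbf{\beta }^{\prime }\mathbf{\beta }\right) \mathbf{H}+\mathbf{H}^{\prime }\frac{1}{N}\mathbf{\beta }^{\prime }\left( \widehat{\mathbf{\beta }}^{PC}-\mathbf{\beta H}\right) +\left[ \mathbf{H}^{\prime }\frac{1}{N}\mathbf{\beta }^{\prime }\left( \widehat{\mathbf{\beta }}^{PC}-\mathbf{\beta H}\right) \right] ^{\prime }+\frac{1}{N}\left( \widehat{\mathbf{\beta }}^{PC}-\mathbf{\beta H}\right) ^{\prime }\left( \widehat{\mathbf{\beta }}^{PC}-\mathbf{\beta H}\right) .
\end{equation*}
Since $\mathbf{\beta }^{\prime }\mathbf{\beta }/N=\mathbb{I}_{K}$ by Assumption \ref{loadings}, the leading term is exactly $\mathbf{H}^{\prime }\mathbf{H}$. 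I would bound the remaining three terms with Lemma \ref{l2norm}: by Cauchy--Schwarz, $\left\Vert N^{-1}\mathbf{\beta }^{\prime }\left( \widehat{\mathbf{\beta }}^{PC}-\mathbf{\beta H}\right) \right\Vert \leq N^{-1}\left\Vert \mathbf{\beta }\right\Vert \left\Vert \widehat{\mathbf{\beta }}^{PC}-\mathbf{\beta H}\right\Vert =\left( N^{-1}\left\Vert \widehat{\mathbf{\beta }}^{PC}-\mathbf{\beta H}\right\Vert ^{2}\right) ^{1/2}=o_{a.s.}\left( 1\right) $, and the quadratic remainder equals $N^{-1}\left\Vert \widehat{\mathbf{\beta }}^{PC}-\mathbf{\beta H}\right\Vert ^{2}=o_{a.s.}\left( 1\right) $ directly. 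Hence $\mathbf{H}^{\prime }\mathbf{H}=\mathbb{I}_{K}+o_{a.s.}\left( 1\right) $, so that for almost every realization the smallest eigenvalue of $\mathbf{H}^{\prime }\mathbf{H}$ is eventually bounded away from zero, giving $\left\Vert \mathbf{H}^{-1}\right\Vert =\rho _{\min }\left( \mathbf{H}^{\prime }\mathbf{H}\right) ^{-1/2}=O_{a.s.}\left( 1\right) $.

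The main obstacle is not located in this lemma itself but in the input it consumes: everything hinges on the rate $N^{-1}\left\Vert \widehat{\mathbf{\beta }}^{PC}-\mathbf{\beta H}\right\Vert ^{2}=o_{a.s.}\left( 1\right) $ supplied by Lemma \ref{l2norm}, whose own proof requires the delicate control of the idiosyncratic cross-product matrix $N^{-1}T^{-1}\sum_{t=1}^{T}\widetilde{\mathbf{u}}_{t}\widetilde{\mathbf{u}}_{t}^{\prime }$ under Assumption \ref{idiosyncratic}. Granting that lemma, the only subtle point remaining here is logical rather than quantitative: because $K$ is fixed, norm convergence of $\mathbf{H}^{\prime }\mathbf{H}$ to $\mathbb{I}_{K}$ already guarantees both the eventual (a.s.) invertibility of $\mathbf{H}$ and the uniform boundedness of $\mathbf{H}^{-1}$, so no sharper rate is needed to close the argument as stated.
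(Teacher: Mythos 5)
Your proposal is correct and follows essentially the same route as the paper's proof: the same submultiplicative bound $\left\Vert \mathbf{H}\right\Vert \leq \left\Vert T^{-1}\sum_{t}\widetilde{v}_{t}\widetilde{v}_{t}^{\prime }\right\Vert N^{-1}\left\Vert \mathbf{\beta }\right\Vert \left\Vert \widehat{\mathbf{\beta }}^{PC}\right\Vert \left\Vert \widehat{\Phi }^{-1}\right\Vert $ combined with Lemma \ref{inversion} for the first claim, and the same expansion of $\mathbb{I}_{K}=N^{-1}\left( \widehat{\mathbf{\beta }}^{PC}\right) ^{\prime }\widehat{\mathbf{\beta }}^{PC}$ around $\mathbf{\beta H}$ controlled via Lemma \ref{l2norm} for the second. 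If anything, your closing step is slightly tighter than the paper's: where the paper concludes with the shorthand ``$\mathbf{H}^{-1}=\mathbf{H}^{\prime }$'', you correctly note that $\mathbf{H}^{\prime }\mathbf{H}=\mathbb{I}_{K}+o_{a.s.}\left( 1\right) $ with $K$ fixed bounds $\rho _{\min }\left( \mathbf{H}^{\prime }\mathbf{H}\right) $ away from zero eventually, which yields both invertibility and $\left\Vert \mathbf{H}^{-1}\right\Vert =O_{a.s.}\left( 1\right) $ explicitly.
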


\begin{lemma}
\label{l2norm_lambda}We assume that Assumptions \ref{error}-\ref{exogeneity}%
, and \ref{loadings}-\ref{fact-idios} are satisfied. Then it holds that,
under $\mathbb{H}_{0}$ 
\begin{equation*}
\widehat{\lambda }^{PC}-\mathbf{H}^{-1}\lambda =o_{a.s.}\left( \frac{\left(
\log T\log N\right) ^{1+\epsilon }}{T^{1/2}}\right) +O\left( \frac{1}{N}%
\right) ,
\end{equation*}%
for all $\epsilon >0$.

\begin{proof}
By (\ref{lambda-pc}), under the null it holds that%
\begin{eqnarray}
\widehat{\lambda }^{PC} &=&\left( \frac{\widehat{\mathbf{\beta }}^{PC\prime }%
\mathbb{M}_{1_{N}}\widehat{\mathbf{\beta }}^{PC}}{N}\right) ^{-1}\left( 
\frac{\widehat{\mathbf{\beta }}^{PC\prime }\mathbb{M}_{1_{N}}\mathbf{\beta }%
}{N}\right) \lambda  \label{lam-pc} \\
&&+\left( \frac{\widehat{\mathbf{\beta }}^{PC\prime }\mathbb{M}_{1_{N}}%
\widehat{\mathbf{\beta }}^{PC}}{N}\right) ^{-1}\left( \frac{\widehat{\mathbf{%
\beta }}^{PC\prime }\mathbb{M}_{1_{N}}\mathbf{\beta }}{N}\right) \overline{v}
\notag \\
&&+\left( \frac{\widehat{\mathbf{\beta }}^{PC\prime }\mathbb{M}_{1_{N}}%
\widehat{\mathbf{\beta }}^{PC}}{N}\right) ^{-1}\left( \frac{\widehat{\mathbf{%
\beta }}^{PC\prime }\mathbb{M}_{1_{N}}\overline{\mathbf{u}}}{N}\right) 
\notag \\
&=&I+II+III.  \notag
\end{eqnarray}%
Note, to begin with, that%
\begin{eqnarray*}
&&\widehat{\mathbf{\beta }}^{PC\prime }\mathbb{M}_{1_{N}}\widehat{\mathbf{%
\beta }}^{PC} \\
&=&\mathbf{H}^{\prime }\mathbf{S}_{\beta }\mathbf{H}+\mathbf{H}^{\prime }%
\mathbf{\beta }^{\prime }\mathbb{M}_{1_{N}}\left( \widehat{\mathbf{\beta }}%
^{PC}-\mathbf{\beta H}\right) +\left( \widehat{\mathbf{\beta }}^{PC}-\mathbf{%
\beta H}\right) ^{\prime }\mathbb{M}_{1_{N}}\mathbf{\beta H} \\
&&+\left( \widehat{\mathbf{\beta }}^{PC}-\mathbf{\beta H}\right) ^{\prime }%
\mathbb{M}_{1_{N}}\left( \widehat{\mathbf{\beta }}^{PC}-\mathbf{\beta H}%
\right) ;
\end{eqnarray*}%
thus, using Lemma \ref{l2norm}, it is easy to see that%
\begin{equation*}
\frac{1}{N}\left\Vert \widehat{\mathbf{\beta }}^{PC\prime }\mathbb{M}_{1_{N}}%
\widehat{\mathbf{\beta }}^{PC}-\mathbf{H}^{\prime }\mathbf{S}_{\beta }%
\mathbf{H}\right\Vert =o_{a.s.}\left( \frac{\left( \log N\log T\right)
^{1+\epsilon }}{T^{1/2}}\right) +O\left( \frac{1}{N}\right) .
\end{equation*}%
Assumption \ref{loadings}\textit{(iii)} and Lemma \ref{matrix-h} guarantee
that $\mathbf{H}^{\prime }\mathbf{S}_{\beta }\mathbf{H}$ is invertible, and
therefore we may write 
\begin{equation}
\left\Vert \left( \frac{1}{N}\widehat{\mathbf{\beta }}^{PC\prime }\mathbb{M}%
_{1_{N}}\widehat{\mathbf{\beta }}^{PC}\right) ^{-1}\right\Vert
=O_{a.s.}\left( 1\right) .  \label{inv-den-pc}
\end{equation}%
Note now that, using (\ref{beta-pc-dec}) \textquotedblleft in
reverse\textquotedblright , viz.%
\begin{eqnarray}
\mathbf{\beta } &=&\widehat{\mathbf{\beta }}^{PC}\mathbf{H}^{-1}\mathbf{-}%
\frac{1}{N}\mathbf{\beta }\left( \frac{1}{T}\sum_{t=1}^{T}\widetilde{v}_{t}%
\widetilde{\mathbf{u}}_{t}^{\prime }\right) \widehat{\mathbf{\beta }}^{PC}%
\widehat{\Phi }^{-1}\mathbf{H}^{-1}  \label{b-pc-rev} \\
&&-\frac{1}{N}\left( \frac{1}{T}\sum_{t=1}^{T}\widetilde{\mathbf{u}}_{t}%
\widetilde{v}_{t}^{\prime }\right) \mathbf{\beta }^{\prime }\widehat{\mathbf{%
\beta }}^{PC}\widehat{\Phi }^{-1}\mathbf{H}^{-1}  \notag \\
&&-\left[ \frac{1}{NT}\sum_{t=1}^{T}\widetilde{\mathbf{u}}_{t}\widetilde{%
\mathbf{u}}_{t}^{\prime }\right] \widehat{\mathbf{\beta }}^{PC}\widehat{\Phi 
}^{-1}\mathbf{H}^{-1},  \notag
\end{eqnarray}%
we have%
\begin{eqnarray*}
&&\widehat{\mathbf{\beta }}^{PC\prime }\mathbb{M}_{1_{N}}\mathbf{\beta } \\
&\mathbf{=}&\widehat{\mathbf{\beta }}^{PC\prime }\mathbb{M}_{1_{N}}\widehat{%
\mathbf{\beta }}^{PC}\mathbf{H}^{-1}-\widehat{\mathbf{\beta }}^{PC\prime }%
\mathbb{M}_{1_{N}}\frac{1}{N}\mathbf{\beta }\left( \frac{1}{T}\sum_{t=1}^{T}%
\widetilde{v}_{t}\widetilde{\mathbf{u}}_{t}^{\prime }\right) \widehat{%
\mathbf{\beta }}^{PC}\widehat{\Phi }^{-1}\mathbf{H}^{-1} \\
&&-\widehat{\mathbf{\beta }}^{PC\prime }\mathbb{M}_{1_{N}}\frac{1}{N}\left( 
\frac{1}{T}\sum_{t=1}^{T}\widetilde{\mathbf{u}}_{t}\widetilde{v}_{t}^{\prime
}\right) \mathbf{\beta }^{\prime }\widehat{\mathbf{\beta }}^{PC}\widehat{%
\Phi }^{-1}\mathbf{H}^{-1}-\widehat{\mathbf{\beta }}^{PC\prime }\mathbb{M}%
_{1_{N}}\left[ \frac{1}{NT}\sum_{t=1}^{T}\widetilde{\mathbf{u}}_{t}%
\widetilde{\mathbf{u}}_{t}^{\prime }\right] \widehat{\mathbf{\beta }}^{PC}%
\widehat{\Phi }^{-1}\mathbf{H}^{-1}.
\end{eqnarray*}%
Following exactly the same steps as in the proof of Lemma \ref{l2norm}, it
can be shown that%
\begin{equation}
\frac{\widehat{\mathbf{\beta }}^{PC\prime }\mathbb{M}_{1_{N}}\mathbf{\beta }%
}{N}\mathbf{=}\frac{\widehat{\mathbf{\beta }}^{PC\prime }\mathbb{M}_{1_{N}}%
\widehat{\mathbf{\beta }}^{PC}}{N}\mathbf{H}^{-1}+o_{a.s.}\left( \frac{%
\left( \log N\log T\right) ^{1+\epsilon }}{T^{1/2}}\right) +O\left( \frac{1}{%
N}\right) ,  \label{beta-pc-ennesimo}
\end{equation}%
so that, in (\ref{lam-pc})%
\begin{equation*}
I=\mathbf{H}^{-1}\lambda +o_{a.s.}\left( \frac{\left( \log N\log T\right)
^{1+\epsilon }}{T^{1/2}}\right) +O\left( \frac{1}{N}\right) .
\end{equation*}%
Indeed, by the same token it also holds that 
\begin{equation}
\frac{\widehat{\mathbf{\beta }}^{PC\prime }\mathbb{M}_{1_{N}}\mathbf{\beta }%
}{N}\mathbf{=H}^{\prime }\mathbf{S}_{\beta }+o_{a.s.}\left( \frac{\left(
\log N\log T\right) ^{1+\epsilon }}{T^{1/2}}\right) +O\left( \frac{1}{N}%
\right) =O_{a.s.}\left( 1\right) .  \label{beta-pc-unaltro}
\end{equation}%
Recalling that $\overline{v}=o_{a.s.}\left( T^{-1/2}\left( \log T\right)
^{1+\epsilon }\right) $, using (\ref{inv-den-pc}) and (\ref{beta-pc-unaltro}%
) it follows that%
\begin{equation*}
II=o_{a.s.}\left( T^{-1/2}\left( \log T\right) ^{1+\epsilon }\right) .
\end{equation*}%
Finally, we study%
\begin{equation*}
\frac{\widehat{\mathbf{\beta }}^{PC\prime }\mathbb{M}_{1_{N}}\overline{%
\mathbf{u}}}{N}=\mathbf{H}^{\prime }\frac{\mathbf{\beta }\mathbb{M}_{1_{N}}%
\overline{\mathbf{u}}}{N}+\frac{\left( \widehat{\mathbf{\beta }}^{PC}-%
\mathbf{\beta H}\right) ^{\prime }\mathbb{M}_{1_{N}}\overline{\mathbf{u}}}{N}%
.
\end{equation*}%
We already know from the proof of (\ref{proof-a-caso}) that%
\begin{equation*}
\left\Vert \frac{\mathbf{\beta }^{\prime }\mathbb{M}_{1_{N}}\overline{%
\mathbf{u}}}{N}\right\Vert =o_{a.s.}\left( \frac{\left( \log N\log T\right)
^{1+\epsilon }}{\sqrt{NT}}\right) .
\end{equation*}%
Also, note that%
\begin{equation*}
\left\vert \frac{\left( \widehat{\mathbf{\beta }}^{PC}-\mathbf{\beta H}%
\right) ^{\prime }\mathbb{M}_{1_{N}}\overline{\mathbf{u}}}{N}\right\vert
\leq \left\Vert \frac{\widehat{\mathbf{\beta }}^{PC}-\mathbf{\beta H}}{%
N^{1/2}}\right\Vert \left\Vert \frac{\mathbb{M}_{1_{N}}\overline{\mathbf{u}}%
}{N^{1/2}}\right\Vert ;
\end{equation*}%
we know from Lemma \ref{l2norm} that 
\begin{equation*}
\left\Vert \frac{\widehat{\mathbf{\beta }}^{PC}-\mathbf{\beta H}}{N^{1/2}}%
\right\Vert =o_{a.s.}\left( \frac{\left( \log N\log T\right) ^{1+\epsilon }}{%
T^{1/2}}\right) +O\left( \frac{1}{N}\right) ,
\end{equation*}%
and, by standard passages%
\begin{eqnarray*}
\left\Vert \frac{\mathbb{M}_{1_{N}}\overline{\mathbf{u}}}{N^{1/2}}%
\right\Vert &\leq &\left\Vert \frac{\overline{\mathbf{u}}}{N^{1/2}}%
\right\Vert +\frac{1}{N}N^{1/2}\left\vert \frac{\sum_{i=1}^{N}\overline{u}%
_{i}}{N^{1/2}}\right\vert \\
&=&\frac{1}{N^{1/2}}\left( \sum_{i=1}^{N}\left( \frac{1}{T}%
\sum_{t=1}^{T}u_{i,t}\right) ^{2}\right) ^{1/2}+\left\vert \frac{1}{NT}%
\sum_{i=1}^{N}\sum_{t=1}^{T}u_{i,t}\right\vert \\
&=&o_{a.s.}\left( \frac{\left( \log N\log T\right) ^{1+\epsilon }}{\sqrt{T}}%
\right) +o_{a.s.}\left( \frac{\left( \log N\log T\right) ^{1+\epsilon }}{%
\sqrt{NT}}\right) .
\end{eqnarray*}%
The final result follows by putting all together.
\end{proof}
\end{lemma}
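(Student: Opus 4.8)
The plan is to start from the decomposition of $\widehat{\lambda}^{PC}$ recorded under the null in (\ref{lam-pc}), namely $\widehat{\lambda}^{PC}=I+II+III$, where each of the three terms is the inverse of the denominator matrix $N^{-1}\widehat{\mathbf{\beta}}^{PC\prime}\mathbb{M}_{1_{N}}\widehat{\mathbf{\beta}}^{PC}$ applied to a numerator in $\lambda$, in $\overline{v}$, and in $\overline{\mathbf{u}}$ respectively. The whole statement reduces to (a) showing that this denominator is invertible with an $O_{a.s.}(1)$ inverse, and (b) bounding each numerator. For (a) I would write $\widehat{\mathbf{\beta}}^{PC}=\mathbf{\beta H}+(\widehat{\mathbf{\beta}}^{PC}-\mathbf{\beta H})$ and expand the quadratic form: the leading block is $\mathbf{H}^{\prime}\mathbf{S}_{\beta}\mathbf{H}$, while every cross term is controlled by Lemma \ref{l2norm}, giving $N^{-1}\Vert\widehat{\mathbf{\beta}}^{PC\prime}\mathbb{M}_{1_{N}}\widehat{\mathbf{\beta}}^{PC}-\mathbf{H}^{\prime}\mathbf{S}_{\beta}\mathbf{H}\Vert=o_{a.s.}((\log N\log T)^{1+\epsilon}T^{-1/2})+O(N^{-1})$. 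Since $\mathbf{H}^{\prime}\mathbf{S}_{\beta}\mathbf{H}$ is invertible by Assumption \ref{loadings}(iii) together with Lemma \ref{matrix-h}, a standard perturbation argument gives $\Vert(N^{-1}\widehat{\mathbf{\beta}}^{PC\prime}\mathbb{M}_{1_{N}}\widehat{\mathbf{\beta}}^{PC})^{-1}\Vert=O_{a.s.}(1)$, i.e. (\ref{inv-den-pc}).

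For the first term $I$, the key device is to rewrite the numerator $N^{-1}\widehat{\mathbf{\beta}}^{PC\prime}\mathbb{M}_{1_{N}}\mathbf{\beta}$ using the loadings expansion (\ref{beta-pc-dec}) read ``in reverse'' as in (\ref{b-pc-rev}), which represents $\mathbf{\beta}$ as $\widehat{\mathbf{\beta}}^{PC}\mathbf{H}^{-1}$ plus three error pieces built from $T^{-1}\sum_{t}\widetilde{v}_{t}\widetilde{\mathbf{u}}_{t}^{\prime}$ and $N^{-1}T^{-1}\sum_{t}\widetilde{\mathbf{u}}_{t}\widetilde{\mathbf{u}}_{t}^{\prime}$. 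Reusing verbatim the moment estimates from the proof of Lemma \ref{l2norm}, with Lemma \ref{inversion} to bound $\widehat{\Phi}^{-1}$ and Lemma \ref{matrix-h} to bound $\mathbf{H}^{-1}$, these pieces are $o_{a.s.}((\log N\log T)^{1+\epsilon}T^{-1/2})+O(N^{-1})$, yielding (\ref{beta-pc-ennesimo}). Composing with the $O_{a.s.}(1)$ inverse denominator then gives $I=\mathbf{H}^{-1}\lambda+o_{a.s.}((\log N\log T)^{1+\epsilon}T^{-1/2})+O(N^{-1})$.

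For the second term $II$, the same reverse expansion delivers $N^{-1}\widehat{\mathbf{\beta}}^{PC\prime}\mathbb{M}_{1_{N}}\mathbf{\beta}=\mathbf{H}^{\prime}\mathbf{S}_{\beta}+o_{a.s.}(\cdots)=O_{a.s.}(1)$ as in (\ref{beta-pc-unaltro}); since $\overline{v}=o_{a.s.}(T^{-1/2}(\log T)^{1+\epsilon})$, it follows that $II=o_{a.s.}(T^{-1/2}(\log T)^{1+\epsilon})$. For the third term $III$ I would split $N^{-1}\widehat{\mathbf{\beta}}^{PC\prime}\mathbb{M}_{1_{N}}\overline{\mathbf{u}}=\mathbf{H}^{\prime}N^{-1}\mathbf{\beta}^{\prime}\mathbb{M}_{1_{N}}\overline{\mathbf{u}}+N^{-1}(\widehat{\mathbf{\beta}}^{PC}-\mathbf{\beta H})^{\prime}\mathbb{M}_{1_{N}}\overline{\mathbf{u}}$; the first piece is $o_{a.s.}((\log N\log T)^{1+\epsilon}(NT)^{-1/2})$ from the bound already obtained in deriving (\ref{proof-a-caso}), and the second I would control by Cauchy--Schwarz as $\Vert N^{-1/2}(\widehat{\mathbf{\beta}}^{PC}-\mathbf{\beta H})\Vert\cdot\Vert N^{-1/2}\mathbb{M}_{1_{N}}\overline{\mathbf{u}}\Vert$, invoking Lemma \ref{l2norm} for the first factor and a direct second-moment computation (using $\mathbb{E}(\overline{u}_{i}^{2})\le c_{0}T^{-1}$ and the weak cross-sectional dependence of Assumption \ref{idiosyncratic}) for $\Vert N^{-1/2}\mathbb{M}_{1_{N}}\overline{\mathbf{u}}\Vert=o_{a.s.}((\log N\log T)^{1+\epsilon}T^{-1/2})$. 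Collecting $I$, $II$, $III$ then yields the claim.

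I expect the main obstacle to be the bookkeeping around the rotation matrix $\mathbf{H}$ and, specifically, tracking the provenance of the non-vanishing $O(N^{-1})$ term, which is not estimation slack but a genuine bias: it originates from the idiosyncratic second-moment contribution $N^{-1}\mathbb{E}(\mathbf{u}_{0}\mathbf{u}_{0}^{\prime})\widehat{\mathbf{\beta}}^{PC}\widehat{\Phi}^{-1}$ inside Lemma \ref{l2norm} (the piece there labelled $II_{b}$, of order $O(N^{-1})$), and it propagates through the reverse expansion (\ref{b-pc-rev}) into the numerator of $I$. The delicate point is to verify that composing the $O_{a.s.}(1)$ inverse denominator with numerators of the form $o_{a.s.}(\cdot)+O(N^{-1})$ does not degrade the $T^{-1/2}$ and $N^{-1}$ rates; everything else is a faithful reprise of the moment bounds already established in Lemmas \ref{l2norm}--\ref{matrix-h}.
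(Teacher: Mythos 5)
Your proposal is correct and follows essentially the same route as the paper's own proof: the same decomposition of $\widehat{\lambda}^{PC}$ into the three terms of (\ref{lam-pc}), the same perturbation argument for the denominator via Lemma \ref{l2norm} and Lemma \ref{matrix-h} to obtain (\ref{inv-den-pc}), the same ``reverse'' expansion (\ref{b-pc-rev}) to derive (\ref{beta-pc-ennesimo}) and (\ref{beta-pc-unaltro}), and the same Cauchy--Schwarz treatment of the $\overline{\mathbf{u}}$ term using (\ref{proof-a-caso}). Your closing remark correctly identifies the $O(N^{-1})$ term as a bias propagating from the idiosyncratic second-moment piece in Lemma \ref{l2norm}, which is consistent with the paper's accounting.
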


\begin{lemma}
\label{lambda-pc-hA}We assume that Assumptions \ref{error}-\ref{exogeneity},
and \ref{loadings}-\ref{fact-idios} are satisfied. Then it holds that, under 
$\mathbb{H}_{A}$%
\begin{equation*}
\widehat{\lambda }^{PC}-\mathbf{H}^{-1}\lambda =\frac{1}{N}\left( \mathbf{H}%
^{\prime }\mathbf{S}_{\beta }\mathbf{H}\right) ^{-1}\mathbf{H}^{\prime }%
\mathbf{\beta }^{\prime }\mathbb{M}_{1_{N}}\mathbf{\alpha }+o_{a.s.}\left( 
\frac{\left( \log T\log N\right) ^{1+\epsilon }}{T^{1/2}}\right) +O\left( 
\frac{1}{N}\right) ,
\end{equation*}%
for all $\epsilon >0$.

\begin{proof}
The proof follows by combining the arguments in Lemmas \ref{lambda-hA} and %
\ref{l2norm_lambda}.
\end{proof}
\end{lemma}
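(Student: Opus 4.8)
The plan is to mimic the proof of Lemma \ref{l2norm_lambda}, the only new feature being that the term of the expansion (\ref{lambda-pc}) carrying $\mathbf{\alpha }$ no longer vanishes and instead supplies the announced leading contribution. I would write (\ref{lambda-pc}) as
\[
\widehat{\lambda }^{PC}=\left( \frac{\widehat{\mathbf{\beta }}^{PC\prime }\mathbb{M}_{1_{N}}\widehat{\mathbf{\beta }}^{PC}}{N}\right) ^{-1}\left( \frac{\widehat{\mathbf{\beta }}^{PC\prime }\mathbb{M}_{1_{N}}\mathbf{\alpha }}{N}\right) +R,
\]
where $R$ collects the three remaining terms of (\ref{lambda-pc}). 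Since $R$ contains no $\mathbf{\alpha }$, its analysis under $\mathbb{H}_{A}$ is word-for-word the one performed under $\mathbb{H}_{0}$ in Lemma \ref{l2norm_lambda}; hence I would inherit $R=\mathbf{H}^{-1}\lambda +o_{a.s.}\left( \left( \log T\log N\right) ^{1+\epsilon }T^{-1/2}\right) +O\left( N^{-1}\right) $ with no further work.

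Next I would treat the first term. From the proof of Lemma \ref{l2norm_lambda} I already have $N^{-1}\widehat{\mathbf{\beta }}^{PC\prime }\mathbb{M}_{1_{N}}\widehat{\mathbf{\beta }}^{PC}=\mathbf{H}^{\prime }\mathbf{S}_{\beta }\mathbf{H}+o_{a.s.}\left( \cdot \right) +O\left( N^{-1}\right) $, together with the uniform invertibility (\ref{inv-den-pc}). For the numerator I would split $\widehat{\mathbf{\beta }}^{PC}=\mathbf{\beta H}+\left( \widehat{\mathbf{\beta }}^{PC}-\mathbf{\beta H}\right) $, obtaining
\[
\frac{\widehat{\mathbf{\beta }}^{PC\prime }\mathbb{M}_{1_{N}}\mathbf{\alpha }}{N}=\mathbf{H}^{\prime }\frac{\mathbf{\beta }^{\prime }\mathbb{M}_{1_{N}}\mathbf{\alpha }}{N}+\frac{\left( \widehat{\mathbf{\beta }}^{PC}-\mathbf{\beta H}\right) ^{\prime }\mathbb{M}_{1_{N}}\mathbf{\alpha }}{N}.
\]
Multiplying by the bounded-norm inverse of the Gram matrix and replacing the latter by $\mathbf{H}^{\prime }\mathbf{S}_{\beta }\mathbf{H}$ then produces exactly $\frac{1}{N}\left( \mathbf{H}^{\prime }\mathbf{S}_{\beta }\mathbf{H}\right) ^{-1}\mathbf{H}^{\prime }\mathbf{\beta }^{\prime }\mathbb{M}_{1_{N}}\mathbf{\alpha }$, with the error of this replacement multiplied by the $O(1)$-sized leading numerator being absorbed into $o_{a.s.}\left( \cdot \right) +O\left( N^{-1}\right) $, just as in Lemma \ref{l2norm_lambda}.

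The hard part will be controlling the remainder $N^{-1}\left( \widehat{\mathbf{\beta }}^{PC}-\mathbf{\beta H}\right) ^{\prime }\mathbb{M}_{1_{N}}\mathbf{\alpha }$, the latent-factor counterpart of the term dealt with at the end of Lemma \ref{lambda-hA}. By the Cauchy--Schwarz inequality it is bounded by $N^{-1/2}\left\Vert \widehat{\mathbf{\beta }}^{PC}-\mathbf{\beta H}\right\Vert \cdot N^{-1/2}\left\Vert \mathbb{M}_{1_{N}}\mathbf{\alpha }\right\Vert $; Lemma \ref{l2norm} controls the first factor by $o_{a.s.}\left( \left( \log N\log T\right) ^{1+\epsilon }T^{-1/2}\right) +O\left( N^{-1}\right) $, whereas $N^{-1/2}\left\Vert \mathbb{M}_{1_{N}}\mathbf{\alpha }\right\Vert =O\left( 1\right) $ for the alternatives under consideration, so the remainder is of the stated order. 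The delicate point, absent from the Fama--MacBeth Lemma \ref{lambda-hA}, is the extra $O\left( N^{-1}\right) $ contamination arising from the PCA estimation of $\mathbf{\beta }$ and from the rotation $\mathbf{H}$: I would have to verify that it neither exceeds $O\left( N^{-1}\right) $ nor couples with $\mathbf{\alpha }$ so as to spoil the $o_{a.s.}$ rate. Collecting the first term with $R$ then yields the claim.
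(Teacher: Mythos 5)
Your proposal is correct and is essentially the paper's own argument: the paper's proof is a one-line instruction to combine Lemmas \ref{lambda-hA} and \ref{l2norm_lambda}, and your write-up is exactly that combination made explicit --- the null-case analysis of Lemma \ref{l2norm_lambda} covers the $\mathbf{\alpha}$-free terms of (\ref{lambda-pc}) (legitimately, since PCA acts on demeaned data, so $\widehat{\mathbf{\beta}}^{PC}$, $\widehat{\Phi}$ and $\mathbf{H}$ are unaffected by $\mathbf{\alpha}$), while the $\mathbf{\alpha}$-dependent terms are treated in the spirit of Lemma \ref{lambda-hA}. Your substitution of the Cauchy--Schwarz bound via Lemma \ref{l2norm} for the explicit OLS moment computation of Lemma \ref{lambda-hA} (unavailable for PCA estimates) is the natural reading of ``combining the arguments'' and is precisely what generates the extra $O\left(N^{-1}\right)$ term in the statement.
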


\begin{lemma}
\label{beta-i-numezzi}We assume that Assumptions \ref{error}-\ref{exogeneity}%
, and \ref{loadings}-\ref{fact-idios} are satisfied. Then it holds that%
\begin{equation*}
\sum_{i=1}^{N}\left\Vert \widehat{\beta }_{i}^{PC}-\mathbf{H}^{\prime }\beta
_{i}\right\Vert ^{\nu /2}=o_{a.s.}\left( NT^{-\nu /4}\left( \log N\log
T\right) ^{\left( 1+\epsilon \right) \nu /2}\right) +O\left( N^{1-\nu
/2}\right) ,
\end{equation*}%
for all $\epsilon >0$.

\begin{proof}
Using (\ref{beta-pc-scalar}), it holds that%
\begin{eqnarray*}
&&\sum_{i=1}^{N}\left\Vert \widehat{\beta }_{i}^{PC}-\mathbf{H}^{\prime
}\beta _{i}\right\Vert ^{\nu /2} \\
&\leq &c_{0}\left\Vert \widehat{\Phi }^{-1}\right\Vert ^{\nu /2}\left\Vert 
\widehat{\mathbf{\beta }}^{PC}\right\Vert ^{\nu /2}\left\Vert \frac{1}{NT}%
\sum_{t=1}^{T}\widetilde{\mathbf{u}}_{t}\widetilde{v}_{t}^{\prime
}\right\Vert ^{\nu /2}\sum_{i=1}^{N}\left\Vert \beta _{i}\right\Vert ^{\nu
/2} \\
&&+c_{0}\left\Vert \widehat{\Phi }^{-1}\right\Vert ^{\nu /2}\left\Vert 
\widehat{\mathbf{\beta }}^{PC}\right\Vert ^{\nu /2}\left\Vert \mathbf{\beta }%
\right\Vert ^{\nu /2}\sum_{i=1}^{N}\left\Vert \frac{1}{NT}\sum_{t=1}^{T}%
\widetilde{v}_{t}\widetilde{u}_{i,t}\right\Vert ^{\nu /2} \\
&&+c_{0}\left\Vert \widehat{\Phi }^{-1}\right\Vert ^{\nu
/2}\sum_{i=1}^{N}\left\Vert \frac{1}{NT}\sum_{t=1}^{T}\left( \widehat{%
\mathbf{\beta }}^{PC}\right) ^{\prime }\widetilde{\mathbf{u}}_{t}\widetilde{u%
}_{i,t}\right\Vert ^{\nu /2} \\
&=&I+II+III.
\end{eqnarray*}%
We begin by noting that, by Lemma \ref{inversion}, $\left\Vert \widehat{\Phi 
}^{-1}\right\Vert ^{\nu /2}=O_{a.s.}\left( 1\right) $; further, $\left\Vert 
\widehat{\mathbf{\beta }}^{PC}\right\Vert ^{\nu /2}=c_{0}N^{\nu /4}$ by
construction, and $\left\Vert \mathbf{\beta }\right\Vert ^{\nu
/2}=c_{0}N^{\nu /4}$ by the identification restriction $\mathbf{\beta }%
^{\prime }\mathbf{\beta }=N\mathbb{I}_{K}$. Finally, Assumption \ref%
{loadings} entails $\sum_{i=1}^{N}\left\Vert \beta _{i}\right\Vert ^{\nu
/2}=O\left( N\right) $. Consider now%
\begin{equation*}
\left\Vert \frac{1}{NT}\sum_{t=1}^{T}\widetilde{\mathbf{u}}_{t}\widetilde{v}%
_{t}^{\prime }\right\Vert ^{\nu /2}\leq \left\Vert \frac{1}{NT}\sum_{t=1}^{T}%
\mathbf{u}_{t}v_{t}^{\prime }\right\Vert ^{\nu /2}+\left\Vert \frac{1}{N}%
\overline{\mathbf{u}}\overline{v}\right\Vert ^{\nu /2}.
\end{equation*}%
We know from the above that%
\begin{equation*}
\left\Vert \frac{1}{NT}\sum_{t=1}^{T}\mathbf{u}_{t}v_{t}^{\prime
}\right\Vert =o_{a.s.}\left( N^{-1/2}T^{-1/2}\left( \log T\log N\right)
^{1+\epsilon }\right) ,
\end{equation*}%
and therefore 
\begin{equation*}
\left\Vert \frac{1}{NT}\sum_{t=1}^{T}\mathbf{u}_{t}v_{t}^{\prime
}\right\Vert ^{\nu /2}=o_{a.s.}\left( T^{-\nu /4}N^{-\nu /4}\left( \log
T\log N\right) ^{\left( 1+\epsilon \right) \nu /2}\right) .
\end{equation*}%
Also, seeing as (as shown above) it holds that $\overline{v}=o_{a.s.}\left(
T^{-1/2}\left( \log T\right) ^{1+\epsilon }\right) $ and $\left\Vert 
\overline{\mathbf{u}}\right\Vert =o_{a.s.}\left( N^{1/2}T^{-1/2}\left( \log
T\log N\right) ^{1+\epsilon }\right) $, we have%
\begin{equation*}
\left\Vert \frac{1}{N}\overline{\mathbf{u}}\overline{v}\right\Vert ^{\nu
/2}=o_{a.s.}\left( T^{-\nu /2}N^{-\nu /4}\left( \log T\log N\right) ^{\left(
1+\epsilon \right) \nu /2}\right) .
\end{equation*}%
Hence, combining all the results above, it follows that%
\begin{equation*}
I=o_{a.s.}\left( NT^{-\nu /4}\left( \log T\log N\right) ^{\left( 1+\epsilon
\right) \nu /2}\right) .
\end{equation*}%
Turning to $II$, note that%
\begin{equation*}
\sum_{i=1}^{N}\left\Vert \frac{1}{NT}\sum_{t=1}^{T}\widetilde{v}_{t}%
\widetilde{u}_{i,t}\right\Vert ^{\nu /2}\leq \sum_{i=1}^{N}\left\Vert \frac{1%
}{NT}\sum_{t=1}^{T}v_{t}u_{i,t}\right\Vert ^{\nu
/2}+\sum_{i=1}^{N}\left\Vert \frac{1}{N}\overline{v}\overline{u}%
_{i}\right\Vert ^{\nu /2}
\end{equation*}%
with%
\begin{equation*}
\sum_{i=1}^{N}\mathbb{E}\left\Vert \frac{1}{NT}\sum_{t=1}^{T}v_{t}u_{i,t}%
\right\Vert ^{\nu /2}=c_{0}N^{1-\nu /2}T^{-\nu /4},
\end{equation*}%
using Lemma \ref{summability}. Hence%
\begin{equation*}
\sum_{i=1}^{N}\left\Vert \frac{1}{NT}\sum_{t=1}^{T}v_{t}u_{i,t}\right\Vert
^{\nu /2}=o_{a.s.}\left( N^{1-\nu /2}T^{-\nu /4}\left( \log N\log T\right)
^{2+\epsilon }\right) ,
\end{equation*}%
and 
\begin{equation*}
II=o_{a.s.}\left( NT^{-\nu /4}\left( \log N\log T\right) ^{2+\epsilon
}\right) .
\end{equation*}%
Also, by exactly the same passages%
\begin{equation*}
\mathbb{E}\sum_{i=1}^{N}\left\Vert \frac{1}{N}\overline{u}_{i}\right\Vert
^{\nu /2}=c_{0}N^{1-\nu /2}T^{-\nu /4},
\end{equation*}%
and therefore%
\begin{equation*}
II=o_{a.s.}\left( NT^{-\nu /4}\left( \log N\log T\right) ^{2+\epsilon
}\right) ,
\end{equation*}%
so that $I$ dominated $II$. Finally, considering $III$ it holds that%
\begin{equation*}
\sum_{i=1}^{N}\left\Vert \frac{1}{NT}\sum_{t=1}^{T}\left( \widehat{\mathbf{%
\beta }}^{PC}\right) ^{\prime }\widetilde{\mathbf{u}}_{t}\widetilde{u}%
_{i,t}\right\Vert ^{\nu /2}\leq \sum_{i=1}^{N}\left\Vert \frac{1}{NT}%
\sum_{t=1}^{T}\left( \widehat{\mathbf{\beta }}^{PC}\right) ^{\prime }\mathbf{%
u}_{t}u_{i,t}\right\Vert ^{\nu /2}+\sum_{i=1}^{N}\left\Vert \frac{1}{NT}%
\sum_{t=1}^{T}\left( \widehat{\mathbf{\beta }}^{PC}\right) ^{\prime }%
\overline{\mathbf{u}}\overline{u}_{i}\right\Vert ^{\nu /2}.
\end{equation*}%
Consider%
\begin{equation*}
\sum_{i=1}^{N}\left\Vert \frac{1}{NT}\sum_{t=1}^{T}\left( \widehat{\mathbf{%
\beta }}^{PC}\right) ^{\prime }\widetilde{\mathbf{u}}_{t}u_{i,t}\right\Vert
^{\nu /2}\leq \sum_{i=1}^{N}\left\Vert \frac{1}{NT}\sum_{t=1}^{T}\left( 
\widehat{\mathbf{\beta }}^{PC}-\mathbf{\beta H}\right) ^{\prime }\mathbf{u}%
_{t}u_{i,t}\right\Vert ^{\nu /2}+\sum_{i=1}^{N}\left\Vert \frac{1}{NT}%
\sum_{t=1}^{T}\mathbf{H}^{\prime }\mathbf{\beta }^{\prime }\mathbf{u}%
_{t}u_{i,t}\right\Vert ^{\nu /2}.
\end{equation*}%
We have 
\begin{equation*}
\sum_{i=1}^{N}\left\Vert \frac{1}{NT}\sum_{t=1}^{T}\left( \widehat{\mathbf{%
\beta }}^{PC}-\mathbf{\beta H}\right) ^{\prime }\mathbf{u}%
_{t}u_{i,t}\right\Vert ^{\nu /2}\leq \left\Vert \widehat{\mathbf{\beta }}%
^{PC}-\mathbf{\beta H}\right\Vert ^{\nu /2}\sum_{i=1}^{N}\left\Vert \frac{1}{%
NT}\sum_{t=1}^{T}\mathbf{u}_{t}u_{i,t}\right\Vert ^{\nu /2}
\end{equation*}%
with%
\begin{equation*}
\sum_{i=1}^{N}\left\Vert \frac{1}{NT}\sum_{t=1}^{T}\mathbf{u}%
_{t}u_{i,t}\right\Vert ^{\nu /2}\leq \sum_{i=1}^{N}\left\Vert \frac{1}{NT}%
\sum_{t=1}^{T}\left( \mathbf{u}_{t}u_{i,t}-\mathbb{E}\left( \mathbf{u}%
_{0}u_{i,0}\right) \right) \right\Vert ^{\nu /2}+\sum_{i=1}^{N}\left\Vert 
\frac{1}{N}\mathbb{E}\left( \mathbf{u}_{0}u_{i,0}\right) \right\Vert ^{\nu
/2}.
\end{equation*}%
It holds that%
\begin{eqnarray*}
&&\left( NT\right) ^{-\nu /2}\sum_{i=1}^{N}\mathbb{E}\left\Vert
\sum_{t=1}^{T}\left( \mathbf{u}_{t}u_{i,t}-\mathbb{E}\left( \mathbf{u}%
_{0}u_{i,0}\right) \right) \right\Vert ^{\nu /2} \\
&=&\left( NT\right) ^{-\nu /2}\sum_{i=1}^{N}\mathbb{E}\left(
\sum_{j=1}^{N}\left( \sum_{t=1}^{T}\left( u_{j,t}u_{i,t}-\mathbb{E}\left(
u_{j,0}u_{i,0}\right) \right) \right) ^{2}\right) ^{\nu /4} \\
&\leq &c_{0}\left( NT\right) ^{-\nu /2}N^{\nu
/4-1}\sum_{i=1}^{N}\sum_{j=1}^{N}\mathbb{E}\left( \sum_{t=1}^{T}\left(
u_{j,t}u_{i,t}-\mathbb{E}\left( u_{j,0}u_{i,0}\right) \right) \right) ^{\nu
/2}\leq c_{0}T^{-\nu /4}N^{1-\nu /4},
\end{eqnarray*}%
and, since $\left\Vert \mathbb{E}\left( \mathbf{u}_{0}u_{i,0}\right)
\right\Vert \leq \left\Vert \mathbb{E}\left( \mathbf{u}_{0}u_{i,0}\right)
\right\Vert _{1}\leq c_{0}$ 
\begin{equation*}
\sum_{i=1}^{N}\left\Vert \frac{1}{N}\mathbb{E}\left( \mathbf{u}%
_{0}u_{i,0}\right) \right\Vert ^{\nu /2}=O\left( N^{1-\nu /2}\right) .
\end{equation*}%
Combining these results with Lemma \ref{l2norm}, it follows that%
\begin{eqnarray*}
&&\sum_{i=1}^{N}\left\Vert \frac{1}{NT}\sum_{t=1}^{T}\left( \widehat{\mathbf{%
\beta }}^{PC}-\mathbf{\beta H}\right) ^{\prime }\mathbf{u}%
_{t}u_{i,t}\right\Vert ^{\nu /2} \\
&=&o_{a.s.}\left[ \left( N^{\nu /4}T^{-\nu /4}\left( \log N\log T\right)
^{(1+\epsilon )\nu /2}+N^{-\nu /4}\right) \left( \left( T^{-\nu /4}N^{1-\nu
/4}\left( \log N\log T\right) ^{1+\epsilon }\right) +N^{1-\nu /2}\right) %
\right] .
\end{eqnarray*}%
Also note that%
\begin{equation*}
\sum_{i=1}^{N}\left\Vert \frac{1}{NT}\sum_{t=1}^{T}\mathbf{H}^{\prime }%
\mathbf{\beta }^{\prime }\mathbf{u}_{t}u_{i,t}\right\Vert ^{\nu
/2}=O_{a.s.}\left( 1\right) \sum_{i=1}^{N}\left\Vert \frac{1}{NT}%
\sum_{t=1}^{T}\mathbf{\beta }^{\prime }\mathbf{u}_{t}u_{i,t}\right\Vert
^{\nu /2},
\end{equation*}%
and%
\begin{equation*}
\sum_{i=1}^{N}\left\Vert \frac{1}{NT}\sum_{t=1}^{T}\mathbf{\beta }^{\prime }%
\mathbf{u}_{t}u_{i,t}\right\Vert ^{\nu /2}\leq \sum_{i=1}^{N}\left\Vert 
\frac{1}{NT}\sum_{t=1}^{T}\mathbf{\beta }^{\prime }\left( \mathbf{u}%
_{t}u_{i,t}-\mathbb{E}\left( \mathbf{u}_{0}u_{i,0}\right) \right)
\right\Vert ^{\nu /2}+\sum_{i=1}^{N}\left\Vert \frac{1}{NT}\sum_{t=1}^{T}%
\mathbf{\beta }^{\prime }\mathbb{E}\left( \mathbf{u}_{0}u_{i,0}\right)
\right\Vert ^{\nu /2}
\end{equation*}%
It is immediate to see that, by Assumption \ref{idiosyncratic}\textit{(vi)} 
\begin{equation*}
\sum_{i=1}^{N}\mathbb{E}\left\Vert \frac{1}{NT}\sum_{t=1}^{T}\mathbf{\beta }%
^{\prime }\left( \mathbf{u}_{t}u_{i,t}-\mathbb{E}\left( \mathbf{u}%
_{0}u_{i,0}\right) \right) \right\Vert ^{\nu /2}\leq c_{0}N\left( NT\right)
^{-\nu /4}.
\end{equation*}%
Also%
\begin{eqnarray*}
&&\sum_{i=1}^{N}\left\Vert \frac{1}{NT}\sum_{t=1}^{T}\mathbf{\beta }^{\prime
}\mathbb{E}\left( \mathbf{u}_{0}u_{i,0}\right) \right\Vert ^{\nu /2} \\
&=&\sum_{i=1}^{N}\left\Vert \frac{1}{N}\mathbf{\beta }^{\prime }\mathbb{E}%
\left( \mathbf{u}_{0}u_{i,0}\right) \right\Vert ^{\nu
/2}=\sum_{i=1}^{N}\left\Vert \frac{1}{N}\sum_{j=1}^{N}\beta _{j}\mathbb{E}%
\left( u_{j,0}u_{i,0}\right) \right\Vert ^{\nu /2} \\
&\leq &\sum_{i=1}^{N}\left\Vert \max_{1\leq j\leq N}\left\vert \beta
_{j}\right\vert \frac{1}{N}\sum_{j=1}^{N}\left\vert \mathbb{E}\left(
u_{j,0}u_{i,0}\right) \right\vert \right\Vert ^{\nu /2}\leq c_{0}N^{1-\nu
/2},
\end{eqnarray*}%
by virtue of Assumption \ref{idiosyncratic}\textit{(iv)}. By the same logic
as above, it can be shown that the term $\sum_{i=1}^{N}\left\Vert \left(
NT\right) ^{-1}\sum_{t=1}^{T}\left( \widehat{\mathbf{\beta }}^{PC}\right)
^{\prime }\overline{\mathbf{u}}\overline{u}_{i}\right\Vert ^{\nu /2}$\ is
dominated. Putting all together, the final result obtains.
\end{proof}
\end{lemma}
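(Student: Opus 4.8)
The plan is to begin from the unit-by-unit expansion of the PCA estimator in equation (\ref{beta-pc-scalar}), which already isolates $\mathbf{H}^{\prime}\beta_i$ and writes $\widehat{\beta}_i^{PC}-\mathbf{H}^{\prime}\beta_i$ as a sum of three terms. Applying the elementary inequality $\|a+b+c\|^{\nu/2}\le c_0(\|a\|^{\nu/2}+\|b\|^{\nu/2}+\|c\|^{\nu/2})$ and summing over $i$, I would reduce the claim to bounding three sums $I$, $II$, $III$ attached to those terms. Throughout I would use the a.s.\ bounds $\|\widehat{\Phi}^{-1}\|=O_{a.s.}(1)$ from Lemma \ref{inversion} and $\|\mathbf{H}\|=O_{a.s.}(1)$ from Lemma \ref{matrix-h}, together with the identification normalisations $\|\widehat{\mathbf{\beta}}^{PC}\|=\|\mathbf{\beta}\|=N^{1/2}$, so that each factor $\|\widehat{\mathbf{\beta}}^{PC}\|^{\nu/2}$ or $\|\mathbf{\beta}\|^{\nu/2}$ contributes $c_0 N^{\nu/4}$.

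For $I$ and $II$ the strategy is to factor out the deterministic and operator-norm quantities and fall back on the cross-moment estimates already developed for Lemma \ref{l2norm}. In $I$ I would pull out $\sum_i\|\beta_i\|^{\nu/2}=O(N)$ (Assumption \ref{loadings}) and bound $\|(NT)^{-1}\sum_t\widetilde{\mathbf{u}}_t\widetilde{v}_t^{\prime}\|^{\nu/2}$ by splitting into $\|(NT)^{-1}\sum_t\mathbf{u}_t v_t^{\prime}\|^{\nu/2}$ and $\|N^{-1}\overline{\mathbf{u}}\,\overline{v}\|^{\nu/2}$, invoking the rates $\|(NT)^{-1}\sum_t\mathbf{u}_t v_t^{\prime}\|=o_{a.s.}(N^{-1/2}T^{-1/2}(\log N\log T)^{1+\epsilon})$, $\|\overline{\mathbf{u}}\|=o_{a.s.}(N^{1/2}T^{-1/2}(\log N\log T)^{1+\epsilon})$ and $\overline{v}=o_{a.s.}(T^{-1/2}(\log T)^{1+\epsilon})$ established there. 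In $II$ I would pull out the two $N^{\nu/4}$ factors and estimate $\sum_i\|(NT)^{-1}\sum_t\widetilde{v}_t\widetilde{u}_{i,t}\|^{\nu/2}$ by computing $\sum_i\mathbb{E}\|(NT)^{-1}\sum_t v_t u_{i,t}\|^{\nu/2}=c_0 N^{1-\nu/2}T^{-\nu/4}$ via Lemma \ref{summability} and passing to the a.s.\ bound through Lemma \ref{stout}. Both $I$ and $II$ then collapse to $o_{a.s.}(NT^{-\nu/4}(\log N\log T)^{(1+\epsilon)\nu/2})$.

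The hard part will be term $III$, the pure-idiosyncratic cross-product $\widehat{\Phi}^{-1}\widehat{\mathbf{\beta}}^{PC\prime}(NT)^{-1}\sum_t\widetilde{\mathbf{u}}_t\widetilde{u}_{i,t}$. After separating the $\overline{\mathbf{u}}\,\overline{u}_i$ correction (which is dominated) and substituting $\widehat{\mathbf{\beta}}^{PC}=(\widehat{\mathbf{\beta}}^{PC}-\mathbf{\beta H})+\mathbf{\beta H}$, I would treat the two pieces separately. For the piece carrying $\widehat{\mathbf{\beta}}^{PC}-\mathbf{\beta H}$, I would factor out $\|\widehat{\mathbf{\beta}}^{PC}-\mathbf{\beta H}\|^{\nu/2}$ (controlled by Lemma \ref{l2norm}) and estimate $\sum_i\|(NT)^{-1}\sum_t\mathbf{u}_t u_{i,t}\|^{\nu/2}$ by a further split into a centered sum and the mean $N^{-1}\mathbb{E}(\mathbf{u}_0 u_{i,0})$. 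The centered sum is the crux: expanding the Euclidean norm as $\sum_{j=1}^N(\cdot)^2$ and applying the $c_r$-inequality $(\sum_j a_j)^{\nu/4}\le N^{\nu/4-1}\sum_j a_j^{\nu/4}$ (valid since $\nu>4$) reduces it to $N$ scalar sums $\mathbb{E}(\sum_t(u_{j,t}u_{i,t}-\mathbb{E}(u_{j,0}u_{i,0})))^{\nu/2}\le c_0 T^{\nu/4}$ from Lemma \ref{summability}, giving the expected-value bound $c_0 N^{1-\nu/4}T^{-\nu/4}$ and then the a.s.\ rate through Lemma \ref{stout}; the mean contributes $O(N^{1-\nu/2})$ using $\|\mathbb{E}(\mathbf{u}_0 u_{i,0})\|\le\|\mathbb{E}(\mathbf{u}_0 u_{i,0})\|_1\le c_0$ (Assumption \ref{idiosyncratic}(iv)). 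For the piece carrying $\mathbf{\beta H}$, I would use $\|\mathbf{H}\|=O_{a.s.}(1)$ to reduce to $\sum_i\|(NT)^{-1}\sum_t\mathbf{\beta}^{\prime}\mathbf{u}_t u_{i,t}\|^{\nu/2}$, whose centered part is controlled exactly by Assumption \ref{idiosyncratic}(vi) with $w=\mathbf{\beta}$, $\|w\|=N^{1/2}$, yielding $c_0 N(NT)^{-\nu/4}$, while its mean part again contributes $O(N^{1-\nu/2})$ by Assumption \ref{idiosyncratic}(iv). Collecting every piece and retaining the two dominant orders --- the $NT^{-\nu/4}(\log N\log T)^{(1+\epsilon)\nu/2}$ rate from the centered terms and the non-vanishing $O(N^{1-\nu/2})$ from the idiosyncratic means --- delivers the stated bound. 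I expect the careful bookkeeping in $III$, in particular keeping track of how the $c_r$-inequality over $j$ interacts with the projection weight $\mathbf{\beta}$ in Assumption \ref{idiosyncratic}(vi), to be the only genuinely delicate step.
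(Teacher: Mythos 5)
Your proposal is correct and follows essentially the same route as the paper's proof: the identical decomposition of $\widehat{\beta}_i^{PC}-\mathbf{H}^{\prime}\beta_i$ from (\ref{beta-pc-scalar}) into the three terms $I$, $II$, $III$, the same invocations of Lemmas \ref{inversion}, \ref{matrix-h}, \ref{l2norm}, \ref{summability} and \ref{stout}, and the same centered-sum/mean splits in $III$ handled via the power-mean inequality with the $N^{\nu/4-1}$ factor and via Assumptions \ref{idiosyncratic}\textit{(vi)} (with $w=\mathbf{\beta}$) and \textit{(iv)}, arriving at the same two dominant orders. Nothing of substance differs from the published argument.
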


\begin{lemma}
\label{normalising_PC} We assume that Assumptions \ref{error}-\ref%
{exogeneity}, and \ref{loadings}-\ref{fact-idios} are satisfied. Then it
holds that%
\begin{eqnarray*}
\liminf_{\min \left\{ N,T\right\} \rightarrow \infty }\frac{1}{NT}%
\sum_{i=1}^{N}\sum_{t=1}^{T}\left( \widehat{u}_{i,t}^{PC}\right) ^{2} &>&0,
\\
\limsup_{\min \left\{ N,T\right\} \rightarrow \infty }\frac{1}{NT}%
\sum_{i=1}^{N}\sum_{t=1}^{T}\left( \widehat{u}_{i,t}^{PC}\right) ^{2}
&<&\infty .
\end{eqnarray*}

\begin{proof}
We let $K=1$ for simplicity and without loss of generality. Let $\widehat{%
\mathbf{u}}_{t}^{PC}=\left( \widehat{u}_{1,t}^{PC},...,\widehat{u}%
_{N,t}^{PC}\right) ^{\prime }$. It holds that%
\begin{eqnarray*}
&&\widehat{\mathbf{u}}_{t}^{PC} \\
&=&\widetilde{\mathbf{y}}_{t}-\widehat{\mathbf{\beta }}^{PC}\widehat{f}%
_{t}^{PC}=\mathbf{\beta }\widetilde{v}_{t}+\widetilde{\mathbf{u}}_{t}-%
\widehat{\mathbf{\beta }}^{PC}\widehat{f}_{t}^{PC} \\
&=&\mathbf{\beta }\widetilde{v}_{t}+\widetilde{\mathbf{u}}_{t}-\frac{1}{N}%
\widehat{\mathbf{\beta }}^{PC}\widehat{\mathbf{\beta }}^{PC\prime }\left( 
\mathbf{\beta }\widetilde{v}_{t}+\widetilde{\mathbf{u}}_{t}\right) \\
&=&\mathbf{\beta }\widetilde{v}_{t}+\widetilde{\mathbf{u}}_{t}-\frac{1}{N}%
\left[ \mathbf{\beta H+}\left( \widehat{\mathbf{\beta }}^{PC}-\mathbf{\beta H%
}\right) \right] \widehat{\mathbf{\beta }}^{PC\prime }\left[ \widehat{%
\mathbf{\beta }}^{PC}\mathbf{H}^{-1}+\mathbf{\beta -}\widehat{\mathbf{\beta }%
}^{PC}\mathbf{H}^{-1}\right] \widetilde{v}_{t} \\
&&-\frac{1}{N}\widehat{\mathbf{\beta }}^{PC}\mathbf{H}^{\prime }\mathbf{%
\beta }^{\prime }\widetilde{\mathbf{u}}_{t}-\frac{1}{N}\widehat{\mathbf{%
\beta }}^{PC}\left[ \widehat{\mathbf{\beta }}^{PC}-\mathbf{\beta H}\right]
^{\prime }\widetilde{\mathbf{u}}_{t} \\
&=&\widetilde{\mathbf{u}}_{t}-\left( \widehat{\mathbf{\beta }}^{PC}-\mathbf{%
\beta H}\right) \mathbf{H}^{-1}\widetilde{v}_{t}-\frac{1}{N}\mathbf{\beta H}%
\widehat{\mathbf{\beta }}^{PC\prime }\left( \mathbf{\beta -}\widehat{\mathbf{%
\beta }}^{PC}\mathbf{H}^{-1}\right) \widetilde{v}_{t} \\
&&-\frac{1}{N}\left( \widehat{\mathbf{\beta }}^{PC}-\mathbf{\beta H}\right) 
\widehat{\mathbf{\beta }}^{PC\prime }\left( \mathbf{\beta -}\widehat{\mathbf{%
\beta }}^{PC}\mathbf{H}^{-1}\right) \widetilde{v}_{t}-\frac{1}{N}\widehat{%
\mathbf{\beta }}^{PC}\mathbf{H}^{\prime }\mathbf{\beta }^{\prime }\widetilde{%
\mathbf{u}}_{t} \\
&&-\frac{1}{N}\widehat{\mathbf{\beta }}^{PC}\left[ \widehat{\mathbf{\beta }}%
^{PC}-\mathbf{\beta H}\right] ^{\prime }\widetilde{\mathbf{u}}_{t}.
\end{eqnarray*}%
We bound the following terms:%
\begin{eqnarray*}
&&\frac{1}{NT}\sum_{t=1}^{T}\left\Vert \left( \widehat{\mathbf{\beta }}^{PC}-%
\mathbf{\beta H}\right) \mathbf{H}^{-1}\widetilde{v}_{t}\right\Vert ^{2} \\
&\leq &c_{0}\left( \frac{1}{T}\sum_{t=1}^{T}\widetilde{v}_{t}^{2}\right) 
\frac{\left\Vert \widehat{\mathbf{\beta }}^{PC}-\mathbf{\beta H}\right\Vert
^{2}}{N}=o_{a.s.}\left( \frac{\left( \log N\log T\right) ^{2+\epsilon }}{T}%
\right) +O\left( \frac{1}{N^{2}}\right) ,
\end{eqnarray*}%
by Lemma \ref{l2norm}; 
\begin{eqnarray*}
&&\frac{1}{NT}\sum_{t=1}^{T}\left\Vert \frac{1}{N}\mathbf{\beta H}\widehat{%
\mathbf{\beta }}^{PC\prime }\left( \mathbf{\beta -}\widehat{\mathbf{\beta }}%
^{PC}\mathbf{H}^{-1}\right) \widetilde{v}_{t}\right\Vert ^{2} \\
&\leq &c_{0}\left( \frac{1}{T}\sum_{t=1}^{T}\widetilde{v}_{t}^{2}\right) 
\frac{\left\Vert \mathbf{\beta }\right\Vert ^{2}}{N}\frac{\left\Vert 
\widehat{\mathbf{\beta }}^{PC\prime }\right\Vert ^{2}}{N}\frac{\left\Vert 
\mathbf{\beta -}\widehat{\mathbf{\beta }}^{PC}\mathbf{H}^{-1}\right\Vert ^{2}%
}{N} \\
&=&o_{a.s.}\left( \frac{\left( \log N\log T\right) ^{2+\epsilon }}{T}\right)
+O\left( \frac{1}{N^{2}}\right) ,
\end{eqnarray*}%
again using Lemma \ref{l2norm};%
\begin{eqnarray*}
&&\frac{1}{NT}\sum_{t=1}^{T}\left\Vert \frac{1}{N}\left( \widehat{\mathbf{%
\beta }}^{PC}-\mathbf{\beta H}\right) \widehat{\mathbf{\beta }}^{PC\prime
}\left( \mathbf{\beta -}\widehat{\mathbf{\beta }}^{PC}\mathbf{H}^{-1}\right) 
\widetilde{v}_{t}\right\Vert ^{2} \\
&\leq &c_{0}\left( \frac{1}{T}\sum_{t=1}^{T}\widetilde{v}_{t}^{2}\right)
\left( \frac{\left\Vert \mathbf{\beta -}\widehat{\mathbf{\beta }}^{PC}%
\mathbf{H}^{-1}\right\Vert ^{2}}{N}\right) ^{2}\frac{\left\Vert \widehat{%
\mathbf{\beta }}^{PC\prime }\right\Vert ^{2}}{N},
\end{eqnarray*}%
and therefore it is dominated by the previous terms;%
\begin{equation*}
\mathbb{E}\frac{1}{NT}\sum_{t=1}^{T}\left\Vert \frac{1}{N}\widehat{\mathbf{%
\beta }}^{PC}\mathbf{H}^{\prime }\mathbf{\beta }^{\prime }\widetilde{\mathbf{%
u}}_{t}\right\Vert ^{2}\leq c_{0}\frac{1}{N^{2}T}\sum_{t=1}^{T}\mathbb{E}%
\left( \sum_{j=1}^{N}\beta _{j}\widetilde{u}_{j,t}\right) ^{2}\leq
c_{1}N^{-1},
\end{equation*}%
so that this term is bounded by $o_{a.s.}\left( N^{-1}\left( \log T\log
^{2}N\right) ^{2+\epsilon }\right) $;%
\begin{eqnarray*}
&&\frac{1}{NT}\sum_{t=1}^{T}\left\Vert \frac{1}{N}\widehat{\mathbf{\beta }}%
^{PC}\left[ \widehat{\mathbf{\beta }}^{PC}-\mathbf{\beta H}\right] ^{\prime }%
\widetilde{\mathbf{u}}_{t}\right\Vert ^{2} \\
&=&\frac{1}{N^{3}T}\sum_{t=1}^{T}\widetilde{\mathbf{u}}_{t}^{\prime }\left[ 
\widehat{\mathbf{\beta }}^{PC}-\mathbf{\beta H}\right] \widehat{\mathbf{%
\beta }}^{PC\prime }\widehat{\mathbf{\beta }}^{PC}\left[ \widehat{\mathbf{%
\beta }}^{PC}-\mathbf{\beta H}\right] ^{\prime }\widetilde{\mathbf{u}}_{t} \\
&=&\frac{1}{N^{2}T}\sum_{t=1}^{T}\widetilde{\mathbf{u}}_{t}^{\prime }\left[ 
\widehat{\mathbf{\beta }}^{PC}-\mathbf{\beta H}\right] \left[ \widehat{%
\mathbf{\beta }}^{PC}-\mathbf{\beta H}\right] ^{\prime }\widetilde{\mathbf{u}%
}_{t} \\
&\leq &\left( \frac{1}{NT}\sum_{t=1}^{T}\left\Vert \widetilde{\mathbf{u}}%
_{t}\right\Vert ^{2}\right) \frac{\left\Vert \mathbf{\beta -}\widehat{%
\mathbf{\beta }}^{PC}\mathbf{H}^{-1}\right\Vert ^{2}}{N};
\end{eqnarray*}%
it is not hard to see that%
\begin{equation*}
\frac{1}{NT}\sum_{t=1}^{T}\left\Vert \widetilde{\mathbf{u}}_{t}\right\Vert
^{2}=\frac{1}{NT}\sum_{t=1}^{T}\mathbb{E}\left( \sum_{i=1}^{N}\widetilde{u}%
_{i,t}^{2}\right) =O_{a.s.}\left( 1\right) ,
\end{equation*}%
and therefore%
\begin{equation*}
\frac{1}{NT}\sum_{t=1}^{T}\left\Vert \frac{1}{N}\widehat{\mathbf{\beta }}%
^{PC}\left[ \widehat{\mathbf{\beta }}^{PC}-\mathbf{\beta H}\right] ^{\prime }%
\widetilde{\mathbf{u}}_{t}\right\Vert ^{2}=o_{a.s.}\left( \frac{\left( \log
N\log T\right) ^{2+\epsilon }}{T}\right) +o_{a.s.}\left( \frac{\left( \log
T\log ^{2}N\right) ^{2+\epsilon }}{N}\right) .
\end{equation*}%
After some algebra and repeated use of the Cauchy-Schwartz inequality, the
above entails that%
\begin{equation*}
\widehat{s}_{NT}^{PC}=\frac{1}{NT}\sum_{t=1}^{T}\left\Vert \widehat{\mathbf{u%
}}_{t}^{PC}\right\Vert ^{2}=\frac{1}{NT}\sum_{t=1}^{T}\left\Vert \widetilde{%
\mathbf{u}}_{t}\right\Vert ^{2}+o_{a.s.}\left( \frac{\left( \log N\log
T\right) ^{2+\epsilon }}{T}\right) +O\left( \frac{1}{N^{2}}\right) .
\end{equation*}%
From hereon, the proof follows by similar arguments as above.
\end{proof}
\end{lemma}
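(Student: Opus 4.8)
The plan is to show that the feasible normalizer $\widehat{s}_{NT}^{PC}=\frac{1}{NT}\sum_{i=1}^{N}\sum_{t=1}^{T}(\widehat{u}_{i,t}^{PC})^{2}$ coincides, up to almost surely negligible terms, with the infeasible average $\frac{1}{NT}\sum_{t=1}^{T}\|\widetilde{\mathbf{u}}_{t}\|^{2}$ built from the demeaned idiosyncratic errors, and then to bound the latter away from zero and from infinity using the moment conditions in Assumption \ref{error}. As in the proof of Lemma \ref{normalising} for the tradable-factor case, I would first set $K=1$ without loss of generality to keep the notation light.

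The first step is to write the residual vector explicitly as $\widehat{\mathbf{u}}_{t}^{PC}=\widetilde{\mathbf{y}}_{t}-N^{-1}\widehat{\mathbf{\beta}}^{PC}\widehat{\mathbf{\beta}}^{PC\prime}\widetilde{\mathbf{y}}_{t}$, using $\widehat{f}_{t}^{PC}=N^{-1}\widehat{\mathbf{\beta}}^{PC\prime}\widetilde{\mathbf{y}}_{t}$ and $\widetilde{\mathbf{y}}_{t}=\mathbf{\beta}\widetilde{v}_{t}+\widetilde{\mathbf{u}}_{t}$. Substituting the decomposition $\widehat{\mathbf{\beta}}^{PC}=\mathbf{\beta H}+(\widehat{\mathbf{\beta}}^{PC}-\mathbf{\beta H})$ and, in reverse, $\mathbf{\beta}=\widehat{\mathbf{\beta}}^{PC}\mathbf{H}^{-1}-(\widehat{\mathbf{\beta}}^{PC}\mathbf{H}^{-1}-\mathbf{\beta})$ into the projection, the common-factor part $\mathbf{\beta}\widetilde{v}_{t}$ is annihilated up to remainder terms, leaving $\widehat{\mathbf{u}}_{t}^{PC}=\widetilde{\mathbf{u}}_{t}+(\text{remainders})$. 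The remainders are of two kinds: those multiplying the factor $\widetilde{v}_{t}$ and those arising from projecting the idiosyncratic vector $\widetilde{\mathbf{u}}_{t}$ onto the estimated loading space.

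I would then bound $\frac{1}{NT}\sum_{t=1}^{T}\|\cdot\|^{2}$ for each remainder. For the factor-direction terms the key inputs are Lemma \ref{l2norm} (giving $N^{-1}\|\widehat{\mathbf{\beta}}^{PC}-\mathbf{\beta H}\|^{2}=o_{a.s.}((\log N\log T)^{2+\epsilon}/T)+O(N^{-2})$), Lemma \ref{matrix-h} (boundedness of $\|\mathbf{H}\|$ and $\|\mathbf{H}^{-1}\|$), Lemma \ref{inversion} (boundedness of $\|\widehat{\Phi}^{-1}\|$), and $T^{-1}\sum_{t=1}^{T}\widetilde{v}_{t}^{2}=O_{a.s.}(1)$; these combine to make each such term $o_{a.s.}(1)$. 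For the idiosyncratic-projection terms the relevant quadratic forms are $N^{-1}\widehat{\mathbf{\beta}}^{PC}\mathbf{H}'\mathbf{\beta}'\widetilde{\mathbf{u}}_{t}$ and $N^{-1}\widehat{\mathbf{\beta}}^{PC}(\widehat{\mathbf{\beta}}^{PC}-\mathbf{\beta H})'\widetilde{\mathbf{u}}_{t}$: for the first I would take expectations and use Assumption \ref{idiosyncratic} to get $\mathbb{E}(\sum_{j}\beta_{j}\widetilde{u}_{j,t})^{2}\le c_{0}N$, yielding an $O(N^{-1})$ contribution, while the second is handled by factoring out $N^{-1}\|\widehat{\mathbf{\beta}}^{PC}-\mathbf{\beta H}\|^{2}$ through Cauchy--Schwarz and using $\frac{1}{NT}\sum_{t=1}^{T}\|\widetilde{\mathbf{u}}_{t}\|^{2}=O_{a.s.}(1)$. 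Assembling the pieces gives $\widehat{s}_{NT}^{PC}=\frac{1}{NT}\sum_{t=1}^{T}\|\widetilde{\mathbf{u}}_{t}\|^{2}+o_{a.s.}(1)$, and the infeasible term is controlled exactly as term $I$ in Lemma \ref{normalising}: writing $\frac{1}{NT}\sum_{i,t}\widetilde{u}_{i,t}^{2}=\frac{1}{NT}\sum_{i,t}\mathbb{E}u_{i,t}^{2}+o_{a.s.}(1)$ (the demeaning is asymptotically negligible, and the centered sum vanishes by Lemma \ref{summability} and Lemma \ref{stout} applied to the $\mathcal{L}_{\nu/2}$-Bernoulli shift $u_{i,t}^{2}-\mathbb{E}u_{i,t}^{2}$), then invoking $\min_{1\le i\le N}\mathbb{E}u_{i,t}^{2}>0$ and the finiteness of second moments from Assumption \ref{error} to obtain strict positivity of the $\liminf$ and finiteness of the $\limsup$.

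The main obstacle I anticipate is the careful accounting of the idiosyncratic-projection remainders: unlike the tradable-factor case, here $\widehat{\mathbf{\beta}}^{PC}$ is itself a noisy function of the $\widetilde{\mathbf{u}}_{t}$, so quadratic forms such as $\widetilde{\mathbf{u}}_{t}'(\widehat{\mathbf{\beta}}^{PC}-\mathbf{\beta H})(\widehat{\mathbf{\beta}}^{PC}-\mathbf{\beta H})'\widetilde{\mathbf{u}}_{t}$ entangle the estimation error with the very errors being averaged. The resolution is to bound these by the product of the $\ell_{2}$ loading rate from Lemma \ref{l2norm} and the $O_{a.s.}(1)$ average of $\|\widetilde{\mathbf{u}}_{t}\|^{2}$, which decouples them; verifying that the $O(N^{-2})$ and $o_{a.s.}((\log N\log T)^{2+\epsilon}/T)$ pieces propagate correctly through these products, and that no term of order exactly one survives, is where the bulk of the bookkeeping lies.
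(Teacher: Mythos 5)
Your proposal is correct and follows essentially the same route as the paper's proof: the identical decomposition of $\widehat{\mathbf{u}}_{t}^{PC}$ via $\widehat{\boldsymbol{\beta}}^{PC}=\boldsymbol{\beta}\mathbf{H}+(\widehat{\boldsymbol{\beta}}^{PC}-\boldsymbol{\beta}\mathbf{H})$ and its reverse, the same bounds on the factor-direction and idiosyncratic-projection remainders via Lemmas \ref{l2norm}, \ref{matrix-h} and \ref{inversion} together with Cauchy--Schwarz and the $O_{a.s.}(1)$ average of $\Vert\widetilde{\mathbf{u}}_{t}\Vert^{2}$, and the same reduction of the infeasible term to the argument for term $I$ in Lemma \ref{normalising}. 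Your explicit treatment of the final step (centering $u_{i,t}^{2}-\mathbb{E}u_{i,t}^{2}$ as an $\mathcal{L}_{\nu/2}$-decomposable Bernoulli shift and invoking $\min_{1\leq i\leq N}\mathbb{E}u_{i,t}^{2}>0$) correctly fills in what the paper leaves as ``similar arguments as above.''
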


\begin{lemma}
\label{psi-pc}We assume that Assumptions \ref{error}-\ref{exogeneity}, and %
\ref{loadings}-\ref{fact-idios} are satisfied. Then it holds that%
\begin{equation*}
\sum_{i=1}^{N}\psi _{i,NT}^{PC}=o_{a.s.}\left( 1\right) .
\end{equation*}

\begin{proof}
The proof is essentially the same as the proof of Lemmas \ref{psi} and \ref%
{psi-fm}, and we simply discuss the different parts.

Under both $\mathbb{H}_{0}$ and $\mathbb{H}_{A}$, it holds that%
\begin{equation}
\widehat{\alpha }_{i}^{PC}=\alpha _{i}+\beta _{i}^{\prime }\overline{v}+%
\overline{u}_{i}-\left( \widehat{\beta }_{i}-\mathbf{H}^{\prime }\beta
_{i}\right) ^{\prime }\lambda -\beta _{i}^{\prime }\mathbf{H}\left( \widehat{%
\lambda }-\mathbf{H}^{-1}\lambda \right) -\left( \widehat{\beta }_{i}-%
\mathbf{H}^{\prime }\beta _{i}\right) ^{\prime }\left( \widehat{\lambda }-%
\mathbf{H}^{-1}\lambda \right) .  \label{alpha-PC}
\end{equation}%
Under $\mathbb{H}_{0}$ we have 
\begin{eqnarray*}
&&\sum_{i=1}^{N}\psi _{i,NT}^{PC} \\
&=&\frac{C_{NT}^{1/2}}{\left\vert \widehat{s}_{NT}^{PC}\right\vert ^{\nu /2}}%
\sum_{i=1}^{N}\left\vert \widehat{\alpha }_{i}^{PC}\right\vert ^{\nu /2} \\
&\leq &c_{0}\left[ \frac{C _{NT}^{1/2}}{\left\vert \widehat{s}%
_{NT}^{PC}\right\vert ^{\nu /2}}\left( \sum_{i=1}^{N}\left\Vert \beta
_{i}\right\Vert ^{\nu /2}\right) \left\Vert \overline{v}\right\Vert ^{\nu
/2}+\frac{C _{NT}^{1/2}}{\left\vert \widehat{s}_{NT}^{PC}\right\vert ^{\nu
/2}}\sum_{i=1}^{N}\left\vert \overline{u}_{i}\right\vert ^{\nu /2}+\frac{%
C_{NT}^{1/2}}{\left\vert \widehat{s}_{NT}^{PC}\right\vert ^{\nu /2}}\left(
\sum_{i=1}^{N}\left\Vert \widehat{\beta }_{i}-\mathbf{H}^{\prime }\beta
_{i}\right\Vert ^{\nu /2}\right) \left\Vert \lambda \right\Vert ^{\nu
/2}\right. \\
&&\left. +\frac{C _{NT}^{1/2}}{\left\vert \widehat{s}_{NT}^{PC}\right\vert
^{\nu /2}}\left( \sum_{i=1}^{N}\left\Vert \beta _{i}\right\Vert ^{\nu
/2}\right) \left\Vert \widehat{\lambda }-\mathbf{H}^{-1}\lambda \right\Vert
^{\nu /2}+\frac{C _{NT}^{1/2}}{\left\vert \widehat{s}_{NT}^{PC}\right\vert
^{\nu /2}}\left( \sum_{i=1}^{N}\left\Vert \widehat{\beta }_{i}-\mathbf{H}%
^{\prime }\beta _{i}\right\Vert ^{\nu /2}\right) \left\Vert \widehat{\lambda 
}-\mathbf{H}^{-1}\lambda \right\Vert ^{\nu /2}\right] \\
&=&I+II+III+IV+V.
\end{eqnarray*}%
Starting from Lemmas \ref{l2norm_lambda} and \ref{normalising_PC}, we can
show that $I$, $II$ and $IV$ are $o_{a.s.}(1)$ proceeding as in the proof of
Lemma \ref{psi-fm} - indeed, on account of Lemma \ref{lam-pc} $IV$ contains
the extra term%
\begin{equation*}
\frac{C_{NT}^{1/2}}{\left\vert \widehat{s}_{NT}^{PC}\right\vert ^{\nu /2}}%
\left( \sum_{i=1}^{N}\left\Vert \beta _{i}\right\Vert ^{\nu /2}\right)
O\left( N^{-\nu /2}\right) =O\left(C _{NT}N^{1-\nu /2}\right) ,
\end{equation*}%
but this can be shown to be $o(1)$ by routine calculations. As far as $III$
is concerned, Assumption \ref{beta-bar}, and Lemmas \ref{beta-i-numezzi}-\ref%
{normalising_PC} imply, after some algebra 
\begin{equation*}
III=C _{NT}^{1/2}\left\{ o_{a.s.}\left( NT^{-\nu /4}\left( \log N\log
T\right) ^{\left( 1+\epsilon \right) \nu /2}\right) +O\left( N^{1-\nu
/2}\right) \right\} =o_{a.s.}(1),
\end{equation*}%
where recall $\nu \geq 4$. That $V=o_{a.s.}(1)$ readily follows from Lemma \ref%
{l2norm_lambda} and the result on $III$.
\end{proof}
\end{lemma}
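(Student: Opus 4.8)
The plan is to follow the same architecture as the proofs of Lemma \ref{psi} and Lemma \ref{psi-fm}, adapting the bookkeeping to the PCA estimator and to the rotation matrix $\mathbf{H}$ that unavoidably appears with latent factors. First I would use the decomposition of $\widehat{\alpha}_i^{PC}$ valid under $\mathbb{H}_0$, in which (since $\alpha_i=0$) the estimation error splits into a factor-mean term $\beta_i'\overline{v}$, the idiosyncratic average $\overline{u}_i$, the rotated loading error $-(\widehat{\beta}_i^{PC}-\mathbf{H}'\beta_i)'\lambda$, the risk-premium error $-\beta_i'\mathbf{H}(\widehat{\lambda}^{PC}-\mathbf{H}^{-1}\lambda)$, and the product of the last two. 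Using the elementary inequality $|a_1+\cdots+a_5|^{\nu/2}\le c_0\sum_{j}|a_j|^{\nu/2}$ and replacing $\widehat{s}_{NT}^{PC}$ by a positive finite constant via Lemma \ref{normalising_PC}, I would bound $\sum_{i=1}^N\psi_{i,NT}^{PC}$ by $c_0 C_{N,T}^{1/2}$ times five sums $I,\dots,V$, each collecting the $\nu/2$-th powers of one piece of the decomposition across $i$.

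Next I would bound the five sums using the rate estimates already available for the latent-factor case. For $I$, Assumption \ref{loadings} gives $\sum_i\|\beta_i\|^{\nu/2}=O(N)$ and $\overline{v}=o_{a.s.}(T^{-1/2}(\log T)^{1+\epsilon})$, so that $I=o_{a.s.}(C_{N,T}^{1/2}NT^{-\nu/4}(\log T)^{(1+\epsilon)\nu/2})$, which vanishes because $\nu>4$ and $C_{N,T}\le\min\{N,T\}$. The term $II$ coincides with the quantity already treated in Lemma \ref{psi} and is controlled by Lemma \ref{average-error}. For $III$ and $IV$ I would substitute the rate of $\sum_i\|\widehat{\beta}_i^{PC}-\mathbf{H}'\beta_i\|^{\nu/2}$ from Lemma \ref{beta-i-numezzi} and the rate of $\|\widehat{\lambda}^{PC}-\mathbf{H}^{-1}\lambda\|$ from Lemma \ref{l2norm_lambda}, and $V$ is then dominated by $III$.

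The main obstacle, and the genuine departure from the observable and Fama--MacBeth cases, is the pair of deterministic remainders of order $O(N^{1-\nu/2})$ that PCA injects into $\sum_i\|\widehat{\beta}_i^{PC}-\mathbf{H}'\beta_i\|^{\nu/2}$ and into $\|\widehat{\lambda}^{PC}-\mathbf{H}^{-1}\lambda\|^{\nu/2}$; these arise from the $O(1/N)$ bias of the eigenvector estimate. I would have to check that they still vanish after multiplication by $C_{N,T}^{1/2}$: since $C_{N,T}^{1/2}\le N^{1/2}$, the relevant order is $N^{3/2-\nu/2}$, which tends to zero precisely because $\nu>4$. This is exactly the point at which the strengthened moment condition (as opposed to the $\nu>2$ used in the intermediate summability lemmas) becomes indispensable.

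Finally I would emphasise that the hard work sits upstream rather than in this lemma: the rate bounds it consumes --- Lemma \ref{l2norm} (consistency of $\widehat{\mathbf{\beta}}^{PC}$ up to $\mathbf{H}$), Lemma \ref{matrix-h} (boundedness and invertibility of $\mathbf{H}$), Lemma \ref{l2norm_lambda}, and Lemma \ref{beta-i-numezzi} --- all rely on the factor-structure Assumptions \ref{loadings}--\ref{fact-idios} and on a delicate separation of the eigenvector estimation error from the idiosyncratic components. Granting those, the present statement follows by the same term-by-term accounting used for $\psi_{i,NT}$ and $\psi_{i,NT}^{FM}$, and it is this estimate that lets the Gumbel limit of Theorem \ref{asy-max} be transported to the latent-factor statistic $Z_{N,T}^{PC}$ in Theorem \ref{ff-unobs}.
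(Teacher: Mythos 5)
Your proposal follows essentially the same route as the paper's proof: the same decomposition \eqref{alpha-PC} of $\widehat{\alpha}_i^{PC}$ under $\mathbb{H}_0$, the same five-term bound after replacing $\widehat{s}_{NT}^{PC}$ by a constant via Lemma \ref{normalising_PC}, the same use of Lemmas \ref{beta-i-numezzi} and \ref{l2norm_lambda} for the loading and risk-premium errors (with $V$ dominated by $III$), and the same observation that the PCA-specific $O\left(N^{1-\nu/2}\right)$ remainders are annihilated by $C_{N,T}^{1/2}\leq N^{1/2}$ together with $\nu>4$. The only caveat is that your dismissal of terms like $C_{N,T}^{1/2}NT^{-\nu/4}$ requires the $N$--$T$ rate restriction of Assumption \ref{asymptotics} (in force through the hypotheses of Theorem \ref{ff-unobs}), not merely $\nu>4$ and $C_{N,T}\leq\min\left\{N,T\right\}$; the paper's own proof glosses this point in exactly the same way, so this is not a gap relative to it.
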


\clearpage\newpage

\renewcommand*{\thesection}{\Alph{section}}

\setcounter{equation}{0} \setcounter{lemma}{0} \setcounter{theorem}{0} %
\renewcommand{\theassumption}{C.\arabic{assumption}} 
\renewcommand{\thetheorem}{D.\arabic{theorem}} \renewcommand{\thelemma}{D.%
\arabic{lemma}} \renewcommand{\theproposition}{C.\arabic{proposition}} %
\renewcommand{\thecorollary}{D.\arabic{corollary}} \renewcommand{%
\theequation}{D.\arabic{equation}} \renewcommand{\theremark}{D.%
\arabic{remark}}

\section{Proofs\label{proofs}}

\begin{proof}[Proof of Theorem \protect\ref{asy-max}]
We begin by proving (\ref{th-null}). The proof follows a similar approach to
the proof of Theorem 3 in \citet{he2024online}, which we refine. 
To begin with, note that, for all $-\infty <x<\infty $ 
\begin{equation*}
\mathbb{P}^{\ast }\left( \frac{Z_{N,T}-b_{N}}{a_{N}}\leq x\right) =P^{\ast
}\left( Z_{N,T}\leq a_{N}x+b_{N}\right) ,
\end{equation*}
where recall that $z_{i,NT}=\psi _{i,NT}+\omega _{i}$. Seeing as $\omega
_{i} $ is, by construction, independent across $i$ and independent of the
sample, it follows that%
\begin{equation*}
\mathbb{P}^{\ast }\left( Z_{N,T}\leq a_{N}x+b_{N}\right)
=\dprod\limits_{i=1}^{N}\mathbb{P}^{\ast }\left( z_{i,NT}\leq
a_{N}x+b_{N}\right) =\dprod\limits_{i=1}^{N}\mathbb{P}^{\ast }\left( \omega
_{i}\leq a_{N}x+b_{N}-\psi _{i,NT}\right) .
\end{equation*}%
Let $\Phi \left( \cdot \right) $ denote the standard normal distribution; we
have 
\begin{equation}  \label{eq:prod_Probs}
\dprod\limits_{i=1}^{N}\mathbb{P}^{\ast }\left( \omega _{i}\leq
a_{N}x+b_{N}-\psi _{i,NT}\right) =\exp \left( \sum_{i=1}^{N}\log \Phi \left(
a_{N}x+b_{N}-\psi _{i,NT}\right) \right) .
\end{equation}
Note now that 
\begin{equation}  \label{eq:log_CDF}
\log \Phi \left( a_{N}x+b_{N}-\psi _{i,NT}\right) =\log \Phi \left(
a_{N}x+b_{N}\right) +\log \frac{\Phi \left( a_{N}x+b_{N}-\psi _{i,NT}\right) 
}{\Phi \left( a_{N}x+b_{N}\right) };
\end{equation}
using Lagrange's theorem, there exists an $a_{i}^{\ast }\in \left(
a_{N}x+b_{N}-\psi _{i,NT},a_{N}x+b_{N}\right) $ such that $\Phi \left(
a_{N}x+b_{N}-\psi _{i,NT}\right) $ $=$ $\Phi \left( a_{N}x+b_{N}\right) $ $-$
$\varphi \left( a_{i}^{\ast }\right) \psi_{i,NT}$, where $\varphi \left(
\cdot \right) $ denotes the density function of the standard normal, so that
ultimately%
\begin{equation*}
\log \frac{\Phi \left( a_{N}x+b_{N}-\psi _{i,NT}\right) }{\Phi \left(
a_{N}x+b_{N}\right) }=\log \left( 1-\frac{\varphi \left( a_{i}^{\ast
}\right) }{\Phi \left( a_{N}x+b_{N}\right) }\psi _{i,NT}\right) =\log \left(
1-c_{i}\psi _{i,NT}\right) .
\end{equation*}%
By elementary arguments, it follows that 
\begin{eqnarray*}
&&\exp \left( \sum_{i=1}^{N}\log \frac{\Phi \left( a_{N}x+b_{N}-\psi
_{i,NT}\right) }{\Phi \left( a_{N}x+b_{N}\right) }\right) \\
&=&\exp \left( \sum_{i=1}^{N}\log \left( 1-c_{i}\psi_{i,NT}\right) \right) =
\exp \left(\frac{N}{N}\log\left(\dprod_{i=1}^{N}\left(1 -
c_{i}\psi_{i,NT}\right)\right)\right) \\
&\leq &\exp \left( N\log \left( \frac{1}{N}\sum_{i=1}^{N}\left(1-c_{i}%
\psi_{i,T}\right) \right) \right)=\exp \left( N\log \left( 1-\left( \frac{1}{%
N}\sum_{i=1}^{N}c_{i}\psi_{i,NT}\right) \right) \right) \\
&=&\exp \left( \sum_{h=1}^{\infty }N^{-h+1}\frac{\left(-1\right) ^{h}\left(
\sum_{i=1}^{N}c_{i}\psi _{i,NT}\right) ^{h}}{h}\right),
\end{eqnarray*}
having used the arithmetic/geometric mean inequality to move from the second
to the third line, and a Taylor expansion of $\log(1+x)$ around $x=0$ in the
last line. Since $c_{i}\leq \left( 2\pi \right) ^{-1/2}\left[ \Phi
\left(a_{N}x+b_{N}\right) \right] ^{-1}\leq \overline{c}$, and, by Lemma \ref%
{psi}, 
\begin{equation*}
\mathbb{P}\left( \omega :\lim_{\min \left\{ N,T\right\} \rightarrow \infty
}\sum_{i=1}^{N}\psi _{i,NT}=0\right) =1,
\end{equation*}
we can assume that $\lim_{\min \left\{ N,T\right\} \rightarrow \infty
}\sum_{i=1}^{N}\psi _{i,NT}=0$, it follows from elementary arguments that%
\begin{equation}
\lim_{\min \left\{ N,T\right\} \rightarrow \infty }\exp \left(
\sum_{i=1}^{N}\log \frac{\Phi \left( a_{N}x+b_{N}-\psi_{i,NT}\right) }{\Phi
\left( a_{N}x+b_{N}\right) }\right) =1.  \label{exp-remainder}
\end{equation}%
Thus we have%
\begin{eqnarray}
&&\lim_{\min \left\{ N,T\right\} \rightarrow \infty
}\dprod\limits_{i=1}^{N}P^{\ast }\left( \omega _{i}\leq
a_{N}x+b_{N}-\psi_{i,NT}\right) \\
&=&\left( \lim_{N\rightarrow \infty }\Phi ^{N}\left( a_{N}x+b_{N}\right)
\right) \times \left( \lim_{\min \left\{ N,T\right\} \rightarrow \infty
}\exp \left( \sum_{i=1}^{N}\log \frac{\Phi \left( a_{N}x+b_{N}-\psi
_{i,T}\right) }{\Phi \left( a_{N}x+b_{N}\right) }\right) \right) \\
&=&\exp \left( -\exp \left( -x\right) \right) ,
\end{eqnarray}
using the relations in \eqref{eq:prod_Probs} - \eqref{eq:log_CDF} to move
from the first to the second line, and the Fisher--Tippett--Gnedenko Theorem
(see Theorem 3.2.3 in \citealp{embrechts2013modelling}, among others) along
with the limit in (\ref{exp-remainder}) to obtain the final result.

We now turn to showing (\ref{th-alt}). Under the alternative, there exists a
set of $1\leq m\leq N$ indices $\mathcal{I}=\left\{ i_{1},\dots
,i_{m}\right\} \subseteq \left\{ 1,\dots ,N\right\} $ such that $\left\vert
\alpha _{i}\right\vert >0$ whenever $i\in \mathcal{I}$; in these cases, $%
\psi _{i,NT}$ diverges almost surely at the rate $T^{1/2}$, i.e.%
\begin{equation*}
T^{-1/2}\psi _{i,NT}\overset{a.s.}{\rightarrow }c>0\text{ whenever }i\in 
\mathcal{I}\text{.}
\end{equation*}%
Hence, we can assume that%
\begin{equation*}
\lim_{T\rightarrow \infty }T^{-1/2}\psi _{i,NT}=c>0,
\end{equation*}%
whenever $i\in \mathcal{I}$. Note now that, for any $-\infty <x<\infty $ we
have 
\begin{equation*}
\begin{aligned} P^{\ast }\left( Z_{N,T}\leq a_{N}x+b_{N}\right)&
=\dprod\limits_{i=1}^{N}P^{\ast }\left( z_{i,T}\leq
a_{N}x+b_{N}\right)\leq\dprod\limits_{i\in\mathcal{I}}P^{\ast }\left( \omega
_{i}\leq a_{N}x+b_{N}-\psi_{i,NT}\right)\\ & =
\dprod_{i\in\mathcal{I}}\Phi\left(a_{N}x+b_{N} - \psi_{i,NT} \right).
\end{aligned}
\end{equation*}%
Equation (5) in \cite{borjesson1979simple} entails that 
\begin{equation}
\Phi \left( a_{N}x+b_{N}-\psi _{i,NT}\right) \leq \frac{\mathrm{exp}\left( -%
\frac{1}{2}\left( a_{N}x+b_{N}-\psi _{i,NT}\right) ^{2}\right) }{\sqrt{2\pi }%
\left\vert a_{N}x+b_{N}-\psi _{i,NT}\right\vert }.
\label{eq:upper_bound_gaussian_cdf}
\end{equation}%
Seeing as, by construction, $a_{N}x+b_{N}=O(\sqrt{2\log N})$\ for each $%
-\infty <x<\infty $, by Assumption \ref{asymptotics} it follows that $%
a_{N}x+b_{N}-\psi _{i,NT}\overset{a.s.}{\rightarrow }-\infty $ whenever $%
i\in \mathcal{I}$. Hence, as $\min \left\{ N,T\right\} \rightarrow \infty $
it holds that%
\begin{equation*}
0\leq \Phi \left( a_{N}x+b_{N}-\psi _{i,NT}\right) \leq \frac{\mathrm{exp}%
\left( -\frac{1}{2}\left( a_{N}x+b_{N}-\psi _{i,NT}\right) ^{2}\right) }{%
\sqrt{2\pi }\left\vert a_{N}x+b_{N}-\psi _{i,NT}\right\vert }\overset{a.s.}{%
\rightarrow }0,
\end{equation*}%
which, by dominated convergence, entails that $\Phi \left( a_{N}x+b_{N}-\psi
_{i,T}\right) =o_{a.s.}(1)$. As long as $\mathcal{I}$ is not empty, this
immediately entails that 
\begin{equation*}
\lim_{\min \left\{ N,T\right\} \rightarrow \infty }\mathbb{P}^{\ast }\left(
Z_{N,T}\leq a_{N}x+b_{N}\right) =0,
\end{equation*}%
for almost all realisations of $\left\{ \left( u_{i,t},f_{t}^{\prime
}\right) ^{\prime },1\leq i\leq N,1\leq t\leq T\right\} $. 

In conclusion, we note that the theorem still holds, with the proof
virtually unchanged for the more general statistic in \eqref{psi-delta}. The
only difference consists in replacing $T^{-1/2}$ with $T^{-\delta \nu /2}$
when discussing the behavior under the alternative. No further calculations
or arguments are required with respect to the current proof.
\end{proof}

\begin{proof}[Proof of Theorem \protect\ref{strong-rule}]
Write, for short, $Q_{N,T,B}\left( \tau \right) =Q_{\tau }$. Recall 
\begin{equation*}
Q_{\tau }=\frac{1}{B}\sum_{b=1}^{B}\mathbb{I}\left( Z_{N,T}^{\left( b\right)
}\leq c_{\tau }\right) ,
\end{equation*}%
and let $X_{N,T}^{\left( b\right) }=\mathbb{I}\left( Z_{N,T}^{\left(
b\right) }\leq c_{\tau }\right) $ for short. Note that, by similar passages
as in the proof of Theorem \ref{asy-max}%
\begin{eqnarray}
&&\mathbb{E}^{\ast }\left( X_{N,T}^{\left( b\right) }\right)  \label{exb} \\
&=&\mathbb{P}^{\ast }\left( Z_{N,T}^{\left( b\right) }\leq c_{\tau }\right)
=\dprod\limits_{i=1}^{N}\mathbb{P}^{\ast }\left( \omega _{i}^{\left(
b\right) }\leq c_{\tau }-\psi _{i,NT}\right)  \notag \\
&=&\exp \left( \sum_{i=1}^{N}\log \mathbb{P}^{\ast }\left( \omega
_{i}^{\left( b\right) }\leq c_{\tau }-\psi _{i,NT}\right) \right)  \notag \\
&=&\exp \left( \sum_{i=1}^{N}\log \left[ \mathbb{P}^{\ast }\left( \omega
_{i}^{\left( b\right) }\leq c_{\tau }\right) \left( 1-\frac{\mathbb{P}^{\ast
}\left( c_{\tau }-\psi _{i,NT}\leq \omega _{i}^{\left( b\right) }\leq
c_{\tau }\right) }{\mathbb{P}^{\ast }\left( \omega _{i}^{\left( b\right)
}\leq c_{\tau }\right) }\right) \right] \right)  \notag \\
&=&\exp \left( \sum_{i=1}^{N}\log \mathbb{P}^{\ast }\left( \omega
_{i}^{\left( b\right) }\leq c_{\tau }\right) -c_{0}\sum_{i=1}^{N}\mathbb{P}%
^{\ast }\left( c_{\tau }-\psi _{i,NT}\leq \omega _{i}^{\left( b\right) }\leq
c_{\tau }\right) \right)  \notag \\
&=&\exp \left( \sum_{i=1}^{N}\log \mathbb{P}^{\ast }\left( \omega
_{i}^{\left( b\right) }\leq c_{\tau }\right) \right) \exp \left(
-c_{1}\sum_{i=1}^{N}\psi _{i,NT}\right)  \notag
\end{eqnarray}%
for some positive, finite constants $c_{0}$ and $c_{1}$, and having used the
fact that $\psi _{i,NT}$ implies $\mathbb{P}^{\ast }\left( \omega
_{i}^{\left( b\right) }\leq c_{\tau }-\psi _{i,NT}\right) =\mathbb{P}^{\ast
}\left( \omega _{i}^{\left( b\right) }\leq c_{\tau }\right) -\mathbb{P}%
^{\ast }\left( c_{\tau }-\psi _{i,NT}\leq \omega _{i}^{\left( b\right) }\leq
c_{\tau }\right) $ to move from the fourth to the fifth line.

We now start by showing (\ref{st1}). It holds that%
\begin{eqnarray}
\left\vert \frac{Q_{\tau }-\left( 1-\tau \right) }{\sqrt{\tau \left( 1-\tau
\right) }}\right\vert  &=&\left\vert \frac{Q_{\tau }-\left( 1-\tau \right) }{%
\left( \mathcal{V}^{\ast }\left( Q_{\tau }\right) \right) ^{1/2}}\right\vert 
\frac{\left( \mathcal{V}^{\ast }\left( Q_{\tau }\right) \right) ^{1/2}}{%
\sqrt{\tau \left( 1-\tau \right) }}  \label{q-alpha} \\
&=&\left\vert \frac{Q_{\tau }-\mathbb{E}^{\ast }\left( Q_{\tau }\right) }{%
\left( \mathcal{V}^{\ast }\left( Q_{\tau }\right) \right) ^{1/2}}\right\vert 
\frac{\left( \mathcal{V}^{\ast }\left( Q_{\tau }\right) \right) ^{1/2}}{%
\sqrt{\tau \left( 1-\tau \right) }}+\left\vert \frac{\mathbb{E}^{\ast
}\left( Q_{\tau }\right) -\left( 1-\tau \right) }{\left( \mathcal{V}^{\ast
}\left( Q_{\tau }\right) \right) ^{1/2}}\right\vert \frac{\left( \mathcal{V}%
^{\ast }\left( Q_{\tau }\right) \right) ^{1/2}}{\sqrt{\tau \left( 1-\tau
\right) }}  \notag \\
&=&I+II,  \notag
\end{eqnarray}%
where $\mathcal{V}^{\ast }\left( Q_{\tau }\right) $ denotes the variance of $%
Q_{\tau }$ conditional on the sample, with 
\begin{equation*}
\mathcal{V}^{\ast }\left( Q_{\tau }\right) =\mathbb{E}^{\ast }\left(
X_{N,T}^{\left( b\right) }\right) \left[ 1-\mathbb{E}^{\ast }\left(
X_{N,T}^{\left( b\right) }\right) \right] .
\end{equation*}%
Let%
\begin{equation*}
B\frac{Q_{\tau }-\mathbb{E}^{\ast }\left( Q_{\tau }\right) }{\left( \mathcal{%
V}^{\ast }\left( Q_{\tau }\right) \right) ^{1/2}}=\sum_{b=1}^{B}\frac{%
X_{N,T}^{\left( b\right) }-\mathbb{E}^{\ast }\left( Q_{\tau }\right) }{%
\left( \mathcal{V}^{\ast }\left( Q_{\tau }\right) \right) ^{1/2}}%
=\sum_{b=1}^{B}\chi _{b},
\end{equation*}%
where $\chi _{b}$, conditionally on the sample, is \textit{i.i.d.} with zero
mean and unit variance and - being uniformly distributed - has a finite
moment generating function in a neighborhood of zero. Therefore, by the Koml%
\'{o}s-Major-Tusn\'{a}dy approximation (see e.g. Theorem 2.6.1 in \citealp{csorgo2014strong})
yields that, for each $B$, on a suitably enlarged probability space there
exists a standard Wiener process $\left\{ W_{B}\left( \left\lfloor
Bu\right\rfloor \right) ,0\leq u\leq 1\right\} $ whose distribution does not
depend on $B$ such that (conditional on the data)%
\begin{equation}
\sup_{0\leq u\leq 1}\left\vert \sum_{b=1}^{\left\lfloor Bu\right\rfloor
}\chi _{b}-W_{B}\left( \left\lfloor Bu\right\rfloor \right) \right\vert
=O_{a.s.}\left( \log B\right) .  \label{revesz}
\end{equation}%
Hence we have%
\begin{eqnarray}
\left\vert \frac{Q_{\tau }-\mathbb{E}^{\ast }\left( Q_{\tau }\right) }{%
\left( \mathcal{V}^{\ast }\left( Q_{\tau }\right) \right) ^{1/2}}\right\vert
&\leq &\frac{1}{B}\sup_{0\leq u\leq 1}\left\vert \sum_{b=1}^{\left\lfloor
Bu\right\rfloor }\chi _{b}\right\vert  \\
&\leq &\frac{1}{B}\sup_{0\leq u\leq 1}\left\vert W_{B}\left( \left\lfloor
Bu\right\rfloor \right) \right\vert +\frac{1}{B}\sup_{0\leq u\leq
1}\left\vert \sum_{b=1}^{\left\lfloor Bu\right\rfloor }\chi _{b}-W_{B}\left(
\left\lfloor Bu\right\rfloor \right) \right\vert   \notag \\
&=&O_{a.s.}\left( \sqrt{\frac{2\log \log B}{B}}\right) +O_{a.s.}\left( \frac{%
\log B}{B}\right) ,  \label{lilB}
\end{eqnarray}%
where the first rate comes from the Law of the Iterated Logarithm for Wiener
processes, and the second one from (\ref{revesz}). We now turn to studying $%
II$ in (\ref{q-alpha}), beginning with an estimate for (recall that the
sequence $\left\{ \omega _{i}^{\left( b\right) },1\leq i\leq N\right\} $ is
independent on the data)%
\begin{eqnarray*}
&&\left\vert \mathbb{E}^{\ast }\left( Q_{\tau }\right) -\left( 1-\tau
\right) \right\vert  \\
&\leq &\left\vert \left( \mathbb{P}\left( \omega _{i}^{\left( b\right) }\leq
c_{\tau }\right) \right) ^{N}-\left( 1-\tau \right) \right\vert +\left(
1-\tau \right) \left\vert \exp \left( -c_{1}\sum_{i=1}^{N}\psi
_{i,NT}\right) -1\right\vert .
\end{eqnarray*}%
Using equation (10) in \citet{hall1979rate} - with, in his notation, $%
x=-\log \left( -\log \left( 1-\tau \right) \right) $ - it holds that%
\begin{equation*}
\left\vert \left( \mathbb{P}\left( \omega _{i}^{\left( b\right) }\leq
c_{\tau }\right) \right) ^{N}-\left( 1-\tau \right) \right\vert \leq c_{0}%
\frac{1}{\log N}.
\end{equation*}%
Also, by a standard application of the Mean Value Theorem, equation (\ref%
{psi_rateas}) and Assumption \ref{asymptotics}, we have%
\begin{equation*}
\left\vert \exp \left( -c_{1}\sum_{i=1}^{N}\psi _{i,NT}\right) -1\right\vert
\leq \left\vert \sum_{i=1}^{N}\psi _{i,T}\right\vert =O_{a.s}\left(
N^{-\varepsilon }\right) ,
\end{equation*}%
for some $\varepsilon >0$. Hence we have%
\begin{equation}
\sqrt{\frac{B}{\log \log B}}\left\vert \mathbb{E}^{\ast }\left( Q_{\tau
}\right) -\left( 1-\tau \right) \right\vert =O_{a.s}\left( \sqrt{\frac{B}{%
\log \log B}}\frac{1}{\log N}\right) =o_{a.s}\left( 1\right) .
\label{mediaQ}
\end{equation}%
This also yields%
\begin{equation}
\left\vert \sqrt{\frac{\mathcal{V}^{\ast }\left( Q_{\tau }\right) }{\tau
\left( 1-\tau \right) }}-1\right\vert =O_{a.s}\left( \frac{1}{\log N}\right)
.  \label{varQ}
\end{equation}%
Combining (\ref{mediaQ}) with (\ref{lilB}), it holds that (for $I$ in
equation (\ref{q-alpha}))%
\begin{equation*}
I=O_{a.s}\left( \sqrt{\frac{\log \log B}{B}}\right) +O_{a.s}\left( \frac{%
\log B}{B}\right) .
\end{equation*}%
Further, as far as $II$ in (\ref{q-alpha}) is concerned, (\ref{mediaQ}) and (%
\ref{varQ}) yield%
\begin{equation*}
II=O_{a.s}\left( \frac{1}{\log N}\right) .
\end{equation*}%
Putting all together, it follows that%
\begin{equation}
\left\vert \frac{Q_{\tau }-\left( 1-\tau \right) }{\sqrt{\tau \left( 1-\tau
\right) }}\right\vert =O_{a.s}\left( \sqrt{\frac{\log \log B}{B}}\right)
+O_{a.s}\left( \frac{\log B}{B}\right) +O_{a.s}\left( \frac{1}{\log N}%
\right) ,  \label{revesz2}
\end{equation}%
whence the desired result follows from noting that, by the definition of $B$%
\begin{equation*}
\lim_{\min \left\{ B,N\right\} \rightarrow \infty }\sqrt{\frac{B}{\log \log B%
}}\frac{1}{\log N}=0.
\end{equation*}

Under the alternative, we write%
\begin{eqnarray*}
Q_{\tau } &=&\mathbb{E}^{\ast }\left( Q_{\tau }\right) +Q_{\tau }-\mathbb{E}%
^{\ast }\left( Q_{\tau }\right)  \\
&=&\frac{1}{B}\sum_{b=1}^{B}\mathbb{E}^{\ast }\left( X_{N,T}^{\left(
b\right) }\right) +\frac{1}{B}\sum_{b=1}^{B}\left( X_{N,T}^{\left( b\right)
}-\mathbb{E}^{\ast }\left( X_{N,T}^{\left( b\right) }\right) \right)  \\
&=&\mathbb{E}^{\ast }\left( X_{N,T}^{\left( 1\right) }\right) +\frac{1}{B}%
\sum_{b=1}^{B}\left( X_{N,T}^{\left( b\right) }-\mathbb{E}^{\ast }\left(
X_{N,T}^{\left( b\right) }\right) \right) =I+II.
\end{eqnarray*}%
We know from the proof of Theorem \ref{asy-max} that, under $\mathbb{H}_{A}$%
, $I=o_{a.s.}\left( 1\right) $. Moreover, note that, due to $X_{N,T}^{\left(
b\right) }$ being \textit{i.i.d.} across $1\leq b\leq B$%
\begin{equation*}
\mathcal{V}^{\ast }\left( \frac{1}{B}\sum_{b=1}^{B}\left( X_{N,T}^{\left(
b\right) }-\mathbb{E}^{\ast }\left( X_{N,T}^{\left( b\right) }\right)
\right) \right) =B^{-1}\mathcal{V}^{\ast }\left( X_{N,T}^{\left( 1\right)
}\right) \leq c_{0}B^{-1},
\end{equation*}%
a.s., and therefore, by the Law of the Total Variance, it also holds that 
\begin{equation*}
\mathcal{V}\left( \frac{1}{B}\sum_{b=1}^{B}\left( X_{N,T}^{\left( b\right) }-%
\mathbb{E}^{\ast }\left( X_{N,T}^{\left( b\right) }\right) \right) \right)
\leq c_{0}B^{-1};
\end{equation*}%
Lemma \ref{stout} then entails that $II=o_{a.s.}\left( 1\right) $. The
desired result now follows automatically.
\end{proof}

\begin{proof}[Proof of Theorem \protect\ref{ff-obs}]
The proof is very similar to that of Theorem \ref{asy-max}, and therefore we
only report its main arguments. Under $\mathbb{H}_{0}$, the result follows
immediately from Lemma \ref{psi-fm}. Under $\mathbb{H}_{A}$, using (\ref%
{alpha-ff}) it follows that, for $1\leq i\leq N$%
\begin{equation*}
\widehat{\alpha }_{i}^{FM}=\alpha _{i}+\beta _{i}^{\prime }\overline{v}+%
\overline{u}_{i}-\left( \widehat{\beta }_{i}-\beta _{i}\right) ^{\prime
}\lambda -\beta _{i}^{\prime }\left( \widehat{\lambda }-\lambda \right)
-\left( \widehat{\beta }_{i}-\beta _{i}\right) ^{\prime }\left( \widehat{%
\lambda }-\lambda \right) ,
\end{equation*}%
where recall that, by Lemma \ref{lambda-hA}%
\begin{equation*}
\widehat{\lambda }-\lambda =\frac{1}{N}\mathbf{S}_{\beta }^{-1}\mathbf{\beta 
}^{\prime }\mathbb{M}_{1_{N}}\mathbf{\alpha }+o_{a.s.}\left( T^{-1/2}\left(
\log T\right) ^{1+\epsilon }\right) .
\end{equation*}%
We consider, for simplicity, the case where $\alpha _{1}\neq 0$, and $\alpha
_{i}=0$ for $i\geq 2$. Other, more complicated cases, can be studied by the
same token, and we only add some discussion at the end of this proof. In
such a case, it holds that%
\begin{eqnarray*}
\widehat{\alpha }_{1}^{FM} &=&\alpha _{1}+\frac{1}{N}\mathbf{S}_{\beta }^{-1}%
\mathbf{\beta }^{\prime }\mathbb{M}_{1_{N}}\mathbf{\alpha }+r_{N,T}, \\
\widehat{\alpha }_{i}^{FM} &=&\frac{1}{N}\mathbf{S}_{\beta }^{-1}\mathbf{%
\beta }^{\prime }\mathbb{M}_{1_{N}}\mathbf{\alpha }+r_{N,T},
\end{eqnarray*}%
where the remainder term can be studied along the same lines as under the
null. Note that, under the case considered%
\begin{equation*}
\frac{1}{N}\mathbf{S}_{\beta }^{-1}\mathbf{\beta }^{\prime }\mathbb{M}%
_{1_{N}}\mathbf{\alpha =}\frac{1}{N}\frac{N-1}{N}\mathbf{S}_{\beta
}^{-1}\beta _{1}\alpha _{1}-\frac{1}{N^{2}}\mathbf{S}_{\beta
}^{-1}\sum_{i=1}^{N}\beta _{i}\alpha _{1}=O\left( \frac{1}{N}\right) ,
\end{equation*}%
and therefore $\widehat{\alpha }_{1}^{FM}=\alpha _{1}$ plus a negligible
remainder. The proof now is the same as the proof of Theorem \ref{asy-max}.
In conclusion, note that, under general alternatives, we need to guarantee
that%
\begin{equation}
\max_{1\leq i\leq N}\left\vert \alpha _{i}+\frac{1}{N}\mathbf{S}_{\beta
}^{-1}\mathbf{\beta }^{\prime }\mathbb{M}_{1_{N}}\mathbf{\alpha }\right\vert
>0,  \label{power-ff}
\end{equation}%
in order to be able to apply the arguments in the proof of Theorem \ref%
{asy-max}. However, this condition is always satisfied. Indeed, the only way
it cannot hold is if%
\begin{equation}
\alpha _{i}+\frac{1}{N}\mathbf{S}_{\beta }^{-1}\mathbf{\beta }^{\prime }%
\mathbb{M}_{1_{N}}\mathbf{\alpha }=0,  \label{power-ff-2}
\end{equation}%
for all $1\leq i\leq N$, which, in turn, requires under $\alpha _{i}=\alpha $
for all $1\leq i\leq N$. However, in such a case, $\mathbb{M}_{1_{N}}\mathbf{%
\alpha =0}$, and therefore the only way (\ref{power-ff-2}) can be satisfied
is if $\alpha _{i}=0$ for all $1\leq i\leq N$; but this cannot hold under
the alternative hypothesis.
\end{proof}

\begin{proof}[Proof of Theorem \protect\ref{ff-unobs}]
The proof is essentially the same as that of Theorem \ref{ff-obs}, \textit{%
mutatis mutandis}.
\end{proof}

\begin{proof}[Proof of Theorem \protect\ref{family}]
We henceforth omit, whenever possible, the statement that $\mathbb{H}%
_{0}^{\left( v\right) }$ is true, $1\leq v\leq W$, to save space. By
conditional independence across $v$ 
\begin{eqnarray*}
&&\mathbb{P}^{\ast }\left[ \text{reject any null hypothesis}\right]  \\
&=&1-\mathbb{P}^{\ast }\left[ Q_{1}\left( \tau \right) \geq 1-\tau -f\left(
B\right) ,Q_{2}\left( \tau \right) \geq 1-\tau -f\left( B\right) ,...\right]
.
\end{eqnarray*}%
We note some facts. Conditional on the sample, $Q_{v}\left( \tau \right) $
is \textit{i.i.d.} across the windows $1\leq v\leq W$; also, the Cram\'{e}r
condition holds%
\[
\mathbb{E}^{\ast }\exp \left( tI\left[ Z_{NT}^{\left( b\right) }\leq c_{\tau
}\right] \right) <\infty ,
\]%
for all $\left\vert t\right\vert <H$ and some $H>0$. Therefore, it holds
that 
\begin{eqnarray*}
&&1-\mathbb{P}^{\ast }\left[ Q_{1}\left( \tau \right) \geq 1-\tau -f\left(
B\right) ,Q_{2}\left( \tau \right) \geq 1-\tau -f\left( B\right) ,...\right] 
\\
&=&1-\left\vert 1-\mathbb{P}^{\ast }\left[ Q_{1}\left( \tau \right) <1-\tau
-f\left( B\right) \right] \right\vert ^{W}.
\end{eqnarray*}%
We will now estimate%
\begin{eqnarray*}
&&\mathbb{P}^{\ast }\left[ Q_{1}\left( \tau \right) <1-\tau -f\left(
B\right) \right]  \\
&=&\mathbb{P}^{\ast }\left[ B^{1/2}\frac{Q_{1}\left( \tau \right) -\mathbb{E}%
^{\ast }Q_{1}\left( \tau \right) }{\sqrt{\mathcal{V}^{\ast }Q_{1}\left( \tau
\right) }}<B^{1/2}\frac{1-\tau -\mathbb{E}^{\ast }Q_{1}\left( \tau \right)
-f\left( B\right) }{\sqrt{\mathcal{V}^{\ast }Q_{1}\left( \tau \right) }}%
\right] ;
\end{eqnarray*}%
we will use henceforth the short-hand notation%
\begin{eqnarray*}
S_{B} &=&B^{1/2}\frac{Q_{1}\left( \tau \right) -\mathbb{E}^{\ast
}Q_{1}\left( \tau \right) }{\sqrt{\mathcal{V}^{\ast }Q_{1}\left( \tau
\right) }}, \\
x &=&-B^{1/2}\frac{1-\tau -\mathbb{E}^{\ast }Q_{1}\left( \tau \right)
-f\left( B\right) }{\sqrt{\mathcal{V}^{\ast }Q_{1}\left( \tau \right) }},
\end{eqnarray*}%
and we note that, for large $N,T,B$, it holds that $x>0$ and $x=o\left(
B^{1/2}\right) $. Finally, let the distribution function of the standard
normal be denoted as $\Phi \left( \cdot \right) $; using Theorem 5.23 in
\citet{petrov1995}, it follows that 
\[
\mathbb{P}^{\ast }\left( S_{B}<-x\right) =\Phi \left( -x\right) \left( \exp %
\left[ -\frac{x^{3}}{B^{1/2}}\lambda \left( -\frac{x}{B^{1/2}}\right) \right]
\right) \left[ 1+O\left( \frac{x}{B^{1/2}}\right) \right] ,
\]%
where $\lambda \left( u\right) =\sum_{k=0}^{\infty }c_{k}u^{k}$ is the
so-called Cram\'{e}r function. Note that, by elementary arguments%
\[
\limsup_{B\rightarrow \infty }\lim_{L\rightarrow \infty
}\sum_{k=1}^{L}c_{k}\left( \frac{x}{B^{1/2}}\right) ^{k}<\infty ;
\]%
hence, it follows that  
\[
\lambda \left( -\frac{x}{B^{1/2}}\right) =c_{0},
\]%
where $c_{0}$\ is a finite and positive constant. We use the upper bound %
\[
\Phi \left( -x\right) \leq \exp \left( -\frac{1}{2}x^{2} - c_0\log x\right) ,
\]%
and that%
\[
x=c_{0}\tau ^{-1/2}B^{1/2}f\left( B\right) .
\]%
Using the above and omitting the log terms in the exponential, we have that, for large $B$%
\begin{eqnarray*}
&&\left\vert 1-\mathbb{P}^{\ast }\left[ Q_{1}\left( \tau \right) <1-\tau
-f\left( B\right) \right] \right\vert ^{W} \\
&\geq&\left\vert 1-c_{0}\exp \left( -c_{1}\tau ^{-1}B\left\vert f\left(
B\right) \right\vert ^{2}-c_{2}\tau ^{-3/2}B\left\vert f\left( B\right)
\right\vert ^{3}\right) \right\vert ^{W} \\
&=&\exp \left[ -c_{0}W\exp \left( -c_{1}\tau ^{-1}B\left\vert f\left(
B\right) \right\vert ^{2}-c_{2}\tau ^{-3/2}B\left\vert f\left( B\right)
\right\vert ^{3}\right) \right] .
\end{eqnarray*}%
Then we will have that 
\[
\mathbb{P}^{\ast }\left[ \text{reject any null hypothesis}\right]
\rightarrow 0\text{ \ \ a.s.,}
\]%
a.s. as long as%
\begin{eqnarray*}
&&W\exp \left( -c_{1}\tau ^{-1}B\left\vert f\left( B\right) \right\vert
^{2}-c_{2}\tau ^{-3/2}B\left\vert f\left( B\right) \right\vert ^{3}\right) 
\\
&=&\exp \left( \log W-c_{1}\tau ^{-1}B\left\vert f\left( B\right)
\right\vert ^{2}-c_{2}\tau ^{-3/2}B\left\vert f\left( B\right) \right\vert
^{3}\right) \rightarrow 0,
\end{eqnarray*}%
which is satisfied as long as (\ref{suff-family}) holds.
\end{proof}

\end{document}

\end{document}